\normalfont\fontsize{12}{15}\bfseries}{\thesection}{1em}{}
\normalfont\fontsize{12}{15}\bfseries}{\thesubsection}{1em}{}
\newlength{\bibitemsep}\setlength{\bibitemsep}{.2\baselineskip plus .05\baselineskip minus .05\baselineskip}
\newlength{\bibparskip}\setlength{\bibparskip}{0pt}
\let\oldthebibliography\thebibliography
\renewcommand\thebibliography[1]{%
  \oldthebibliography{#1}%
  \setlength{\parskip}{\bibitemsep}%
  \setlength{\itemsep}{\bibparskip}%
}
\def\B{\mathcal{B}}
\def\D{\mathcal{D}}
\def\H{\mathcal{H}}
\def\I{\mathcal{I}}
\def\N{\mathcal{N}}
\def\T{\mathcal{T}}
\def\W{\mathcal{W}}
\newtheorem{theorem}{Theorem}
\newtheorem{corollary}[theorem]{Corollary}
\newtheorem{definition}[theorem]{Definition}
\newtheorem{lemma}{Lemma}
\newtheorem{proposition}[theorem]{Proposition}
\newtheorem{remark}[theorem]{Remark}
\newenvironment{proof}[1][Proof]{\noindent\textbf{#1.} }{\ \rule{0.5em}{0.5em}}
\newcommand{\bigsize}{\fontsize{16pt}{20pt}\selectfont}
\numberwithin{equation}{section}
\begin{document}

\renewcommand\@pnumwidth{1.55em}
\renewcommand\@tocrmarg{9.55em}
\renewcommand*\l@chapter{\@dottedtocline{0}{1.5em}{2.3em}}
\renewcommand*\l@figure{\@dottedtocline{1}{0em}{3.1em}}
\let\l@table\l@figure

\pagenumbering{roman}
\thispagestyle{empty}
\begin{center}
{\bigsize LIMITS ON PARAMETER ESTIMATION OF QUANTUM CHANNELS}


\vfill
\doublespacing
A Dissertation \\
\singlespacing
Submitted to the Graduate Faculty of the \\
Louisiana State University and \\
Agricultural and Mechanical College \\
in partial fulfillment of the \\
requirements for the degree of \\
Doctor of Philosophy \\
\doublespacing
in \\
                                       
The Department of Physics and Astronomy\\
\singlespacing
\vfill

by \\
Vishal Katariya\\
B. Tech., Indian Institute of Technology Madras, 2017  \\

May 2022
\end{center}
\pagebreak

\chapter*{\Large Acknowledgments}
\vspace{0.55ex}

\addcontentsline{toc}{chapter}{\hspace{-1.5em} {ACKNOWLEDGMENTS} \vspace{12pt}}

As they say, it takes a village. This dissertation, and my PhD, has been made possible by the support and kindness of a number of people who have been there for me.

I'd like to start by thanking my parents, who have supported me unconditionally regardless of what I've wanted to do. A few years of living away from home has taught me how much I rely on them and their steadfast presence in my life.

I'm very thankful for my advisor, Mark M. Wilde, for being a source of support through the entirety of my grad school experience. He has been a mentor, collaborator, and friend. He has been a major influence on the way I have conducted research, and there are many things I've learnt from him and subsequently incorporated into my own work. I'm grateful to him for introducing me to almost all the aspects of being an academic, whether it be learning to write research papers effectively, attending conferences and workshops, knowing how to handle the tricky process of peer review, or becoming an effective reviewer myself. It has been great to live the life of an academic, so to speak, and it is Mark who has been most instrumental in making that happen for me.

I would like to thank Ravi P. Rau, Ivan Agullo, and Oliver Dasbach for agreeing to be on my PhD committee. I have had a number of valuable discussions with Dr. Rau over the course of my time at LSU, and in particular I would like to thank him for organizing his weekly lunch seminars for grad students. At the beginning of my time at LSU, those seminars helped me feel part of the department at large and more connected with other grad students. I'd also like to thank Ivan Agullo and Hwang Lee for a number of interesting discussions over the years. I'd like to thank Claire R. Bullock, Carol Duran, Mimi LaValle, Arnell Nelson, Paige Whittington, Yao Zeng, and other members of the Department of Physics and Astronomy at LSU for their help during my PhD.

I am grateful for the late Jonathan P. Dowling. Jon was part of my committee for my general exam and tragically passed away in 2020. Jon was special to every student in the group, even if he wasn't their formal advisor. He is one of the few people I know who made the question, "how are you?" feel pleasant and even comforting. Whether it was talking about research, mental health struggles, or regaling each other with wild stories, he was a bulwark during my time at LSU. I miss you, Jon.

I have been fortunate to have had excellent collaborators to work with. Working with them on my research projects taught me that the value of a team is more than the sum of its parts. They have enriched my life and been a huge part of my research. I'd like to thank Patrick J. Coles, Iman Marvian, Seth Lloyd, Nilanjana Datta, Eric P. Hanson, Narayan Bhusal, Chenglong You, Anthony Brady and Stav Haldar. 

Conversations in our QST office have always been pleasant. This was made more apparent during the pandemic, as I started to miss the conviviality of sharing an office with other grad students as each of us were tussling with and enjoying our various projects and problems. I'd like to thank my officemates for their congeniality and comradeship. Grad school can be a lonely prospect, but it is made much better with a support system. I'd like to thank Pratik Barge, Aby Philip, Arshag Danageozian, Stav Haldar, Sumeet Khatri, Vishal Singh, Soorya Rethinasamy, Siddhartha Das, Anthony Brady, Aliza Siddiqui, Kevin V. Jacob, Margarite LaBorde, Kunal Sharma, and Eneet Kaur for our shared time together in the Quantum Science and Technologies group. Thanks also to Yihui Quek, whose visit to LSU made her feel like an honorary member of our group.

A number of people have made Baton Rouge feel like home. More than anything else, it is them whom I will miss most. Valerie Flynn took me under her wing in my first year here, and our shared library, thrift store and grocery trips played a huge part in making me feel welcome and comfortable. She has been a mentor, friend, source of support, and I will dearly miss our stimulating conversations. Thank you, Valerie, for being kind and a pillar of support. I am grateful, too, for Karunya's friendship. Whether it be shared cups of tea or bars of chocolate, joint cooking sessions, or conversations about life: I will miss it all. I met Elizabeth Courville through potlucks that she organized at her home. That quickly blossomed into a lovely and supportive friendship. Thank you, Liz, for the potlucks, for your time, and our shared meals. You've made Baton Rouge feel more like home. My friendship with Emily Goldsmith started off with a sourdough starter, and blossomed into us becoming close friends supporting each other through grad school. Stav Haldar is one of the kindest and most generous people I know. He has been around for me when I've needed it, and I've learnt a lot about large-heartedness from you. Thank you, Stav. I'd like to thank Tyler Ellis for inviting me to a number of social gatherings, for periodically checking in on me, sending me cheery memes and making me feel cared for. Prerna Agarwal and I have recently become fast friends, and I appreciate your company and support dearly. I'd also like to thank Siddharth Soni, Akhil Bhardwaj, Anshumitra Baul, and Sonia Blauvelt for their friendship.

Sometimes it feels like I live half of my life online. It's funny to think how supported, loved, and cared for I feel via relationships mediated by digital screens. Learning how to navigate, conduct, and make the most of internet friendships has been one of the best things I've done in the past few years. Thank you to Nithin Ramesan for being around, every single day. I am grateful to my close-knit circle of friends for their support and our conversations that span the entirety of the human experience. I am thankful for Kriti Dhanania, Upasana Bhattacharjee, Annapoorani Hariharan, Anjani Balu, Aksheeya Suresh, Anna Thomas, Kaushik Satapathy, Pranavi AR, Sanjana Srikant, Bhavika Bhatia, Renate Boronowsky, and Mohna Priyanka for their support and friendship. I am thankful to Subramanian Balakrishna, Rishi Rajasekaran, Shashwat Salgaocar, Abhinand Sukumar and Ananth Sundararaman for our online board games sessions that played a large part in maintaining a sense of calm and camaraderie during the early stages of pandemic.

\singlespacing
\tableofcontents
\pagebreak

\pagebreak
\addcontentsline{toc}{chapter}{\hspace{-1.5em} LIST OF FIGURES \vspace{12pt}}
\listoffigures
\pagebreak

\chapter*{\Large Abstract}

The aim of this thesis is to develop a theoretical framework to study parameter estimation of quantum channels. We begin by describing the classical task of parameter estimation that we build upon. In its most basic form, parameter estimation is the task of obtaining an estimate of an unknown parameter from some experimental data. This experimental data can be seen as a number of samples of a parameterized probability distribution. In general, the goal of such a task is to obtain an estimate of the unknown parameter while minimizing its error.

We study the task of estimating unknown parameters which are encoded in a quantum channel. A quantum channel is a map that describes the evolution of the state of a quantum system. We study this task in the sequential setting. This means that the channel in question is used multiple times, and each channel use happens subsequent to the previous one. A sequential strategy is the most general way to use, or process, a channel multiple times. Our goal is to establish lower bounds on the estimation error in such a task. These bounds are called Cramer--Rao bounds. Quantum channels encompass all possible dynamics allowed by quantum mechanics, and sequential estimation strategies capture the most general way to process multiple uses of a channel. Therefore, the bounds we develop are universally applicable.

We consider the use of catalysts to enhance the power of a channel estimation strategy. This is termed amortization. The reason we do so is to investigate if an $n$-round sequential estimation strategy does better than a simpler parallel strategy. Quantitatively, the power of a channel for a particular estimation task is determined by the channel's Fisher information. Thus, we study how much a catalyst quantum state can enhance the Fisher information of a quantum channel by defining the \textit{amortized Fisher information}. In the quantum setting, there are many Fisher information quantities that can be defined. We limit our study to two particular ones: the symmetric logarithmic derivative (SLD) Fisher information and the right logarithmic derivative (RLD) Fisher information.

We establish our Cramer--Rao bounds by proving that for certain Fisher information quantities, catalyst states do not improve the performance of a sequential estimation protocol. The technical term for this is an \textit{amortization collapse}. We show how such a collapse leads directly to a corresponding Cramer--Rao bound. We establish bounds both when estimating a single parameter and when estimating multiple parameters simultaneously. For the single parameter case, we establish Cramer--Rao bounds for general quantum channels using both the SLD and RLD Fisher information. The task of estimating multiple parameters simultaneously is more involved than the single parameter case. In the multiparameter case, Cramer--Rao bounds take the form of matrix inequalities. We provide a method to obtain scalar Cramer--Rao bounds from the corresponding matrix inequalities. We then establish a scalar Cramer--Rao bound using the RLD Fisher information. Our bounds apply universally and we also show how they are efficiently computable by casting them as optimization problems. 

In the single parameter case, we recover the so-called ``Heisenberg scaling'' using our SLD-based bound. On the other hand, we provide a no-go condition for Heisenberg scaling using our RLD-based bound for both the single and multiparameter settings. Finally, we apply our bounds to the example of estimating the parameters of a generalized amplitude damping channel.

\pagebreak


\pagenumbering{arabic}
\addtocontents{toc}{ \hspace{-1.8em} CHAPTERS}
\singlespacing
\setlength{\textfloatsep}{12pt plus 2pt minus 2pt}
\setlength{\intextsep}{6pt plus 2pt minus 2pt}


\singlespacing
\chapter{Introduction}\label{ch:intro}
\vspace{0.5em}

The invention of quantum mechanics in the early 20th century revolutionized physics and enabled accurate explanations of physical phenomena at atomic and sub-atomic scales. It allowed for further theoretical and experimental exploration of the physical world, which has now resulted in an improved understanding of physics as well as a number of technological innovations that improve our lives today. Some of the initial concepts, neologisms, and thought experiments from the early days of quantum mechanics remain in our vocabulary even now. For example, ``spooky action at a distance'', coined by Einstein, Podolsky and Rosen, is still in use today when referring to the counterintuitive phenomenon of quantum entanglement. The ``Schrodinger's Cat" thought experiment too, lives on both in science and popular culture as a way to understand the concept of quantum superposition and the probabilistic nature of quantum measurements.

Along with explaining hitherto unexplained phenomena, quantum mechanics also opened up new methods of encoding, transmitting and protecting information. The subject of quantum information was developed in the decades following the invention of quantum mechanics itself. It is an interdisciplinary field combining the physics of quantum mechanics with the mathematical and statistical machinery of classical information theory, which itself was single-handedly invented by Shannon in 1948 \cite{Shannon1948}. 

Quantum information deals with scenarios where one handles information that is encoded in quantum systems. Some examples of quantum systems that can hold or transmit quantum information are electronic or ionic spins, the polarization of a photon, two-level atoms, superconducting transmon qubits \footnote{These superconducting transmon qubits are used by IBM in their quantum computers.}, quantum dots, and nitrogen-vacancy centers. Some of the canonical tasks studied in quantum information are compression of data \cite{Schumacher1995}, transmitting classical or quantum information over quantum communication channels \cite{Bennett1992, Bennett1993, Holevo1998,Schumacher1997}, establishing secret key (shared, private, true random bits) between two parties using quantum key distribution \cite{Bennett1984}, among others. In many of these tasks, using quantum resources like entanglement or states in superposition yields an advantage over what is possible using purely classical resources. Another task which allows for such a quantum advantage, so to speak, is parameter estimation. 

Parameter estimation, also known as metrology, is a fundamental primitive in all of science and technology. It refers to the statistical task of accurately estimating an unknown parameter of interest encoded in data collected from some experimental procedure. This is a basic task in statistics, and thus has a rich underlying theory that starts from the work of Fisher \cite{Fisher1925}.

The setup of parameter estimation in the classical setting begins with a probability distribution $p_\theta (x)$, where $\theta$ is the unknown parameter of interest and is encoded in samples of random variable $X$. Here, we follow the usual notation that uppercase letters are used for random variables and the corresponding lowercase letters are used for particular realizations of the random variable in question. Given a value of $\theta$, the random variable $X$ is distributed according to the distribution $p_\theta$. The goal, then, is to obtain $\hat{\theta}(X)$, an estimate of the parameter $\theta$, from a certain number $n$ of samples of $X$. Intuitively, with an infinite number of samples of $X$, we expect to be able to infer the value of $\theta$ perfectly. Therefore, we would like to quantify how well we can do with finite $n$.

We go into more details of the technicalities of parameter estimation later on in the thesis, but here we state simply that the specific quantitative goal is to minimize the variance of the estimator $\hat{\theta}(X)$. We also assume throughout this thesis that the estimator is ``unbiased''; i.e., it is accurate and converges to the correct value of $\theta$ on average. Then the fundamental tool used to find limits on the attainable variance (or precision) of an estimator is the Cramer--Rao bound \cite{Rao1945, Cramer1946, Kay1993}. In the classical setting, it takes on the following form:
\begin{equation}
    \text{Var}(\hat{\theta}(X)) \geq \frac{1}{n I_{F}(\theta; \{p_{\theta}\}_\theta)},
\end{equation}
where $I_{F}(\theta; \{p_{\theta}\}_\theta)$ is the Fisher information of the family of distributions $\{ p_\theta \}_\theta$. Again, we do not go into details here, and note that this bound and the Fisher information are explained more fully in Chapter~\ref{ch:prelims}. We state this fundamental inequality here to bring the Fisher information into focus. The Fisher information, due to its appearance in the Cramer--Rao bound, takes on a fundamental operational meaning in classical estimation theory.

The classical Cramer--Rao bound above is applicable for the case when there is a single unknown parameter to be estimated. However, there are experiments and scenarios when one may wish to estimate multiple parameters simultaneously. As one may expect, estimating multiple parameters simultaneously is a more mathematically and technologically involved task than estimating a single parameter. However, a theory of multiparameter estimation exists, and Cramer--Rao bounds for such tasks can be constructed.

One of the major differences when establishing Cramer--Rao bounds for multiparameter estimation, both in the classical and quantum cases, is to generalize the figure of merit (mean-squared error) and the Fisher information from scalars to matrices. That is, if there are $D > 0$ unknown parameters to be estimated, then the figure of merit and Fisher information both take on the form of a $D \times D$ matrix. The Cramer--Rao bound, too, then is generalized to a matrix inequality. We state this qualitatively now, and these notions will be expanded upon in detail in the subsequent chapters of this thesis. 

As we stated briefly earlier, parameter estimation can allow for a quantum advantage in certain cases, such as in phase estimation using optical interferometry \cite{Braunstein1992, Dowling1998, DemkowiczDobrzanski2015} and gravitational wave detection \cite{Caves1981, Yurke1986, Berry2000, DemkowiczDobrzanski2013}. We will now elaborate further on this, and introduce quantum metrology in the process. First, we recall Heisenberg's uncertainty principle, a fundamental property in quantum mechanics which demarcates it cleanly from the classical world. It imposes a fundamental limit on the accurary with which two non-commuting physical observables can be estimated. For example, the more precisely the position of a particle can be determined, the less precisely is our knowledge of its momentum. Such a concept has no classical analog and immediately suggests that quantum metrology is markedly different from classcial metrology. \cite{Helstrom1976} developed a theory of quantum detection and estimation.

In the classical Cramer--Rao bound stated above, the RHS scales with $n$ (the number of samples of data) as $1/n$. The scaling of the estimator variance as $1/n$ is known as the shot-noise limit.
It is the fundamental limit to the precision achievable by any estimation protocol using only classical resources. However, in the case of quantum metrology, there exist some tasks for which the estimator variance can be made to scale with $n$ as $1/n^2$ instead of $1/n$. This yields a lower estimator error than the classical case, especially in the case of large $n$. 

The scaling of estimator variance as $1/n^2$ is denoted as the Heisenberg limit (or sometimes as Heisenberg scaling), as the Cramer--Rao bound in such cases takes on a form reminiscent of the Heisenberg uncertainty principle. The Heisenberg limit is the best possible scaling of error that one can attain using quantum resources, and identifying estimation strategies and tasks for which it is attainable is one of the goals of quantum metrology. 

The objects of interest in classical estimation are probability distributions $p_\theta$ in which the unknown parameter $\theta$ is encoded. In the case of quantum parameter estimation, though, we are interested in quantum states and channels. The state of a quantum system is not completely described by a probability distribution. Pure quantum states are represented by complex-valued vectors, and the more general mixed quantum states are operators (they can be represented as positive semidefinite matrices with unit trace). Quantum channels are dynamical maps that take quantum states to quantum states. Objects in quantum theory do not generally commute, and quantum states can be in a linear superposition of certain fixed basis states, both of which result in quantum metrology having a more mathematically involved theory than classical metrology.

The noncommutativity of quantum mechanics results in an infinite number of quantum generalizations of the classical Fisher information. The two best-studied quantum Fisher informations are the symmetric logarithmic derivative (SLD) Fisher information \cite{Helstrom1976} and the right logarithmic derivative (RLD) Fisher information \cite{Yuen1973}. These quantities can be defined for both quantum state and quantum channel families, both of which we use extensively in this thesis. Further, each of them can be used to yield Cramer--Rao bounds for estimation of both quantum states and channels.

Our goal in this thesis is to study fundamental limits to estimating one or more parameters encoded in an unknown quantum channel. This is a well-studied problem with literature stretching back to the 1970s \cite{Helstrom1976, Yuen1973} and a number of other prior works \cite{Sasaki2002, Fujiwara2003, Fujiwara2004, Ji2008, Fujiwara2008, Matsumoto2010, Hayashi2011,DemkowiczDobrzanski2012, Kolodynski2013, DemkowiczDobrzanski2014, Sekatski2017, DemkowiczDobrzanski2017, Zhou2018, Zhou2019, Zhou2019a,Yang2020b}. The most general setting for the channel estimation problem is the sequential setting, where the unknown channel is processed $n$ successive times while allowing for adaptive control operations between channel uses \cite{Giovannetti2006, Dam2007, DemkowiczDobrzanski2014, Yuan2017}.

A subset of sequential (also known as adaptive) strategies is the set of parallel strategies, where the $n$ uses of the channel happen simultaneously. This is a practical setting of interest. Since parallel strategies are a subset of sequential ones, by design they are less powerful. For some special cases, e.g., unitary channels \cite{Giovannetti2006}, parallel strategies are just as powerful as sequential ones. It is an important and fruitful line of inquiry in quantum information to identify when sequential strategies offer an advantage over parallel ones and when they do not. This question remains a topic of interest, and has been studied recently in the context of various channel distinguishability tasks \cite{Berta2018,Fang2020,Katariya2021_adaptive,Salek2021,Bavaresco2020}.

In this thesis, we also study this particular problem in the context of quantum channel estimation. Our goal, both in the single and multiparameter cases, is to establish Cramer--Rao bounds that apply for estimating parameters encoded in a quantum channel in the sequential setting. We do so by defining certain quantities for the quantum Fisher information inspired by recent developments in discrimination and distinguishability of quantum channels and processes.

The quantities we define are the \textit{generalized Fisher information} of quantum states and channels, inspired by generalized channel divergences introduced in \cite{Polyanskiy2010,Sharma2012}, and the \textit{amortized Fisher information}, which in turn is inspired by the amortized channel divergence introduced in \cite{Berta2018}. These quantities taken together allow for us to study and apply the SLD and RLD Fisher information to the task of sequential channel estimation. Our approaches to establishing Cramer--Rao bounds for the sequential setting for both the single and multiparameter estimation cases are similar.

The main ingredient in our Cramer--Rao bounds are what are known as \textit{amortization collapses}. An amortization collapse is when, for a certain Fisher information in question, the amortized Fisher information is equal to the Fisher information itself. It further means that catalysis cannot help to increase the Fisher information of a channel family to a value more than its inherent value. These facts mean that for quantities that undergo an amortization collapse, sequential strategies offer no advantage over parallel ones.

Finally, we connect the amortized Fisher information to the Fisher information achievable by a sequential estimation strategy. This is done by proving meta-converse theorems for both the single and multiparameter estimation tasks, inspired by the meta-converse theorem of \cite{Berta2018}. With this in place, we prove various Cramer--Rao bounds for single parameter estimation, and an RLD-based one for the case of multiparameter channel estimation. This builds on prior work in establishing Cramer--Rao bounds for channel estimation, both in the parallel setting and the sequential one \cite{Hayashi2011, Yuan2017, Zhou2018, Zhou2020}.

Our bounds have a number of desirable properties, which we state briefly here. Our bounds in Chapter~\ref{ch:single} for single parameter estimation are single-letter, a fact that arises due to the amortization collapses we prove. ``Single-letter'' is a term from information theory, which means that the Fisher information in question is evaluated with respect to a single copy of the channel only even though the bound holds for general $n$-round sequential strategies. This makes them straightforward to evaluate, and further we provide various optimization problem characterizations for the SLD and RLD Fisher information of states and channels in Chapter~\ref{ch:single}.

 For single parameter estimation, our SLD-based Cramer--Rao bound for channel estimation recovers the fact that Heisenberg scaling is the best possible scaling attainable for channel estimation, even in sequential estimation protocols. This builds on work of \cite{Yuan2017}. Further, our RLD-based bound for single parameter estimation leads to the important corollary that, when the RLD Fisher information of a particular channel family is finite, then Heisenberg scaling (with respect to the number of channel uses) in error for estimating the channel family is unattainable.

In Chapter~\ref{ch:multi}, we use the RLD Fisher information of states and channels to establish Cramer--Rao bounds for the case of simultaneously estimating multiple parameters, in the vein of and continuing the results of Chaper~\ref{ch:single}. Our goal is to establish scalar Cramer--Rao bounds for multiparameter estimation of quantum channels. As we stated earlier in this chapter, Cramer--Rao bounds for multiparameter estimation take the form of matrix inequalities and the Fisher information too is a matrix.

Therefore, our first step is to define a scalar quantity, the RLD Fisher information value, for state and channel families. We show how the RLD Fisher information value of states can be used to establish a scalar Cramer--Rao bound, as we desired. We then follow a similar approach as in Chapter~\ref{ch:single}; i.e., for multiparameter channel estimation, we show an amortization collapse for the RLD Fisher information value of quantum channels.

With the amortization collapse for the RLD Fisher information value of channels in place, we are able to also establish a scalar Cramer--Rao bound for multiparameter estimation of quantum channels in the sequential setting. This bound, like the RLD-based one of Chapter~\ref{ch:single}, is also
\begin{itemize}
	\item single-letter; i.e., computing it requires computing the RLD Fisher information value of a single channel use even though the bound is applicable for $n$-round sequential procotols,
	\item universally applicable, in the sense that our bound applies to all quantum channels, and thus encompasses all admissible quantum dynamics, and
	\item efficiently computable via a semi-definite program. 
\end{itemize}

The necessary background for this thesis is familiarity with the basics of quantum mechanics, quantum information theory, and estimation theory. We point readers to the books~\cite{Wilde2017, Watrous2018, Khatri2020, Hayashi2006, Holevo2011} on quantum information theory and to the recent reviews \cite{Sidhu2020, Szczykulska2016, Albarelli2020a} on quantum estimation theory in the single and multiparameter settings.

This thesis is based on the following papers:
\begin{itemize}
	\item \textbf{Geometric distinguishability measures limit quantum channel estimation and discrimination} \cite{Katariya2021} \\
	Vishal Katariya and Mark M. Wilde \\
	Quantum Information Processing \textbf{20}, 78 (2021), arXiv:2004.10708 \\
	Chapter~\ref{ch:single}
	\item \textbf{RLD Fisher information bound for multiparameter estimation of quantum channels} \cite{Katariya2021a} \\
	Vishal Katariya and Mark M. Wilde \\
	New Journal of Physics \textbf{23}, 073040 (2021), arXiv:2008.11178 \\
	Chapter~\ref{ch:multi}
\end{itemize}

Other papers to which the author contributed during his Ph.D.:
\begin{itemize}
	\item \textbf{Entropic energy-time uncertainty relation} \cite{Coles2019} \\
	Patrick J. Coles, Vishal Katariya, Seth Lloyd, Iman Marvian, and Mark M. Wilde \\
	Physical Review Letters \textbf{122}, 100401 (2019), arXiv:1805.07772 
	\item \textbf{Evaluating the Advantage of Adaptive Strategies for Quantum Channel Distinguishability} \cite{Katariya2021_adaptive} \\
	Vishal Katariya and Mark M. Wilde \\
	Physical Review A \textbf{104}, 052406 (2021), arXiv:2001.05376
	\item \textbf{Guesswork with quantum side information} \cite{Hanson2021} \\
	Eric P. Hanson, Vishal Katariya, Nilanjana Datta and Mark M. Wilde \\
	\textit{to appear: IEEE Transactions on Information Theory}, arXiv:2001.03958, and
	\item \textbf{Quantum State Discrimination Circuits Inspired by Deutschian Closed Timelike Curves} \cite{Vairogs2021} \\
	Christopher Vairogs, Vishal Katariya and Mark M. Wilde \\
	arXiv:2109.11549.
\end{itemize}
\pagebreak

\chapter{Preliminaries}\label{ch:prelims}
\allowdisplaybreaks

\vspace{0.5em}

\section{Classical parameter estimation} 

The first step towards studying parameter estimation using quantum resources is to understand parameter estimation in the classical setting; i.e., estimating one or more unknown parameters encoded in a probability distribution. We begin by discussing the task we introduced briefly in Chapter~\ref{ch:intro}, that of estimating a single parameter encoded in a parameterized probability distribution. This is the framework that we will build on later, both to study estimation tasks involving quantum states and channels, as well as to simultaneously estimate multiple parameters.

The fundamental task in classical estimation is to estimate a parameter $\theta$ encoded in a probability distribution $p_\theta (x)$ with associated random variable $X$. Each probability distribution belongs to a parameterized family $\{p_{\theta}(x)\}_{\theta}$\ of probability distributions. Each distribution is a function of the unknown parameter $\theta\in\Theta\subseteq\mathbb{R} $, and the goal is to produce an estimate $\hat{\theta}(X)$ of $\theta$ from $n$ independent samples of the distribution $p_{\theta}(x)$. Suppose that the family $\{p_{\theta}(x)\}_{\theta}$ is differentiable with respect to the parameter $\theta$, so that $\partial_{\theta}p_{\theta}(x)$ exists for all values of $\theta$ and $x$, where $\partial_{\theta}\equiv\frac{\partial}{\partial\theta}$.

The figure of merit that we will use to quantify performance of an estimator $\hat{\theta}$ is the mean-squared error (MSE), defined as follows:
\begin{equation}
	\Delta^2\hat{\theta} := \mathbb{E}[(\hat{\theta}(X) - \theta)^2],
\end{equation}
The quantitative goal of parameter estimation is to minimize this quantity. Throughout this thesis, we focus exclusively on unbiased estimators; i.e., estimators $\hat{\theta}$ for which
\begin{equation}
    \mathbb{E}[\hat{\theta}(X) ] = \theta~~\forall~\theta \in \Theta. \label{eq:prelims-unbiasedness}
\end{equation}
For an unbiased estimator, the MSE is equal to the variance; i.e.,
\begin{equation}
    \mathbb{E}[(\hat{\theta}(X) - \theta)^2] = \text{Var} (\hat{\theta}).
\end{equation}
That is, we assume that the estimate is accurate and proceed to quantify and place limits on its precision.

The fundamental theoretical tool in estimation theory is the Cramer--Rao bound (henceforth denoted often as CRB). It is a lower bound on the mean-squared error of an unbiased estimator; i.e., its variance:
\begin{equation}
    \text{Var}(\hat{\theta}(X)) \geq \frac{1}{I_{F}(\theta; \{p_{\theta}\}_\theta)},
\end{equation}
where $I_{F}(\theta; \{p_{\theta}\}_\theta)$ is the Fisher information of the family of distributions $\{ p_\theta \}_\theta$. By association with the Cramer--Rao bound, the Fisher information takes on its operational meaning in estimation theory. The Fisher information of a family of parameterized probability distributions is defined as follows:

\begin{definition}[Fisher information]
For a parameterized family of probability distributions $\{p_{\theta}(x)\}_{\theta}$, the Fisher information is defined as follows:
	\begin{equation}
        I_{F}(\theta;\{p_{\theta}\}_{\theta}):=\left\{
        \begin{array}
            [c]{cc}
            \int_{\Omega}dx\ \frac{1}{p_{\theta}(x)}\left(  \partial_{\theta}p_{\theta
            }(x)\right)  ^{2} & \text{if }\operatorname{supp}(\partial_{\theta}p_{\theta
            })\subseteq\operatorname{supp}(p_{\theta})\\
            +\infty & \text{otherwise}
            \end{array}
        \right.  , \label{eq:CFI}
    \end{equation}
    where $\Omega$ is the sample space for the probability density function $p_{\theta}(x)$. The support of  distribution $p_\theta$ is the smallest closed set $R \in \mathbb{R}$ such that $p_\theta(x \in R) = 1$. 

    Alternatively, when the support condition
    \begin{equation}
        \operatorname{supp}(\partial_{\theta}p_{\theta})\subseteq\operatorname{supp}(p_{\theta})
    \end{equation}
    is satisfied (understood as \textquotedblleft essential support\textquotedblright), the Fisher information has the following expression:
    \begin{equation}
        I_{F}(\theta;\{p_{\theta}\}_{\theta})    =\int_{\Omega}dx\ p_{\theta
        }(x)\left(  \partial_{\theta}\ln p_{\theta}(x)\right)  ^{2}
          =\mathbb{E}[\left(  \partial_{\theta}\ln p_{\theta}(X)\right)
        ^{2}], \label{eq:CFI-for-RLD}
    \end{equation}
    interpreted as the variance of the surprisal rate $\partial_{\theta}[- \ln p_{\theta}(x)]$. The quantity $\partial_{\theta}[\ln p_{\theta}(x)]$ is known as the logarithmic derivative.
\end{definition}

If one generates $n$ independent samples $x^n \equiv x_1, \ldots, x_n$ of $p_{\theta}(x)$, described by the random sequence $X^n \equiv X_1, \ldots, X_n$, and forms an unbiased estimator $\hat{\theta}(x^n)$, then the Fisher information increases linearly with $n$ and the CRB becomes as follows:
\begin{equation}
    \text{Var}(\hat{\theta}(X^n)) \geq \frac{1}{n I_{F}(\theta; \{p_{\theta}\}_\theta)}.
\end{equation}

This scaling of the Cramer--Rao bound as $\frac{1}{n}$ is often termed the shot-noise limit. It is a fundamental limit that applies to estimation tasks when using classical resources. However, as we stated briefly in Chapter~\ref{ch:intro}, it is possible to perform better than the shot noise limit in certain cases when using quantum resources; i.e., when using quantum probe states and quantum measurements. In the asymptotic limit of large $n$, the Cramer--Rao bound above is attained by the maximum likelihood estimator. Specifically, in certain cases of interest when using quantum resources, the variance of an unbiased estimator can be made to scale as $\frac{1}{n^2}$. We expand on this further in Section~\ref{sec:prelims-state-estimation} of this chapter.

Finally, we note that there are different paradigms of estimation theory, in line with the different interpretations of probability. These are the frequentist and Bayesian paradigms. In this thesis, we work in the frequentist paradigm, where the MSE and therefore the Cramer--Rao bound generally depend on the value of the unknown parameter $\theta$ itself, unlike in the Bayesian regime where this is not a concern. In the frequentist paradigm, though, this concern can be alleviated by enforcing the unbiasedness condition~\eqref{eq:prelims-unbiasedness}. 

\section{Quantum information preliminaries}

Before we can describe quantum parameter estimation, we need to introduce some quantum information preliminaries. This section briefly reviews the quantum information formalism and tools that we use in this thesis. We follow the convention and material in~\cite{Wilde2017}. 

\begin{definition}[Hilbert space]
    A Hilbert space is an inner product vector space over complex numbers $\mathbb{C}$. The inner product maps a pair of vectors $\ket{\psi}$ and $\ket{\phi}$ to an element of $\mathbb{C}$, and has the following properties:
    \begin{itemize}
        \item Positivity: $\langle \psi | \psi \rangle \geq 0$. The equality is satisfied if and only if $\ket{\psi} = 0$.
        \item Linearity: $\langle \phi | \lambda_1 \psi_1 + \lambda_2 \psi_2 \rangle = \lambda_1 \langle \phi | \psi_1 \rangle + \lambda_2 \langle \phi | \psi_2 \rangle$ where $\lambda_1, \lambda_2 \in \mathbb{C}$ and $\ket{\psi_1}, \ket{\psi_2}, \ket{\phi}$ are vectors in the Hilbert space $\mathcal{H}$.
        \item Skew-symmetry: $\langle\phi | \psi\rangle = \overline{\langle\psi | \phi\rangle}$ where $\overline{c}$ denotes the complex conjugate of complex number $c$.
    \end{itemize}
\end{definition}

\begin{definition}[Quantum states]
    A quantum state $\rho$ on Hilbert space $\mathcal{H}$ is a positive semidefinite, Hermitian operator with trace equal to one. That is, $\rho \geq 0$, $\rho = \rho^{\dag}$, and $\Tr[\rho]=1$. The set of quantum states on $\mathcal{H}$ is denoted as $\mathcal{D}(\mathcal{H})$.
\end{definition}

Quantum states as defined above are also known as density operators.
A special case of quantum states are what are known as \textit{pure} states $\ket{\phi}$, which are vectors on the Hilbert space $\mathcal{H}$ with norm equal to $1$.  

Quantum states are static objects that describe the state of any given quantum system. The evolution of a quantum system from an initial state to a final state is most generally described by a quantum channel. We denote the set of bounded operators acting on $\mathcal{H}$ as $\mathcal{B} (\mathcal{H})$. A bounded operator $M\in \mathcal{B}(\mathcal{H})$ is trace-class if $\Vert M \Vert_1 \leq \infty$. We denote the set of trace-class operators on $\mathcal{H}$ as $\mathcal{T} (\mathcal{H})$.
\def\H{\mathcal{H}}
\begin{definition}[Positive map]
    A linear map $\N_{A \to B}:\B ( \mathcal{H}_A) \to \B(\H_{B})$ is positive if $\N_{A \to B}(M_A) \geq 0$, for all $M_A\geq 0$, where $M_A\in \B(\H_A)$.
\end{definition}

\begin{definition}[Completely-positive map]
    A linear map $\N_{A\to B}:\B(\H_A) \to \B(\H_B)$ is completely positive if $\I_R \otimes \N_A$ is a positive map for all possible $\H_R$, where $\H_R$ represents a Hilbert space extending $\H_A$.  
\end{definition}

\begin{definition}[Trace-preserving map]
  A linear map $\N_{A\to B}:\T(\H_A) \to \T(\H_B)$ is trace preserving if 
    \begin{equation}
        \Tr(M_A) = \Tr(\N_{A\to B}(M_A))~,
    \end{equation}
    for all $M_a \in \T(\H_A)$.
\end{definition}

\begin{definition}[Quantum channel]
    A quantum channel $\N_{A\to B}:\T(\H_A) \to \T(\H_B)$ is a completely-positive and trace-preserving linear map.  
\end{definition}

\begin{definition}[Choi operator]
    The Choi operator $\Gamma_{RB}^{\mathcal{N}}$ of a quantum channel
$\mathcal{N}_{A\rightarrow B}$ is defined as
\begin{equation}
\Gamma_{RB}^{\mathcal{N}}:=\mathcal{N}_{A\rightarrow B}(\Gamma_{RA}),
\end{equation}
where
\begin{equation}
\Gamma_{RA}:=|\Gamma\rangle\!\langle\Gamma|_{RA}, \text{and}~~|\Gamma\rangle_{RA} := \sum_i |i\rangle_R |i\rangle_A
\end{equation}
denotes the unnormalized maximally entangled vector on systems $R$ and $A$. The sets $\{|i\rangle_{R}\}_{i}$ and $\{|i\rangle_{A}\}_{i}$ are orthonormal bases for the isomorphic Hilbert spaces $\mathcal{H}_{R}$ and $\mathcal{H}_{A}$. The Choi operator is positive semi-definite and satisfies the following property as a consequence of $\mathcal{N}_{A\rightarrow B}$ being trace preserving:
\begin{equation}
\operatorname{Tr}_{B}[\Gamma_{RB}^{\mathcal{N}}]=I_{R}.
\end{equation}
\end{definition}

A measurement is the method by which one can extract classical knowledge from the state of a quantum system. The information obtained may correspond to various properties of the quantum system, e.g., position, momentum, or spin of its state. 

\begin{definition}[Measurement] Let $\rho \in \D(\H)$ be a density operator. Let $\{M_k\}_k$ denote a set of measurement operators for which $\sum_k M_k^{\dag}M_k=I$, where $I$ is the identity operator. Then the probability of obtaining outcome $k$ after the measurement is given by 
\begin{equation}
    p_K(k) = \Tr(M_k^{\dag}M_k \rho)~,
\end{equation}
and the post-measurement state $\tilde{\rho}_k$ is given by 
\begin{equation}
    \tilde{\rho}_k = \frac{M_k \rho M_k^{\dag}}{p_K(k)}~.
\end{equation}
\end{definition}

If we are willing to forego knowledge of the state of the quantum system after the measurement, then the measurement can be more generally described by a positive-operator-valued measure (POVM).

\begin{definition}[POVM]
A positive operator-valued measure (POVM) is a set $\{\Lambda_j\}_j$ of operators that satisfy the following properties: 
\begin{equation}
    \Lambda_j \geq 0 \quad \text{and} \quad \sum_{j} \Lambda_j = I, ~\forall j~.
\end{equation}
The probability of observing the outcome $k$ when state $\rho$ is measured using the above POVM is given by $\Tr \left[ \Lambda_k \rho \right]$.  
\end{definition}

Finally, we provide three theoretical tools that we use in this thesis to prove the subsequent technical results of Chapters~\ref{ch:single} and~\ref{ch:multi}. 

\begin{proposition}
    A pure bipartite state $|\psi\rangle_{RA}$ can be written as $(X_{R}\otimes I_{A})|\Gamma\rangle_{RA}$ where $X_{R}$ is an operator satisfying $\operatorname{Tr}[X_{R}^{\dag}X_{R}]=1$.
\end{proposition}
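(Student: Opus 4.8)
The plan is to prove this by an explicit construction: expand $|\psi\rangle_{RA}$ in a product basis, read off the coefficient matrix, and declare that matrix (acting on $R$) to be $X_R$. Concretely, using the orthonormal bases $\{|i\rangle_R\}_i$ and $\{|i\rangle_A\}_i$ from the definition of $|\Gamma\rangle_{RA}$, I would first write $|\psi\rangle_{RA} = \sum_{i,j} c_{ij}\,|i\rangle_R \otimes |j\rangle_A$ for some complex coefficients $c_{ij}$ (here I am implicitly using that $\mathcal{H}_R$ and $\mathcal{H}_A$ are isomorphic, as required for $|\Gamma\rangle_{RA}$ to be defined). Then I would define $X_R \in \mathcal{B}(\mathcal{H}_R)$ to be the operator with matrix elements $\langle i|X_R|j\rangle := c_{ij}$.

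The second step is to verify that this $X_R$ does the job. Applying $X_R \otimes I_A$ to $|\Gamma\rangle_{RA} = \sum_k |k\rangle_R|k\rangle_A$ gives
\begin{equation}
(X_R \otimes I_A)|\Gamma\rangle_{RA} = \sum_k (X_R|k\rangle_R)\otimes|k\rangle_A = \sum_{i,k}\langle i|X_R|k\rangle\,|i\rangle_R\otimes|k\rangle_A = \sum_{i,k} c_{ik}\,|i\rangle_R\otimes|k\rangle_A = |\psi\rangle_{RA},
\end{equation}
where the third equality is just the definition of $X_R$. This is essentially the standard correspondence between vectors in $\mathcal{H}_R\otimes\mathcal{H}_A$ and operators from $\mathcal{H}_A$ (or its conjugate) to $\mathcal{H}_R$, often called the ``vectorization'' or ``ricochet'' identity.

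The third step is the normalization claim. Since $|\psi\rangle_{RA}$ is a pure state it has unit norm, so
\begin{equation}
1 = \langle\psi|\psi\rangle_{RA} = \sum_{i,j}|c_{ij}|^2 = \sum_{i,j}|\langle i|X_R|j\rangle|^2 = \operatorname{Tr}[X_R^\dagger X_R],
\end{equation}
using orthonormality of the product basis in the second equality and the definition of the Hilbert--Schmidt norm in the last. This gives exactly the stated constraint $\operatorname{Tr}[X_R^\dagger X_R]=1$.

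I do not expect a genuine obstacle here: the result is elementary and the only points requiring a little care are (i) fixing the index/conjugation convention consistently — depending on the chosen convention one may instead need $\langle i|X_R|j\rangle = \overline{c_{ij}}$ or a transpose, but the paper's definition $|\Gamma\rangle_{RA}=\sum_i|i\rangle_R|i\rangle_A$ makes the straightforward choice above work — and (ii) noting the implicit assumption that $\mathcal{H}_R$ and $\mathcal{H}_A$ have equal (finite) dimension, which is already built into the definition of $|\Gamma\rangle_{RA}$ used in the statement. If one wanted to be fully general one could also remark that $X_R$ is unique given the bases, though uniqueness is not asserted in the proposition.
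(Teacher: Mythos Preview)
Your argument is correct and is the standard elementary construction. The paper itself states this proposition without proof, treating it as a known preliminary fact, so there is nothing to compare against; your explicit coefficient-matrix construction and the normalization check via the Hilbert--Schmidt norm are exactly what one would expect.
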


\begin{proposition}
    For a linear operator~$M$, the following transpose trick identity holds:
    \begin{equation}
        \left(  I_{R}\otimes M_{A}\right)  |\Gamma\rangle_{RA}=\left(  M_{R}^{T}\otimes I_{A}\right)  |\Gamma\rangle_{RA}, \label{eq:transpose-trick}
    \end{equation}
    where $M^{T}$ denotes the transpose of $M$ with respect to the orthonormal basis $\{|i\rangle_{R}\}_{i}$. For a linear operator $K_{R}$, the following identity holds:
    \begin{equation}
        \langle\Gamma|_{RA}\left(  K_{R}\otimes I_{A}\right)  |\Gamma\rangle_{RA}=\operatorname{Tr}[K_{R}]. \label{eq:max-ent-partial-trace}
    \end{equation}
\end{proposition}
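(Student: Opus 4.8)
The plan is to establish both identities by direct computation in the orthonormal product basis that defines the unnormalized maximally entangled vector, $|\Gamma\rangle_{RA} = \sum_i |i\rangle_R|i\rangle_A$, which is also the basis with respect to which the transpose $(\cdot)^T$ is taken. No deeper structure is needed; everything reduces to resolving identities and relabeling summation indices.

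First, for the transpose trick \eqref{eq:transpose-trick}, I would act with $I_R\otimes M_A$ on $|\Gamma\rangle_{RA}$, insert a completeness relation $\sum_j |j\rangle\!\langle j|$ on system $A$, and obtain $\sum_{i,j}\langle j|M|i\rangle\,|i\rangle_R|j\rangle_A$. Performing the analogous step on the right-hand side gives $\sum_{i,j}\langle j|M^T|i\rangle\,|j\rangle_R|i\rangle_A$; using $\langle j|M^T|i\rangle = \langle i|M|j\rangle$, which is the defining property of the transpose in the fixed basis, and then swapping the summation labels $i\leftrightarrow j$, makes the two expressions coincide.

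Next, for the second identity, I would expand $\langle\Gamma|_{RA}(K_R\otimes I_A)|\Gamma\rangle_{RA} = \sum_{i,j}\langle i|K|j\rangle\,\langle i|j\rangle$, where the inner product on system $A$ collapses the double sum to $\sum_i\langle i|K|i\rangle = \Tr[K_R]$. Equivalently, one can first apply \eqref{eq:transpose-trick} with $M=K^T$ to rewrite $(K_R\otimes I_A)|\Gamma\rangle_{RA}$ as $(I_R\otimes K_A^T)|\Gamma\rangle_{RA}$ and then expand, arriving at $\Tr[K^T]=\Tr[K]$; I would probably record the direct computation and mention this as a remark.

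There is no substantive obstacle here: the entire argument is index bookkeeping. The only point requiring a moment's care is the relabeling step in the first part, which is legitimate precisely because ``transpose'' throughout means the transpose relative to the same orthonormal basis $\{|i\rangle_R\}_i$ appearing in the definition of $|\Gamma\rangle_{RA}$; once that convention is pinned down, all the manipulations are routine and the proof is short.
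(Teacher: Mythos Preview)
Your proposal is correct. The paper states this proposition as a standard preliminary fact without proof; your direct index computation in the defining basis of $|\Gamma\rangle_{RA}$ is the canonical argument and would be exactly the proof one supplies if pressed.
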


\begin{proposition}[Post-selected teleportation identity]
    The output of a quantum channel $\mathcal{N}_{A\rightarrow B}$\ on an
input quantum state $\rho_{RA}$ can be rewritten in the following way~\cite{Bennett2005}:
    \begin{equation}
    \mathcal{N}_{A\rightarrow B}(\rho_{RA})=\langle\Gamma|_{AS}\rho_{RA}\otimes\Gamma_{SB}^{\mathcal{N}}|\Gamma\rangle_{AS}\text{,} \label{eq:PS-TP-identity}
    \end{equation}
    where $S$ is a system isomorphic to the channel input system $A$.
\end{proposition}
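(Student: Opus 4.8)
The plan is to verify the identity directly as an operator equation in $\mathcal{B}(\mathcal{H}_R\otimes\mathcal{H}_B)$ by expanding the Choi operator in a matrix-unit basis and then contracting the maximally entangled vectors. Since $S$ plays the role of the reference system in $\Gamma^{\mathcal{N}}_{SB}$, I would first write $\Gamma^{\mathcal{N}}_{SB}=\mathcal{N}_{A\to B}(\Gamma_{SA})=\sum_{k,l}|k\rangle\!\langle l|_S\otimes\mathcal{N}_{A\to B}(|k\rangle\!\langle l|_A)$, using $\Gamma_{SA}=\sum_{k,l}|k\rangle\!\langle l|_S\otimes|k\rangle\!\langle l|_A$. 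Substituting this into the right-hand side of \eqref{eq:PS-TP-identity} and using linearity of $\mathcal{N}_{A\to B}$ to pull it outside the contraction over $A$ and $S$, the problem reduces to evaluating $\langle\Gamma|_{AS}\bigl(\rho_{RA}\otimes|k\rangle\!\langle l|_S\bigr)|\Gamma\rangle_{AS}$ for each pair $k,l$.

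Next I would carry out this contraction. Writing $|\Gamma\rangle_{AS}=\sum_m|m\rangle_A|m\rangle_S$, the $S$-inner products collapse the sums by orthonormality, yielding $\langle\Gamma|_{AS}\bigl(\rho_{RA}\otimes|k\rangle\!\langle l|_S\bigr)|\Gamma\rangle_{AS}=\langle k|_A\,\rho_{RA}\,|l\rangle_A$, an operator on $R$ alone. (If one prefers to avoid explicit indices, this step can be packaged via the transpose-trick identity \eqref{eq:transpose-trick} and the partial-trace identity \eqref{eq:max-ent-partial-trace}.) Reassembling, the right-hand side becomes $\sum_{k,l}\langle k|_A\,\rho_{RA}\,|l\rangle_A\otimes\mathcal{N}_{A\to B}(|k\rangle\!\langle l|_A)=\mathcal{N}_{A\to B}\!\left(\sum_{k,l}\langle k|_A\,\rho_{RA}\,|l\rangle_A\otimes|k\rangle\!\langle l|_A\right)$, and since $\sum_{k,l}\langle k|_A\,\rho_{RA}\,|l\rangle_A\otimes|k\rangle\!\langle l|_A=\rho_{RA}$ is simply the expansion of $\rho_{RA}$ in the $A$-matrix-unit basis, the right-hand side equals $\mathcal{N}_{A\to B}(\rho_{RA})$, as claimed.

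An alternative route I would keep in reserve, should the bookkeeping above be judged too ad hoc: by linearity of both sides in $\rho_{RA}$ it suffices to treat rank-one $\rho_{RA}=|\psi\rangle\!\langle\phi|_{RA}$, and then by the Proposition stating that a pure bipartite state can be written as $(X_R\otimes I_A)|\Gamma\rangle_{RA}$ one may pull out the reference-side operators and reduce everything, through repeated use of \eqref{eq:transpose-trick}, to the scalar identity \eqref{eq:max-ent-partial-trace}. Either way, there is no analytic or structural obstacle here; the only real work --- and hence the main (mild) difficulty --- is the careful tracking of which copy of an $\mathcal{H}_A$-type space each $|\Gamma\rangle$ is attached to, ensuring the reference labels on $\Gamma^{\mathcal{N}}_{SB}$ and on the teleportation vector $|\Gamma\rangle_{AS}$ remain consistent throughout.
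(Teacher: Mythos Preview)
Your proposal is correct: expanding the Choi operator in matrix units and contracting the maximally entangled vectors gives precisely $\langle\Gamma|_{AS}(\rho_{RA}\otimes|k\rangle\!\langle l|_S)|\Gamma\rangle_{AS}=\langle k|_A\rho_{RA}|l\rangle_A$, and reassembling via linearity of $\mathcal{N}$ recovers $\mathcal{N}_{A\to B}(\rho_{RA})$. The paper itself does not prove this proposition --- it simply states the identity and attributes it to the cited reference --- so there is no in-text proof to compare against; your direct computation is a perfectly valid way to supply the missing argument.
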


\bigskip

\section{Quantum state estimation} \label{sec:prelims-state-estimation}

Now that we have introduced classical estimation as well as the quantum information preliminaries that we need, we move to quantum estimation and introduce and describe the task in detail.

First, we consider the task of estimating a single unknown parameter encoded in a quantum state, which can be seen as the quantum analog of estimating a single parameter in an unknown probability distribution. This is a stepping stone towards the more involved task of estimating a parameter encoded in a quantum channel. Analogous to a family of probability distributions introduced earlier, consider that we have a parameterized family $\{ \rho_\theta \}_{\theta}$ of quantum states into which the parameter $\theta \in \Theta \subseteq \mathbb{R}$ is encoded. In quantum state estimation, the unknown parameter $\theta$ is encoded in a quantum state $\rho_\theta$. Consider that we have $n$ copies of the state $\rho_\theta$. To invoke the theory and framework of classical estimation theory, the $n$ copies $\rho_\theta^{\otimes n}$ are subjected to a POVM $\{ \Lambda_x \}_x$. This yields a probability distribution according to the Born rule:
\begin{equation}
    p_\theta(x) = \Tr[ \Lambda_x \rho_\theta^{\otimes n} ].
\end{equation} 
The task now is to obtain a good estimate of parameter $\theta$ from the above probability distribution. We denote the estimate obtained as $\hat{\theta}$. The estimate $\hat{\theta}$ is a function of the measurement $\{ \Lambda_x \}_x$. Thus, each $\{\Lambda_x\}_x$ yields a probability distribution which yields its own Fisher information, and hence its own classical Cramer--Rao bound. To obtain the best or the \textit{most informative} CRB, we would like to identify and perform the best possible measurement. This makes quantum estimation a much more involved task theoretically and experimentally than classical estimation. In the case of estimating a single parameter encoded in a quantum state, the most informative CRB involves what is known as the \textit{SLD Fisher information}. To understand what it is and how it arises, we first describe how to generalize the Fisher information as defined in~\eqref{eq:CFI} to the quantum case. 

\subsection{Quantum Fisher information}

In classical estimation, we were able to define the logarithmic derivative $\ln p_\theta(x)$ of a parameterized probability distribution. Since quantum states are operators, the logarithmic derivative for quantum states also takes the form of an operator. We admit any logarithmic derivative operator as long as it collapses to the scalar logarithmic derivative in the classical case. The general noncommutativity of operators results in an infinite number of such logarithmic derivative operators, and therefore an infinite number of quantum generalizations of the classical Fisher information. 

Consider the parameterized logarithmic derivative operator $D_{\theta, p}$ to be defined implicitly via the following differential equation:
\begin{equation}
    \partial_\theta \rho_\theta = \left( p D_{\theta, p} \rho_\theta  + (1-p) \rho_\theta D_{\theta, p} \right). \label{eq:parameterized-log-der-operator}
\end{equation}

In the classical case, when the state $\rho_\theta$ is represented by a diagonal matrix, the $D_{\theta, p}$ operator is also a classical (diagonal) operator for all $0 \leq p \leq 1$. In general, if $p$ is set to $1/2$, the logarithmic derivative operator $D_{\theta, 1/2}$ is known as the symmetric logarithmic derivative (SLD) operator $L_\theta$. On the other hand, if $p$ is set to $0$, then the logarithmic derivative operator $D_{\theta, 0}$ is known as the right logarithmic derivative (RLD) operator $R_\theta$. That is, the SLD and RLD operators are defined implicitly via the following differential equations:
\begin{align}
    \partial_\theta \rho_\theta &= \frac{1}{2} \left( L_\theta \rho_\theta + \rho_\theta L_\theta \right),\\
    \partial_\theta \rho_\theta &= \rho_\theta R_\theta .
\end{align}

Now that we have defined the SLD and RLD operators, we are in a position to define the quantum Fisher information quantities that each of them yields.

\begin{definition}
[SLD\ Fisher information]\label{def:SLD-Fish-states}Let $\{\rho_{\theta
}\}_{\theta}$ be a differentiable family of quantum states. Then the
SLD\ Fisher information is defined as follows:
\begin{equation}
I_{F}(\theta;\{\rho_{\theta}\}_{\theta})=\left\{
\begin{array}
[c]{cc}
2\left\Vert \left(  \rho_{\theta}\otimes I+I\otimes\rho_{\theta}^{T}\right)
^{-\frac{1}{2}}\left(  (\partial_{\theta}\rho_{\theta})\otimes I\right)
|\Gamma\rangle\right\Vert _{2}^{2} & \text{if }\Pi_{\rho_{\theta}}^{\perp
}(\partial_{\theta}\rho_{\theta})\Pi_{\rho_{\theta}}^{\perp}=0\\
+\infty & \text{otherwise}
\end{array}
\right.  , \label{eq:basis-independent-formula-SLD}
\end{equation}
where $\Pi_{\rho_{\theta}}^{\perp}$ denotes the projection onto the kernel of
$\rho_{\theta}$, $|\Gamma\rangle=\sum_{i}|i\rangle|i\rangle$ is the
unnormalized maximally entangled vector, $\{|i\rangle\}_{i}$ is any
orthonormal basis, the transpose in \eqref{eq:basis-independent-formula-SLD}
is with respect to this basis, and the inverse is taken on the support of
$\rho_{\theta}\otimes I+I\otimes\rho_{\theta}^{T}$.
\end{definition}

When the finiteness condition $\Pi_{\rho_{\theta}}^{\perp}(\partial_{\theta
}\rho_{\theta})\Pi_{\rho_{\theta}}^{\perp}=0$ holds, the SLD Fisher information can alternatively be defined using the SLD operator $L_\theta$ as follows:
\begin{equation}
I_{F}(\theta;\{\rho_{\theta}\}_{\theta}):=\mathrm{Tr}[ L_{\theta}^{2}
\rho_{\theta}] =\operatorname{Tr}[L_{\theta}(\partial_{\theta}\rho_{\theta})].
\label{eq:SLD-FI}
\end{equation}

The SLD Fisher information can also be written in terms of the spectral decomposition of $\rho_{\theta}$. Let the spectral decomposition of $\rho_{\theta}$ be given as
\begin{equation}
\rho_{\theta}=\sum_{j}\lambda_{\theta}^{j}|\psi_{\theta}^{j}\rangle\!\langle
\psi_{\theta}^{j}|,
\end{equation}
which includes the indices for which $\lambda_{\theta}^{j}=0$. Then the
projection $\Pi_{\rho_{\theta}}^{\perp}$ onto the kernel of $\rho_{\theta}$ is
given by
\begin{equation}
\Pi_{\rho_{\theta}}^{\perp}:=\sum_{j:\lambda_{\theta}^{j}=0}|\psi_{\theta}
^{j}\rangle\!\langle\psi_{\theta}^{j}|.
\end{equation}
With this notation, the SLD\ quantum Fisher information can also be written as follows:
\begin{equation}
I_{F}(\theta;\{\rho_{\theta}\}_{\theta})=\left\{
\begin{array}
[c]{cc}
2\sum_{j,k:\lambda_{j}^{\theta}+\lambda_{k}^{\theta}>0}\frac{|\langle
\psi_{\theta}^{j}|(\partial_{\theta}\rho_{\theta})|\psi_{\theta}^{k}
\rangle|^{2}}{\lambda_{\theta}^{j}+\lambda_{\theta}^{k}} & \text{if }\Pi
_{\rho_{\theta}}^{\perp}(\partial_{\theta}\rho_{\theta})\Pi_{\rho_{\theta}
}^{\perp}=0\\
+\infty & \text{otherwise}
\end{array}
\right.  . \label{eq:SLD-Fish-info-formula}
\end{equation}

\medskip

\begin{definition}
[RLD\ Fisher information]\label{def:RLD-Fish-info-states}Let $\{\rho_{\theta
}\}_{\theta}$ be a differentiable family of quantum states. Then the
RLD\ Fisher information is defined as follows:
\begin{equation}
\widehat{I}_{F}(\theta;\{\rho_{\theta}\}_{\theta})=\left\{
\begin{array}
[c]{cc}
\operatorname{Tr}[(\partial_{\theta}\rho_{\theta})^{2}\rho_{\theta}^{-1}] &
\text{if }\operatorname{supp}(\partial_{\theta}\rho_{\theta})\subseteq
\operatorname{supp}(\rho_{\theta})\\
+\infty & \text{otherwise}
\end{array}
\right.  , \label{eq:RLD-FI}
\end{equation}
where the inverse $\rho_{\theta}^{-1}$ is taken on the support of
$\rho_{\theta}$.
\end{definition}

Similar to the case of SLD Fisher information, when the finiteness condition $\operatorname{supp}(\partial_{\theta}\rho_{\theta})\subseteq
\operatorname{supp}(\rho_{\theta})$ holds, the RLD Fisher information can be defined using the RLD operator $R_\theta$ as follows:
\begin{equation}
I_{F}(\theta;\{\rho_{\theta}\}_{\theta}):=\operatorname{Tr}[R_{\theta
}R_{\theta}^{\dag}\rho_{\theta}]=\operatorname{Tr}[(\partial_{\theta}
\rho_{\theta})R_{\theta}^{\dag}].
\end{equation}

Note that the support condition $\operatorname{supp}(\partial_{\theta}\rho_{\theta})\subseteq
\operatorname{supp}(\rho_{\theta})$ is equivalent to $\Pi^{\perp}_{\rho_\theta} \partial_\theta \rho_\theta =  \partial_\theta \rho_\theta \Pi^{\perp}_{\rho_\theta} = 0$, which implies that $\Pi^{\perp}_{\rho_\theta}\partial_\theta \rho_\theta \Pi^{\perp}_{\rho_\theta} = 0$. That is, the SLD Fisher information is finite whenever the RLD Fisher information is finite.

\subsection{Properties of the SLD and RLD Fisher information for quantum states}

Here we include some properties of the SLD and RLD Fisher information that will be useful to prove our results for quantum channel estimation (which we will introduce in Section~\ref{sec:prelims-state-estimation} of this chapter). The proofs of Propositions~\ref{prop:additivity-SLD-RLD-states}, \ref{prop:cq-decomp-SLD-RLD}, \ref{prop:physical-consistency-SLD-Fish-states} and \ref{prop:physical-consistency-RLD-Fish-states} can be found in \cite{Katariya2021}.

\subsubsection{Faithfulness}

\begin{proposition}
[Faithfulness]\label{prop:faithfulness-SLD-RLD-Fish}For a differentiable
family $\{\rho_{A}^{\theta}\}_{\theta}$ of quantum states, the SLD\ and
RLD\ Fisher informations are equal to zero:
\begin{equation}
I_{F}(\theta;\{\rho_{A}\}_{\theta})=\widehat{I}_{F}(\theta;\{\rho
_{A}\}_{\theta})=0\qquad\forall\theta\in\Theta,
\label{eq:SLD-RLD-Fish-faithful}
\end{equation}
if and only if $\rho_{A}^{\theta}$ has no dependence on the parameter $\theta$
(i.e., $\rho_{A}^{\theta}=\rho_{A}$ for all $\theta$).
\end{proposition}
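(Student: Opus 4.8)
The plan is to establish the two implications of the biconditional separately, treating the reverse (``only if'') direction with the spectral formula \eqref{eq:SLD-Fish-info-formula} for the SLD Fisher information. For the forward (``if'') direction, suppose $\rho_A^\theta = \rho_A$ for all $\theta$. Then $\partial_\theta \rho_\theta = 0$ identically, so the finiteness condition $\Pi_{\rho_\theta}^\perp(\partial_\theta\rho_\theta)\Pi_{\rho_\theta}^\perp = 0$ (and likewise the support condition $\operatorname{supp}(\partial_\theta\rho_\theta) \subseteq \operatorname{supp}(\rho_\theta)$) holds trivially, and every summand in \eqref{eq:SLD-Fish-info-formula} as well as the trace in \eqref{eq:RLD-FI} vanishes; hence $I_F(\theta;\{\rho_A\}_\theta) = \widehat{I}_F(\theta;\{\rho_A\}_\theta) = 0$ for all $\theta$.

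For the reverse direction, suppose $I_F(\theta;\{\rho_A\}_\theta) = 0$ for every $\theta \in \Theta$; I will not even need the RLD hypothesis. Since the value $0$ is finite, the finiteness condition $\Pi_{\rho_\theta}^\perp(\partial_\theta\rho_\theta)\Pi_{\rho_\theta}^\perp = 0$ holds, so that \eqref{eq:SLD-Fish-info-formula} applies and reads
\[
0 = 2\sum_{j,k:\,\lambda_\theta^j + \lambda_\theta^k > 0} \frac{|\langle\psi_\theta^j|(\partial_\theta\rho_\theta)|\psi_\theta^k\rangle|^2}{\lambda_\theta^j + \lambda_\theta^k}.
\]
As each summand is nonnegative, $\langle\psi_\theta^j|(\partial_\theta\rho_\theta)|\psi_\theta^k\rangle = 0$ for every pair $(j,k)$ with $\lambda_\theta^j + \lambda_\theta^k > 0$. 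For the complementary pairs, namely those with $\lambda_\theta^j = \lambda_\theta^k = 0$, the finiteness condition $\Pi_{\rho_\theta}^\perp(\partial_\theta\rho_\theta)\Pi_{\rho_\theta}^\perp = 0$ says precisely that $\langle\psi_\theta^j|(\partial_\theta\rho_\theta)|\psi_\theta^k\rangle = 0$ there as well. Therefore all matrix elements of $\partial_\theta\rho_\theta$ in the orthonormal eigenbasis $\{|\psi_\theta^j\rangle\}_j$ of $\rho_\theta$ vanish, whence $\partial_\theta\rho_\theta = 0$. Since this holds at every $\theta$ in the (interval) parameter set $\Theta$, the matrix-valued map $\theta \mapsto \rho_\theta$ has identically vanishing derivative and is thus constant (by the mean value theorem applied entrywise), i.e., $\rho_A^\theta = \rho_A$.

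The only point that needs genuine care is the bookkeeping around the finiteness/support condition: formula \eqref{eq:SLD-Fish-info-formula} by itself constrains only the matrix elements $\langle\psi_\theta^j|(\partial_\theta\rho_\theta)|\psi_\theta^k\rangle$ for which at least one of $\lambda_\theta^j,\lambda_\theta^k$ is strictly positive, so one must separately invoke $\Pi_{\rho_\theta}^\perp(\partial_\theta\rho_\theta)\Pi_{\rho_\theta}^\perp = 0$ to annihilate the ``kernel--kernel'' block before concluding $\partial_\theta\rho_\theta = 0$. An alternative route is to use the basis-independent expression \eqref{eq:basis-independent-formula-SLD}: vanishing of that squared norm forces $((\partial_\theta\rho_\theta)\otimes I)|\Gamma\rangle$ into the kernel of $\rho_\theta\otimes I + I\otimes\rho_\theta^T$, equivalently $\rho_\theta(\partial_\theta\rho_\theta) + (\partial_\theta\rho_\theta)\rho_\theta = 0$, and combining this with the finiteness condition again yields $\partial_\theta\rho_\theta = 0$; the spectral route above is, however, the most transparent. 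Apart from this bookkeeping the argument is routine, and I anticipate no real obstacle.
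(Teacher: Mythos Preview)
Your proposal is correct and follows essentially the same route as the paper: invoke the finiteness condition to reduce to the spectral formula \eqref{eq:SLD-Fish-info-formula}, use nonnegativity of the summands to kill the ``support--support'' and mixed matrix elements, and use the finiteness condition itself for the kernel--kernel block, concluding $\partial_\theta\rho_\theta=0$. The only minor difference is that the paper also records the observation that $\widehat{I}_F=0$ alone already forces $I_F=0$ via the inequality $I_F\leq\widehat{I}_F$ (Proposition~\ref{prop:SLD<=RLD}), whereas you simply use the $I_F=0$ hypothesis directly; either is sufficient for the biconditional as stated.
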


\begin{proof}
The if-part follows directly from plugging into the definitions after
observing that $\partial_{\theta}\rho_{\theta}=0$ for a constant family. So we
now prove the only-if part. If $I_{F}(\theta;\{\rho_{A}\}_{\theta})=0$, then
it is necessary for the finiteness condition in
\eqref{eq:basis-independent-formula-SLD}\ to hold (otherwise we would
have a contradiction). Then this means that
\begin{align}
\Pi_{\rho_{\theta}}^{\perp}(\partial_{\theta}\rho_{\theta})\Pi_{\rho_{\theta}
}^{\perp}  &  =0,\\
2\sum_{j,k:\lambda_{j}^{\theta}+\lambda_{k}^{\theta}>0}\frac{|\langle
\psi_{\theta}^{j}|(\partial_{\theta}\rho_{\theta})|\psi_{\theta}^{k}
\rangle|^{2}}{\lambda_{\theta}^{j}+\lambda_{\theta}^{k}}  &  =0\qquad
\forall\theta.
\end{align}
By sandwiching the first equation by $\langle\psi_{\theta}^{j}|$ and
$|\psi_{\theta}^{k}\rangle$ for which $\lambda_{\theta}^{j},\lambda_{\theta
}^{k}=0$, we find that these matrix elements $\langle\psi_{\theta}
^{j}|(\partial_{\theta}\rho_{\theta})|\psi_{\theta}^{k}\rangle$\ of
$\partial_{\theta}\rho_{\theta}$ are equal to zero. Since $\lambda_{\theta
}^{j}+\lambda_{\theta}^{k}>0$ in the latter expression, the latter equality
implies the following
\begin{equation}
|\langle\psi_{\theta}^{j}|(\partial_{\theta}\rho_{\theta})|\psi_{\theta}
^{k}\rangle|^{2}=0
\end{equation}
for all $\lambda_{\theta}^{j}$ and $\lambda_{\theta}^{k}$ satisfying
$\lambda_{\theta}^{j}+\lambda_{\theta}^{k}>0$. This implies that these matrix
elements $\langle\psi_{\theta}^{j}|(\partial_{\theta}\rho_{\theta}
)|\psi_{\theta}^{k}\rangle$ of $\partial_{\theta}\rho_{\theta}$ are equal to
zero. These are all possible matrix elements, and so we conclude that
$\partial_{\theta}\rho_{\theta}=0$. This in turn implies that $\rho_{\theta}$
is a constant family (i.e., $\rho_{A}^{\theta}=\rho_{A}$ for all $\theta$). If
$\widehat{I}_{F}(\theta;\{\rho_{A}\}_{\theta})=0$, then by the inequality in
\eqref{eq:SLD<=RLD}, $I_{F}(\theta;\{\rho_{A}\}_{\theta})=0$. Then by what we
have just shown, $\rho_{\theta}$ is a constant family in this case also.
\end{proof}

This property is a basic one that we expect the SLD and RLD Fisher informations to obey. Since they quantify the amount of information about a parameter present in a family of quantum states, it is natural to expect them to be zero when the family of quantum states has no dependence on the parameter of interest.

\subsubsection{Data processing}

The SLD\ and RLD\ Fisher informations obey the following data-processing
inequalities:
\begin{align}
I_{F}(\theta;\{\rho_{A}^{\theta}\}_{\theta})  &  \geq I_{F}(\theta
;\{\mathcal{N}_{A\rightarrow B}(\rho_{A}^{\theta})\}_{\theta}
),\label{eq:DP-SLD}\\
\widehat{I}_{F}(\theta;\{\rho_{A}^{\theta}\}_{\theta})  &  \geq\widehat{I}
_{F}(\theta;\{\mathcal{N}_{A\rightarrow B}(\rho_{A}^{\theta})\}_{\theta}),
\label{eq:DP-RLD}
\end{align}
where $\mathcal{N}_{A\rightarrow B}$ is a quantum channel independent of the
parameter $\theta$ (more generally, these hold if $\mathcal{N}_{A\rightarrow
B}$ is a two-positive, trace-preserving map). The data-processing inequalities
for $I_{F}$ and $\widehat{I}_{F}$ were established in \cite{Petz1996}. In fact,
the inequality in \eqref{eq:DP-RLD}\ is an immediate consequence of
\cite[Proposition~4.1]{Choi1980}.

The fact that the SLD and RLD Fisher informations obey data-processing under quantum channels will prove to be a fundamental property that we will use in the rest of this chapter to establish bounds on state and channel estimation.

\subsubsection{Additivity}

\begin{proposition}
\label{prop:additivity-SLD-RLD-states}Let $\{\rho_{A}^{\theta}\}_{\theta}$ and
$\{\sigma_{A}^{\theta}\}_{\theta}$ be differentiable families of quantum
states. Then the SLD\ and RLD\ Fisher informations are additive in the
following sense:
\begin{align}
I_{F}(\theta;\{\rho_{A}^{\theta}\otimes\sigma_{B}^{\theta}\}_{\theta})  &
=I_{F}(\theta;\{\rho_{A}^{\theta}\}_{\theta})+I_{F}(\theta;\{\sigma
_{B}^{\theta}\}_{\theta}),\label{eq:additivity-SLD-states}\\
\widehat{I}_{F}(\theta;\{\rho_{A}^{\theta}\otimes\sigma_{B}^{\theta}
\}_{\theta})  &  =\widehat{I}_{F}(\theta;\{\rho_{A}^{\theta}\}_{\theta
})+\widehat{I}_{F}(\theta;\{\sigma_{B}^{\theta}\}_{\theta}).
\label{eq:additivity-RLD-states}
\end{align}

\end{proposition}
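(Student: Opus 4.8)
The plan is to split into the finite and infinite cases. For the finite case, I would assume the relevant finiteness conditions hold for both $\{\rho_{A}^{\theta}\}_{\theta}$ and $\{\sigma_{B}^{\theta}\}_{\theta}$, so that (by the implicit definitions quoted above) SLD operators $L_{\theta}^{\rho},L_{\theta}^{\sigma}$ and RLD operators $R_{\theta}^{\rho},R_{\theta}^{\sigma}$ exist. The first step is to observe that $L_{\theta}^{\rho}\otimes I+I\otimes L_{\theta}^{\sigma}$ is a valid SLD operator, and $R_{\theta}^{\rho}\otimes I+I\otimes R_{\theta}^{\sigma}$ a valid RLD operator, for the product family $\{\rho_{A}^{\theta}\otimes\sigma_{B}^{\theta}\}_{\theta}$. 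This follows in one line from the Leibniz rule $\partial_{\theta}(\rho_{\theta}\otimes\sigma_{\theta})=(\partial_{\theta}\rho_{\theta})\otimes\sigma_{\theta}+\rho_{\theta}\otimes(\partial_{\theta}\sigma_{\theta})$ combined with $\partial_{\theta}\rho_{\theta}=\frac{1}{2}(L_{\theta}^{\rho}\rho_{\theta}+\rho_{\theta}L_{\theta}^{\rho})$ and $\partial_{\theta}\rho_{\theta}=\rho_{\theta}R_{\theta}^{\rho}$ (and likewise for $\sigma$): after substituting, the proposed operators visibly satisfy the corresponding defining equation for $\rho_{\theta}\otimes\sigma_{\theta}$.

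The second step is to substitute these operators into the trace formulas $I_{F}=\operatorname{Tr}[L_{\theta}^{2}\rho_{\theta}]$ and $\widehat{I}_{F}=\operatorname{Tr}[R_{\theta}R_{\theta}^{\dag}\rho_{\theta}]$. Expanding $(L_{\theta}^{\rho}\otimes I+I\otimes L_{\theta}^{\sigma})^{2}$ (the two summands commute) and the analogous RLD product, tracing against $\rho_{\theta}\otimes\sigma_{\theta}$, and using $\operatorname{Tr}[\rho_{\theta}]=\operatorname{Tr}[\sigma_{\theta}]=1$, the ``diagonal'' terms reproduce $I_{F}(\theta;\{\rho_{A}^{\theta}\}_{\theta})+I_{F}(\theta;\{\sigma_{B}^{\theta}\}_{\theta})$ and the corresponding RLD sum. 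Every cross term carries a factor of the form $\operatorname{Tr}[L_{\theta}^{\rho}\rho_{\theta}]$ or $\operatorname{Tr}[\rho_{\theta}R_{\theta}^{\rho}]$ (or their conjugates), and all of these equal $\operatorname{Tr}[\partial_{\theta}\rho_{\theta}]=\partial_{\theta}\operatorname{Tr}[\rho_{\theta}]=\partial_{\theta}1=0$; for the RLD factor one first invokes the support condition to rewrite $\rho_{\theta}R_{\theta}^{\rho}$ as $\partial_{\theta}\rho_{\theta}$ under the trace. Hence the cross terms vanish and additivity holds whenever all three quantities are finite.

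The third step handles the infinite case, for which it suffices to show that the product finiteness condition is equivalent to the conjunction of the two factor conditions (so that if either factor condition fails, all three quantities equal $+\infty$ and additivity is trivial). Writing $\Pi_{\rho_{\theta}}:=I-\Pi_{\rho_{\theta}}^{\perp}$ for the support projection, one has $\Pi_{\rho_{\theta}\otimes\sigma_{\theta}}^{\perp}=I-\Pi_{\rho_{\theta}}\otimes\Pi_{\sigma_{\theta}}=\Pi_{\rho_{\theta}}^{\perp}\otimes\Pi_{\sigma_{\theta}}+\Pi_{\rho_{\theta}}\otimes\Pi_{\sigma_{\theta}}^{\perp}+\Pi_{\rho_{\theta}}^{\perp}\otimes\Pi_{\sigma_{\theta}}^{\perp}$; feeding this and the Leibniz expansion of $\partial_{\theta}(\rho_{\theta}\otimes\sigma_{\theta})$ into the SLD condition and repeatedly using $\Pi_{\rho_{\theta}}^{\perp}\rho_{\theta}=0$ and $\Pi_{\sigma_{\theta}}^{\perp}\sigma_{\theta}=0$, the two-sided product $\Pi_{\rho_{\theta}\otimes\sigma_{\theta}}^{\perp}\,\partial_{\theta}(\rho_{\theta}\otimes\sigma_{\theta})\,\Pi_{\rho_{\theta}\otimes\sigma_{\theta}}^{\perp}$ collapses to $(\Pi_{\rho_{\theta}}^{\perp}(\partial_{\theta}\rho_{\theta})\Pi_{\rho_{\theta}}^{\perp})\otimes\sigma_{\theta}+\rho_{\theta}\otimes(\Pi_{\sigma_{\theta}}^{\perp}(\partial_{\theta}\sigma_{\theta})\Pi_{\sigma_{\theta}}^{\perp})$. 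The two surviving summands have mutually orthogonal ranges and domains, so their sum vanishes iff each does, i.e.\ (since $\rho_{\theta},\sigma_{\theta}\neq0$) iff both factor finiteness conditions hold; the one-sided support condition relevant to the RLD yields $(\Pi_{\rho_{\theta}}^{\perp}\partial_{\theta}\rho_{\theta})\otimes\sigma_{\theta}+\rho_{\theta}\otimes(\Pi_{\sigma_{\theta}}^{\perp}\partial_{\theta}\sigma_{\theta})$ under the same manipulation and is disposed of identically.

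I expect the main obstacle to be precisely this last step: the finite-case algebra is routine, but verifying that the product finiteness condition degenerates \emph{exactly} to the conjunction of the two factor conditions requires care with the two-sided sandwiching by $\Pi^{\perp}$ and with the fact that, since neither $\rho_{\theta}$ nor $\sigma_{\theta}$ is zero, a surviving tensor factor cannot silently annihilate a nonzero term.
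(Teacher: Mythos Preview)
Your argument is correct and is the standard direct proof: verify that $L_{\theta}^{\rho}\otimes I+I\otimes L_{\theta}^{\sigma}$ (resp.\ $R_{\theta}^{\rho}\otimes I+I\otimes R_{\theta}^{\sigma}$) is a valid SLD (resp.\ RLD) for the product family via the Leibniz rule, expand the trace formula, and kill the cross terms using $\operatorname{Tr}[\partial_{\theta}\rho_{\theta}]=0$; then handle the infinite case by showing the product finiteness condition decomposes as the conjunction of the factor conditions. The paper does not actually give its own proof of this proposition but defers to \cite{Katariya2021}, so a line-by-line comparison is not possible here; however, your approach is exactly the standard one and almost certainly matches what is done there. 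One small remark: your aside that ``for the RLD factor one first invokes the support condition to rewrite $\rho_{\theta}R_{\theta}^{\rho}$ as $\partial_{\theta}\rho_{\theta}$'' is unnecessary, since $\rho_{\theta}R_{\theta}^{\rho}=\partial_{\theta}\rho_{\theta}$ is the defining equation of $R_{\theta}^{\rho}$ and cyclicity of the trace then gives $\operatorname{Tr}[R_{\theta}^{\rho}\rho_{\theta}]=\operatorname{Tr}[\partial_{\theta}\rho_{\theta}]=0$ directly.
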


\subsubsection{Decomposition for classical--quantum families}

\begin{proposition}
\label{prop:cq-decomp-SLD-RLD}Let $\left\{  \rho_{XB}^{\theta}\right\}
_{\theta}$ be a differentiable family of classical--quantum states, where
\begin{equation}
\rho_{XB}^{\theta}:=\sum_{x}p_{\theta}(x)|x\rangle\!\langle x|_{X}\otimes
\rho_{\theta}^{x}.
\end{equation}
Then the following decompositions hold for the SLD\ and RLD\ Fisher
informations:
\begin{align}
I_{F}(\theta;\{\rho_{XB}^{\theta}\}_{\theta})  &  =I_{F}(\theta;\{p_{\theta
}\}_{\theta})+\sum_{x}p_{\theta}(x)I_{F}(\theta;\{\rho_{\theta}^{x}\}_{\theta
}),\\
\widehat{I}_{F}(\theta;\{\rho_{XB}^{\theta}\}_{\theta})  &  =I_{F}
(\theta;\{p_{\theta}\}_{\theta})+\sum_{x}p_{\theta}(x)\widehat{I}_{F}
(\theta;\{\rho_{\theta}^{x}\}_{\theta}).
\end{align}

\end{proposition}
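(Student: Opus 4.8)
The plan is to exploit the block structure of the classical--quantum family and reduce both identities to the trace formulas \eqref{eq:SLD-FI} and \eqref{eq:RLD-FI}. First I would introduce the classical logarithmic derivative $\ell_\theta(x) := \partial_\theta \ln p_\theta(x)$ together with the SLD and RLD operators $L_\theta^x$, $R_\theta^x$ of the component families $\{\rho_\theta^x\}_\theta$, so that $\partial_\theta \rho_\theta^x = \tfrac12(L_\theta^x \rho_\theta^x + \rho_\theta^x L_\theta^x)$ and $\partial_\theta \rho_\theta^x = \rho_\theta^x R_\theta^x$ hold on the supports of $\rho_\theta^x$. Differentiating $\rho_{XB}^\theta = \sum_x p_\theta(x)\,|x\rangle\!\langle x|_X \otimes \rho_\theta^x$ gives the block-diagonal expression $\partial_\theta \rho_{XB}^\theta = \sum_x |x\rangle\!\langle x|_X \otimes\bigl( (\partial_\theta p_\theta(x))\, \rho_\theta^x + p_\theta(x)\,\partial_\theta \rho_\theta^x\bigr)$.

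Next I would propose the ansatz $L_\theta := \sum_x |x\rangle\!\langle x|_X \otimes (\ell_\theta(x) I_B + L_\theta^x)$ for the SLD operator of the joint family, and $R_\theta := \sum_x |x\rangle\!\langle x|_X \otimes (\ell_\theta(x) I_B + R_\theta^x)$ for the RLD operator, and verify the defining relations by direct substitution: the $x$-th block of $\tfrac12(L_\theta \rho_{XB}^\theta + \rho_{XB}^\theta L_\theta)$ equals $\ell_\theta(x) p_\theta(x) \rho_\theta^x + p_\theta(x)\,\tfrac12(L_\theta^x \rho_\theta^x + \rho_\theta^x L_\theta^x) = (\partial_\theta p_\theta(x))\rho_\theta^x + p_\theta(x)\,\partial_\theta \rho_\theta^x$, which is exactly the $x$-th block of $\partial_\theta\rho_{XB}^\theta$; the RLD check is identical with $\rho_{XB}^\theta R_\theta$. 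Then I plug these into $I_F(\theta;\{\rho_{XB}^\theta\}_\theta) = \operatorname{Tr}[L_\theta^2 \rho_{XB}^\theta]$ and $\widehat{I}_F(\theta;\{\rho_{XB}^\theta\}_\theta) = \operatorname{Tr}[R_\theta R_\theta^\dag \rho_{XB}^\theta]$. By block-diagonality the SLD expression becomes $\sum_x p_\theta(x)\operatorname{Tr}[(\ell_\theta(x) I + L_\theta^x)^2 \rho_\theta^x]$; expanding the square, $\ell_\theta(x)^2\operatorname{Tr}[\rho_\theta^x] = \ell_\theta(x)^2$, the cross term $2\ell_\theta(x)\operatorname{Tr}[L_\theta^x \rho_\theta^x] = 2\ell_\theta(x)\operatorname{Tr}[\partial_\theta \rho_\theta^x] = 0$ since $\operatorname{Tr}[\rho_\theta^x]\equiv 1$, and $\operatorname{Tr}[(L_\theta^x)^2\rho_\theta^x] = I_F(\theta;\{\rho_\theta^x\}_\theta)$; summing against $p_\theta(x)$ and using $\sum_x p_\theta(x)\ell_\theta(x)^2 = \mathbb{E}[(\partial_\theta\ln p_\theta(X))^2] = I_F(\theta;\{p_\theta\}_\theta)$ (by \eqref{eq:CFI-for-RLD}) gives the first claim. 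The RLD case is the same except one also needs $\operatorname{Tr}[(R_\theta^x)^\dag \rho_\theta^x] = \overline{\operatorname{Tr}[R_\theta^x \rho_\theta^x]} = 0$ and that $\ell_\theta(x)$ is real (being the derivative of a real function), so that only $\ell_\theta(x)^2 + \widehat{I}_F(\theta;\{\rho_\theta^x\}_\theta)$ survives in each block.

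I expect the main obstacle to be the bookkeeping around the finiteness/support conditions rather than the algebra, since $L_\theta$ and $R_\theta$ are only pinned down on the support of $\rho_{XB}^\theta$. So I would first observe that the kernel condition $\Pi^\perp_{\rho_{XB}^\theta}(\partial_\theta\rho_{XB}^\theta)\Pi^\perp_{\rho_{XB}^\theta} = 0$ (resp. $\operatorname{supp}(\partial_\theta\rho_{XB}^\theta)\subseteq\operatorname{supp}(\rho_{XB}^\theta)$ for the RLD) holds iff it holds blockwise and $\operatorname{supp}(\partial_\theta p_\theta)\subseteq\operatorname{supp}(p_\theta)$, so that both sides of each identity are simultaneously $+\infty$ when it fails; only on the support do I run the computation above, dropping any $x$ with $p_\theta(x)=0$. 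As an alternative route avoiding logarithmic-derivative operators, one could apply the spectral formula \eqref{eq:SLD-Fish-info-formula} (resp. \eqref{eq:RLD-FI}) directly to the block-diagonal spectral decomposition $\rho_{XB}^\theta = \sum_{x,j} p_\theta(x)\lambda_\theta^{x,j}\, |x\rangle\!\langle x|_X \otimes |\psi_\theta^{x,j}\rangle\!\langle\psi_\theta^{x,j}|$ and collect terms, but the operator approach is cleaner and makes the role of the cross-term cancellation transparent.
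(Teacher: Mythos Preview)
Your approach is correct and is the standard one: build the block-diagonal SLD/RLD operators for the joint family from the classical score $\ell_\theta(x)$ and the componentwise operators $L_\theta^x,R_\theta^x$, verify the defining equations blockwise, and then expand the trace using $\operatorname{Tr}[\partial_\theta\rho_\theta^x]=0$ to kill the cross terms. Your handling of the finiteness conditions (observing that the kernel/support conditions on $\rho_{XB}^\theta$ are equivalent to the conjunction of the classical support condition on $p_\theta$ and the blockwise conditions on each $\rho_\theta^x$) is also the right way to close the argument.

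The paper itself does not include a proof of this proposition; it simply cites \cite{Katariya2021} for the details. So there is no in-text argument to compare against, but your operator-ansatz route is precisely the natural one and would be accepted as a full proof. One small cosmetic point: in the RLD cross term you appeal to $\operatorname{Tr}[(R_\theta^x)^\dag\rho_\theta^x]=\overline{\operatorname{Tr}[R_\theta^x\rho_\theta^x]}$, which is fine, but it is slightly cleaner to note directly that $\operatorname{Tr}[R_\theta^x\rho_\theta^x]=\operatorname{Tr}[\rho_\theta^x R_\theta^x]=\operatorname{Tr}[\partial_\theta\rho_\theta^x]=0$ by cyclicity, and likewise for its conjugate.
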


\subsubsection{Physical consistency of SLD and RLD Fisher information}

The SLD and RLD Fisher information both are physically consistent; i.e., they are both the result of a limiting procedure in which some constant additive noise vanishes.

\begin{proposition}
\label{prop:physical-consistency-SLD-Fish-states}Let $\{\rho_{\theta
}\}_{\theta}$ be a differentiable family of quantum states. Then the
SLD\ Fisher information in \eqref{eq:basis-independent-formula-SLD}\ is given by the
following limit:
\begin{equation}
I_{F}(\theta;\{\rho_{\theta}\}_{\theta})=\lim_{\varepsilon\rightarrow0}
I_{F}(\theta;\{\rho_{\theta}^{\varepsilon}\}_{\theta}),
\label{eq:fish-limit-delta}
\end{equation}
where
\begin{equation}
\rho_{\theta}^{\varepsilon}:=\left(  1-\varepsilon\right)  \rho_{\theta
}+\varepsilon\pi_{d}, \label{eq:rho-delta-def}
\end{equation}
and $\pi_{d}:=I/d$ is the maximally mixed state, with $d$ large enough so that
$\operatorname{supp}(\rho_{\theta})\subseteq\operatorname{supp}(\pi)$ for all
$\theta$.
\end{proposition}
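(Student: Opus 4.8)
The plan is to reduce the statement to the explicit spectral formula \eqref{eq:SLD-Fish-info-formula} and to exploit the fact that $\pi_d = I/d$ is a multiple of the identity, so that mixing $\rho_\theta$ with it does not rotate eigenvectors. Fix $\varepsilon \in (0,1]$. Then $\rho_\theta^\varepsilon$ from \eqref{eq:rho-delta-def} has full support, any eigenbasis $\{|\psi_\theta^j\rangle\}_j$ of $\rho_\theta$ is also an eigenbasis of $\rho_\theta^\varepsilon$ with eigenvalues $\mu_\theta^{j,\varepsilon} := (1-\varepsilon)\lambda_\theta^j + \varepsilon/d$, and $\partial_\theta \rho_\theta^\varepsilon = (1-\varepsilon)\,\partial_\theta\rho_\theta$. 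Hence the finiteness condition for $\rho_\theta^\varepsilon$ holds trivially, $I_F(\theta;\{\rho_\theta^\varepsilon\}_\theta)$ is finite, and
\[
I_F(\theta;\{\rho_\theta^\varepsilon\}_\theta) = 2(1-\varepsilon)^2 \sum_{j,k} \frac{|\langle\psi_\theta^j|(\partial_\theta\rho_\theta)|\psi_\theta^k\rangle|^2}{(1-\varepsilon)(\lambda_\theta^j+\lambda_\theta^k) + 2\varepsilon/d},
\]
the sum running over all index pairs (all denominators being positive).

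For the direction $\lim_{\varepsilon\to 0^+} I_F(\theta;\{\rho_\theta^\varepsilon\}_\theta) \le I_F(\theta;\{\rho_\theta\}_\theta)$ I would invoke data processing. The map $\rho \mapsto (1-\varepsilon)\rho + \varepsilon\pi_d$ is a fixed, $\theta$-independent depolarizing channel, and for $0 < \varepsilon < \varepsilon' \le 1$ one checks that $\rho_\theta^{\varepsilon'} = (1-\delta)\rho_\theta^{\varepsilon} + \delta\pi_d$ with $\delta = (\varepsilon'-\varepsilon)/(1-\varepsilon) \in (0,1]$, i.e. $\rho_\theta^{\varepsilon'}$ is again a depolarized copy of $\rho_\theta^\varepsilon$. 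Thus \eqref{eq:DP-SLD} makes $\varepsilon\mapsto I_F(\theta;\{\rho_\theta^\varepsilon\}_\theta)$ non-increasing on $(0,1]$ and bounded above by $I_F(\theta;\{\rho_\theta\}_\theta)$ (the limiting case being $\rho_\theta$ itself), so the limit exists in $[0,\infty]$ and obeys the asserted bound.

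For the reverse inequality I would pass to the limit $\varepsilon\to 0^+$ termwise in the displayed sum, which is legitimate because the Hilbert space is finite-dimensional and every summand is nonnegative. A pair $(j,k)$ with $\lambda_\theta^j+\lambda_\theta^k > 0$ contributes $2\,|\langle\psi_\theta^j|(\partial_\theta\rho_\theta)|\psi_\theta^k\rangle|^2 / (\lambda_\theta^j+\lambda_\theta^k)$ in the limit, exactly the matching term of \eqref{eq:SLD-Fish-info-formula}. If the finiteness condition $\Pi_{\rho_\theta}^{\perp}(\partial_\theta\rho_\theta)\Pi_{\rho_\theta}^{\perp}=0$ holds, then every pair with $\lambda_\theta^j = \lambda_\theta^k = 0$ has $\langle\psi_\theta^j|(\partial_\theta\rho_\theta)|\psi_\theta^k\rangle = 0$, so these terms vanish identically and the finite sum converges to $I_F(\theta;\{\rho_\theta\}_\theta)$, which is \eqref{eq:fish-limit-delta}. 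If the finiteness condition fails, pick a pair of kernel eigenvectors with $c := |\langle\psi_\theta^j|(\partial_\theta\rho_\theta)|\psi_\theta^k\rangle|^2 > 0$; its term equals $d(1-\varepsilon)^2 c/\varepsilon \to +\infty$, and nonnegativity of the remaining terms forces $I_F(\theta;\{\rho_\theta^\varepsilon\}_\theta) \to +\infty = I_F(\theta;\{\rho_\theta\}_\theta)$. Together with the previous paragraph this proves the proposition.

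The only point that genuinely needs care is the behavior on $\ker\rho_\theta$: one must use $\pi_d \propto I$ to guarantee that $\rho_\theta$ and $\rho_\theta^\varepsilon$ share an eigenbasis — otherwise the eigenvectors would drift with $\varepsilon$ and the termwise argument would be delicate — and one must invoke precisely the finiteness condition to annihilate the kernel-block matrix elements of $\partial_\theta\rho_\theta$. Running the argument instead through the basis-independent expression \eqref{eq:basis-independent-formula-SLD} is possible but less clean, since there the inverse is taken on the support of $\rho_\theta^\varepsilon\otimes I + I\otimes(\rho_\theta^\varepsilon)^T$, whose rank jumps as $\varepsilon\to 0$; the spectral formula sidesteps this discontinuity.
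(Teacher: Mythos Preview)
Your argument is correct. The paper itself does not spell out a proof of this proposition; it simply states that the proof ``can be found in \cite{Katariya2021}.'' So there is no in-text proof to compare against, and your write-up stands on its own.

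The approach you take --- passing to the spectral formula \eqref{eq:SLD-Fish-info-formula}, using that $\pi_d\propto I$ preserves the eigenbasis and merely shifts eigenvalues, and then splitting the analysis into (i) the data-processing upper bound via \eqref{eq:DP-SLD} and (ii) the termwise limit from below --- is the natural one and handles both the finite and the $+\infty$ cases cleanly. The monotonicity argument in $\varepsilon$ via nested depolarizing channels is a nice touch that guarantees existence of the limit before you identify it. Your closing remarks about why the spectral route is preferable to working directly with \eqref{eq:basis-independent-formula-SLD} are also apt: the jump in rank of $\rho_\theta^\varepsilon\otimes I + I\otimes(\rho_\theta^\varepsilon)^T$ at $\varepsilon=0$ would indeed make the basis-independent route more delicate.
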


\begin{proposition}
\label{prop:physical-consistency-RLD-Fish-states}Let $\{\rho_{\theta
}\}_{\theta}$ be a differentiable family of quantum states. Then the
RLD\ Fisher information in \eqref{eq:RLD-FI}\ is given by the following limit:
\begin{equation}
\widehat{I}_{F}(\theta;\{\rho_{\theta}\}_{\theta})=\lim_{\varepsilon
\rightarrow0}\widehat{I}_{F}(\theta;\{\rho_{\theta}^{\varepsilon}\}_{\theta}),
\end{equation}
where
\begin{equation}
\rho_{\theta}^{\varepsilon}:=\left(  1-\varepsilon\right)  \rho_{\theta
}+\varepsilon\pi_{d},
\end{equation}
and $\pi_{d}:=I/d$ is the maximally mixed state, with $d$ large enough so that
$\operatorname{supp}(\rho_{\theta})\subseteq\operatorname{supp}(\pi)$ for all
$\theta$.
\end{proposition}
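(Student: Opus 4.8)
The plan is to reduce the statement to a finite sum of elementary scalar limits by working in the common eigenbasis of $\rho_\theta$ and $\rho_\theta^\varepsilon$. First I would record the two easy structural facts about the perturbed family: since $\pi_d$ does not depend on $\theta$, we have $\partial_\theta \rho_\theta^\varepsilon = (1-\varepsilon)\,\partial_\theta \rho_\theta$; and since $\pi_d = I/d$ is positive definite on the space where $d$ is chosen large enough, $\rho_\theta^\varepsilon$ is positive definite for every $\varepsilon \in (0,1)$. Hence the finiteness (support) condition in \eqref{eq:RLD-FI} holds automatically for the perturbed family, giving
\begin{equation}
\widehat{I}_F(\theta;\{\rho_\theta^\varepsilon\}_\theta) = (1-\varepsilon)^2 \operatorname{Tr}\!\left[(\partial_\theta \rho_\theta)^2 (\rho_\theta^\varepsilon)^{-1}\right].
\end{equation}

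Next I would use the key observation that $\rho_\theta^\varepsilon = (1-\varepsilon)\rho_\theta + \varepsilon \pi_d$ commutes with $\rho_\theta$, so both are diagonal in a common orthonormal eigenbasis $\{|\psi_\theta^j\rangle\}_j$. If $\{\lambda_\theta^j\}_j$ are the eigenvalues of $\rho_\theta$, then $\rho_\theta^\varepsilon$ has eigenvalues $\mu_\theta^j(\varepsilon) := (1-\varepsilon)\lambda_\theta^j + \varepsilon/d > 0$. Writing $G := \partial_\theta \rho_\theta$ (Hermitian) and $G_{jk} := \langle\psi_\theta^j| G |\psi_\theta^k\rangle$, expanding the trace in this basis and using Hermiticity of $G$ yields
\begin{equation}
\operatorname{Tr}\!\left[(\partial_\theta \rho_\theta)^2 (\rho_\theta^\varepsilon)^{-1}\right] = \sum_{j,k} \frac{|G_{jk}|^2}{\mu_\theta^k(\varepsilon)}.
\end{equation}

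Then I would split into the two cases of \eqref{eq:RLD-FI}. If $\operatorname{supp}(\partial_\theta \rho_\theta) \subseteq \operatorname{supp}(\rho_\theta)$, i.e. $\Pi_{\rho_\theta}^\perp G = 0$, then $G_{jk} = 0$ whenever $\lambda_\theta^k = 0$, so only indices $k$ with $\lambda_\theta^k > 0$ contribute; on those, $\mu_\theta^k(\varepsilon) \to \lambda_\theta^k > 0$ as $\varepsilon \to 0$, and since the sum is finite, each term converges and the whole expression tends to $\sum_{j,k:\lambda_\theta^k>0} |G_{jk}|^2/\lambda_\theta^k = \operatorname{Tr}[(\partial_\theta\rho_\theta)^2 \rho_\theta^{-1}]$; together with $(1-\varepsilon)^2 \to 1$ this gives the claimed limit. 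If instead the support condition fails, there is a pair $(j,k)$ with $\lambda_\theta^k = 0$ and $G_{jk} \neq 0$, so the corresponding summand equals $|G_{jk}|^2/(\varepsilon/d) = d|G_{jk}|^2/\varepsilon \to +\infty$; since every summand is nonnegative and $(1-\varepsilon)^2 \to 1$, we get $\widehat{I}_F(\theta;\{\rho_\theta^\varepsilon\}_\theta) \to +\infty$, matching the value assigned by \eqref{eq:RLD-FI}.

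As for difficulty: there is no genuine analytic obstacle, because commutativity of $\rho_\theta$ and $\rho_\theta^\varepsilon$ collapses everything to a finite collection of scalar limits, and the argument runs parallel to the SLD case in Proposition~\ref{prop:physical-consistency-SLD-Fish-states}. The one point that needs care is the bookkeeping in the proof: correctly translating the support/finiteness condition of \eqref{eq:RLD-FI} into the vanishing of the kernel-block matrix elements of $\partial_\theta\rho_\theta$, so that one identifies exactly which summands survive (in the finite case) or blow up (in the infinite case) as $\varepsilon \to 0$.
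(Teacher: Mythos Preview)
Your argument is correct. The reduction via the common eigenbasis of $\rho_\theta$ and $\rho_\theta^\varepsilon$ is the right move: the identity $\partial_\theta\rho_\theta^\varepsilon=(1-\varepsilon)\partial_\theta\rho_\theta$, the full rank of $\rho_\theta^\varepsilon$, and the expansion $\operatorname{Tr}[G^2(\rho_\theta^\varepsilon)^{-1}]=\sum_{j,k}|G_{jk}|^2/\mu_\theta^k(\varepsilon)$ are all valid, and your two-case split correctly uses the equivalence (stated in the paper just after Definition~\ref{def:RLD-Fish-info-states}) between the support condition and $\Pi_{\rho_\theta}^\perp G = G\Pi_{\rho_\theta}^\perp = 0$ to isolate exactly which terms survive or blow up.

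As to comparison: the thesis does not actually give a proof of this proposition in the text; it states the result and defers the proof to \cite{Katariya2021}. So there is nothing in the present paper to compare your argument against. Your eigenbasis computation is in any case the natural direct route, parallel to what one expects for the SLD analogue in Proposition~\ref{prop:physical-consistency-SLD-Fish-states}.
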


\subsubsection{Relation between SLD and RLD Fisher information of quantum states}

We next show an important relationship obeyed by the SLD and RLD Fisher informations of quantum states, namely that for single parameter quantum state estimation, the SLD Fisher information never exceeds the RLD Fisher information.
\begin{proposition} \label{prop:SLD<=RLD}
For a family of quantum states $\{ \rho_\theta \}_{\theta}$, the SLD Fisher information never exceeds its RLD Fisher information:
\begin{equation}
I_{F}(\theta;\{\rho_{\theta}\}_{\theta})\leq\widehat{I}_{F}(\theta
;\{\rho_{\theta}\}_{\theta}), \label{eq:SLD<=RLD}
\end{equation}
\end{proposition}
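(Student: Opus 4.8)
The plan is to prove the inequality $I_{F}(\theta;\{\rho_{\theta}\}_{\theta})\leq\widehat{I}_{F}(\theta;\{\rho_{\theta}\}_{\theta})$ by reducing to the case where both quantities are finite, and then comparing the two explicit formulas. First, I would dispose of the trivial case: if $\widehat{I}_{F}(\theta;\{\rho_{\theta}\}_{\theta})=+\infty$, the inequality holds vacuously. So assume $\widehat{I}_{F}$ is finite, which means $\operatorname{supp}(\partial_{\theta}\rho_{\theta})\subseteq\operatorname{supp}(\rho_{\theta})$. As noted in the remark immediately following Definition~\ref{def:RLD-Fish-info-states}, this support condition implies $\Pi_{\rho_{\theta}}^{\perp}(\partial_{\theta}\rho_{\theta})\Pi_{\rho_{\theta}}^{\perp}=0$, so the SLD Fisher information is also finite and is given by the spectral formula~\eqref{eq:SLD-Fish-info-formula}.

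Next, I would write both quantities in the eigenbasis $\{|\psi_{\theta}^{j}\rangle\}_{j}$ of $\rho_{\theta}$ with eigenvalues $\lambda_{\theta}^{j}$. The SLD side is
\begin{equation}
I_{F}(\theta;\{\rho_{\theta}\}_{\theta})=2\sum_{j,k:\,\lambda_{\theta}^{j}+\lambda_{\theta}^{k}>0}\frac{|\langle\psi_{\theta}^{j}|(\partial_{\theta}\rho_{\theta})|\psi_{\theta}^{k}\rangle|^{2}}{\lambda_{\theta}^{j}+\lambda_{\theta}^{k}}.
\end{equation}
For the RLD side, using $\widehat{I}_{F}=\operatorname{Tr}[(\partial_{\theta}\rho_{\theta})^{2}\rho_{\theta}^{-1}]$ and the fact that the support condition lets us restrict to indices with $\lambda_{\theta}^{k}>0$, I would expand in the same eigenbasis to get
\begin{equation}
\widehat{I}_{F}(\theta;\{\rho_{\theta}\}_{\theta})=\sum_{j,k:\,\lambda_{\theta}^{k}>0}\frac{|\langle\psi_{\theta}^{j}|(\partial_{\theta}\rho_{\theta})|\psi_{\theta}^{k}\rangle|^{2}}{\lambda_{\theta}^{k}}.
\end{equation}
(Here one should check that terms with $\lambda_{\theta}^{j}=0$ and $\lambda_{\theta}^{k}>0$ contribute consistently, and that the support condition kills any term with $\lambda_{\theta}^{k}=0$.) The inequality then follows termwise by symmetrizing the RLD sum over $j\leftrightarrow k$: for a pair with both $\lambda_{\theta}^{j},\lambda_{\theta}^{k}>0$, the combined RLD contribution is $|m_{jk}|^{2}\big(\tfrac{1}{\lambda_{\theta}^{j}}+\tfrac{1}{\lambda_{\theta}^{k}}\big)=|m_{jk}|^{2}\tfrac{\lambda_{\theta}^{j}+\lambda_{\theta}^{k}}{\lambda_{\theta}^{j}\lambda_{\theta}^{k}}$, while the SLD contribution is $2|m_{jk}|^{2}\tfrac{1}{\lambda_{\theta}^{j}+\lambda_{\theta}^{k}}$, where $m_{jk}=\langle\psi_{\theta}^{j}|(\partial_{\theta}\rho_{\theta})|\psi_{\theta}^{k}\rangle$; the ratio is $\tfrac{(\lambda_{\theta}^{j}+\lambda_{\theta}^{k})^{2}}{2\lambda_{\theta}^{j}\lambda_{\theta}^{k}}\geq1$ by AM--GM. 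For a pair with exactly one of $\lambda_{\theta}^{j},\lambda_{\theta}^{k}$ equal to zero, the RLD term (indexed so $\lambda_{\theta}^{k}>0$) still dominates the corresponding SLD term since $\lambda_{\theta}^{j}+\lambda_{\theta}^{k}=\lambda_{\theta}^{k}$ there.

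The main obstacle I anticipate is bookkeeping of the degenerate/boundary indices — making sure the index sets in the two sums are handled so that the termwise comparison is legitimate, in particular that no RLD term is being discarded that the SLD sum needs, and that the support condition $\operatorname{supp}(\partial_{\theta}\rho_{\theta})\subseteq\operatorname{supp}(\rho_{\theta})$ is correctly invoked to ensure $m_{jk}=0$ whenever $\lambda_{\theta}^{k}=0$. A cleaner alternative that avoids eigenbasis bookkeeping is operator-theoretic: observe that $\partial_{\theta}\rho_{\theta}=\tfrac12(L_{\theta}\rho_{\theta}+\rho_{\theta}L_{\theta})=\rho_{\theta}R_{\theta}$, so that on the support of $\rho_{\theta}$ one has $L_{\theta}=\mathbf{J}_{\rho_{\theta}}^{-1}\!\big(\tfrac12(R_{\theta}+R_{\theta}^{\dag})\rho_{\theta}+\cdots\big)$ via the appropriate superoperator; then $\widehat{I}_{F}-I_{F}=\operatorname{Tr}[\rho_{\theta}(R_{\theta}-L_{\theta})(R_{\theta}-L_{\theta})^{\dag}]\geq0$ after a short computation using $\operatorname{Tr}[\rho_{\theta}R_{\theta}L_{\theta}^{\dag}]=\operatorname{Tr}[(\partial_{\theta}\rho_{\theta})L_{\theta}^{\dag}]=I_{F}$ and self-adjointness of $L_{\theta}$. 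I would present the eigenvalue argument as the primary proof since it is self-contained given~\eqref{eq:SLD-Fish-info-formula} and~\eqref{eq:RLD-FI}, and mention the operator identity as a remark.
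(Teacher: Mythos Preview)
Your approach is correct and genuinely different from the paper's. The paper works at the operator level: it applies operator convexity of $x\mapsto x^{-1}$ to obtain
\[
2\bigl(\rho_{\theta}\otimes I+I\otimes\rho_{\theta}^{T}\bigr)^{-1}\;\leq\;\tfrac12\bigl(\rho_{\theta}^{-1}\otimes I\bigr)+\tfrac12\bigl(I\otimes\rho_{\theta}^{-T}\bigr),
\]
then sandwiches with $(\partial_{\theta}\rho_{\theta}\otimes I)|\Gamma\rangle$ using the basis-independent SLD formula~\eqref{eq:basis-independent-formula-SLD}, the transpose trick~\eqref{eq:transpose-trick}, and~\eqref{eq:max-ent-partial-trace}; the non-full-rank case is handled via the limit formulae of Propositions~\ref{prop:physical-consistency-SLD-Fish-states} and~\ref{prop:physical-consistency-RLD-Fish-states}. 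You instead diagonalize and compare termwise via the scalar inequality $\tfrac{2}{\lambda_{j}+\lambda_{k}}\le\tfrac12\bigl(\tfrac{1}{\lambda_{j}}+\tfrac{1}{\lambda_{k}}\bigr)$, which is exactly the diagonal specialization of the paper's operator-convexity step. Your route is more elementary and self-contained from~\eqref{eq:SLD-Fish-info-formula} and~\eqref{eq:RLD-FI}; the paper's route is slicker and sidesteps the boundary-index bookkeeping entirely via the limiting argument. One small slip: when you combine the $(j,k)$ and $(k,j)$ ordered pairs, the SLD contribution is $4|m_{jk}|^{2}/(\lambda_{j}+\lambda_{k})$, not $2|m_{jk}|^{2}/(\lambda_{j}+\lambda_{k})$, so the correct ratio is $(\lambda_{j}+\lambda_{k})^{2}/(4\lambda_{j}\lambda_{k})\ge 1$ --- still AM--GM, so the conclusion stands. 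Note also that under the RLD support condition $\Pi_{\rho_{\theta}}^{\perp}(\partial_{\theta}\rho_{\theta})=0$, every matrix element $m_{jk}$ with $\lambda_{j}=0$ or $\lambda_{k}=0$ vanishes, so the boundary case is in fact vacuous rather than requiring a separate comparison. Your alternative identity $\widehat{I}_{F}-I_{F}=\operatorname{Tr}[\rho_{\theta}(R_{\theta}-L_{\theta})(R_{\theta}-L_{\theta})^{\dag}]\ge 0$ is also correct and is a third, independent proof.
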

\begin{proof}
This can be seen from the operator convexity of the function $x^{-1}$ for
$x>0$. That is, for full-rank $\rho_{\theta}$, we have that
\begin{align}
2\left(  \rho_{\theta}\otimes I+I\otimes\rho_{\theta}^{T}\right)  ^{-1}  &
=\left(  \frac{1}{2}\rho_{\theta}\otimes I+\frac{1}{2}I\otimes\rho_{\theta
}^{T}\right)  ^{-1}\\
&  \leq\frac{1}{2}\left(  \rho_{\theta}\otimes I\right)  ^{-1}+\frac{1}
{2}\left(  I\otimes\rho_{\theta}^{T}\right)  ^{-1}\\
&  =\frac{1}{2}\left(  \rho_{\theta}^{-1}\otimes I\right)  +\frac{1}{2}\left(
I\otimes\rho_{\theta}^{-T}\right).
\end{align}
Then, we apply the transpose trick \eqref{eq:transpose-trick}, the identity \eqref{eq:max-ent-partial-trace}, the formula for SLD Fisher information \eqref{eq:basis-independent-formula-SLD} and the limit formulae in Propositions~\ref{prop:physical-consistency-SLD-Fish-states} and \ref{prop:physical-consistency-RLD-Fish-states} to get~\eqref{eq:SLD<=RLD}.
\end{proof}

\subsection{Cramer--Rao bounds for quantum state estimation}

Earlier, we stated that by optimizing over estimation strategies in the quantum setting (i.e., the selection of input probe state and final measurement), we can obtain the most informative lower bound on the variance of an unbiased estimator of the parameter $\theta$. For single parameter quantum state estimation, the most informative CRB is the one that arises from the SLD Fisher information:
\begin{equation}
\text{Var}(\hat{\theta})\geq\frac{1}{nI_{F}(\theta;\{\rho_{\theta}\}_{\theta
})}. \label{eq:QCRB}
\end{equation}

To obtain the above bound, we have also applied the additivity relation $I_{F}(\theta;\{\rho^{\otimes n}_{\theta}\}_{\theta
}) = nI_{F}(\theta;\{\rho_{\theta}\}_{\theta
})$. The lower bound in
\eqref{eq:QCRB} is achievable in the large $n$ limit of many copies of the
state $\rho_{\theta}$ \cite{Nagaoka1989,Braunstein1994}.

Further, the SLD Fisher information is never larger than the RLD Fisher information for single parameter estimation (Proposition~\ref{prop:SLD<=RLD}). This yields a straightforward Cramer--Rao bound that utilizes the RLD Fisher information:
\begin{equation}
\text{Var}(\hat{\theta})\geq\frac{1}{n\widehat{I}_{F}(\theta;\{\rho_{\theta}\}_{\theta })}. \label{eq:RLD-QCRB}
\end{equation}

Thus, for single parameter estimation, the Cramer--Rao bound involving the RLD Fisher information is always looser, or less useful, than the one involving the SLD Fisher information. Furthermore, if we have a pure state family $\{ \ket{\psi_\theta} \}_\theta$, the finiteness condition for the RLD Fisher information in~\eqref{eq:RLD-FI} is not satisfied and it yields an uninformative Cramer--Rao bound.

\subsection{Generalized Fisher information of quantum states}

As we stated earlier, there is an infinite number of ways to define the quantum Fisher information due to the noncommutative nature of states and operators in quantum information. We have since defined the SLD and RLD Fisher information using a parameterized logarithmic derivative operator in~\eqref{eq:parameterized-log-der-operator}. In a different vein now, we define the generalized Fisher information, wherein the only requirement is that the quantum Fisher information in question obeys the data-processing inequality, similar to how the SLD and RLD Fisher information do (see~\eqref{eq:DP-SLD} and \eqref{eq:DP-RLD}). Data processing is a fundamental and important tool in quantum information, and the motivation to introduce the generalized Fisher information is to use the data-processing inequality to establish as many of its properties as possible. We are further motivated to define it because of the earlier definitions of generalized distinguishability measures (or generalized divergences) \cite{Polyanskiy2010,Sharma2012} and their uses in quantum communication and distinguishability theory \cite{Sharma2012,Wilde2014,Gupta2015,Tomamichel2017,Wilde2017a,Leditzky2016,Kaur2018,Das2020,Kaur2019,Fang2019,Wang2019,Takeoka2016,Leditzky2018,Berta2018,Wang2019a}. We further define the generalized Fisher information of quantum channels later in this chapter.

\begin{definition}[Generalized Fisher information of quantum states]
\label{def:gen-fish-info-states} The generalized Fisher information $\mathbf{I}_{F}(\theta;\{\rho_{A}^{\theta}\}_{\theta})$ of a family $\{\rho_{A}^{\theta}\}_{\theta}$ of quantum states is a function $\mathbf{I}_{F} : \Theta\times\mathcal{D} \to\mathbb{R}$ that does not increase under the action of a parameter-independent quantum channel~$\mathcal{N}_{A\rightarrow B}$:
\begin{equation}
    \mathbf{I}_{F}(\theta;\{\rho_{A}^{\theta}\}_{\theta})\geq\mathbf{I}_{F}(\theta;\{\mathcal{N}_{A\rightarrow B}(\rho_{A}^{\theta})\}_{\theta}). \label{eq:gen-fisher-states}
\end{equation}
\end{definition}

Since both the SLD and the RLD Fisher information obey the data-processing inequality, they are particular examples of the generalized Fisher information of quantum states.  

An immediate consequence of Definition~\ref{def:gen-fish-info-states} is that the generalized Fisher information is equal to a constant, minimal value for a state family that has no dependence on the parameter $\theta$:
\begin{equation}
    \mathbf{I}_{F}(\theta;\{\rho_{A}\}_{\theta})=c. \label{eq:constant-value-gen-Fish-states}
\end{equation}
This follows because one can get from one fixed family $\{\rho_{A}\}_{\theta}$ to another $\{\sigma_{A}\}_{\theta}$ by means of a trace and replace channel $(\cdot)\rightarrow\operatorname{Tr}[(\cdot)]\sigma_{A}$, and then we apply the data-processing inequality. If this constant $c$\ is equal to zero, then we say that the generalized Fisher information is \textit{weakly faithful}.

A generalized Fisher information obeys the \textit{direct-sum property} if the following equality holds
\begin{equation}
    \mathbf{I}_{F}\!\left(  \theta;\left\{  \sum_{x}p(x)|x\rangle\!\langle x|\otimes\rho_{\theta}^{x}\right\}  _{\theta}\right)  =\sum_{x}p(x)\mathbf{I}_{F}(\theta;\left\{  \rho_{\theta}^{x}\right\}  _{\theta}), \label{eq:direct-sum-prop-gen-fish}
\end{equation}
where, for each $x$, the family $\left\{  \rho_{\theta}^{x}\right\}  _{\theta}$ of quantum states is differentiable. Observe that the probability distribution $p(x)$ has no dependence on the parameter $\theta$. If a generalized Fisher information obeys the direct-sum property, then it is also convex in the following sense:
\begin{equation}
    \sum_{x}p(x)\mathbf{I}_{F}(\theta;\left\{  \rho_{\theta}^{x}\right\}_{\theta})\geq\mathbf{I}_{F}(\theta;\left\{  \overline{\rho}_{\theta}\right\}_{\theta}), \label{eq:gen-fish-convex}
\end{equation}
where $\overline{\rho}_{\theta}:=\sum_{x}p(x)\rho_{\theta}^{x}$. This follows by applying the direct-sum property \eqref{eq:direct-sum-prop-gen-fish} and the data-processing inequality with a partial trace over the classical register. Thus, due to their respective data-processing inequalities \eqref{eq:DP-SLD} and \eqref{eq:DP-RLD}, and Proposition~\ref{prop:cq-decomp-SLD-RLD}, the SLD and RLD Fisher informations are convex.

\bigskip

\section{Quantum channel estimation} \label{sec:prelims-channel-estimation}

The next rung in the hierarchical ladder of probability distributions and quantum states is quantum channels. Our next step, therefore, is to study the fundamental limits to estimation of parameters encoded in a quantum channel. In contrast to quantum states, quantum channels are dynamical objects which take quantum states to quantum states. Therefore, studying parameter estimation for quantum channels is a more mathematically involved task than the corresponding task for quantum states. 

In the previous section, we assumed that the $n$ copies of the quantum state $\rho_\theta$ were available in a tensor product form; i.e., $\rho_\theta^{\otimes n}$. However, since quantum channels are dynamic rather than static objects, they allow for more general interactions and probing. The most general way to use, or process, $n$ copies of a quantum channel is via a sequential or adaptive strategy, which we now describe.

\subsection{Sequential channel estimation setting}

\begin{figure}[ptb]
\centering
\includegraphics[width=\linewidth]{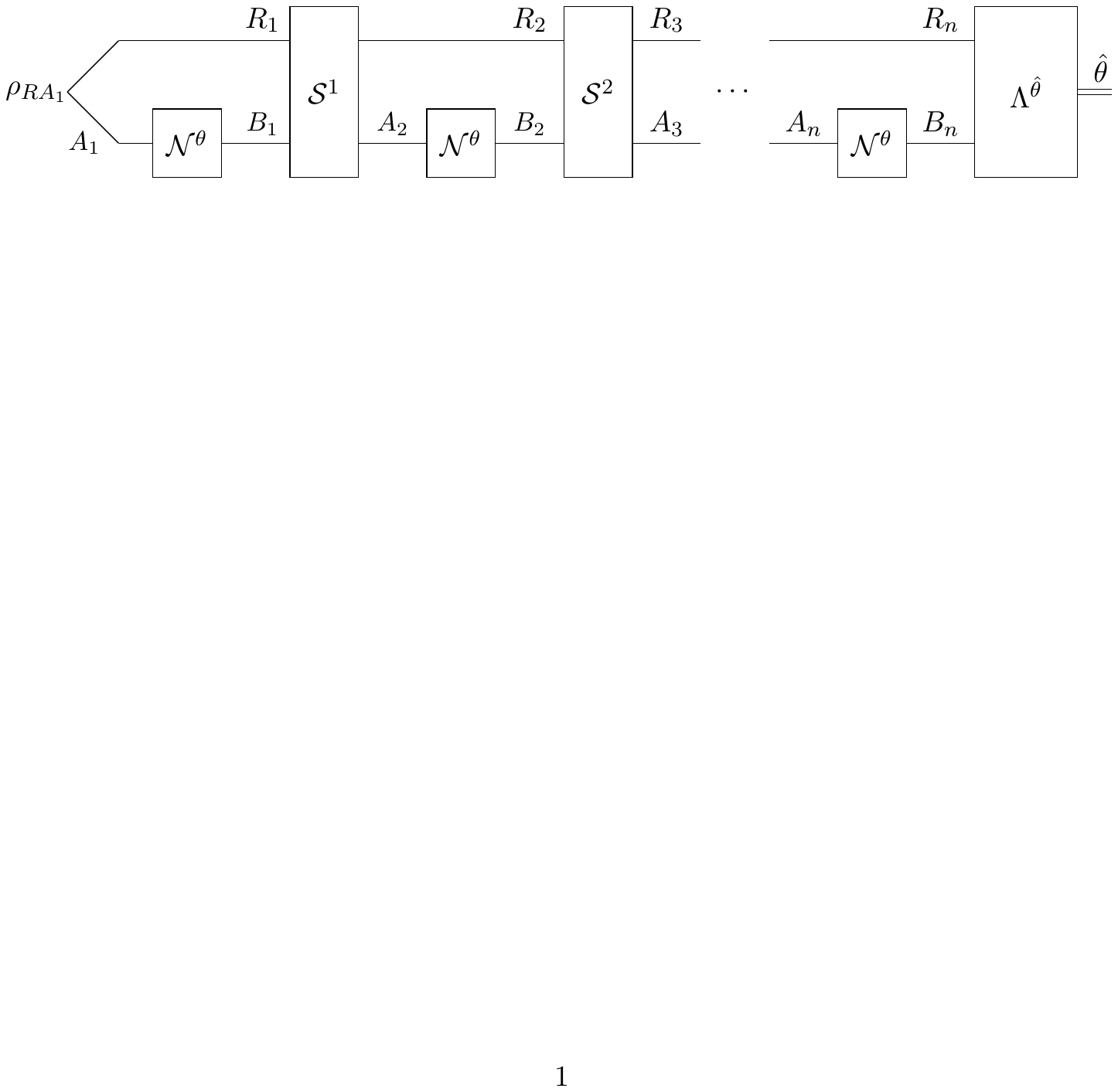}
\caption{Processing $n$ uses of channel $\mathcal{N}^{\theta}$ in a sequential manner.}{Processing $n$ uses of channel $\mathcal{N}^{\theta}$ in a sequential or adaptive manner is the most general approach to channel parameter estimation or discrimination. The $n$ uses of the channel are interleaved with $n-1$ quantum channels $\mathcal{S}^{1}$ through $\mathcal{S}^{n-1}$, which can also share memory systems with each other. The final measurement's outcome is then used to obtain an estimate of the unknown parameter $\theta$.}
\label{fig:adaptive-scheme}
\end{figure}

\begin{figure}[ptb]
\centering
\includegraphics[width=\linewidth]{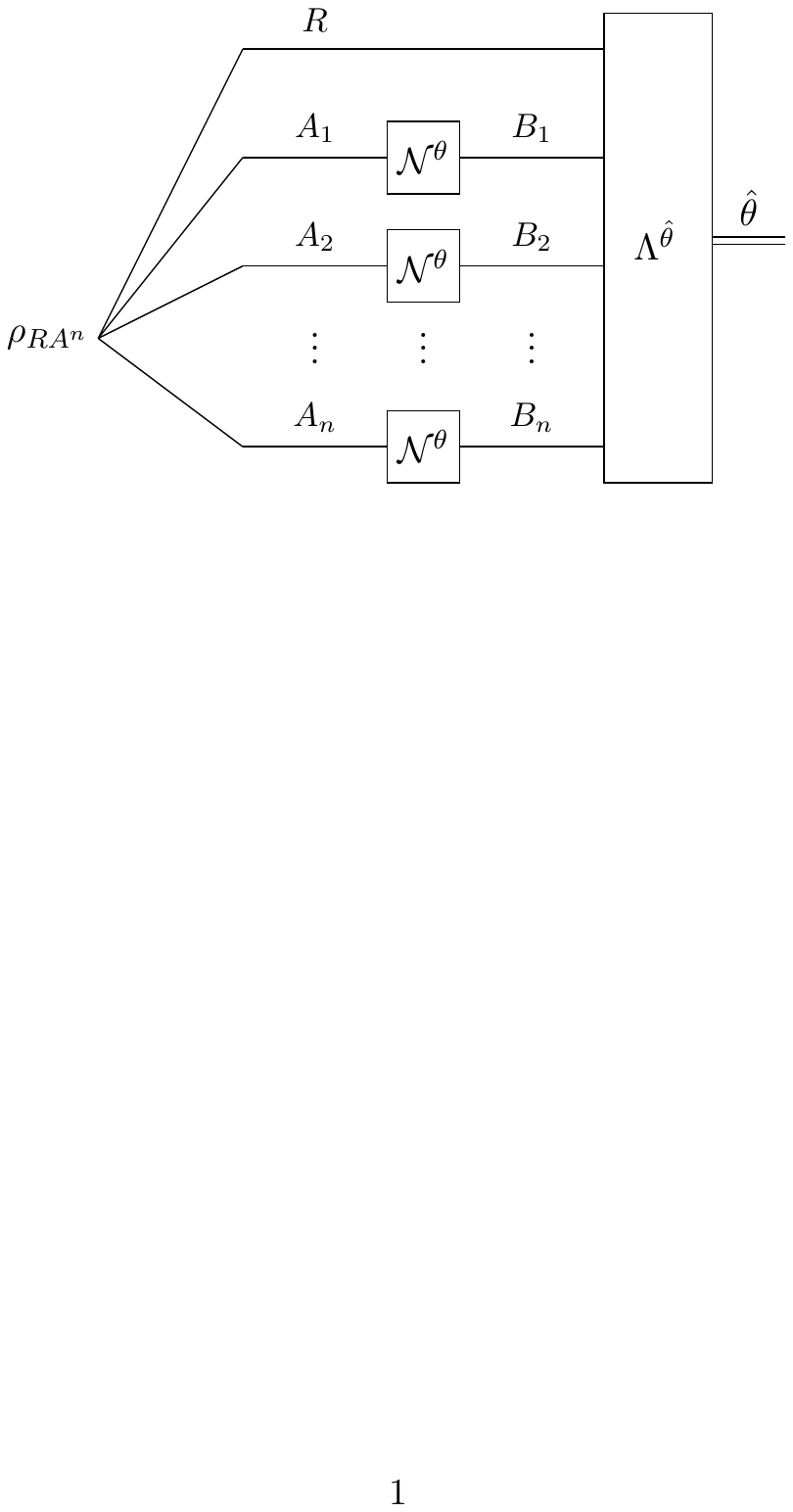}
\caption{Processing $n$ uses of channel $\mathcal{N}^{\theta}$ in a parallel manner.}{Processing $n$ uses of channel $\mathcal{N}^{\theta}$ in a parallel
manner. The $n$ channels are called in parallel, allowing for entanglement to
be shared among input systems $A_{1}$ through $A_{n}$, along with a quantum
memory system $R$. A collective measurement is made, with its outcome being an
estimate $\hat{\theta}$ for the unknown parameter $\theta$. Parallel
strategies form a special case of sequential ones, and therefore parallel
strategies are no more powerful than sequential ones.}
\label{fig:parallel-scheme}
\end{figure}

Let $\left\{  \mathcal{N}_{A\rightarrow B}^{\theta}\right\}_{\theta}$ denote a family of quantum channels with input system $A$ and output system $B$, such that each channel in the family is parameterized by a single real parameter $\theta\in\Theta\subseteq\mathbb{R}$, where $\Theta$ is the parameter set. The problem we consider is this: given a particular unknown channel $\mathcal{N}^{\theta}_{A \to B}$, how well can we estimate $\theta$ when allowed to probe the channel $n$ times? There are various ways that one can probe the quantum channel $n$ times such that each such procedure results in a probability distribution $p_{\theta}(x)$ for a final measurement outcome $x$, with corresponding random variable $X$. This distribution $p_{\theta}(x)$ depends on the unknown parameter $\theta$. Using the measurement outcome $x$, one formulates an estimate $\hat{\theta}(x)$ of the unknown parameter. An unbiased estimator satisfies $\mathbb{E}[\hat{\theta}(X)]=\theta$. We continue limiting our study to unbiased estimators.

The most general channel estimation procedure is depicted in Figure~\ref{fig:adaptive-scheme}. A sequential or adaptive strategy that makes $n$ calls to the channel is specified in terms of an input quantum state $\rho_{R_{1}A_{1}}$, a set of interleaved channels $\{\mathcal{S}_{R_{i} B_{i}\rightarrow R_{i+1}A_{i+1}}^{i}\}_{i=1}^{n-1}$, and a final quantum measurement $\{\Lambda_{R_{n}B_{n}}^{\hat{\theta}}\}_{\hat{\theta}}$ that outputs an estimate $\hat{\theta}$ of the unknown parameter (here we incorporate any classical post-processing of a preliminary measurement outcome $x$ to generate the estimate $\hat{\theta}$ as part of the final measurement). Note that any particular strategy $\{\rho_{R_{1}A_{1}},\{\mathcal{S}_{R_{i}B_{i}\rightarrow R_{i+1}A_{i+1}}^{i}\}_{i=1}^{n-1},\{\Lambda_{R_{n}B_{n}}^{\hat{\theta}}\}_{\hat{\theta}}\}$\ employed does not depend on the actual value of the unknown parameter $\theta$. We make the following abbreviation for a fixed strategy in what follows:
\begin{equation}
    \{\mathcal{S}^{(n)},\Lambda^{\hat{\theta}}\}\equiv\{\rho_{R_{1}A_{1}},\{\mathcal{S}_{R_{i}B_{i}\rightarrow R_{i+1}A_{i+1}}^{i}\}_{i=1}^{n-1},\{\Lambda_{R_{n}B_{n}}^{\hat{\theta}}\}_{\hat{\theta}}\}.
\end{equation}
The strategy begins with the estimator preparing the input quantum state $\rho_{R_{1}A_{1}}$ and sending the $A_{1}$ system into the channel $\mathcal{N}_{A_{1}\rightarrow B_{1}}^{\theta}$. The first channel $\mathcal{N}_{A_{1}\rightarrow B_{1}}^{\theta}$ outputs the system $B_{1}$, which is then available to the estimator. The resulting state is
\begin{equation}
    \rho_{R_{1}B_{1}}^{\theta}:=\mathcal{N}_{A_{1}\rightarrow B_{1}}^{\theta}(\rho_{R_{1}A_{1}}).
\end{equation}
The estimator adjoins the system $B_{1}$ to system $R_{1}$ and applies the channel $\mathcal{S}_{R_{1}B_{1}\rightarrow R_{2}A_{2}}^{1}$, leading to the state
\begin{equation}
    \rho_{R_{2}A_{2}}^{\theta}:=\mathcal{S}_{R_{1}B_{1}\rightarrow R_{2}A_{2}}^{1}(\rho_{R_{1}B_{1}}^{\theta}).
\end{equation}
The channel $\mathcal{S}_{R_{1}B_{1}\rightarrow R_{2}A_{2}}^{1}$ can take an action conditioned on information in the system $B_{1}$, which itself might contain some partial information about the unknown parameter~$\theta$. The estimator then inputs the system $A_{2}$ into the second use of the channel $\mathcal{N}_{A_{2}\rightarrow B_{2}}^{\theta}$, which outputs a system $B_{2}$ and gives the state
\begin{equation}
    \rho_{R_{2}B_{2}}^{\theta}:=\mathcal{N}_{A_{2}\rightarrow B_{2}}^{\theta}(\rho_{R_{2}A_{2}}^{\theta}).
\end{equation}
This process repeats $n-2$ more times, for which we have the intermediate states
\begin{align}
    \rho_{R_{i}B_{i}}^{\theta}  &  :=\mathcal{N}_{A_{i}\rightarrow B_{i}}^{\theta}(\rho_{R_{i}A_{i}}^{\theta}),\\
    \rho_{R_{i}A_{i}}^{\theta}  &  :=\mathcal{S}_{R_{i-1}B_{i-1}\rightarrow R_{i}A_{i}}^{i-1}(\rho_{R_{i-1}B_{i-1}}^{\theta}),
\end{align}
for $i\in\left\{  3,\ldots,n\right\}  $, and at the end, the estimator has systems $R_{n}$ and $B_{n}$. We define $\omega_{R_{n}B_{n}}^{\theta}$ to be the final state of the estimation protocol before the final measurement $\{\Lambda_{R_{n}B_{n}}^{\hat{\theta}}\}_{\hat{\theta}}$:
\begin{equation}
    \omega_{R_{n}B_{n}}^{\theta}:=(\mathcal{N}_{A_{n}\rightarrow B_{n}}^{\theta}\circ\mathcal{S}_{R_{n-1}B_{n-1}\rightarrow R_{n}A_{n}}^{n-1}\circ\cdots \circ\mathcal{S}_{R_{1}B_{1}\rightarrow R_{2}A_{2}}^{1}\circ\mathcal{N}
_{A_{1}\rightarrow B_{1}}^{\theta})(\rho_{R_{1}A_{1}}). \label{eq:estimation-final-state}
\end{equation}
The estimator finally performs a measurement $\{\Lambda_{R_{n}B_{n}}^{\hat{\theta}}\}_{\hat{\theta}}$ that outputs an estimate $\hat{\theta}$ of the unknown parameter $\theta$. The conditional probability for the estimate $\hat{\theta}$\ given the unknown parameter $\theta$ is determined by the Born rule:
\begin{equation}
    p_{\theta}(\hat{\theta})=\operatorname{Tr}[\Lambda_{R_{n}B_{n}}^{\hat{\theta}}\omega_{R_{n}B_{n}}^{\theta}]. \label{eq:cond-prob-adaptive}
\end{equation}
As we stated above, any particular strategy does not depend on the value of the unknown parameter $\theta$, but the states at each step of the protocol do depend on $\theta$ through the successive probings of the underlying channel $\mathcal{N}_{A\rightarrow B}^{\theta}$.

Note that such a sequential strategy contains a parallel or non-adaptive strategy as a special case: the system $R_{1}$ can be arbitrarily large and divided into subsystems, with the only role of the interleaved channels $\mathcal{S}_{R_{i}B_{i}\rightarrow R_{i+1}A_{i+1}}^{i}$ being that they redirect these subsystems to be the inputs of future calls to the channel (as would be the case in any non-adaptive strategy for estimation or discrimination). Figure~\ref{fig:parallel-scheme}\ depicts a parallel or non-adaptive channel estimation strategy.

\subsection{Quantum Fisher information of a channel family}

The first step we take towards defining the quantum Fisher information for quantum channels is extending the definition of generalized Fisher information from quantum states to quantum channels.

\begin{definition}[Generalized Fisher information of quantum channels]
\label{def:gen-fish-channels}The generalized Fisher information of a family $\{\mathcal{N}_{A\rightarrow B}^{\theta}\}_{\theta}$ of quantum channels is defined in terms of the following optimization:
\begin{equation}
    \mathbf{I}_{F}(\theta;\{\mathcal{N}_{A\rightarrow B}^{\theta}\}_{\theta}):=\sup_{\rho_{RA}}\mathbf{I}_{F}(\theta;\{\mathcal{N}_{A\rightarrow B}^{\theta}(\rho_{RA})\}_{\theta}). \label{eq:gen-fisher-channels}
\end{equation}
In the above definition, we take the supremum over arbitrary states $\rho_{RA}$\ with unbounded reference system $R$.
\end{definition}

\begin{remark} \label{rem:restrict-to-pure-bipartite}
As is the case for all information measures that obey the data-processing inequality, we can employ the data-processing inequality in \eqref{eq:gen-fisher-states} with respect to the partial trace operation and the Schmidt decomposition theorem to conclude that it suffices to perform the optimization in \eqref{eq:gen-fisher-channels} with respect to pure bipartite states $\psi_{RA}$ with system $R$ isomorphic to system $A$, so that
\begin{equation}
    \mathbf{I}_{F}(\theta;\{\mathcal{N}_{A\rightarrow B}^{\theta}\}_{\theta})=\sup_{\psi_{RA}}\mathbf{I}_{F}(\theta;\{\mathcal{N}_{A\rightarrow B}^{\theta}(\psi_{RA})\}_{\theta}).
\end{equation}
\end{remark}

\begin{proposition}
Let $\{\mathcal{N}_{A\rightarrow B}\}_{\theta}$ be a family of quantum channels that has no dependence on the parameter $\theta$, and suppose that the underlying generalized Fisher information is weakly faithful. Then
\begin{equation}
    \mathbf{I}_{F}(\theta;\{\mathcal{N}_{A\rightarrow B}\}_{\theta})=0.
\end{equation}
\end{proposition}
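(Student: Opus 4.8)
The plan is to reduce the channel statement directly to the already-established fact \eqref{eq:constant-value-gen-Fish-states} that a parameter-independent \emph{state} family has constant generalized Fisher information, together with the definition of weak faithfulness (that this constant is zero). First I would unfold Definition~\ref{def:gen-fish-channels}: since the channel family $\{\mathcal{N}_{A\rightarrow B}\}_{\theta}$ is independent of $\theta$, the optimization reads
\begin{equation}
    \mathbf{I}_{F}(\theta;\{\mathcal{N}_{A\rightarrow B}\}_{\theta})=\sup_{\rho_{RA}}\mathbf{I}_{F}(\theta;\{\mathcal{N}_{A\rightarrow B}(\rho_{RA})\}_{\theta}),
\end{equation}
where the input state $\rho_{RA}$ in the supremum carries no $\theta$-dependence by construction.

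Next I would observe that for each fixed $\rho_{RA}$, the output family $\{\mathcal{N}_{A\rightarrow B}(\rho_{RA})\}_{\theta}$ is a constant state family: both $\mathcal{N}_{A\rightarrow B}$ and $\rho_{RA}$ are parameter-independent, so $\mathcal{N}_{A\rightarrow B}(\rho_{RA})$ is the same state for every $\theta$. Applying \eqref{eq:constant-value-gen-Fish-states} to this constant output family yields $\mathbf{I}_{F}(\theta;\{\mathcal{N}_{A\rightarrow B}(\rho_{RA})\}_{\theta})=c$ for the constant $c$ associated with the underlying generalized Fisher information, and the hypothesis of weak faithfulness gives $c=0$. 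Hence $\mathbf{I}_{F}(\theta;\{\mathcal{N}_{A\rightarrow B}(\rho_{RA})\}_{\theta})=0$ for every admissible $\rho_{RA}$, and taking the supremum over all such states preserves the value $0$.

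There is essentially no obstacle here; the only point requiring a little care is making sure the reduction to \eqref{eq:constant-value-gen-Fish-states} is legitimate, i.e.\ that the output state genuinely has no $\theta$-dependence (which hinges on the convention, stated in the sequential-strategy setup, that input states and processing operations never depend on the unknown parameter) and that weak faithfulness is exactly the statement that the constant in \eqref{eq:constant-value-gen-Fish-states} vanishes. One could alternatively avoid even invoking \eqref{eq:constant-value-gen-Fish-states} and argue directly from the data-processing inequality \eqref{eq:gen-fisher-states}: for any two fixed states there is a trace-and-replace channel mapping one to the other, so all constant output families have the same generalized Fisher information, which weak faithfulness pins to $0$. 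Either route closes the argument in a couple of lines.
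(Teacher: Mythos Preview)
Your proposal is correct and matches the paper's approach exactly: the paper's proof is simply ``This follows as an immediate consequence of the definition~\eqref{eq:gen-fisher-channels}, \eqref{eq:constant-value-gen-Fish-states}, and the weak faithfulness assumption,'' which is precisely the reduction you spell out. Your alternative data-processing route is also fine but unnecessary here.
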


\begin{proof}
This follows as an immediate consequence of the definition~\eqref{eq:gen-fisher-channels},
\eqref{eq:constant-value-gen-Fish-states}, and the weak faithfulness assumption.
\end{proof}

\begin{proposition}
[Reduction to states]Let $\{\rho_{B}^{\theta}\}_{\theta}$ be a family of quantum states, and define the family $\{\mathcal{R}_{A\rightarrow B}^{\theta}\}_{\theta}$ of replacer channels as
\begin{equation}
    \mathcal{R}_{A\rightarrow B}^{\theta}(\omega_{A})=\operatorname{Tr}[\omega_{A}]\rho_{B}^{\theta}.
\end{equation}
Then
\begin{equation}
    \mathbf{I}_{F}(\theta;\{\mathcal{R}_{A\rightarrow B}^{\theta}\}_{\theta})=\mathbf{I}_{F}(\theta;\{\rho_{B}^{\theta}\}_{\theta}).
\end{equation}
\end{proposition}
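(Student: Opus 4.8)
The plan is to unfold the definition \eqref{eq:gen-fisher-channels} of the generalized Fisher information of a channel family and exploit the fact that a replacer channel erases all correlation with the reference system. First I would observe that for an arbitrary input state $\rho_{RA}$ (which, by Remark~\ref{rem:restrict-to-pure-bipartite}, we could take to be pure with $R$ isomorphic to $A$, though this is not needed here), the output is
\[
\mathcal{R}_{A\rightarrow B}^{\theta}(\rho_{RA}) = \rho_{R}\otimes\rho_{B}^{\theta},
\]
where $\rho_{R}:=\operatorname{Tr}_{A}[\rho_{RA}]$ is a fixed, $\theta$-independent state. Hence the optimization in \eqref{eq:gen-fisher-channels} collapses to
\[
\mathbf{I}_{F}(\theta;\{\mathcal{R}_{A\rightarrow B}^{\theta}\}_{\theta}) = \sup_{\rho_{R}}\mathbf{I}_{F}(\theta;\{\rho_{R}\otimes\rho_{B}^{\theta}\}_{\theta}).
\]

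The key step is then to show that tensoring with a fixed state does not change the generalized Fisher information, i.e.\ $\mathbf{I}_{F}(\theta;\{\rho_{R}\otimes\rho_{B}^{\theta}\}_{\theta}) = \mathbf{I}_{F}(\theta;\{\rho_{B}^{\theta}\}_{\theta})$ for every state $\rho_{R}$. The inequality ``$\geq$'' follows from the data-processing property \eqref{eq:gen-fisher-states} applied to the partial-trace channel $\operatorname{Tr}_{R}$, which is parameter-independent and sends $\rho_{R}\otimes\rho_{B}^{\theta}\mapsto\rho_{B}^{\theta}$. The reverse inequality ``$\leq$'' follows from \eqref{eq:gen-fisher-states} applied to the parameter-independent appending channel $\sigma_{B}\mapsto\rho_{R}\otimes\sigma_{B}$, which sends $\rho_{B}^{\theta}\mapsto\rho_{R}\otimes\rho_{B}^{\theta}$. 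Combining the two yields equality.

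Substituting this back shows that $\mathbf{I}_{F}(\theta;\{\mathcal{R}_{A\rightarrow B}^{\theta}(\rho_{RA})\}_{\theta})$ equals the constant $\mathbf{I}_{F}(\theta;\{\rho_{B}^{\theta}\}_{\theta})$ regardless of the choice of $\rho_{RA}$, so the supremum over inputs is trivially attained and equals $\mathbf{I}_{F}(\theta;\{\rho_{B}^{\theta}\}_{\theta})$, which is the claim. The only mild subtlety — the step I would check most carefully — is the ``$\leq$'' direction, namely confirming that $\sigma\mapsto\rho_{R}\otimes\sigma$ is a legitimate parameter-independent quantum channel so that \eqref{eq:gen-fisher-states} genuinely applies; this is immediate, since it is the composition of a (fixed) state-preparation channel with the identity channel on $B$.
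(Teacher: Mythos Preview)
Your proof is correct and follows essentially the same approach as the paper's: unfold the channel definition to get $\rho_R\otimes\rho_B^\theta$, then use data processing in both directions (partial trace over $R$ and the $\theta$-independent preparation/appending channel) to show that tensoring with a fixed state leaves the generalized Fisher information unchanged. The paper restricts to pure inputs $\psi_{RA}$ via Remark~\ref{rem:restrict-to-pure-bipartite}, while you work with arbitrary $\rho_{RA}$, but this makes no substantive difference.
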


\begin{proof}
This follows from the definition and the data-processing inequality. Consider that
\begin{align}
    \mathbf{I}_{F}(\theta;\{\mathcal{R}_{A\rightarrow B}^{\theta}\}_{\theta})  &=\sup_{\psi_{RA}}\mathbf{I}_{F}(\theta;\{\mathcal{R}_{A\rightarrow B}^{\theta }(\psi_{RA})\}_{\theta})\\
    &  =\sup_{\psi_{RA}}\mathbf{I}_{F}(\theta;\{\psi_{R}\otimes\rho_{B}^{\theta}\}_{\theta})\\
    &  =\mathbf{I}_{F}(\theta;\{\rho_{B}^{\theta}\}_{\theta}).
\end{align}
The last equality follows because
\begin{align}
    \mathbf{I}_{F}(\theta;\{\rho_{B}^{\theta}\}_{\theta})  &  \geq\mathbf{I}_{F}(\theta;\{\psi_{R}\otimes\rho_{B}^{\theta}\}_{\theta}),\\
    \mathbf{I}_{F}(\theta;\{\rho_{B}^{\theta}\}_{\theta})  &  \leq\mathbf{I}_{F}(\theta;\{\psi_{R}\otimes\rho_{B}^{\theta}\}_{\theta}),
\end{align}
with the first inequality following from the fact that there is a parameter-independent preparation channel such that $\rho_{B}^{\theta}\rightarrow\psi_{R}\otimes\rho_{B}^{\theta}$, while the second inequality follows from data-processing under partial trace over the reference system $R$.
\end{proof}


The SLD\ Fisher information of quantum channels was defined in \cite{Fujiwara2001} and the RLD\ Fisher information of quantum channels in \cite{Hayashi2011}; these are special cases of the generalized Fisher information of quantum channels \eqref{eq:gen-fisher-channels}.

\begin{definition}[SLD Fisher information of a channel family]
	\begin{equation}
    I_{F}(\theta;\{\mathcal{N}_{A\rightarrow B}^{\theta}\}_{\theta})=\left\{
    \begin{array}
    [c]{cc}
    \sup_{\psi_{RA}}I_{F}(\theta;\{\mathcal{N}_{A\rightarrow B}^{\theta}(\psi
    _{RA})\}_{\theta}) & \text{if }\Pi_{\Gamma_{RB}^{\mathcal{N}^{\theta}}}
    ^{\perp}(\partial_{\theta}\Gamma_{RB}^{\mathcal{N}^{\theta}})\Pi_{\Gamma
    _{RB}^{\mathcal{N}^{\theta}}}^{\perp}=0\\
    +\infty & \text{otherwise.}
    \end{array}
    \right.  , \label{eq:finiteness-condition-SLD-fish-ch}
    \end{equation}
\end{definition}

The following formula for the RLD\ Fisher information of quantum channels is known from \cite{Hayashi2011}:
\begin{definition}[RLD Fisher information of a channel family]
	\begin{multline} \label{eq:RLD-Fish-ch}
        \widehat{I}_{F}(\theta;\{\mathcal{N}_{A\rightarrow B}^{\theta}\}_{\theta})=\\
        \left\{
        \begin{array}
        [c]{cc}
        \left\Vert \operatorname{Tr}_{B}[(\partial_{\theta}\Gamma_{RB}^{\mathcal{N}^{\theta}})(\Gamma_{RB}^{\mathcal{N}^{\theta}})^{-1}(\partial_{\theta}\Gamma_{RB}^{\mathcal{N}^{\theta}})]\right\Vert _{\infty} & \text{if}\operatorname{supp}(\partial_{\theta}\Gamma_{RB}^{\mathcal{N}^{\theta}})\subseteq\operatorname{supp}(\Gamma_{RB}^{\mathcal{N}^{\theta}})\\
        +\infty & \text{otherwise.}
    \end{array}
    \right.  ,
    \end{multline}
    where $\Gamma^{\mathcal{N}^{\theta}}_{RB}$ is the Choi operator of the channel $\mathcal{N}^{\theta}_{A\to B}$.
\end{definition}

In Chapter~\ref{ch:single}, we will use the SLD and RLD Fisher information of channel families to establish Cramer--Rao bounds for parameter estimation in the sequential setting.

\section{Multiparameter estimation}

Finally, we generalize the formalism of estimating a single parameter encoded in a probability distribution, quantum state or quantum channel to their analogous multiparameter tasks.

The task of simultaneously estimating multiple parameters is a much more involved task than estimating a single parameter, both in the classical and quantum settings. However, Cramer--Rao bounds can still be constructed. A major difference when it comes to estimating multiple parameters simultaneously is that the figure of merit, which was earlier the mean-squared error (MSE) of the estimator of a single parameter, becomes the covariance matrix of the estimator of multiple parameters. That is, if the goal is to simultaneously estimate $D$ parameters, then the quantity of interest is a $D \times D$ covariance matrix. Secondly, the Fisher information is no longer a scalar quantity. It too, like the covariance matrix, takes on the form of a $D \times D$ matrix.

In the quantum case, an additional complication is that the optimal measurements for each parameter may not be compatible. Quantum multiparameter estimation has an extensive literature and a number of important recent results \cite{Helstrom1967, Holevo1972, Yuen1973, Belavkin1976, Bagan2006, Holevo2011, Monras2011, Humphreys2013, Yue2015, Ragy2016, Sidhu2019, Albarelli2019, Tsang2019a, Yang2019, Albarelli2020, DemkowiczDobrzanski2020, Friel2020, Gorecki2020}. See \cite{Szczykulska2016, Albarelli2020a} for recent reviews on multiparameter estimation in the quantum setting.

Consider that the $D$ parameters that need to be estimated are encoded in a vector $\bm{\theta} \coloneqq [\theta_1 ~ \theta_2 ~ \cdots ~ \theta_D]^T$. In the classical case, the parameterized family of probability distributions of interest is $\{ p_{\bm{\theta}}\}_{\bm{\theta}}$. As we did in the case of single parameter estimation, we will assume that the estimators used are unbiased. The covariance matrix is given by
\begin{equation}
    \text{Cov}(\bm{\theta}) := \mathbb{E} \left[ (\widehat{\bm{\theta}} - \bm{\theta}) ( \widehat{\bm{\theta}} - \bm{\theta})^T \right] \label{eq:cov-matrix-def}
\end{equation}
where $\widehat{\bm{\theta}}$ is an unbiased estimator for parameters in $\bm{\theta}$. 

In this thesis, we limit the study of multiparameter estimation to multiparameter Cramer--Rao bounds involving the RLD Fisher information.

\subsection{Multiparameter estimation of quantum states}

Suppose that there are $D$ parameters to be estimated, which are encoded in the vector $\bm{\theta} \coloneqq [\theta_1 ~ \theta_2 ~ \cdots ~ \theta_D]^T$. Also suppose that we have a differentiable family of quantum states $\{ \rho_{\bm{\theta}} \}_{\bm{\theta}}$ in which the unknown parameters are encoded. We now proceed to define the RLD Fisher information matrix for this parameterized family of quantum states.

\begin{definition}[RLD Fisher information matrix of quantum states]
    Let $\{ \rho_{\bm{\theta}} \}_{\bm{\theta}}$ be a differential family of quantum states. The matrix elements of its RLD Fisher information matrix are defined as follows:
  \begin{equation}
    [\widehat{I}_F(\bm{\theta};\{\rho_{\bm{\theta}}\}_{\bm{\theta}})]_{j,k}\coloneqq \left\{ 
    \begin{array}
    [c]{cc} 
    \operatorname{Tr}[(\partial_{\theta_{j}}\rho_{\bm{\theta}})\rho_{\bm{\theta}}^{-1}(\partial_{\theta_{k}}\rho_{\bm{\theta}})] & \text{ if } (\partial_{\theta_{j}}\rho_{\bm{\theta}})
(\partial_{\theta_{k}}\rho_{\bm{\theta}})\Pi_{\rho_{\bm{\theta}}}^{\perp}
  =0 \\
    + \infty & \text{otherwise}
    \end{array}
    \right. \label{eq:RLD-FI-matrix-elements-def}
    \end{equation} 
    where $\Pi^{\perp}_{\rho_{\bm{\theta}}}$ denotes the projection onto the kernel of $\rho_{\bm{\theta}}$.

    Alternatively, the RLD Fisher information matrix is defined as follows:
    \begin{align}
    \widehat{I}_F(\bm{\theta};\{\rho_{\bm{\theta}}\}_{\bm{\theta}}) &
    \coloneqq \sum_{j,k=1}^{D}\operatorname{Tr}[(\partial_{\theta_{j}}\rho_{\bm{\theta }}%
    )\rho_{\bm{\theta}}^{-1}(\partial_{\theta_{k}}\rho_{\bm{\theta}}%
    )]|j\rangle\!\langle k|\\
    &  =\operatorname{Tr}_{2}\!\left[  \sum_{j,k=1}^{D}|j\rangle\!\langle
    k|\otimes(\partial_{\theta_{j}}\rho_{\bm{\theta}})\rho_{\bm{\theta}}%
    ^{-1}(\partial_{\theta_{k}}\rho_{\bm{\theta}})\right]  ,
    \end{align}
    where $\Tr_2[\dots]$ refers to tracing over the second subsystem.
\end{definition}

The RLD Fisher information matrix can then be used to establish an operator Cramer--Rao bound on the covariance matrix of any unbiased estimator of the parameters $\bm{\theta}$ \cite{Yuen1973}:
\begin{equation}
    \text{Cov}(\bm{\theta}) \geq \widehat{I}_F(\bm{\theta}; \{\rho_{\bm{\theta}}\}_{\bm{\theta}})^{-1}
    \label{eq:matrix-CRB-1}
\end{equation}
where $\text{Cov} (\bm{\theta})$ is the covariance matrix as defined in~\eqref{eq:cov-matrix-def}.

For evaluating the efficacy of an estimation strategy, it may be more convenient to have a single scalar Cramer--Rao bound than to use the matrix inequality provided above. Our approach to do so involves defining a scalar quantity from the RLD Fisher information matrix. This quantity is called the RLD Fisher information value, and it is defined using the help of a weight matrix $W$. The matrix $W$ should be positive semidefinite and have unit trace.

\begin{definition}[RLD Fisher information value of quantum states]
    Let $\{ \rho_{\bm{\theta}} \}_{\bm{\theta}}$ be a differential family of quantum states and let $W$ be a positive semidefinite weight matrix with unit trace. 
    If the finiteness condition
    \begin{equation}
    \left[  \sum_{j,k=1}^{D}\langle k|W|j\rangle(\partial_{\theta_{k}}%
    \rho_{\bm{\theta}})(\partial_{\theta_{j}}\rho_{\bm{\theta}})\right]  \Pi
    _{\rho_{\bm{\theta}}}^{\perp}  =0, \label{eq:app:state-RLD-finiteness-cond} 
    \end{equation}
    where $\Pi_{\rho_{\bm{\theta}}}^{\perp}$ is the projection onto the kernel of $\rho_{\bm{\theta}}$, holds, 
    the RLD\ Fisher information value is defined as%
    \begin{align}
    \widehat{I}_F(\bm{\theta},W;\{\rho_{\bm{\theta}}\}_{\bm{\theta}}) &
    \coloneqq \operatorname{Tr}[W\widehat{I}_F(\bm{\theta};\{\rho_{\bm{\theta}}\}_{\bm{\theta}})] \label{eq:states-rld-fisher-value} \\ 
    & = \operatorname{Tr}\!\left[  (W\otimes I_{d})\left(  \sum_{j,k=1}^{D}%
    |j\rangle\!\langle k|\otimes(\partial_{\theta_{j}}\rho_{\bm{\theta}}%
    )\rho_{\bm{\theta}}^{-1}(\partial_{\theta_{k}}\rho_{\bm{\theta}})\right)
    \right]  \\
    &  =\sum_{j,k=1}^{D}\langle k|W|j\rangle\operatorname{Tr}\!\left[
    (\partial_{\theta_{j}}\rho_{\bm{\theta}})\rho_{\bm{\theta}}^{-1}%
    (\partial_{\theta_{k}}\rho_{\bm{\theta}})\right]  .
    \end{align}
\end{definition}

In Chapter~\ref{ch:multi}, we show how to use the RLD Fisher information value to establish scalar Cramer--Rao bounds for multiparameter quantum state estimation.

\subsection{Multiparameter estimation of quantum channels}

As we did for estimation of a single parameter, we extend the framework of simultaneously estimating multiple parameters of a quantum state family to the case of quantum channels. The first step is to define the RLD Fisher information value of quantum channels.

\begin{definition}[RLD Fisher information value of quantum channels]
    For a differentiable family of quantum channels $\{\mathcal{N}_{A\rightarrow B}^{\bm{\theta}}\}_{\bm{\theta}}$, if the finiteness condition
    \begin{equation}
        \left[  \sum_{j,k=1}^{D}\langle k|W|j\rangle(\partial_{\theta_{k}}\Gamma
        _{RB}^{\mathcal{N}^{\bm{\theta}}})(\partial_{\theta_{j}}\Gamma_{RB}%
        ^{\mathcal{N}^{\bm{\theta}}})\right]  \Pi_{\Gamma^{\mathcal{N}^{\bm{\theta}}}%
        }^{\perp}  =0,
        \label{eq:app:channel-RLD-finiteness-cond}
    \end{equation}
    holds, the RLD Fisher information value is defined as
    \begin{equation}
        \widehat{I}_F(\bm{\theta},W;\{\mathcal{N}_{A\rightarrow B}^{\bm{\theta}} \}_{\bm{\theta}})\coloneqq \sup_{\rho_{RA}}\widehat{I}_F(\bm{\theta },W;\{\mathcal{N}_{A\rightarrow B}^{\bm{\theta}}(\rho_{RA})\}_{\bm{\theta}}), \label{eq:rld-value-def}
    \end{equation}    
    where the optimization is with respect to every bipartite state $\rho_{RA}$ with system $R$ arbitrarily large. However, note that, by a standard argument, it suffices to optimize over pure states $\psi_{RA}$ with system $R$ isomorphic to the channel input system $A$.

    If the finiteness condition in~\eqref{eq:app:channel-RLD-finiteness-cond} does not hold, the quantity evalutes to $+ \infty$. In the above, $\Pi_{\Gamma^{\mathcal{N}^{\bm{\theta}}}}^{\perp}$ is the projection onto the kernel of $\Gamma^{\mathcal{N}^{\bm{\theta}}}_{RB}$, with $\Gamma^{\mathcal{N}^{\bm{\theta}}}_{RB}$ the Choi operator of the channel $\mathcal{N}^{\bm{\theta}}_{A\to B}$.
\end{definition}

Further, the RLD Fisher information value of quantum channels has the following explicit form: 
\begin{proposition}
\label{prop:geo-fish-explicit-formula-1st-order}Let $\{\mathcal{N}%
_{A\rightarrow B}^{\bm{\theta}}\}_{\bm{\theta}}$ be a differentiable family of
quantum channels, and let $W$ be a $D\times D$ weight matrix. Suppose that the finiteness condition \eqref{eq:app:channel-RLD-finiteness-cond} holds. Then the 
RLD\ Fisher information value of quantum channels has the following explicit form:%
\begin{equation}
\widehat{I}_{F}(\bm{\theta},W;\{\mathcal{N}_{A\rightarrow B}^{\bm{\theta}}%
\}_{\bm{\theta}})=\left\Vert \sum_{j,k=1}^{D}\langle k|W|j\rangle
\operatorname{Tr}_{B}[(\partial_{\theta_{j}}\Gamma_{RB}^{\mathcal{N}%
^{\bm{\theta}}})(\Gamma_{RB}^{\mathcal{N}^{\bm{\theta}}})^{-1}(\partial
_{\theta_{k}}\Gamma_{RB}^{\mathcal{N}^{\bm{\theta}}})]\right\Vert _{\infty}.
\label{eq:RLD-value-channels-inf-norm}
\end{equation}
\end{proposition}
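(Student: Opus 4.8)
The plan is to reduce the channel-level quantity to the state-level RLD Fisher information value formula \eqref{eq:states-rld-fisher-value} by feeding the channel one share of a maximally entangled vector dressed by an arbitrary operator, and then to recognize the remaining optimization over inputs as an operator-norm maximization. By the reduction to pure bipartite inputs noted after \eqref{eq:rld-value-def}, it suffices to take the supremum over pure states $\psi_{RA}$ with $R$ isomorphic to $A$. Using the pure-bipartite-state proposition, I would write every such $\psi_{RA}=(X_{R}\otimes I_{A})\Gamma_{RA}(X_{R}^{\dag}\otimes I_{A})$ with $\operatorname{Tr}[X_{R}^{\dag}X_{R}]=1$, where $\Gamma_{RA}=|\Gamma\rangle\!\langle\Gamma|_{RA}$. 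Since $\mathcal{N}_{A\rightarrow B}^{\bm{\theta}}$ acts only on $A$ and commutes with $X_{R}$, the definition of the Choi operator gives
\[
\rho_{RB}^{\bm{\theta}}:=\mathcal{N}_{A\rightarrow B}^{\bm{\theta}}(\psi_{RA})=(X_{R}\otimes I_{B})\,\Gamma_{RB}^{\mathcal{N}^{\bm{\theta}}}\,(X_{R}^{\dag}\otimes I_{B}),
\]
and hence $\partial_{\theta_{j}}\rho_{RB}^{\bm{\theta}}=(X_{R}\otimes I_{B})(\partial_{\theta_{j}}\Gamma_{RB}^{\mathcal{N}^{\bm{\theta}}})(X_{R}^{\dag}\otimes I_{B})$.

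Next I would substitute these expressions into \eqref{eq:states-rld-fisher-value}. Treating first the generic case in which $X_{R}$ is invertible and $\Gamma_{RB}^{\mathcal{N}^{\bm{\theta}}}$ is full rank (the general case to be recovered afterwards by a limiting argument of the type behind Propositions~\ref{prop:physical-consistency-SLD-Fish-states} and~\ref{prop:physical-consistency-RLD-Fish-states}, e.g.\ mixing the channel with a small amount of the completely depolarizing channel and sending that amount to zero, together with the density of full-rank states), one has $(\rho_{RB}^{\bm{\theta}})^{-1}=(X_{R}^{-\dag}\otimes I_{B})(\Gamma_{RB}^{\mathcal{N}^{\bm{\theta}}})^{-1}(X_{R}^{-1}\otimes I_{B})$, so the $X_{R}$'s cancel in the middle of $(\partial_{\theta_{j}}\rho_{RB}^{\bm{\theta}})(\rho_{RB}^{\bm{\theta}})^{-1}(\partial_{\theta_{k}}\rho_{RB}^{\bm{\theta}})$. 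Using cyclicity of the trace and carrying out the trace over $B$ first yields
\[
\widehat{I}_{F}(\bm{\theta},W;\{\mathcal{N}_{A\rightarrow B}^{\bm{\theta}}(\psi_{RA})\}_{\bm{\theta}})=\operatorname{Tr}[X_{R}^{\dag}X_{R}\,Q_{R}],\qquad Q_{R}:=\sum_{j,k=1}^{D}\langle k|W|j\rangle\operatorname{Tr}_{B}[(\partial_{\theta_{j}}\Gamma_{RB}^{\mathcal{N}^{\bm{\theta}}})(\Gamma_{RB}^{\mathcal{N}^{\bm{\theta}}})^{-1}(\partial_{\theta_{k}}\Gamma_{RB}^{\mathcal{N}^{\bm{\theta}}})].
\]
Decomposing $W=\sum_{l}w_{l}|v_{l}\rangle\!\langle v_{l}|$ with $w_{l}\geq 0$ and setting $Y_{l}:=\sum_{j}\langle v_{l}|j\rangle\,\partial_{\theta_{j}}\Gamma_{RB}^{\mathcal{N}^{\bm{\theta}}}$, I would note $Q_{R}=\sum_{l}w_{l}\operatorname{Tr}_{B}[Y_{l}(\Gamma_{RB}^{\mathcal{N}^{\bm{\theta}}})^{-1}Y_{l}^{\dag}]$, which is Hermitian and positive semidefinite.

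It then remains to optimize over the input, i.e.\ over operators $X_{R}$ with $\operatorname{Tr}[X_{R}^{\dag}X_{R}]=1$ and $R$ unbounded. As $X_{R}$ varies, $\tau_{R}:=X_{R}^{\dag}X_{R}$ ranges over all density operators on $R$, and full-rank ones are dense, so $\sup_{X_{R}}\operatorname{Tr}[X_{R}^{\dag}X_{R}\,Q_{R}]$ equals the largest eigenvalue of $Q_{R}$, which by its positivity equals $\Vert Q_{R}\Vert_{\infty}$. This is exactly the right-hand side of \eqref{eq:RLD-value-channels-inf-norm}, and specializing to $D=1$ recovers the single-parameter formula \eqref{eq:RLD-Fish-ch} as a consistency check.

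The step I expect to be the main obstacle is the careful bookkeeping of the finiteness/support conditions rather than the algebra: one must verify that the state-level condition \eqref{eq:app:state-RLD-finiteness-cond} applied to $\rho_{RB}^{\bm{\theta}}$ is equivalent to the channel-level condition \eqref{eq:app:channel-RLD-finiteness-cond} (so that the two sides of the asserted identity are $+\infty$ simultaneously), and that the reduction used in the displayed computation — restricting to invertible $X_{R}$ and full-rank Choi operators — does not change the supremum. Everything else is substitution and the standard variational characterization of the operator norm of a positive semidefinite operator.
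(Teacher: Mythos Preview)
Your proposal is correct and follows essentially the same approach as the paper: parametrize pure bipartite inputs as $X_R\Gamma_{RA}X_R^\dag$, push $X_R$ through the channel to get $X_R\Gamma_{RB}^{\mathcal{N}^{\bm\theta}}X_R^\dag$, cancel the $X_R$'s against $X_R^{-1}$ in the inverse (restricting to invertible $X_R$ by density), reduce to $\operatorname{Tr}[X_R^\dag X_R\,Q_R]$, and identify the supremum as $\Vert Q_R\Vert_\infty$. Your explicit verification that $Q_R\geq 0$ via the spectral decomposition of $W$ and your discussion of the limiting argument for non-full-rank Choi operators are slightly more careful than the paper, which simply invokes density of full-rank reduced states and the variational formula for the operator norm of a positive semidefinite operator without further comment.
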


\begin{proof}
Recall that every pure state $\psi_{RA}$ can be written as%
\begin{equation}
\psi_{RA}=Z_{R}\Gamma_{RA}Z_{R}^{\dag},
\end{equation}
where $Z_{R}$ is a square operator satisfying $\operatorname{Tr}[Z_{R}^{\dag
}Z_{R}]=1$. This implies that%
\begin{align}
\mathcal{N}_{A\rightarrow B}^{\bm{\theta}}(\psi_{RA}) &  =\mathcal{N}%
_{A\rightarrow B}^{\bm{\theta}}(Z_{R}\Gamma_{RA}Z_{R}^{\dag})\\
&  =Z_{R}\mathcal{N}_{A\rightarrow B}^{\bm{\theta}}(\Gamma_{RA})Z_{R}^{\dag}\\
&  =Z_{R}\Gamma_{RB}^{\mathcal{N}^{\bm{\theta}}}Z_{R}^{\dag}.
\end{align}
It suffices to optimize over pure states $\psi_{RA}$ such that $\psi_{A}>0$
because these states are dense in the set of all pure bipartite states. Then
consider that%
\begin{align}
&  \sup_{\psi_{RA}}\widehat{I}_{F}(\bm{\theta},W;\{\mathcal{N}_{A\rightarrow
B}^{\bm{\theta}}(\psi_{RA})\}_{\bm{\theta}})\nonumber\\
&  =\sup_{\psi_{RA}}\sum_{j,k=1}^{D}\langle k|W|j\rangle\operatorname{Tr}%
\left[  (\partial_{\theta_{j}}\mathcal{N}_{A\rightarrow B}^{\bm{\theta}}%
(\psi_{RA}))(\mathcal{N}_{A\rightarrow B}^{\bm{\theta}}(\psi_{RA}%
))^{-1}(\partial_{\theta_{k}}\mathcal{N}_{A\rightarrow B}^{\bm{\theta}}%
(\psi_{RA}))\right]  \\
&  =\sup_{Z_{R}:\operatorname{Tr}[Z_{R}^{\dag}Z_{R}]=1}\sum_{j,k=1}^{D}\langle
k|W|j\rangle\operatorname{Tr}\!\left[  (\partial_{\theta_{j}}Z_{R}\Gamma
_{RB}^{\mathcal{N}^{\bm{\theta}}}Z_{R}^{\dag})(Z_{R}\Gamma_{RB}^{\mathcal{N}%
^{\bm{\theta}}}Z_{R}^{\dag})^{-1}(\partial_{\theta_{k}}Z_{R}\Gamma
_{RB}^{\mathcal{N}^{\bm{\theta}}}Z_{R}^{\dag})\right]  \\
&  =\sup_{Z_{R}:\operatorname{Tr}[Z_{R}^{\dag}Z_{R}]=1}\sum_{j,k=1}^{D}\langle
k|W|j\rangle\operatorname{Tr}\!\left[  Z_{R}(\partial_{\theta_{j}}\Gamma
_{RB}^{\mathcal{N}^{\bm{\theta}}})Z_{R}^{\dag}Z_{R}^{-\dag}(\Gamma
_{RB}^{\mathcal{N}^{\bm{\theta}}})^{-1}Z_{R}^{-1}Z_{R}(\partial_{\theta_{k}%
}\Gamma_{RB}^{\mathcal{N}^{\bm{\theta}}})Z_{R}^{\dag}\right]  \\
&  =\sup_{Z_{R}:\operatorname{Tr}[Z_{R}^{\dag}Z_{R}]=1}\sum_{j,k=1}^{D}\langle
k|W|j\rangle\operatorname{Tr}\!\left[  Z_{R}^{\dag}Z_{R}(\partial_{\theta_{j}%
}\Gamma_{RB}^{\mathcal{N}^{\bm{\theta}}})(\Gamma_{RB}^{\mathcal{N}%
^{\bm{\theta}}})^{-1}(\partial_{\theta_{k}}\Gamma_{RB}^{\mathcal{N}%
^{\bm{\theta}}})\right]  \\
&  =\sup_{Z_{R}:\operatorname{Tr}[Z_{R}^{\dag}Z_{R}]=1}\operatorname{Tr}\!
\left[  Z_{R}^{\dag}Z_{R}\sum_{j,k=1}^{D}\langle k|W|j\rangle\operatorname{Tr}%
_{B}\!\left[  (\partial_{\theta_{j}}\Gamma_{RB}^{\mathcal{N}^{\bm{\theta}}%
})(\Gamma_{RB}^{\mathcal{N}^{\bm{\theta}}})^{-1}(\partial_{\theta_{k}}%
\Gamma_{RB}^{\mathcal{N}^{\bm{\theta}}})\right]  \right]  \\
&  =\left\Vert \sum_{j,k=1}^{D}\langle k|W|j\rangle\operatorname{Tr}%
_{B}\!\left[  (\partial_{\theta_{j}}\Gamma_{RB}^{\mathcal{N}^{\bm{\theta}}%
})(\Gamma_{RB}^{\mathcal{N}^{\bm{\theta}}})^{-1}(\partial_{\theta_{k}}%
\Gamma_{RB}^{\mathcal{N}^{\bm{\theta}}})\right]  \right\Vert _{\infty}.%
\end{align}
The last equality is a consequence of the characterization of the infinity
norm of a positive semi-definite operator $Y$ as $\left\Vert Y\right\Vert
_{\infty}=\sup_{\rho>0,\operatorname{Tr}[\rho]=1}\operatorname{Tr}[Y\rho]$.
\end{proof}


\pagebreak

\chapter{Limits on Single Parameter Estimation of Quantum Channels}\label{ch:single}
\graphicspath{{figures/}}
\allowdisplaybreaks

\vspace{0.5em}


In this chapter, we present our results for estimating a single parameter encoded in an unknown quantum channel. Using the machinery developed in Chapter~\ref{ch:prelims}, we establish Cramer--Rao bounds which place limits on the variance of an unbiased estimator for quantum channel estimation in the sequential setting. 

Our first step will be to define the amortized Fisher information of quantum channels. Amortization involves allowing for a catalyst state family to increase the Fisher information of the channel family in question, while also subtracting off the Fisher information of the catalyst state family itself. It is inspired by the notion of amortized channel divergence, introduced in ~\cite{Berta2018}, which has been useful to study the power of sequential strategies when processing quantum channels for a variety of distinguishabililty tasks. In particular, it has been used in the analysis of feedback-assisted or sequential protocols in other areas of quantum information science~\cite{Bennett2003,BenDana2017,Rigovacca2018,Kaur2018,Berta2018a,Das2019,Wang2019a,Fang2019,Wang2019b}. Thus, we use the amortized Fisher information of channels to probe the power and limitations of channel estimation in the sequential setting.

The amortized Fisher information is defined for any generalized Fisher information of quantum states and channels, which we defined in Chapter~\ref{ch:prelims}; i.e., any Fisher information that obeys the data-processing inequality. For certain special cases, the amortized Fisher information in question undergoes what is known as an ``amortization collapse''. This, in simple language, means that amortization does not increase the Fisher information undergoing the collapse, and that the amortized Fisher information is strictly equal to the Fisher information itself. We show how such an amortization collapse occurs for the SLD Fisher information of classical-quantum channels, for the root SLD Fisher information of general quantum channels, and also for the RLD Fisher information of general quantum channels.

Amortization collapses are useful from both a qualitative and technical viewpoint. Qualitatively, they can be understood as the fact that catalysis with an ancillary state family cannot increase the Fisher information of the channel family in question. Technically, they mean that in a sequential estimation strategy, the Fisher information can only increase linearly with the number of channel uses. Further, they also mean that with respect to the quantity that undergoes the amortization collapse, parallel strategies are just as good as (the more general) sequential ones. Both of these conclusions arise by connecting the amortized Fisher information to the performance of a sequential estimation protocol, which we do by proving a meta-converse theorem in this chapter. 

Finally, after establishing the amortization collapses we just mentioned and the connection between amortized Fisher information and sequential estimation protocols, we are able to establish Cramer--Rao bounds for channel estimation in the sequential setting. We derive the following three Cramer--Rao bounds:
\begin{itemize}
    \item a Cramer--Rao bound in \eqref{eq:cq-channels-crb} for estimation of classical-quantum channels using the SLD Fisher information,
    \item a Cramer--Rao bound in \eqref{eq:Heisenberg-bnd-SLD-channels} for estimation of general quantum channels using the SLD Fisher information, which recovers the Heisenberg scaling limit of estimation of unitary channels, and
    \item a Cramer--Rao bound in \eqref{eq:RLD-bnd-ch} for estimation of general quantum channels using the RLD Fisher information, which has the important corollary that if the RLD Fisher information of a channel family is finite, then Heisenberg scaling with respect to the number of channel uses is unattainable. In other words, the finiteness condition for the RLD Fisher information provides a no-go for Heisenberg scaling.
\end{itemize}

Our bounds have a number of desirable characteristics, namely that they are
\begin{itemize}
    \item single-letter; i.e., computing them requires computing the Fisher information in question for a single channel use only, even though the bounds are applicable for $n$-round sequential procotols,
    \item universally applicable, in the sense that our root SLD Fisher information and RLD Fisher information bounds apply to all quantum channels, and thus encompass all admissible quantum dynamics, and
    \item computable via optimization problems. In particular, the RLD Fisher information bound for quantum channels admits a semi-definite program representation. 
\end{itemize}

We evaluate our RLD-based Cramer--Rao bound for the task of estimating the loss and noise parameters of a generalized amplitude damping channel and compare it with an achievable SLD-based Cramer--Rao bound. Lastly, we provide optimization problem formulations for the following quantities:
\begin{itemize}
    \item semi-definite program for the SLD Fisher information of quantum states,
    \item quadratically constrained optimization problem for root SLD Fisher information of quantum states,
    \item semi-definite program for the RLD Fisher information of quantum states,
    \item semi-definite program for the RLD Fisher information of quantum channels, and
    \item bilinear program for SLD Fisher information of quantum channels.
\end{itemize}   

\section{Amortized Fisher information}


The amortized Fisher information, like the amortized channel divergence, allows for deeper study of the power of sequential strategies for various channel processing tasks. The amortized Fisher information specifically allows for us to quantitatively evaluate the difference in power between sequential and parallel strategies for quantum channel estimation. We define it below:
\begin{definition}
[Amortized Fisher information of quantum channels]The amortized Fisher
information of a family $\{\mathcal{N}_{A\rightarrow B}^{\theta}\}_{\theta}$
of quantum channels is defined as follows:
\begin{equation}
\mathbf{I}_{F}^{\mathcal{A}}(\theta;\{\mathcal{N}_{A\rightarrow B}^{\theta
}\}_{\theta}):=\sup_{\{\rho_{RA}^{\theta}\}_{\theta}}\left[  \mathbf{I}
_{F}(\theta;\{\mathcal{N}_{A\rightarrow B}^{\theta}(\rho_{RA}^{\theta
})\}_{\theta})-\mathbf{I}_{F}(\theta;\{\rho_{RA}^{\theta})\}_{\theta})\right]
, \label{eq:amortized-fisher-info}
\end{equation}
where $\mathbf{I}_F$ is the generalized Fisher information as defined in~\eqref{eq:gen-fisher-states} and the supremum is with respect to arbitrary state families $\{\rho
_{RA}^{\theta}\}_{\theta}$\ with unbounded reference system $R$.
\end{definition}

We allow for a resource at the channel input in order to help with the estimation task, but then we subtract off the value of this resource in order to account for the amount of resource that is strictly present in the channel family. Furthermore, the presence of the state $\rho_{RA}^{\theta}$ whose Fisher information has been accounted for opens up the possibility that the state can, in some way, catalyze the Fisher information of the channel. We should indicate here that the amortized channel divergence of \cite{Berta2018} is a special case of the amortized Fisher information in which the parameter $\theta$ takes on only two values. We also remark that discriminating two quantum channels is a special case of channel estimation where the parameter $\theta$ takes on two values.

\begin{figure}[ptb]
\centering
\includegraphics[width=\linewidth]{adaptive-scheme.pdf}
\caption{{Processing $n$ uses of channel $\mathcal{N}^{\theta}$ in a sequential manner.}}{Processing $n$ uses of channel $\mathcal{N}^{\theta}$ in a sequential manner is the most general approach to channel parameter estimation or discrimination. The $n$ uses of the channel are interleaved with $n-1$ quantum channels $\mathcal{S}^{1}$ through $\mathcal{S}^{n-1}$, which can also share memory systems with each other. The final measurement's outcome is then used to obtain an estimate of the unknown parameter $\theta$.}
\label{fig:sequential-protocol-chapter3}
\end{figure}

For the sake of easy reference, we reproduce in Figure~\ref{fig:sequential-protocol-chapter3} the graphical depiction of a sequential estimation strategy from Chapter~\ref{ch:prelims}. An alternate way to understand the motivation behind introducing the amortized Fisher information is to consider that the goal of sequential channel estimation is to ``accumulate'' as much information about the parameter $\theta$ into the state being carried forward from one channel use to the other in a sequential estimation protocol, as shown in Figure~\ref{fig:sequential-protocol-chapter3}. The amortized Fisher information captures the marginal increase in Fisher information per channel use in such a scenario.

With the aid of the above qualitative reasoning and motivation, it may be simple to see that providing a catalyst state family and subtracting its Fisher information from the total, can never decrease the Fisher information of the channel in question; i.e., amortization does not decrease Fisher information. 

\begin{proposition}
\label{prop:amort->=-ch-Fish-gen}Let $\{\mathcal{N}_{A\rightarrow B}^{\theta
}\}_{\theta}$ be a family of quantum channels, and suppose that the underlying
generalized Fisher information is weakly faithful. Then the generalized Fisher
information does not exceed the amortized one:
\begin{equation}
\mathbf{I}_{F}^{\mathcal{A}}(\theta;\{\mathcal{N}_{A\rightarrow B}^{\theta
}\}_{\theta})\geq\mathbf{I}_{F}(\theta;\{\mathcal{N}_{A\rightarrow B}^{\theta
}\}_{\theta}). \label{eq:general-amort-ineq-obvi-dir}
\end{equation}

\end{proposition}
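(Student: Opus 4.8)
The plan is to lower-bound the supremum defining $\mathbf{I}_{F}^{\mathcal{A}}$ by restricting it to the sub-family of \emph{constant} (parameter-independent) input state families, and then to observe that the subtracted term vanishes by weak faithfulness. Concretely, I would argue as follows. For any fixed bipartite state $\rho_{RA}$ with arbitrarily large reference system $R$, the constant family $\{\rho_{RA}^{\theta}\}_{\theta}$ with $\rho_{RA}^{\theta}=\rho_{RA}$ for all $\theta$ is an admissible choice in the optimization \eqref{eq:amortized-fisher-info}. For this choice, the generalized Fisher information of the input family satisfies $\mathbf{I}_{F}(\theta;\{\rho_{RA}\}_{\theta})=c$ by \eqref{eq:constant-value-gen-Fish-states}, and the weak faithfulness assumption gives $c=0$; hence the subtracted term is exactly zero. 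Therefore each such choice contributes $\mathbf{I}_{F}(\theta;\{\mathcal{N}_{A\rightarrow B}^{\theta}(\rho_{RA})\}_{\theta})$ to the supremum.

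Taking the supremum over all such $\rho_{RA}$ and comparing with Definition~\ref{def:gen-fish-channels}, one obtains
\begin{equation}
\mathbf{I}_{F}^{\mathcal{A}}(\theta;\{\mathcal{N}_{A\rightarrow B}^{\theta}\}_{\theta})\geq\sup_{\rho_{RA}}\mathbf{I}_{F}(\theta;\{\mathcal{N}_{A\rightarrow B}^{\theta}(\rho_{RA})\}_{\theta})=\mathbf{I}_{F}(\theta;\{\mathcal{N}_{A\rightarrow B}^{\theta}\}_{\theta}),
\end{equation}
which is the desired inequality. (By Remark~\ref{rem:restrict-to-pure-bipartite} it would even suffice to restrict to constant pure bipartite families $\psi_{RA}$ with $R\cong A$, though this refinement is not needed here.)

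There is essentially no hard step: the whole content is recognizing that the amortized quantity optimizes over a strictly larger set of input families than the channel Fisher information does, and that the ``penalty'' term is designed to vanish precisely on the constant families thanks to weak faithfulness. The only point that requires a moment of care is that weak faithfulness is genuinely used — without it the constant $c$ in \eqref{eq:constant-value-gen-Fish-states} need not be zero, and the restriction argument would only yield $\mathbf{I}_{F}^{\mathcal{A}}\geq\mathbf{I}_{F}-c$. So I would state the use of the weak faithfulness hypothesis explicitly at the step where $\mathbf{I}_{F}(\theta;\{\rho_{RA}\}_{\theta})$ is set to $0$.
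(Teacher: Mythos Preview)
Your proposal is correct and follows exactly the same approach as the paper: restrict the supremum in the amortized quantity to constant (parameter-independent) input families, use weak faithfulness to make the subtracted term vanish, and then take the supremum over $\rho_{RA}$ to recover the channel Fisher information. Your additional remarks on the role of weak faithfulness and the optional restriction to pure bipartite inputs are accurate but not needed for the argument.
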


\begin{proof}
This follows because we can always pick the input family $\{\rho_{RA}^{\theta
}\}_{\theta}$ in \eqref{eq:amortized-fisher-info} to have no dependence on the
parameter $\theta$. Then we find that
\begin{align}
\mathbf{I}_{F}^{\mathcal{A}}(\theta;\{\mathcal{N}_{A\rightarrow B}^{\theta
}\}_{\theta})  &  \geq\mathbf{I}_{F}(\theta;\{\mathcal{N}_{A\rightarrow
B}^{\theta}(\rho_{RA})\}_{\theta})-\mathbf{I}_{F}(\theta;\{\rho_{RA}
)\}_{\theta})\\
&  =\mathbf{I}_{F}(\theta;\{\mathcal{N}_{A\rightarrow B}^{\theta}(\rho
_{RA})\}_{\theta}),
\end{align}
where we applied the weak faithfulness assumption to arrive at the equality.
Since the inequality holds for all input states $\rho_{RA}$, we conclude \eqref{eq:general-amort-ineq-obvi-dir}.
\end{proof}

\subsection{Amortization collapse}

For some particular choices of the generalized Fisher information, the
inequality in \eqref{eq:general-amort-ineq-obvi-dir}\ can be reversed, which is called an ``amortization collapse.'' The meaning of an amortization collapse for a Fisher information quantity, as we stated earlier as well, is that the Fisher information in question cannot be increased by using a catalyst. It also means that in a sequential channel estimation task, the quantity that undergoes an amortization collapse increases linearly, at best, with the number of channel uses.  

Later in the chapter, we prove Theorem~\ref{thm:meta-converse}, a meta-converse which connects sequential estimation to the amortized Fisher information. This makes an amortization collapse useful for establishing limits on the performance of sequential estimation protocols. For differentiable families $\{\mathcal{N}_{X\rightarrow B}^{\theta}\}_{\theta}$\ of classical--quantum channels, the following equality holds for the SLD Fisher information:
\begin{equation}
I_{F}^{\mathcal{A}}(\theta;\{\mathcal{N}_{X\rightarrow B}^{\theta}\}_{\theta
})=I_{F}(\theta;\{\mathcal{N}_{X\rightarrow B}^{\theta}\}_{\theta}). \label{eq:sld-cq-amort-collapse-intro}
\end{equation}

Further, we show that the following equalities hold for the root SLD and the RLD Fisher informations for all differentiable families $\{\mathcal{N}_{A\rightarrow B}^{\theta}\}_{\theta}$\ of quantum channels:
\begin{align}
\sqrt{I_{F}}^{\mathcal{A}}(\theta;\{\mathcal{N}_{A\rightarrow B}^{\theta
}\}_{\theta}) & =\sqrt{I_{F}}(\theta;\{\mathcal{N}_{A\rightarrow B}^{\theta
}\}_{\theta}), \label{eq:root-sld-amort-collapse-intro}
\\
\widehat{I}_{F}^{\mathcal{A}}(\theta;\{\mathcal{N}_{A\rightarrow B}^{\theta
}\}_{\theta}) & =\widehat{I}_{F}(\theta;\{\mathcal{N}_{A\rightarrow B}^{\theta
}\}_{\theta}). \label{eq:rld-amort-collapse-intro}
\end{align}

We will now provide explicit details on how we arrive at the particular amortization collapses stated above in \eqref{eq:sld-cq-amort-collapse-intro}, \eqref{eq:root-sld-amort-collapse-intro}, and \eqref{eq:rld-amort-collapse-intro}.

\subsection{Amortization collapse of SLD Fisher information for classical-quantum channels}

We first consider the special case of a family $\{\mathcal{N}_{X\rightarrow
B}^{\theta}\}_{\theta}$ of classical--quantum channels of the following form:
\begin{equation}
\mathcal{N}_{X\rightarrow B}^{\theta}(\sigma_{X}):=\sum_{x}\langle
x|_{X}\sigma_{X}|x\rangle_{X}\omega_{B}^{x,\theta}, \label{eq:cq-channel-fams}
\end{equation}
where $\{|x\rangle\}_{x}$ is an orthonormal basis and $\{\omega_{B}^{x,\theta
}\}_{x}$ is a collection of states prepared at the channel output conditioned
on the value of the unknown parameter $\theta$ and on the result of the
measurement of the channel input. The key aspect of these channels is that the
measurement at the input is the same regardless of the value of the parameter
$\theta$. We find the following amortization collapse for these channels:

\begin{proposition}
\label{thm:amort-collapse-cq}Let $\{\mathcal{N}_{X\rightarrow B}^{\theta
}\}_{\theta}$ be a family of differentiable classical--quantum channels. Then
the following amortization collapse occurs
\begin{equation}
I_{F}(\theta;\{\mathcal{N}_{X\rightarrow B}^{\theta}\}_{\theta})=I_{F}
^{\mathcal{A}}(\theta;\{\mathcal{N}_{X\rightarrow B}^{\theta}\}_{\theta}
)=\sup_{x}I_{F}(\theta;\{\omega_{B}^{x,\theta}\}_{\theta}).
\label{eq:amort-collapse}
\end{equation}

\end{proposition}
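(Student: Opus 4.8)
The plan is to prove the two claimed equalities by a ring of inequalities, relying on the additivity of the SLD Fisher information (Proposition~\ref{prop:additivity-SLD-RLD-states}), its classical--quantum decomposition (Proposition~\ref{prop:cq-decomp-SLD-RLD}), the data-processing inequality~\eqref{eq:DP-SLD}, and the general lower bound $I_{F}^{\mathcal{A}}\geq I_{F}$ from Proposition~\ref{prop:amort->=-ch-Fish-gen}, which applies because the SLD Fisher information is weakly faithful by Proposition~\ref{prop:faithfulness-SLD-RLD-Fish}.

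The easy direction is $\sup_{x}I_{F}(\theta;\{\omega_{B}^{x,\theta}\}_{\theta})\leq I_{F}(\theta;\{\mathcal{N}_{X\rightarrow B}^{\theta}\}_{\theta})$: since $\mathcal{N}_{X\rightarrow B}^{\theta}(|x\rangle\!\langle x|_{X})=\omega_{B}^{x,\theta}$, each $I_{F}(\theta;\{\omega_{B}^{x,\theta}\}_{\theta})$ is one of the quantities appearing in the supremum over input states that defines the channel SLD Fisher information, and we then take the supremum over $x$. Combined with $I_{F}\leq I_{F}^{\mathcal{A}}$, it therefore suffices to prove the single reverse bound $I_{F}^{\mathcal{A}}(\theta;\{\mathcal{N}_{X\rightarrow B}^{\theta}\}_{\theta})\leq\sup_{x}I_{F}(\theta;\{\omega_{B}^{x,\theta}\}_{\theta})$, after which all three quantities coincide.

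For that bound, fix an input family $\{\rho_{RX}^{\theta}\}_{\theta}$ and introduce the augmented channel $\widetilde{\mathcal{N}}_{X\rightarrow X'B}^{\theta}(\sigma_{X}):=\sum_{x}\langle x|_{X}\sigma_{X}|x\rangle_{X}\,|x\rangle\!\langle x|_{X'}\otimes\omega_{B}^{x,\theta}$, which additionally copies the input measurement outcome into a classical register $X'$; tracing out $X'$ and invoking~\eqref{eq:DP-SLD} gives $I_{F}(\theta;\{\mathcal{N}^{\theta}(\rho_{RX}^{\theta})\}_{\theta})\leq I_{F}(\theta;\{\widetilde{\mathcal{N}}^{\theta}(\rho_{RX}^{\theta})\}_{\theta})$. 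Writing $q_{\theta}(x):=\langle x|\rho_{X}^{\theta}|x\rangle$ and $\tau_{R}^{x,\theta}:=q_{\theta}(x)^{-1}\langle x|\rho_{RX}^{\theta}|x\rangle$, the output of $\widetilde{\mathcal{N}}^{\theta}$ is the classical--quantum state $\sum_{x}q_{\theta}(x)|x\rangle\!\langle x|_{X'}\otimes\tau_{R}^{x,\theta}\otimes\omega_{B}^{x,\theta}$, so Propositions~\ref{prop:cq-decomp-SLD-RLD} and~\ref{prop:additivity-SLD-RLD-states} yield
\[
I_{F}(\theta;\{\widetilde{\mathcal{N}}^{\theta}(\rho_{RX}^{\theta})\}_{\theta})=I_{F}(\theta;\{q_{\theta}\}_{\theta})+\sum_{x}q_{\theta}(x)\left[I_{F}(\theta;\{\tau_{R}^{x,\theta}\}_{\theta})+I_{F}(\theta;\{\omega_{B}^{x,\theta}\}_{\theta})\right].
\]
On the input side, applying~\eqref{eq:DP-SLD} with the parameter-independent completely dephasing channel $\Delta_{X}$ in the basis $\{|x\rangle\}_{x}$, followed again by Proposition~\ref{prop:cq-decomp-SLD-RLD}, gives $I_{F}(\theta;\{\rho_{RX}^{\theta}\}_{\theta})\geq I_{F}(\theta;\{q_{\theta}\}_{\theta})+\sum_{x}q_{\theta}(x)I_{F}(\theta;\{\tau_{R}^{x,\theta}\}_{\theta})$. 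Subtracting, the catalyst terms $I_{F}(\theta;\{q_{\theta}\}_{\theta})$ and $\sum_{x}q_{\theta}(x)I_{F}(\theta;\{\tau_{R}^{x,\theta}\}_{\theta})$ cancel and we are left with $I_{F}(\theta;\{\mathcal{N}^{\theta}(\rho_{RX}^{\theta})\}_{\theta})-I_{F}(\theta;\{\rho_{RX}^{\theta}\}_{\theta})\leq\sum_{x}q_{\theta}(x)I_{F}(\theta;\{\omega_{B}^{x,\theta}\}_{\theta})\leq\sup_{x}I_{F}(\theta;\{\omega_{B}^{x,\theta}\}_{\theta})$; taking the supremum over $\{\rho_{RX}^{\theta}\}_{\theta}$ in~\eqref{eq:amortized-fisher-info} completes the argument.

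The main obstacle is arranging the two classical--quantum decompositions so that the catalyst contributions cancel exactly; the outcome-copying trick together with dephasing data processing on the input is precisely what makes the bookkeeping work out. A secondary point requiring care is that the subtraction of Fisher informations is only legitimate when the relevant quantities are finite, which may be assumed since the bound is trivial otherwise, and finiteness of $I_{F}(\theta;\{q_{\theta}\}_{\theta})$ and of each $I_{F}(\theta;\{\tau_{R}^{x,\theta}\}_{\theta})$ then follows from $I_{F}(\theta;\{\rho_{RX}^{\theta}\}_{\theta})<\infty$ by monotonicity.
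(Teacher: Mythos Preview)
Your proof is correct and takes essentially the same approach as the paper: both arguments pass to the classical--quantum state $\sum_{x}q_{\theta}(x)|x\rangle\!\langle x|\otimes\tau_{R}^{x,\theta}\otimes\omega_{B}^{x,\theta}$ via data processing (you frame it as an outcome-copying augmented channel, the paper as an upper bound by the pre-image under partial trace), apply Propositions~\ref{prop:cq-decomp-SLD-RLD} and~\ref{prop:additivity-SLD-RLD-states}, and then dephase the input in the $\{|x\rangle\}$ basis to make the catalyst terms cancel. Your explicit remark on the finiteness needed for the subtraction is a nice addition that the paper leaves implicit.
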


\begin{proof}
If the finiteness condition in
\eqref{eq:finiteness-condition-SLD-fish-ch}\ does not hold, then all
quantities are trivially equal to $+\infty$. So let us suppose that the
finiteness condition in \eqref{eq:finiteness-condition-SLD-fish-ch}\ holds.
Note that the finiteness condition is equivalent to
\begin{equation}
\Pi_{\omega_{B}^{x,\theta}}^{\perp}(\partial_{\theta}\omega_{B}^{x,\theta}
)\Pi_{\omega_{B}^{x,\theta}}^{\perp}=0\qquad\forall x.
\label{eq:finiteness-condition-cq-channels}
\end{equation}

First, consider that the following inequality holds
\begin{equation}
I_{F}(\theta;\{\mathcal{N}_{X\rightarrow B}^{\theta}\}_{\theta})\geq\sup
_{x}I_{F}(\theta;\{\omega_{B}^{x,\theta}\}_{\theta}) \label{eq:simple-way-cq}
\end{equation}
because we can input the state $|x\rangle\!\langle x|_{X}$ to the channel
$\mathcal{N}_{X\rightarrow B}^{\theta}$ and obtain the output state
$\mathcal{N}_{X\rightarrow B}^{\theta}(|x\rangle\!\langle x|_{X})=\omega
_{B}^{x,\theta}$. Then we can optimize over $x\in\mathcal{X}$ and obtain the
bound above.

We now prove the less trivial inequality
\begin{equation}
I_{F}^{\mathcal{A}}(\theta;\{\mathcal{N}_{X\rightarrow B}^{\theta}\}_{\theta
})\leq\sup_{x}I_{F}(\theta;\{\omega_{B}^{x,\theta}\}_{\theta}).
\label{eq:amort-dont-help-cq}
\end{equation}
Let $\{\rho_{RA}^{\theta}\}_{\theta}$ be a differentiable family of quantum
states. If the classical--quantum channel $\mathcal{N}_{X\rightarrow
B}^{\theta}$ acts on $\rho_{RA}^{\theta}$ (identifying $X=A$), the output
state is as follows:
\begin{equation}
\mathcal{N}_{X\rightarrow B}^{\theta}(\rho_{RA}^{\theta})=\sum_{x}p_{\theta
}(x)\rho_{R}^{x,\theta}\otimes\omega_{B}^{x,\theta},
\end{equation}
where
\begin{equation}
\rho_{R}^{x,\theta}:=\frac{1}{p_{\theta}(x)}\langle x|_{X}\rho_{RA}^{\theta
}|x\rangle_{X},\qquad p_{\theta}(x):=\operatorname{Tr}[\langle x|_{X}\rho
_{RA}^{\theta}|x\rangle_{X}].
\end{equation}
Then consider that
\begin{align}
&  I_{F}(\theta;\{\mathcal{N}_{X\rightarrow B}^{\theta}(\rho_{RA}^{\theta
})\}_{\theta})\nonumber\\
&  =I_{F}\!\left(  \theta;\left\{  \sum_{x}p_{\theta}(x)\rho_{R}^{x,\theta
}\otimes\omega_{B}^{x,\theta}\right\}  _{\theta}\right) \\
&  \leq I_{F}\!\left(  \theta;\left\{  \sum_{x}p_{\theta}(x)|x\rangle\!\langle
x|_{X}\otimes\rho_{R}^{x,\theta}\otimes\omega_{B}^{x,\theta}\right\}
_{\theta}\right) \\
&  =I_{F}(\theta;\{p_{\theta}\}_{\theta})+\sum_{x}p_{\theta}(x)I_{F}
(\theta;\{\rho_{R}^{x,\theta}\otimes\omega_{B}^{x,\theta}\}_{\theta})\\
&  =I_{F}(\theta;\{p_{\theta}\}_{\theta})+\sum_{x}p_{\theta}(x)I_{F}
(\theta;\{\rho_{R}^{x,\theta}\}_{\theta})+\sum_{x}p_{\theta}(x)I_{F}
(\theta;\{\omega_{B}^{x,\theta}\}_{\theta})\\
&  \leq I_{F}(\theta;\{p_{\theta}\}_{\theta})+\sum_{x}p_{\theta}
(x)I_{F}(\theta;\{\rho_{R}^{x,\theta}\}_{\theta})+\sup_{x}I_{F}(\theta
;\{\omega_{B}^{x,\theta}\}_{\theta})\\
&  =I_{F}\!\left(  \theta;\left\{  \sum_{x}p_{\theta}(x)|x\rangle\!\langle
x|_{X}\otimes\rho_{R}^{x,\theta}\right\}  _{\theta}\right)  +\sup_{x}
I_{F}(\theta;\{\omega_{B}^{x,\theta}\}_{\theta})\\
&  \leq I_{F}(\theta;\{\rho_{RA}^{\theta}\}_{\theta})+\sup_{x}I_{F}
(\theta;\{\omega_{B}^{x,\theta}\}_{\theta}).
\end{align}
The first inequality follows from the data-processing inequality for Fisher
information with respect to partial trace over the $X$ system. The second
equality follows from Proposition~\ref{prop:cq-decomp-SLD-RLD}. The third
equality follows from the additivity of SLD\ Fisher information for product
states (Proposition~\ref{prop:additivity-SLD-RLD-states}). The second
inequality follows from the fact that the average cannot exceed the maximum.
The last equality follows again from Proposition~\ref{prop:cq-decomp-SLD-RLD}.
The final inequality follows from the data-processing inequality under the
action of the measurement channel $(\cdot)\rightarrow\sum_{x}|x\rangle\!\langle
x|_{X}(\cdot)|x\rangle\!\langle x|_{X}$ on the state $\rho_{RA}$. Thus, the
following inequality holds for an arbitrary family $\{\rho_{RA}^{\theta
}\}_{\theta}$ of states:
\begin{equation}
I_{F}(\theta;\{\mathcal{N}_{X\rightarrow B}^{\theta}(\rho_{RA}^{\theta
})\}_{\theta})-I_{F}(\theta;\{\rho_{RA}^{\theta}\}_{\theta})\leq\sup_{x}
I_{F}(\theta;\{\omega_{B}^{x,\theta}\}_{\theta}).
\label{eq:amort-dont-help-cq-2}
\end{equation}
Since the inequality in \eqref{eq:amort-dont-help-cq-2} holds for an arbitrary
family $\{\rho_{RA}^{\theta}\}_{\theta}$ of states, we conclude
\eqref{eq:amort-dont-help-cq}. Combining \eqref{eq:simple-way-cq} and
\eqref{eq:amort-dont-help-cq}, along with the general inequality in
\eqref{eq:general-amort-ineq-obvi-dir}, we conclude \eqref{eq:amort-collapse}.
\end{proof}

\subsection{Amortization collapse of root SLD Fisher information for general channels} \label{subsec:root-sld-amort-collapse}

Next, we show an amortization collapse for the square root of the SLD Fisher information for all quantum channels. We begin by showing that the root SLD\ Fisher information obeys the following chain rule inequality and then show how the amortization collapse follows as a corollary of it.

\begin{proposition}
[Chain rule for root SLD Fisher information of quantum channels]\label{prop:chain-rule-root-SLD}Let $\{\rho_{\theta}\}_{\theta}$
be a differentiable family of quantum states, and let $\{\mathcal{N}
_{A\rightarrow B}^{\theta}\}_{\theta}$ be a differentiable family of quantum
channels. Then the following chain rule holds for the root SLD Fisher
information:
\begin{equation}
\sqrt{I_{F}}(\theta;\{\mathcal{N}_{A\rightarrow B}^{\theta}(\rho_{RA}^{\theta
})\}_{\theta})\leq\sqrt{I_{F}}(\theta;\{\mathcal{N}_{A\rightarrow B}^{\theta
}\}_{\theta})+\sqrt{I_{F}}(\theta;\{\rho_{RA}^{\theta}\}_{\theta}).
\label{eq:chain-rule-root-SLD}
\end{equation}

\end{proposition}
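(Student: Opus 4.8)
The plan is to use that the square root of the SLD Fisher information is, up to a universal constant, the infinitesimal form of a genuine \emph{metric} on density operators, so that the chain rule \eqref{eq:chain-rule-root-SLD} becomes the local version of the triangle inequality combined with data processing. Concretely, take $d$ to be the Bures angle $d(\rho,\sigma):=\arccos\Vert\sqrt{\rho}\sqrt{\sigma}\Vert_{1}$: it is a metric, it is monotone under quantum channels, and for any differentiable family $\{\rho_{\theta}\}_{\theta}$ one has
\begin{equation}
\sqrt{I_{F}(\theta;\{\rho_{\theta}\}_{\theta})}=2\lim_{\varepsilon\to0^{+}}\frac{1}{\varepsilon}\,d(\rho_{\theta},\rho_{\theta+\varepsilon}),
\end{equation}
with the right-hand side equal to $+\infty$ exactly when the finiteness condition in \eqref{eq:basis-independent-formula-SLD} fails. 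We may assume the right-hand side of \eqref{eq:chain-rule-root-SLD} is finite (otherwise there is nothing to prove), hence the finiteness conditions for the state family and, by \eqref{eq:finiteness-condition-SLD-fish-ch}, for the channel family both hold; one checks that the output family $\{\mathcal{N}_{A\rightarrow B}^{\theta}(\rho_{RA}^{\theta})\}_{\theta}$ then also satisfies its finiteness condition (e.g.\ by exhibiting a smooth purification of finite Fisher information built from a purification of $\rho_{RA}^{\theta}$ and a Stinespring dilation of $\mathcal{N}_{A\rightarrow B}^{\theta}$).

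\textbf{Key steps.} Write $\sigma_{RB}^{\theta}:=\mathcal{N}_{A\rightarrow B}^{\theta}(\rho_{RA}^{\theta})$. For $\varepsilon>0$ the triangle inequality for $d$ gives
\begin{equation}
d(\sigma_{RB}^{\theta},\sigma_{RB}^{\theta+\varepsilon})\leq d\!\left(\mathcal{N}_{A\rightarrow B}^{\theta}(\rho_{RA}^{\theta}),\mathcal{N}_{A\rightarrow B}^{\theta}(\rho_{RA}^{\theta+\varepsilon})\right)+d\!\left(\mathcal{N}_{A\rightarrow B}^{\theta}(\rho_{RA}^{\theta+\varepsilon}),\mathcal{N}_{A\rightarrow B}^{\theta+\varepsilon}(\rho_{RA}^{\theta+\varepsilon})\right).
\end{equation}
The first term is bounded by monotonicity of $d$ under the parameter-independent channel $\mathcal{N}_{A\rightarrow B}^{\theta}$, giving $d(\rho_{RA}^{\theta},\rho_{RA}^{\theta+\varepsilon})$. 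The second term is bounded, after noting by purification and monotonicity that it suffices to optimize over pure inputs $\psi_{RA}$ with $R\cong A$ (Remark~\ref{rem:restrict-to-pure-bipartite}), by $\sup_{\psi_{RA}}d(\mathcal{N}_{A\rightarrow B}^{\theta}(\psi_{RA}),\mathcal{N}_{A\rightarrow B}^{\theta+\varepsilon}(\psi_{RA}))$. Dividing by $\varepsilon$ and letting $\varepsilon\to0^{+}$, the left side tends to $\tfrac12\sqrt{I_{F}(\theta;\{\sigma_{RB}^{\theta}\}_{\theta})}$ and the first right-hand term to $\tfrac12\sqrt{I_{F}(\theta;\{\rho_{RA}^{\theta}\}_{\theta})}$; the claim then reduces to
\begin{equation}
\limsup_{\varepsilon\to0^{+}}\frac{2}{\varepsilon}\sup_{\psi_{RA}}d\!\left(\mathcal{N}_{A\rightarrow B}^{\theta}(\psi_{RA}),\mathcal{N}_{A\rightarrow B}^{\theta+\varepsilon}(\psi_{RA})\right)\leq\sqrt{I_{F}(\theta;\{\mathcal{N}_{A\rightarrow B}^{\theta}\}_{\theta})}.
\end{equation}
For each fixed $\psi_{RA}$ the quantity under the $\limsup$ (without the outer supremum) converges to $\sqrt{I_{F}(\theta;\{\mathcal{N}_{A\rightarrow B}^{\theta}(\psi_{RA})\}_{\theta})}$, whose supremum over $\psi_{RA}$ is precisely $\sqrt{I_{F}(\theta;\{\mathcal{N}_{A\rightarrow B}^{\theta}\}_{\theta})}$ by definition of the channel SLD Fisher information, so the displayed bound is exactly the interchange of $\sup_{\psi_{RA}}$ with $\lim_{\varepsilon\to0^{+}}$ in the direction $\limsup\sup\leq\sup\lim$.

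\textbf{Main obstacle.} The conceptual content is immediate; the technical crux is this interchange of the supremum over inputs with the $\varepsilon\to0^{+}$ limit for the channel term. I would justify it by a compactness/uniformity argument: by Remark~\ref{rem:restrict-to-pure-bipartite} the relevant set of pure states $\psi_{RA}$ with $R\cong A$ is compact (finite dimensional), and under the Choi finiteness condition \eqref{eq:finiteness-condition-SLD-fish-ch} the map $\varepsilon\mapsto d(\mathcal{N}_{A\rightarrow B}^{\theta}(\psi_{RA}),\mathcal{N}_{A\rightarrow B}^{\theta+\varepsilon}(\psi_{RA}))$ has a second-order Taylor expansion whose remainder is controlled uniformly in $\psi_{RA}$; a subsequential limit of near-maximizers $\psi_{\varepsilon}$ then yields the bound. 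An equivalent viewpoint that makes the ``why'' transparent is to purify: choosing a smooth purification of $\rho_{RA}^{\theta}$ that attains the minimum in the purification formula for $I_{F}$, and a smooth Stinespring dilation $V_{A\rightarrow BE}^{\theta}$, the output family is a partial trace of the pure family $(V^{\theta}\otimes I)\ket{\phi^{\theta}}$; for pure families $\sqrt{I_{F}}$ is twice the Hilbert-space norm of the state-derivative projected orthogonally to the state, so the triangle inequality is literally the Hilbert-space triangle inequality, the state contribution is $\sqrt{I_{F}(\theta;\{\rho_{RA}^{\theta}\}_{\theta})}$ by optimality of the purification, and the channel contribution is bounded by $\sqrt{I_{F}(\theta;\{\mathcal{N}_{A\rightarrow B}^{\theta}\}_{\theta})}$ after optimizing over dilations --- with the same minimax/uniformity subtlety reappearing in that last optimization. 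Finally, the amortization collapse \eqref{eq:root-sld-amort-collapse-intro} follows as a corollary by applying \eqref{eq:chain-rule-root-SLD} with $\{\rho_{RA}^{\theta}\}_{\theta}$ the catalyst family and rearranging, together with Proposition~\ref{prop:amort->=-ch-Fish-gen}.
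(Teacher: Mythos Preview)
Your approach is genuinely different from the paper's. You work geometrically: the root SLD Fisher information is the infinitesimal Bures angle, so you apply the triangle inequality for that metric at scale~$\varepsilon$, use monotonicity for the state term, bound the channel term by a supremum over pure inputs, and then send $\varepsilon\to 0$. The paper instead works \emph{variationally}: it invokes the optimization formula
\[
\sqrt{I_{F}}(\theta;\{\rho_{\theta}\}_{\theta})=\sqrt{2}\,\sup_{X}\bigl\{\,|\operatorname{Tr}[X(\partial_{\theta}\rho_{\theta})]|:\operatorname{Tr}[(XX^{\dag}+X^{\dag}X)\rho_{\theta}]\le 1\,\bigr\},
\]
expands $\partial_{\theta}(\mathcal{N}^{\theta}(\rho^{\theta}_{RA}))=(\partial_{\theta}\mathcal{N}^{\theta})(\rho^{\theta}_{RA})+\mathcal{N}^{\theta}(\partial_{\theta}\rho^{\theta}_{RA})$ via post-selected teleportation, and then, for any feasible $X_{RB}$, shows that $Z_{RA}:=(\mathcal{N}^{\theta})^{\dag}(X_{RB})$ is feasible for the state problem by the operator Schwarz inequality $(\mathcal{N}^{\theta})^{\dag}(X)^{\dag}(\mathcal{N}^{\theta})^{\dag}(X)\le(\mathcal{N}^{\theta})^{\dag}(X^{\dag}X)$ for completely positive unital maps. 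Splitting $|\operatorname{Tr}[X_{RB}\,\partial_{\theta}(\mathcal{N}^{\theta}(\rho^{\theta}_{RA}))]|$ accordingly and optimizing over $X_{RB}$ gives \eqref{eq:chain-rule-root-SLD} directly. The virtue of the paper's route is that it never takes an $\varepsilon\to 0$ limit and never needs to interchange a supremum with a limit; the chain rule drops out of linearity of the derivative plus a dual-feasibility transfer.

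The obstacle you flag is real, and it is precisely the place where your sketch is not yet a proof. You need $\limsup_{\varepsilon}\sup_{\psi}\le\sup_{\psi}\lim_{\varepsilon}$, which is the \emph{non}-automatic direction; your proposed uniform-remainder argument is plausible when $\mathcal{N}^{\theta}(\psi_{RA})$ stays in the interior of state space, but as $\psi_{RA}$ ranges over the compact set of pure inputs the output can hit the boundary (zero eigenvalues), where the Bures-angle Taylor expansion degrades and uniform control of the remainder is delicate. The Choi finiteness condition \eqref{eq:finiteness-condition-SLD-fish-ch} does constrain $\partial_{\theta}\Gamma^{\mathcal{N}^{\theta}}$ on the kernel of $\Gamma^{\mathcal{N}^{\theta}}$, but turning that into a uniform $o(\varepsilon)$ bound for $d(\mathcal{N}^{\theta}(\psi),\mathcal{N}^{\theta+\varepsilon}(\psi))-\tfrac{\varepsilon}{2}\sqrt{I_{F}(\theta;\{\mathcal{N}^{\theta}(\psi)\}_{\theta})}$ over all $\psi$ is a genuine piece of analysis you have not supplied. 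The same issue resurfaces in your purification variant as a minimax over Stinespring gauges. If you want to push your line through, the cleanest fix is to bypass the limit altogether and work with the variational formula as the paper does; alternatively, one can appeal to a Kraus-operator representation of the channel Fisher information and purification of the input to reduce to a pure-state Hilbert-space triangle inequality with an explicit gauge choice, but that too ultimately hinges on knowing that the channel SLD Fisher information equals the minimum over smooth dilations, which is essentially the content you are trying to avoid proving.
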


\begin{proof}
If the finiteness conditions in
\eqref{eq:basis-independent-formula-SLD}\ and\ \eqref{eq:finiteness-condition-SLD-fish-ch}
do not hold, then the inequality is trivially satisfied. So let us suppose
that the finiteness conditions
\eqref{eq:basis-independent-formula-SLD}\ and\ \eqref{eq:finiteness-condition-SLD-fish-ch}
hold.

By invoking the variational representation of the root SLD Fisher information (provided later in this chapter as Proposition~\ref{prop:root-SLD-opt-formula}) and also Remark~\ref{rem:restrict-to-pure-bipartite}, the root SLD\ Fisher information of channels has the following representation as an optimization:
\begin{align}
&  \frac{1}{\sqrt{2}}\sqrt{I_{F}}(\theta;\{\mathcal{N}_{A\rightarrow
B}^{\theta}\}_{\theta})\nonumber\\
&  =\frac{1}{\sqrt{2}}\sup_{\rho_{RA}}\sqrt{I_{F}}(\theta;\{\mathcal{N}
_{A\rightarrow B}^{\theta}(\rho_{RA})\}_{\theta})\\
&  =\sup_{\rho_{RA}}\sup_{X_{RB}}\left\{
\begin{array}
[c]{c}
\left\vert \operatorname{Tr}[X_{RB}(\partial_{\theta}\mathcal{N}_{A\rightarrow
B}^{\theta}(\rho_{RA}))]\right\vert :\\
\operatorname{Tr}[(X_{RB}X_{RB}^{\dag}+X_{RB}^{\dag}X_{RB})\mathcal{N}
_{A\rightarrow B}^{\theta}(\rho_{RA})]\leq1
\end{array}
\right\} \\
&  =\sup_{\rho_{RA},X_{RB}}\left\{
\begin{array}
[c]{c}
\left\vert \operatorname{Tr}[X_{RB}(\partial_{\theta}\mathcal{N}_{A\rightarrow
B}^{\theta})(\rho_{RA})]\right\vert :\\
\operatorname{Tr}[(X_{RB}X_{RB}^{\dag}+X_{RB}^{\dag}X_{RB})\mathcal{N}
_{A\rightarrow B}^{\theta}(\rho_{RA})]\leq1
\end{array}
\right\}  , \label{eq:root-SLD-opt-formula-channels}
\end{align}
where the distinction between the third and last line is that $\partial
_{\theta}\mathcal{N}_{A\rightarrow B}^{\theta}(\rho_{RA})=(\partial_{\theta
}\mathcal{N}_{A\rightarrow B}^{\theta})(\rho_{RA})$ (i.e., for fixed
$\rho_{RA}$, the state $\rho_{RA}$ is constant with respect to the partial derivative).

Now recall the post-selected teleportation identity from
\eqref{eq:PS-TP-identity}:
\begin{equation}
\mathcal{N}_{A\rightarrow B}^{\theta}(\rho_{RA}^{\theta})=\langle\Gamma
|_{AS}\rho_{RA}^{\theta}\otimes\Gamma_{SB}^{\mathcal{N}^{\theta}}
|\Gamma\rangle_{AS}\text{.}
\end{equation}
This implies that
\begin{align}
&  \partial_{\theta}(\mathcal{N}_{A\rightarrow B}^{\theta}(\rho_{RA}^{\theta
}))\nonumber\\
&  =\partial_{\theta}(\langle\Gamma|_{AS}\rho_{RA}^{\theta}\otimes\Gamma
_{SB}^{\mathcal{N}^{\theta}}|\Gamma\rangle_{AS})\\
&  =\langle\Gamma|_{AS}\partial_{\theta}(\rho_{RA}^{\theta}\otimes\Gamma
_{SB}^{\mathcal{N}^{\theta}})|\Gamma\rangle_{AS}\\
&  =\langle\Gamma|_{AS}[(\partial_{\theta}\rho_{RA}^{\theta})\otimes
\Gamma_{SB}^{\mathcal{N}^{\theta}}+\rho_{RA}^{\theta}\otimes(\partial_{\theta
}\Gamma_{SB}^{\mathcal{N}^{\theta}})]|\Gamma\rangle_{AS}\\
&  =\langle\Gamma|_{AS}[(\partial_{\theta}\rho_{RA}^{\theta})\otimes
\Gamma_{SB}^{\mathcal{N}^{\theta}}|\Gamma\rangle_{AS}+\langle\Gamma|_{AS}
\rho_{RA}^{\theta}\otimes(\partial_{\theta}\Gamma_{SB}^{\mathcal{N}^{\theta}
})|\Gamma\rangle_{AS}\\
&  =\mathcal{N}_{A\rightarrow B}^{\theta}(\partial_{\theta}\rho_{RA}^{\theta
})+(\partial_{\theta}\mathcal{N}_{A\rightarrow B}^{\theta})(\rho_{RA}^{\theta
}).
\end{align}
Let $X_{RB}$ be an arbitrary operator satisfying
\begin{equation}
\operatorname{Tr}[(X_{RB}X_{RB}^{\dag}+X_{RB}^{\dag}X_{RB})\mathcal{N}
_{A\rightarrow B}^{\theta}(\rho_{RA}^{\theta})]\leq1.
\label{eq:root-SLD-chain-arb-op}
\end{equation}
Working with the left-hand side of the inequality, we find that
\begin{align}
&  \operatorname{Tr}[(X_{RB}X_{RB}^{\dag}+X_{RB}^{\dag}X_{RB})\mathcal{N}
_{A\rightarrow B}^{\theta}(\rho_{RA}^{\theta})]\nonumber\\
&  =\operatorname{Tr}[(\mathcal{N}_{A\rightarrow B}^{\theta})^{\dag}
(X_{RB}X_{RB}^{\dag}+X_{RB}^{\dag}X_{RB})\rho_{RA}^{\theta}]\\
&  \geq\operatorname{Tr}[(Z_{RA}Z_{RA}^{\dag}+Z_{RA}^{\dag}Z_{RA})\rho
_{RA}^{\theta}],
\end{align}
where we set
\begin{equation}
Z_{RA}:=(\mathcal{N}_{A\rightarrow B}^{\theta})^{\dag}(X_{RB}).
\end{equation}
The equality follows because $(\mathcal{N}_{A\rightarrow B}^{\theta})^{\dag}$
is the Hilbert--Schmidt adjoint of $\mathcal{N}_{A\rightarrow B}^{\theta}$,
and the inequality follows because $\rho_{RA}^{\theta}\geq 0$ and
\begin{align}
(\mathcal{N}_{A\rightarrow B}^{\theta})^{\dag}(X_{RB}^{\dag})(\mathcal{N}
_{A\rightarrow B}^{\theta})^{\dag}(X_{RB})  &  \leq(\mathcal{N}_{A\rightarrow
B}^{\theta})^{\dag}(X_{RB}^{\dag}X_{RB}),\\
(\mathcal{N}_{A\rightarrow B}^{\theta})^{\dag}(X_{RB})(\mathcal{N}
_{A\rightarrow B}^{\theta})^{\dag}(X_{RB}^{\dag})  &  \leq(\mathcal{N}
_{A\rightarrow B}^{\theta})^{\dag}(X_{RB}X_{RB}^{\dag}),
\end{align}
which themselves follow from the Schwarz inequality for completely positive
unital maps \cite[Eq.~(3.14)]{Bhatia2007}. So we conclude that
\begin{equation}
\operatorname{Tr}[(Z_{RA}Z_{RA}^{\dag}+Z_{RA}^{\dag}Z_{RA})(\rho_{RA}^{\theta
})]\leq1. \label{eq:condition-on-adjoint-op-root-SLD-chain}
\end{equation}
Then consider that
\begin{align}
&  \left\vert \operatorname{Tr}[X_{RB}(\partial_{\theta}(\mathcal{N}
_{A\rightarrow B}^{\theta}(\rho_{RA}^{\theta})))]\right\vert \nonumber\\
&  =\left\vert \operatorname{Tr}[X_{RB}((\partial_{\theta}\mathcal{N}
_{A\rightarrow B}^{\theta})(\rho_{RA}^{\theta}))]+\operatorname{Tr}
[X_{RB}\mathcal{N}_{A\rightarrow B}^{\theta}(\partial_{\theta}
\rho_{RA}^{\theta})]\right\vert \\
&  =\left\vert \operatorname{Tr}[X_{RB}((\partial_{\theta}\mathcal{N}
_{A\rightarrow B}^{\theta})(\rho_{RA}^{\theta}))]+\operatorname{Tr}
[(\mathcal{N}_{A\rightarrow B}^{\theta})^{\dag}(X_{RB})(\partial_{\theta}
\rho_{RA}^{\theta})]\right\vert \\
&  \leq\left\vert \operatorname{Tr}[X_{RB}((\partial_{\theta}\mathcal{N}
_{A\rightarrow B}^{\theta})(\rho_{RA}^{\theta}))]\right\vert +\left\vert
\operatorname{Tr}[(\mathcal{N}_{A\rightarrow B}^{\theta})^{\dag}
(X_{RB})(\partial_{\theta}\rho_{RA}^{\theta})]\right\vert .
\end{align}
By applying the optimization representation \eqref{eq:root-SLD-opt-formula-channels}, we find that
\begin{equation}
\sqrt{2}\left\vert \operatorname{Tr}[X_{RB}((\partial_{\theta}\mathcal{N}
_{A\rightarrow B}^{\theta})(\rho_{RA}^{\theta}))]\right\vert \leq\sqrt{I_{F}
}(\theta;\{\mathcal{N}_{A\rightarrow B}^{\theta}\}_{\theta}).
\end{equation}
Since the operator $(\mathcal{N}_{A\rightarrow B}^{\theta})^{\dag}
(X_{RB})=Z_{RA}$ satisfies \eqref{eq:condition-on-adjoint-op-root-SLD-chain},
by applying the optimization in \eqref{eq:root-SLD-opt-formula}, we find that
\begin{equation}
\sqrt{2}\left\vert \operatorname{Tr}[(\mathcal{N}_{A\rightarrow B}^{\theta
})^{\dag}(X_{RB})(\partial_{\theta}\rho_{RA}^{\theta})]\right\vert \leq
\sqrt{I_{F}}(\theta;\{\rho_{RA}^{\theta}\}_{\theta}).
\end{equation}
So we conclude that
\begin{equation}
\sqrt{2}\left\vert \operatorname{Tr}[X_{RB}(\partial_{\theta}(\mathcal{N}
_{A\rightarrow B}^{\theta}(\rho_{RA}^{\theta})))]\right\vert \leq\sqrt{I_{F}
}(\theta;\{\mathcal{N}_{A\rightarrow B}^{\theta}\}_{\theta})+\sqrt{I_{F}
}(\theta;\{\rho_{RA}^{\theta}\}_{\theta}).
\end{equation}
Since $X_{RB}$ is an arbitrary operator satisfying the inequality \eqref{eq:root-SLD-chain-arb-op}, we can optimize over all such operators to conclude the chain rule inequality in \eqref{eq:chain-rule-root-SLD}.
\end{proof}
\medskip

We show now how the chain rule proved above leads straightforwardly to an amortization collapse for the root SLD Fisher information of channels: 
\begin{corollary}[Amortization collapse]
\label{cor:amort-collapse-root-SLD}Let $\{\mathcal{N}_{A\rightarrow B}
^{\theta}\}_{\theta}$ be a family of differentiable quantum channels. Then the
following amortization collapse occurs for the root SLD\ Fisher information of
quantum channels:
\begin{equation}
\sqrt{I_{F}}^{\mathcal{A}}(\theta;\{\mathcal{N}_{A\rightarrow B}^{\theta
}\}_{\theta})=\sqrt{I_{F}}(\theta;\{\mathcal{N}_{A\rightarrow B}^{\theta
}\}_{\theta}),
\end{equation}
where
\begin{equation}
\sqrt{I_{F}}^{\mathcal{A}}(\theta;\{\mathcal{N}_{A\rightarrow B}^{\theta
}\}_{\theta}):=\sup_{\{\rho_{RA}^{\theta}\}_{\theta}}\left[  \sqrt{I_{F}
}(\theta;\{\mathcal{N}_{A\rightarrow B}^{\theta}(\rho_{RA}^{\theta}
)\}_{\theta})-\sqrt{I_{F}}(\theta;\{\rho_{RA}^{\theta}\}_{\theta})\right]  .
\end{equation}

\end{corollary}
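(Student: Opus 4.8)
The plan is to obtain the equality by combining two one-sided inequalities, the nontrivial one being an immediate consequence of the chain rule just established.

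For the direction $\sqrt{I_F}^{\mathcal{A}} \ge \sqrt{I_F}$, I would first observe that $\sqrt{I_F}$ is itself a generalized Fisher information in the sense of Definition~\ref{def:gen-fish-info-states}, and that it is weakly faithful. Indeed, $I_F$ obeys the data-processing inequality \eqref{eq:DP-SLD}, and since $t\mapsto\sqrt{t}$ is monotone non-decreasing, the composition $\sqrt{I_F}$ obeys data processing as well; moreover $\sqrt{I_F}(\theta;\{\rho_A^\theta\}_\theta)=0$ precisely when $I_F(\theta;\{\rho_A^\theta\}_\theta)=0$, which by Proposition~\ref{prop:faithfulness-SLD-RLD-Fish} happens exactly for constant families. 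Hence Proposition~\ref{prop:amort->=-ch-Fish-gen} applies with $\mathbf{I}_F$ replaced by $\sqrt{I_F}$, and gives $\sqrt{I_F}^{\mathcal{A}}(\theta;\{\mathcal{N}_{A\to B}^\theta\}_\theta)\ge\sqrt{I_F}(\theta;\{\mathcal{N}_{A\to B}^\theta\}_\theta)$.

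For the reverse inequality, let $\{\rho_{RA}^\theta\}_\theta$ be an arbitrary differentiable family of states with unbounded reference system $R$. The chain rule of Proposition~\ref{prop:chain-rule-root-SLD} gives $\sqrt{I_F}(\theta;\{\mathcal{N}_{A\to B}^\theta(\rho_{RA}^\theta)\}_\theta)\le\sqrt{I_F}(\theta;\{\mathcal{N}_{A\to B}^\theta\}_\theta)+\sqrt{I_F}(\theta;\{\rho_{RA}^\theta\}_\theta)$, and rearranging yields $\sqrt{I_F}(\theta;\{\mathcal{N}_{A\to B}^\theta(\rho_{RA}^\theta)\}_\theta)-\sqrt{I_F}(\theta;\{\rho_{RA}^\theta\}_\theta)\le\sqrt{I_F}(\theta;\{\mathcal{N}_{A\to B}^\theta\}_\theta)$. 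Since this holds for every such family, taking the supremum of the left-hand side over all $\{\rho_{RA}^\theta\}_\theta$ — which by definition is $\sqrt{I_F}^{\mathcal{A}}(\theta;\{\mathcal{N}_{A\to B}^\theta\}_\theta)$ — gives the matching upper bound.

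Combining the two inequalities gives the claimed equality. There is no genuine obstacle in this corollary: all of the substantive work lies in the chain rule of Proposition~\ref{prop:chain-rule-root-SLD}, and what remains is the familiar ``chain rule $\Rightarrow$ amortization collapse'' pattern (as used for amortized channel divergences). The only point worth checking explicitly is that $\sqrt{I_F}$ satisfies the hypotheses of Proposition~\ref{prop:amort->=-ch-Fish-gen}, which it does as noted above; one should also keep in mind that in the degenerate case where the relevant finiteness conditions fail, all quantities equal $+\infty$ and the identity holds trivially.
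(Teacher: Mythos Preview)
Your proposal is correct and follows essentially the same approach as the paper: the $\geq$ direction via Proposition~\ref{prop:amort->=-ch-Fish-gen} together with (weak) faithfulness of $\sqrt{I_F}$, and the $\leq$ direction by rearranging the chain rule of Proposition~\ref{prop:chain-rule-root-SLD} and taking the supremum over input state families. Your explicit verification that $\sqrt{I_F}$ inherits data processing and weak faithfulness is a bit more detailed than the paper's brief citation of \eqref{eq:SLD-RLD-Fish-faithful}, but the argument is the same.
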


\begin{proof}
If the finiteness condition in\ \eqref{eq:finiteness-condition-SLD-fish-ch}
does not hold, then the equality trivially holds. So let us suppose that the
finiteness condition in \eqref{eq:finiteness-condition-SLD-fish-ch}\ holds.
The inequality $\geq$ follows from Proposition~\ref{prop:amort->=-ch-Fish-gen}
and the fact that the root SLD\ Fisher information is faithful (see
\eqref{eq:SLD-RLD-Fish-faithful}). The opposite inequality $\leq$ is a
consequence of the chain rule from Proposition~\ref{prop:chain-rule-root-SLD}.
Let $\{\rho_{RA}^{\theta}\}_{\theta}$ be a family of quantum states on
systems $RA$. Then it follows from the chain rule proved above (Proposition~\ref{prop:chain-rule-root-SLD}) that
\begin{equation}
\sqrt{I_{F}}(\theta;\{\mathcal{N}_{A\rightarrow B}^{\theta}(\rho_{RA}^{\theta
})\}_{\theta})-\sqrt{I_{F}}(\theta;\{\rho_{RA}^{\theta}\}_{\theta})\leq
\sqrt{I_{F}}(\theta;\{\mathcal{N}_{A\rightarrow B}^{\theta}\}_{\theta}).
\end{equation}
Since the family $\{\rho_{RA}^{\theta}\}_{\theta}$ is arbitrary, we can take a
supremum of the left-hand side over all such families, and conclude that
\begin{equation}
\sqrt{I_{F}}^{\mathcal{A}}(\theta;\{\mathcal{N}_{A\rightarrow B}^{\theta
}\}_{\theta})\leq\sqrt{I_{F}}(\theta;\{\mathcal{N}_{A\rightarrow B}^{\theta
}\}_{\theta}).
\end{equation}
This concludes the proof.
\end{proof}

Further, as another corollary of the chain rule, we show that the root SLD Fisher information is subadditive with respect to serial composition of quantum channels. 

\begin{corollary}
\label{cor:subadd-serial-concat-root-SLD-Fish}Let $\{\mathcal{N}_{A\rightarrow
B}^{\theta}\}_{\theta}$ and $\{\mathcal{M}_{B\rightarrow C}^{\theta}
\}_{\theta}$ be differentiable families of quantum channels. Then the root
SLD\ Fisher information of quantum channels is subadditive with respect to
serial composition, in the following sense:
\begin{equation}
\sqrt{I_{F}}(\theta;\{\mathcal{M}_{B\rightarrow C}^{\theta}\circ
\mathcal{N}_{A\rightarrow B}^{\theta}\}_{\theta})\leq\sqrt{I_{F}}
(\theta;\{\mathcal{N}_{A\rightarrow B}^{\theta}\}_{\theta})+\sqrt{I_{F}
}(\theta;\{\mathcal{M}_{B\rightarrow C}^{\theta}\}_{\theta}).
\label{eq:serial-composition-root-SLD-ch}
\end{equation}

\end{corollary}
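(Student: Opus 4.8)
The plan is to obtain this as a near-immediate consequence of the chain rule in Proposition~\ref{prop:chain-rule-root-SLD}, by viewing the output of the first channel as a parameterized input-state family for the second. First I would unfold the definition of the root SLD Fisher information of the composite channel family. By Remark~\ref{rem:restrict-to-pure-bipartite} it suffices to write
\[
\sqrt{I_{F}}(\theta;\{\mathcal{M}_{B\rightarrow C}^{\theta}\circ\mathcal{N}_{A\rightarrow B}^{\theta}\}_{\theta})=\sup_{\psi_{RA}}\sqrt{I_{F}}\big(\theta;\{(\mathcal{M}_{B\rightarrow C}^{\theta}\circ\mathcal{N}_{A\rightarrow B}^{\theta})(\psi_{RA})\}_{\theta}\big),
\]
with the supremum over pure bipartite states $\psi_{RA}$ with $R$ isomorphic to $A$. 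Fix any such $\psi_{RA}$ (which carries no $\theta$-dependence) and set $\rho_{RB}^{\theta}:=\mathcal{N}_{A\rightarrow B}^{\theta}(\psi_{RA})$, a differentiable family of states on $RB$, so that $(\mathcal{M}^{\theta}\circ\mathcal{N}^{\theta})(\psi_{RA})=\mathcal{M}_{B\rightarrow C}^{\theta}(\rho_{RB}^{\theta})$.

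Next I would apply the chain rule of Proposition~\ref{prop:chain-rule-root-SLD} to the state family $\{\rho_{RB}^{\theta}\}_{\theta}$ and the channel family $\{\mathcal{M}_{B\rightarrow C}^{\theta}\}_{\theta}$, with $R$ playing the role of the (arbitrary) reference system:
\[
\sqrt{I_{F}}(\theta;\{\mathcal{M}_{B\rightarrow C}^{\theta}(\rho_{RB}^{\theta})\}_{\theta})\leq\sqrt{I_{F}}(\theta;\{\mathcal{M}_{B\rightarrow C}^{\theta}\}_{\theta})+\sqrt{I_{F}}(\theta;\{\rho_{RB}^{\theta}\}_{\theta}).
\]
For the last term, since $\psi_{RA}$ is $\theta$-independent we have $\{\rho_{RB}^{\theta}\}_{\theta}=\{\mathcal{N}_{A\rightarrow B}^{\theta}(\psi_{RA})\}_{\theta}$, and by the definition of the channel Fisher information as a supremum over inputs,
\[
\sqrt{I_{F}}(\theta;\{\rho_{RB}^{\theta}\}_{\theta})=\sqrt{I_{F}}(\theta;\{\mathcal{N}_{A\rightarrow B}^{\theta}(\psi_{RA})\}_{\theta})\leq\sqrt{I_{F}}(\theta;\{\mathcal{N}_{A\rightarrow B}^{\theta}\}_{\theta}).
\]
Combining the two displays bounds $\sqrt{I_{F}}(\theta;\{(\mathcal{M}^{\theta}\circ\mathcal{N}^{\theta})(\psi_{RA})\}_{\theta})$ by the right-hand side of~\eqref{eq:serial-composition-root-SLD-ch}, uniformly in $\psi_{RA}$; taking the supremum over $\psi_{RA}$ then yields the claim.

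Finally I would dispatch the degenerate cases. If either $\sqrt{I_{F}}(\theta;\{\mathcal{N}_{A\rightarrow B}^{\theta}\}_{\theta})$ or $\sqrt{I_{F}}(\theta;\{\mathcal{M}_{B\rightarrow C}^{\theta}\}_{\theta})$ equals $+\infty$ the inequality is trivial, so assume both are finite; then the chain rule already forces $\sqrt{I_{F}}(\theta;\{\mathcal{M}^{\theta}(\rho_{RB}^{\theta})\}_{\theta})$ to be finite for every $\psi_{RA}$, and the argument above goes through verbatim. I do not expect a genuine obstacle here: the only points requiring care are bookkeeping with reference systems (the system $R$ passed through $\mathcal{M}^{\theta}$ may be larger than $B$, but Proposition~\ref{prop:chain-rule-root-SLD} is stated for arbitrary state families, so this is permitted) and reducing the supremum over inputs to the composite channel to the same class of pure bipartite states used throughout. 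All the real content is contained in the already-established chain rule.
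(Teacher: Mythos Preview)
Your proposal is correct and follows essentially the same route as the paper: both arguments fix an input state, apply the chain rule of Proposition~\ref{prop:chain-rule-root-SLD} with the output family $\{\mathcal{N}^{\theta}(\psi_{RA})\}_{\theta}$ playing the role of the input to $\mathcal{M}^{\theta}$, bound the resulting state term by $\sqrt{I_F}(\theta;\{\mathcal{N}^{\theta}\}_{\theta})$, and then take the supremum over inputs. The only cosmetic differences are that the paper works with a general input $\omega_{RA}$ rather than invoking Remark~\ref{rem:restrict-to-pure-bipartite}, and it places the finiteness check at the start rather than the end.
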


\begin{proof}
If the finiteness condition in\ \eqref{eq:finiteness-condition-SLD-fish-ch}
does not hold for either channel, then the inequality trivially holds. So let
us suppose that the finiteness condition in
\eqref{eq:finiteness-condition-SLD-fish-ch}\ holds for both channels. Pick an
arbitrary input state $\omega_{RA}$. Now apply
Proposition~\ref{prop:chain-rule-root-SLD}\ to find that
\begin{align}
&  \sqrt{I_{F}}(\theta;\{\mathcal{M}_{B\rightarrow C}^{\theta}(\mathcal{N}
_{A\rightarrow B}^{\theta}(\omega_{RA}))\}_{\theta})\nonumber\\
&  \leq\sqrt{I_{F}}(\theta;\{\mathcal{N}_{A\rightarrow B}^{\theta}(\omega
_{RA})\}_{\theta})+\sqrt{I_{F}}(\theta;\{\mathcal{M}_{B\rightarrow C}^{\theta
}\}_{\theta})\\
&  \leq\sup_{\omega_{RA}}\sqrt{I_{F}}(\theta;\{\mathcal{N}_{A\rightarrow
B}^{\theta}(\omega_{RA})\}_{\theta})+\sqrt{I_{F}}(\theta;\{\mathcal{M}
_{B\rightarrow C}^{\theta}\}_{\theta})\\
&  =\sqrt{I_{F}}(\theta;\{\mathcal{N}_{A\rightarrow B}^{\theta}\}_{\theta
})+\sqrt{I_{F}}(\theta;\{\mathcal{M}_{B\rightarrow C}^{\theta}\}_{\theta}).
\end{align}
Since the inequality holds for all input states, we conclude that
\begin{equation}
\sup_{\omega_{RA}}\sqrt{I_{F}}(\theta;\{\mathcal{M}_{B\rightarrow C}^{\theta
}(\mathcal{N}_{A\rightarrow B}^{\theta}(\omega_{RA}))\}_{\theta})\leq
\sqrt{I_{F}}(\theta;\{\mathcal{N}_{A\rightarrow B}^{\theta}\}_{\theta}
)+\sqrt{I_{F}}(\theta;\{\mathcal{M}_{B\rightarrow C}^{\theta}\}_{\theta}),
\end{equation}
which implies \eqref{eq:serial-composition-root-SLD-ch}.
\end{proof}

\subsection{Amortization collapse for RLD Fisher information of general channels}

Finally, we show the amortization collapse for the RLD Fisher information for quantum channels. First, we recall the following additivity relation that was established in \cite{Hayashi2011}:

\begin{proposition}
\label{prop:add-RLD-Fish-ch}Let $\{\mathcal{N}_{A\rightarrow B}^{\theta
}\}_{\theta}$ and $\{\mathcal{M}_{C\rightarrow D}^{\theta}\}_{\theta}$ be
differentiable families of quantum channels. Then the RLD\ Fisher information
of quantum channels is additive in the following sense:
\begin{equation}
\widehat{I}_{F}(\theta;\{\mathcal{N}_{A\rightarrow B}^{\theta}\otimes
\mathcal{M}_{C\rightarrow D}^{\theta}\}_{\theta})=\widehat{I}_{F}
(\theta;\{\mathcal{N}_{A\rightarrow B}^{\theta}\}_{\theta})+\widehat{I}
_{F}(\theta;\{\mathcal{M}_{C\rightarrow D}^{\theta}\}_{\theta}).
\end{equation}

\end{proposition}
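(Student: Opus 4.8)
The plan is to prove the equality directly from the explicit formula~\eqref{eq:RLD-Fish-ch}, after disposing of the degenerate case. The starting observation is that the Choi operator of $\mathcal{N}_{A_1\to B_1}^\theta\otimes\mathcal{M}_{A_2\to B_2}^\theta$ is $\Gamma_{R_1B_1}^{\mathcal{N}^\theta}\otimes\Gamma_{R_2B_2}^{\mathcal{M}^\theta}$, so its support (hence its kernel projection) factorizes. Writing $\Pi^\perp_{\Gamma_1\otimes\Gamma_2}=\Pi^\perp_{\Gamma_1}\otimes I+\Pi_{\Gamma_1}\otimes\Pi^\perp_{\Gamma_2}$ and expanding $\partial_\theta(\Gamma_1\otimes\Gamma_2)=(\partial_\theta\Gamma_1)\otimes\Gamma_2+\Gamma_1\otimes(\partial_\theta\Gamma_2)$ by the product rule, a short computation shows that the tensor-product finiteness condition in~\eqref{eq:RLD-Fish-ch} holds if and only if it holds for each of $\mathcal{N}^\theta$ and $\mathcal{M}^\theta$ separately. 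When it fails, all three quantities are $+\infty$ and there is nothing to prove, so from here on I assume both finiteness conditions hold.

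The core step is the following computation. Write $\Gamma_i$ for the Choi operator of the $i$-th channel, $\dot\Gamma_i:=\partial_\theta\Gamma_i$, and recall from the text that the finiteness condition is equivalent to $\Pi_{\Gamma_i}\dot\Gamma_i=\dot\Gamma_i\Pi_{\Gamma_i}=\dot\Gamma_i$, and that $\Gamma_i^{-1}\Gamma_i=\Pi_{\Gamma_i}$ when the inverse is taken on the support. Expanding
\[
\big(\partial_\theta(\Gamma_1\otimes\Gamma_2)\big)(\Gamma_1\otimes\Gamma_2)^{-1}\big(\partial_\theta(\Gamma_1\otimes\Gamma_2)\big)
=(\dot\Gamma_1\otimes\Gamma_2+\Gamma_1\otimes\dot\Gamma_2)(\Gamma_1^{-1}\otimes\Gamma_2^{-1})(\dot\Gamma_1\otimes\Gamma_2+\Gamma_1\otimes\dot\Gamma_2)
\]
and using these identities collapses the four terms to $(\dot\Gamma_1\Gamma_1^{-1}\dot\Gamma_1)\otimes\Gamma_2+2\,\dot\Gamma_1\otimes\dot\Gamma_2+\Gamma_1\otimes(\dot\Gamma_2\Gamma_2^{-1}\dot\Gamma_2)$. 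Now I apply $\operatorname{Tr}_{B_1B_2}$. Since each channel is trace preserving, $\operatorname{Tr}_{B_i}[\Gamma_i]=I_{R_i}$, and differentiating this identity in $\theta$ gives $\operatorname{Tr}_{B_i}[\dot\Gamma_i]=0$; consequently the cross term drops out and what remains is $Y_1\otimes I_{R_2}+I_{R_1}\otimes Y_2$, where $Y_i:=\operatorname{Tr}_{B_i}[(\partial_\theta\Gamma_i)\Gamma_i^{-1}(\partial_\theta\Gamma_i)]\geq0$ is precisely the operator whose $\infty$-norm appears in~\eqref{eq:RLD-Fish-ch} for the $i$-th channel.

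To finish, I would invoke the elementary spectral fact that for positive semidefinite $Y_1,Y_2$ the Kronecker sum $Y_1\otimes I+I\otimes Y_2$ has largest eigenvalue $\|Y_1\|_\infty+\|Y_2\|_\infty$; combined with~\eqref{eq:RLD-Fish-ch} this yields the claimed additivity. I expect the main obstacle to be the bookkeeping with the (possibly non-full-rank) Choi operators, where every manipulation of $\Gamma_i^{-1}$ must be justified through the support/projection identities implied by the finiteness condition; an alternative that avoids this is to first replace each $\Gamma_i$ by a full-rank smoothing and pass to the limit using the physical-consistency statement (Proposition~\ref{prop:physical-consistency-RLD-Fish-states} and its channel analogue). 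Conceptually, the point worth emphasizing is that the cancellation of the cross term --- which is exactly what upgrades subadditivity to additivity --- is a direct consequence of trace preservation, so it is special to channels. As a cross-check for the ``$\geq$'' direction, feeding a product probe $\psi_{R_1A_1}\otimes\phi_{R_2A_2}$ produces a product state family whose RLD Fisher information is additive by Proposition~\ref{prop:additivity-SLD-RLD-states}, and optimizing the two tensor factors independently recovers the sum.
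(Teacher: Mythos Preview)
The paper does not actually prove this proposition: it is stated as a fact ``recalled'' from \cite{Hayashi2011}, with no argument given. Your direct computation is correct and complete, and in fact it mirrors very closely the manipulations the paper \emph{does} carry out in the proof of the chain rule (Proposition~\ref{prop:chain-rule-RLD}): the same product-rule expansion of $\partial_\theta(\Gamma_1\otimes\Gamma_2)$, the same use of $\Pi_{\Gamma_i}\dot\Gamma_i=\dot\Gamma_i$ from the finiteness condition, and the same cancellation of the cross term via $\operatorname{Tr}_{B_i}[\partial_\theta\Gamma_i]=\partial_\theta I_{R_i}=0$. The final step, that $\|Y_1\otimes I+I\otimes Y_2\|_\infty=\|Y_1\|_\infty+\|Y_2\|_\infty$ for positive semidefinite $Y_i$, is the standard spectral fact about Kronecker sums and is exactly what is needed to turn the operator identity into the scalar equality. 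Your handling of the finiteness-condition equivalence is also correct; the observation that the two ``failure'' terms live in orthogonal subspaces of the tensor product (image of $\Pi^\perp_{\Gamma_1}\otimes\Pi_{\Gamma_2}$ versus $\Pi_{\Gamma_1}\otimes\Pi^\perp_{\Gamma_2}$) is the clean way to see they cannot cancel. So there is no gap: you have supplied a proof where the paper only gave a citation, and done so using precisely the toolkit the paper develops elsewhere.
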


Just as we did for the root SLD Fisher information above, we will establish a chain rule for the RLD Fisher information that we will next use to establish the amortization collapse.

Before doing so, we need the following lemmas:
\begin{lemma}
\label{lem:min-XYinvX}Let $X$ be a linear operator and let $Y$ be a positive
definite operator. Then
\begin{equation}
X^{\dag}Y^{-1}X=\min\left\{  M:
\begin{bmatrix}
M & X^{\dag}\\
X & Y
\end{bmatrix}
\geq0\right\}  , \label{eq:G_-1_opt}
\end{equation}
where the ordering for the minimization is understood in the operator interval
sense (L\"{o}wner order).
\end{lemma}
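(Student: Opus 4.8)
The plan is to use the Schur complement criterion for positive semi-definiteness of a $2\times 2$ block operator matrix. Recall that for $Y > 0$, the block matrix
\begin{equation*}
\begin{bmatrix} M & X^{\dag} \\ X & Y \end{bmatrix} \geq 0
\end{equation*}
if and only if the Schur complement of the $Y$ block is positive semi-definite, i.e.\ $M - X^{\dag} Y^{-1} X \geq 0$. This is the standard block-matrix fact, provable by conjugating with the invertible operator $\begin{bmatrix} I & -X^{\dag}Y^{-1} \\ 0 & I \end{bmatrix}$ and its adjoint, which transforms the block matrix into $\operatorname{diag}(M - X^{\dag}Y^{-1}X,\ Y)$ while preserving positive semi-definiteness (congruence by an invertible operator).

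From this equivalence the result follows immediately: the feasible set of the minimization is exactly $\{ M : M \geq X^{\dag} Y^{-1} X \}$, which in the L\"{o}wner order has $X^{\dag} Y^{-1} X$ as its least element. Since $X^{\dag} Y^{-1} X$ itself lies in the feasible set (taking $M = X^{\dag} Y^{-1} X$ gives Schur complement $0 \geq 0$), the minimum is attained and equals $X^{\dag} Y^{-1} X$.

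I would carry this out in two short steps: first state and justify the Schur complement equivalence (the congruence computation above, being careful that the conjugating operator $\begin{bmatrix} I & -X^{\dag}Y^{-1} \\ 0 & I \end{bmatrix}$ is bounded with bounded inverse, which it is since $Y^{-1}$ is bounded on the relevant space); second, read off that the feasible set is $\{M : M \geq X^{\dag}Y^{-1}X\}$ and conclude. The only mild subtlety — the main thing to be careful about rather than a genuine obstacle — is the finite-dimensional versus infinite-dimensional setting: the argument is cleanest when everything acts on a finite-dimensional Hilbert space (as is implicitly the case throughout this paper's Choi-operator manipulations), where $Y^{-1}$ is automatically bounded and the congruence argument is routine linear algebra. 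I would simply note that we work in finite dimensions, so no functional-analytic care is needed.
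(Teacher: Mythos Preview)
Your proposal is correct and takes essentially the same approach as the paper: both invoke the Schur complement criterion, which says that for $Y>0$ the block matrix is positive semi-definite if and only if $M \geq X^{\dag}Y^{-1}X$, and then read off that the feasible set has $X^{\dag}Y^{-1}X$ as its least element. The paper simply cites the Schur complement lemma without proof, whereas you additionally sketch the congruence argument that establishes it; this extra detail is fine but not required.
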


\begin{proof}
This is a direct consequence of the Schur complement lemma, which states that
\begin{equation}
\begin{bmatrix}
M & X^{\dag}\\
X & Y
\end{bmatrix}
\geq0\qquad\Longleftrightarrow\qquad Y\geq0,\quad M\geq X^{\dag}Y^{-1}X.
\end{equation}
This concludes the proof.
\end{proof}

\begin{lemma}[Transformer inequality]
\label{lem:transformer-ineq-basic}Let $X$ be a linear square operator, let $Y$
be a positive definite operator, and let $L$ be a linear operator. Then
\begin{equation}
LX^{\dag}L^{\dag}(LYL^{\dag})^{-1}LXL^{\dag}\leq LX^{\dag}Y^{-1}XL^{\dag},
\label{eq:transformer-ineq}
\end{equation}
where the inverse on the left hand side is taken on the image of $L$. If $L$
is invertible, then the following equality holds
\begin{equation}
LX^{\dag}L^{\dag}(LYL^{\dag})^{-1}LXL^{\dag}=LX^{\dag}Y^{-1}XL^{\dag}.
\label{eq:transformer-eq}
\end{equation}

\end{lemma}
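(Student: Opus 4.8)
The plan is to derive both statements from the Schur complement identity already recorded in Lemma~\ref{lem:min-XYinvX}. For the inequality~\eqref{eq:transformer-ineq}, start from the fact that $M := X^{\dag}Y^{-1}X$ satisfies
\[
\begin{bmatrix} M & X^{\dag}\\ X & Y \end{bmatrix}\geq 0 ,
\]
which is immediate from Lemma~\ref{lem:min-XYinvX} (it is the minimum over all such $M$, hence in particular a member of that set). Conjugating this $2\times 2$ block operator by $L\oplus L$ preserves positive semidefiniteness, so
\[
\begin{bmatrix} LML^{\dag} & LX^{\dag}L^{\dag}\\ LXL^{\dag} & LYL^{\dag} \end{bmatrix}
=\begin{bmatrix} L & 0\\ 0 & L \end{bmatrix}\begin{bmatrix} M & X^{\dag}\\ X & Y \end{bmatrix}\begin{bmatrix} L^{\dag} & 0\\ 0 & L^{\dag} \end{bmatrix}\geq 0 .
\]
Reading off the Schur complement of the lower-right block then gives $LML^{\dag}\geq (LX^{\dag}L^{\dag})(LYL^{\dag})^{-1}(LXL^{\dag})$, which upon recalling $M=X^{\dag}Y^{-1}X$ is exactly~\eqref{eq:transformer-ineq}.

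The one subtlety, and the step I expect to require the most care, is that $LYL^{\dag}$ need not be invertible, so Lemma~\ref{lem:min-XYinvX} cannot be applied verbatim and one must first fix the meaning of $(LYL^{\dag})^{-1}$. I would handle this by restricting to the subspace $\operatorname{im}(L)$: since $Y>0$ we have $\operatorname{im}(LYL^{\dag})=\operatorname{im}(L)$, while $\operatorname{im}(LML^{\dag})$, $\operatorname{im}(LX^{\dag}L^{\dag})$ and $\operatorname{im}(LXL^{\dag})$ are all contained in $\operatorname{im}(L)$, and every one of these operators annihilates $\operatorname{im}(L)^{\perp}=\ker(L^{\dag})$. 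Hence the block operator displayed above is the direct sum of its compression to $\operatorname{im}(L)\oplus\operatorname{im}(L)$ with the zero operator, and on $\operatorname{im}(L)$ the compression of $LYL^{\dag}$ is genuinely positive definite. Applying Lemma~\ref{lem:min-XYinvX} on this subspace, with the roles played there by the compression of $LYL^{\dag}$ and by $LXL^{\dag}$ (noting $(LXL^{\dag})^{\dag}=LX^{\dag}L^{\dag}$), yields the inequality with the inverse understood on $\operatorname{im}(L)$, exactly as in the statement; extending back by zero then recovers~\eqref{eq:transformer-ineq} on the full space.

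Finally, the equality~\eqref{eq:transformer-eq} requires no inequality at all. When $L$ is invertible, $LYL^{\dag}$ is invertible with $(LYL^{\dag})^{-1}=(L^{\dag})^{-1}Y^{-1}L^{-1}$, so the left-hand side of~\eqref{eq:transformer-ineq} becomes
\[
LX^{\dag}L^{\dag}\,(L^{\dag})^{-1}Y^{-1}L^{-1}\,LXL^{\dag}=LX^{\dag}Y^{-1}XL^{\dag}
\]
after cancelling $L^{\dag}(L^{\dag})^{-1}$ and $L^{-1}L$; this is a one-line computation. One can also view~\eqref{eq:transformer-ineq} as the non-invertible residue of this identity, which serves as a consistency check but is not needed for the proof.
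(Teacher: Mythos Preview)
Your proposal is correct and follows essentially the same route as the paper: both arguments conjugate the block matrix $\begin{bmatrix} M & X^{\dag}\\ X & Y\end{bmatrix}\geq 0$ by $L\oplus L$ and then invoke the Schur-complement characterization of Lemma~\ref{lem:min-XYinvX}, with the equality case handled by the same direct cancellation. The only difference is cosmetic---the paper works with an arbitrary feasible $M$ and specializes to $X^{\dag}Y^{-1}X$ at the end, whereas you start with that choice---and your treatment of the non-invertibility of $LYL^{\dag}$ via restriction to $\operatorname{im}(L)$ is in fact more explicit than the paper's, which applies Lemma~\ref{lem:min-XYinvX} without spelling out that step.
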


\begin{proof}
Fix an operator $M\geq0$ satisfying
\begin{equation}
\begin{bmatrix}
M & X^{\dag}\\
X & Y
\end{bmatrix}
\geq0. \label{eq:condition-on-M-Schur-comp}
\end{equation}
Since the maps $(\cdot)\rightarrow L(\cdot)L^{\dag}$ and $(\cdot
)\rightarrow\left(  I_{2}\otimes L\right)  (\cdot)\left(  I_{2}\otimes
L\right)  ^{\dag}$ are positive, the condition $M\geq0$ and that in
\eqref{eq:condition-on-M-Schur-comp}\ imply the following conditions:
\begin{align}
LML^{\dag}  &  \geq0,\\
\begin{bmatrix}
LML^{\dag} & LX^{\dag}L^{\dag}\\
LXL^{\dag} & LYL^{\dag}
\end{bmatrix}
&  =\left(  I_{2}\otimes L\right)
\begin{bmatrix}
M & X^{\dag}\\
X & Y
\end{bmatrix}
\left(  I_{2}\otimes L\right)  ^{\dag}\geq0.
\end{align}
Applying Lemma \ref{lem:min-XYinvX}, we conclude that
\begin{align}
LML^{\dag}  &  \geq\min\left\{  W\geq0:
\begin{bmatrix}
W & LX^{\dag}L^{\dag}\\
LXL^{\dag} & LYL^{\dag}
\end{bmatrix}
\geq0\right\} \\
&  =LX^{\dag}L^{\dag}\left(  LYL^{\dag}\right)  ^{-1}LXL^{\dag}.
\end{align}
Since $M$ is an arbitrary operator that satisfies $M\geq0$ and
\eqref{eq:condition-on-M-Schur-comp}, we can pick it to be the smallest and
set it to $X^{\dag}Y^{-1}X$. Thus we conclude \eqref{eq:transformer-ineq}.

If $L$ is invertible, then consider that
\begin{align}
LX^{\dag}L^{\dag}(LYL^{\dag})^{-1}LXL^{\dag}  &  =LX^{\dag}L^{\dag}L^{-\dag
}Y^{-1}L^{-1}LXL^{\dag}\\
&  =LX^{\dag}Y^{-1}XL^{\dag},
\end{align}
so that \eqref{eq:transformer-eq} follows.
\end{proof}

\begin{proposition}
[Chain rule for RLD Fisher information of quantum channels]\label{prop:chain-rule-RLD}Let $\{\mathcal{N}_{A\rightarrow
B}^{\theta}\}_{\theta}$ be a differentiable family of quantum channels, and
let $\{\rho_{RA}^{\theta}\}_{\theta}$ be a differentiable family of quantum
states on systems $RA$, with the system $R$ of arbitrary size. Then the
following chain rule holds
\begin{equation}
\widehat{I}_{F}(\theta;\{\mathcal{N}_{A\rightarrow B}^{\theta}(\rho
_{RA}^{\theta})\}_{\theta})\leq\widehat{I}_{F}(\theta;\{\mathcal{N}
_{A\rightarrow B}^{\theta}\}_{\theta})+\widehat{I}_{F}(\theta;\{\rho
_{RA}^{\theta}\}_{\theta}). \label{eq:single-RLD-chain-rule}
\end{equation} 

\end{proposition}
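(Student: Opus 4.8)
The plan is to follow the same template as the root SLD chain rule (Proposition~\ref{prop:chain-rule-root-SLD}): use the post-selected teleportation identity to realize the output state as a fixed contraction of the tensor product of the input state with the channel's Choi operator, and then push the RLD quantity through that contraction using the transformer inequality of Lemma~\ref{lem:transformer-ineq-basic}.

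Concretely, set $\sigma_{RB}^{\theta}:=\mathcal{N}_{A\rightarrow B}^{\theta}(\rho_{RA}^{\theta})$. We may assume the finiteness conditions for $\{\rho_{RA}^{\theta}\}_{\theta}$ and $\{\mathcal{N}_{A\rightarrow B}^{\theta}\}_{\theta}$ hold, since otherwise the right-hand side of \eqref{eq:single-RLD-chain-rule} is $+\infty$; the case in which some of $\rho_{RA}^{\theta}$ or $\Gamma_{RB}^{\mathcal{N}^{\theta}}$ fails to be full rank is handled by the noisy-perturbation limit of Proposition~\ref{prop:physical-consistency-RLD-Fish-states}. Write $\tau_{RASB}^{\theta}:=\rho_{RA}^{\theta}\otimes\Gamma_{SB}^{\mathcal{N}^{\theta}}$, so that by \eqref{eq:PS-TP-identity} we have $\sigma_{RB}^{\theta}=\langle\Gamma|_{AS}\,\tau_{RASB}^{\theta}\,|\Gamma\rangle_{AS}$ and hence $\partial_{\theta}\sigma_{RB}^{\theta}=\langle\Gamma|_{AS}\,(\partial_{\theta}\tau_{RASB}^{\theta})\,|\Gamma\rangle_{AS}$. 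Applying Lemma~\ref{lem:transformer-ineq-basic} with $X=\partial_{\theta}\tau^{\theta}$ (which is Hermitian), $Y=\tau^{\theta}$, and $L=I_{RB}\otimes\langle\Gamma|_{AS}$ gives the operator inequality $(\partial_{\theta}\sigma^{\theta})(\sigma^{\theta})^{-1}(\partial_{\theta}\sigma^{\theta})\leq\langle\Gamma|_{AS}\,[(\partial_{\theta}\tau^{\theta})(\tau^{\theta})^{-1}(\partial_{\theta}\tau^{\theta})]\,|\Gamma\rangle_{AS}$, and taking the trace yields $\widehat{I}_{F}(\theta;\{\sigma_{RB}^{\theta}\}_{\theta})\leq\operatorname{Tr}[\langle\Gamma|_{AS}((\partial_{\theta}\tau^{\theta})(\tau^{\theta})^{-1}(\partial_{\theta}\tau^{\theta}))|\Gamma\rangle_{AS}]$.

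The remaining step is to evaluate this trace using the product structure. With $\rho':=\partial_{\theta}\rho_{RA}^{\theta}$, $\Gamma:=\Gamma_{SB}^{\mathcal{N}^{\theta}}$, and $\Gamma':=\partial_{\theta}\Gamma_{SB}^{\mathcal{N}^{\theta}}$, a short computation gives $(\partial_{\theta}\tau^{\theta})(\tau^{\theta})^{-1}(\partial_{\theta}\tau^{\theta})=(\rho'(\rho^{\theta})^{-1}\rho')\otimes\Gamma+2\,\rho'\otimes\Gamma'+\rho^{\theta}\otimes(\Gamma'\Gamma^{-1}\Gamma')$. Contracting each term with $\langle\Gamma|_{AS}(\cdot)|\Gamma\rangle_{AS}$ and tracing: (i) the first term gives $\operatorname{Tr}[\mathcal{N}_{A\rightarrow B}^{\theta}(\rho'(\rho^{\theta})^{-1}\rho')]=\operatorname{Tr}[\rho'(\rho^{\theta})^{-1}\rho']=\widehat{I}_{F}(\theta;\{\rho_{RA}^{\theta}\}_{\theta})$ by trace preservation of $\mathcal{N}^{\theta}$; (ii) the cross term gives $2\operatorname{Tr}[(\partial_{\theta}\mathcal{N}^{\theta})_{A\rightarrow B}(\rho')]=0$, because $\partial_{\theta}\mathcal{N}^{\theta}$ is trace-annihilating (as $\operatorname{Tr}[\mathcal{N}^{\theta}(Z)]=\operatorname{Tr}[Z]$ is $\theta$-independent for fixed $Z$); (iii) with $T_{SB}:=(\partial_{\theta}\Gamma)\Gamma^{-1}(\partial_{\theta}\Gamma)\geq0$, the last term equals $\operatorname{Tr}[(\operatorname{Tr}_{B}[T_{SB}])^{T}\rho_{A}^{\theta}]$, and since $\operatorname{Tr}_{B}[T_{SB}]\geq0$ and $(\rho_{A}^{\theta})^{T}\geq0$ has unit trace, this is at most $\|\operatorname{Tr}_{B}[T_{SB}]\|_{\infty}=\widehat{I}_{F}(\theta;\{\mathcal{N}_{A\rightarrow B}^{\theta}\}_{\theta})$ by the Choi formula \eqref{eq:RLD-Fish-ch}. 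Adding the three bounds gives \eqref{eq:single-RLD-chain-rule}.

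I expect the main obstacle to be the bookkeeping in step (iii): one must correctly track the transpose on the $A$-system that the $\langle\Gamma|_{AS}$ contraction introduces, and then recognize that the resulting trace of the ``channel part'' is controlled by the $\infty$-norm of the partial-trace-of-Choi expression — i.e., by $\widehat{I}_{F}$ of the channel itself — rather than merely by its (in general strictly larger) trace, which is what a naive data-processing argument through the non-trace-preserving teleportation map would give. The supporting facts I would isolate first are that $\operatorname{Tr}[Y_{A}\,(\rho_{A}^{\theta})^{T}]\leq\|Y_{A}\|_{\infty}$ for any $Y_{A}\geq0$ and any density operator $\rho_{A}^{\theta}$, and that $\operatorname{Tr}_{B}[\partial_{\theta}\Gamma_{SB}^{\mathcal{N}^{\theta}}]=\partial_{\theta}I_{S}=0$.
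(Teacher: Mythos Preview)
Your proposal is correct and follows essentially the same route as the paper: post-selected teleportation to write $\sigma^\theta=\langle\Gamma|_{AS}\tau^\theta|\Gamma\rangle_{AS}$, the transformer inequality (Lemma~\ref{lem:transformer-ineq-basic}) with $L=I_{RB}\otimes\langle\Gamma|_{AS}$, the same three-term expansion of $(\partial_\theta\tau)\tau^{-1}(\partial_\theta\tau)$, and then the identical term-by-term evaluation (the paper kills the cross term via $\operatorname{Tr}_B[\partial_\theta\Gamma^{\mathcal{N}^\theta}]=\partial_\theta I_S=0$, which is equivalent to your trace-annihilation argument, and bounds the channel term by $\|\operatorname{Tr}_B[T_{SB}]\|_\infty$ exactly as you do). The only cosmetic difference is that the paper handles the non-full-rank case directly via the support conditions $\Pi_\rho^\perp\rho'=0$, $\Pi_{\Gamma}^\perp\Gamma'=0$ (so your product formula holds on the nose once the finiteness conditions are assumed) rather than by a perturbation limit.
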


\begin{proof}
If the finiteness conditions in
\eqref{eq:RLD-FI}\ and\ \eqref{eq:RLD-Fish-ch}
do not hold, then the inequality is trivially satisfied. So let us suppose
that the finiteness conditions
\eqref{eq:RLD-FI}\ and\ \eqref{eq:RLD-Fish-ch}
hold. Recall the following post-selected teleportation identity from
\eqref{eq:PS-TP-identity}:
\begin{equation}
\mathcal{N}_{A\rightarrow B}^{\theta}(\rho_{RA}^{\theta})=\langle\Gamma
|_{AS}\rho_{RA}^{\theta}\otimes\Gamma_{SB}^{\mathcal{N}^{\theta}}
|\Gamma\rangle_{AS}. \label{eq:TP-identity}
\end{equation}
Then we can write
\begin{align}
&  \widehat{I}_{F}(\theta;\{\mathcal{N}_{A\rightarrow B}^{\theta}(\rho
_{RA}^{\theta})\}_{\theta})\nonumber\\
&  =\operatorname{Tr}[(\partial_{\theta}\mathcal{N}_{A\rightarrow B}^{\theta
}(\rho_{RA}^{\theta}))^{2}(\mathcal{N}_{A\rightarrow B}^{\theta}(\rho
_{RA}^{\theta}))^{-1}]\\
&  =\operatorname{Tr}[(\partial_{\theta}(\langle\Gamma|_{AS}\rho_{RA}^{\theta
}\otimes\Gamma_{SB}^{\mathcal{N}^{\theta}}|\Gamma\rangle_{AS}))^{2}
(\langle\Gamma|_{AS}\rho_{RA}^{\theta}\otimes\Gamma_{SB}^{\mathcal{N}^{\theta
}}|\Gamma\rangle_{AS})^{-1}]\\
&  =\operatorname{Tr}[((\langle\Gamma|_{AS}\partial_{\theta}(\rho_{RA}
^{\theta}\otimes\Gamma_{SB}^{\mathcal{N}^{\theta}})|\Gamma\rangle_{AS}
))^{2}(\langle\Gamma|_{AS}\rho_{RA}^{\theta}\otimes\Gamma_{SB}^{\mathcal{N}
^{\theta}}|\Gamma\rangle_{AS})^{-1}]\\
&  \leq\operatorname{Tr}[\langle\Gamma|_{AS}(\partial_{\theta}(\rho
_{RA}^{\theta}\otimes\Gamma_{SB}^{\mathcal{N}^{\theta}}))(\rho_{RA}^{\theta
}\otimes\Gamma_{SB}^{\mathcal{N}^{\theta}})^{-1}(\partial_{\theta}(\rho
_{RA}^{\theta}\otimes\Gamma_{SB}^{\mathcal{N}^{\theta}}))|\Gamma\rangle
_{AS}]\\
&  =\operatorname{Tr}_{RB}[\langle\Gamma|_{AS}(\partial_{\theta}(\rho
_{RA}^{\theta}\otimes\Gamma_{SB}^{\mathcal{N}^{\theta}}))(\rho_{RA}^{\theta
}\otimes\Gamma_{SB}^{\mathcal{N}^{\theta}})^{-1}(\partial_{\theta}(\rho
_{RA}^{\theta}\otimes\Gamma_{SB}^{\mathcal{N}^{\theta}}))|\Gamma\rangle
_{AS}]\\
&  =\langle\Gamma|_{AS}\operatorname{Tr}_{RB}[(\partial_{\theta}(\rho
_{RA}^{\theta}\otimes\Gamma_{SB}^{\mathcal{N}^{\theta}}))(\rho_{RA}^{\theta
}\otimes\Gamma_{SB}^{\mathcal{N}^{\theta}})^{-1}(\partial_{\theta}(\rho
_{RA}^{\theta}\otimes\Gamma_{SB}^{\mathcal{N}^{\theta}}))]|\Gamma\rangle_{AS}.
\label{eq:proof-chain-rule-final-line}
\end{align}
The second equality follows from applying \eqref{eq:TP-identity}, and the
inequality is a consequence of the transformer inequality in
Lemma~\ref{lem:transformer-ineq-basic}, with
\begin{align}
L  &  =\langle\Gamma|_{AS}\otimes I_{RB},\\
X  &  =\partial_{\theta}(\rho_{RA}^{\theta}\otimes\Gamma_{SB}^{\mathcal{N}
^{\theta}}),\\
Y  &  =\rho_{RA}^{\theta}\otimes\Gamma_{SB}^{\mathcal{N}^{\theta}}.
\end{align}
Now consider that
\[
\partial_{\theta}(\rho_{RA}^{\theta}\otimes\Gamma_{SB}^{\mathcal{N}^{\theta}
})=(\partial_{\theta}\rho_{RA}^{\theta})\otimes\Gamma_{SB}^{\mathcal{N}
^{\theta}}+\rho_{RA}^{\theta}\otimes(\partial_{\theta}\Gamma_{SB}
^{\mathcal{N}^{\theta}}).
\]
Right multiplying this by $(\rho_{RA}^{\theta}\otimes\Gamma_{SB}
^{\mathcal{N}^{\theta}})^{-1}$ gives
\begin{align}
&  (\partial_{\theta}(\rho_{RA}^{\theta}\otimes\Gamma_{SB}^{\mathcal{N}
^{\theta}}))(\rho_{RA}^{\theta}\otimes\Gamma_{SB}^{\mathcal{N}^{\theta}}
)^{-1}\nonumber\\
&  =(\partial_{\theta}\rho_{RA}^{\theta})(\rho_{RA}^{\theta})^{-1}
\otimes\Gamma_{SB}^{\mathcal{N}^{\theta}}(\Gamma_{SB}^{\mathcal{N}^{\theta}
})^{-1}+\rho_{RA}^{\theta}(\rho_{RA}^{\theta})^{-1}\otimes(\partial_{\theta
}\Gamma_{SB}^{\mathcal{N}^{\theta}})(\Gamma_{SB}^{\mathcal{N}^{\theta}}
)^{-1}\\
&  =(\partial_{\theta}\rho_{RA}^{\theta})(\rho_{RA}^{\theta})^{-1}\otimes
\Pi_{\Gamma^{\mathcal{N}^{\theta}}}+\Pi_{\rho_{RA}^{\theta}}\otimes
(\partial_{\theta}\Gamma_{SB}^{\mathcal{N}^{\theta}})(\Gamma_{SB}
^{\mathcal{N}^{\theta}})^{-1}.
\end{align}
Right multiplying the last line by $(\partial_{\theta}(\rho_{RA}^{\theta
}\otimes\Gamma_{SB}^{\mathcal{N}^{\theta}}))$ gives
\begin{align}
&  \left[  (\partial_{\theta}\rho_{RA}^{\theta})(\rho_{RA}^{\theta}
)^{-1}\otimes\Pi_{\Gamma^{\mathcal{N}^{\theta}}}+\Pi_{\rho_{RA}^{\theta}
}\otimes(\partial_{\theta}\Gamma_{SB}^{\mathcal{N}^{\theta}})(\Gamma
_{SB}^{\mathcal{N}^{\theta}})^{-1}\right]  (\partial_{\theta}(\rho
_{RA}^{\theta}\otimes\Gamma_{SB}^{\mathcal{N}^{\theta}}))\nonumber\\
&  =\left[  (\partial_{\theta}\rho_{RA}^{\theta})(\rho_{RA}^{\theta}
)^{-1}\otimes\Pi_{\Gamma^{\mathcal{N}^{\theta}}}+\Pi_{\rho_{RA}^{\theta}
}\otimes(\partial_{\theta}\Gamma_{SB}^{\mathcal{N}^{\theta}})(\Gamma
_{SB}^{\mathcal{N}^{\theta}})^{-1}\right] \nonumber\\
&  \qquad\times\left[  (\partial_{\theta}\rho_{RA}^{\theta})\otimes\Gamma
_{SB}^{\mathcal{N}^{\theta}}+\rho_{RA}^{\theta}\otimes(\partial_{\theta}
\Gamma_{SB}^{\mathcal{N}^{\theta}})\right] \\
&  =(\partial_{\theta}\rho_{RA}^{\theta})(\rho_{RA}^{\theta})^{-1}
(\partial_{\theta}\rho_{RA}^{\theta})\otimes\Gamma_{SB}^{\mathcal{N}^{\theta}
}+(\partial_{\theta}\rho_{RA}^{\theta})(\rho_{RA}^{\theta})^{-1}\rho
_{RA}^{\theta}\otimes\Pi_{\Gamma^{\mathcal{N}^{\theta}}}(\partial_{\theta
}\Gamma_{SB}^{\mathcal{N}^{\theta}})\nonumber\\
&  \qquad+\Pi_{\rho_{RA}^{\theta}}(\partial_{\theta}\rho_{RA}^{\theta}
)\otimes(\partial_{\theta}\Gamma_{SB}^{\mathcal{N}^{\theta}})(\Gamma
_{SB}^{\mathcal{N}^{\theta}})^{-1}\Gamma_{SB}^{\mathcal{N}^{\theta}}+\rho
_{RA}^{\theta}\otimes(\partial_{\theta}\Gamma_{SB}^{\mathcal{N}^{\theta}
})(\Gamma_{SB}^{\mathcal{N}^{\theta}})^{-1}(\partial_{\theta}\Gamma
_{SB}^{\mathcal{N}^{\theta}})\nonumber\\
&  =(\partial_{\theta}\rho_{RA}^{\theta})(\rho_{RA}^{\theta})^{-1}
(\partial_{\theta}\rho_{RA}^{\theta})\otimes\Gamma_{SB}^{\mathcal{N}^{\theta}
}+(\partial_{\theta}\rho_{RA}^{\theta})\Pi_{\rho_{RA}^{\theta}}\otimes
\Pi_{\Gamma^{\mathcal{N}^{\theta}}}(\partial_{\theta}\Gamma_{SB}
^{\mathcal{N}^{\theta}})\nonumber\\
&  \qquad+\Pi_{\rho_{RA}^{\theta}}(\partial_{\theta}\rho_{RA}^{\theta}
)\otimes(\partial_{\theta}\Gamma_{SB}^{\mathcal{N}^{\theta}})\Pi
_{\Gamma^{\mathcal{N}^{\theta}}}+\rho_{RA}^{\theta}\otimes(\partial_{\theta
}\Gamma_{SB}^{\mathcal{N}^{\theta}})(\Gamma_{SB}^{\mathcal{N}^{\theta}}
)^{-1}(\partial_{\theta}\Gamma_{SB}^{\mathcal{N}^{\theta}}).
\end{align}
Since the finiteness conditions $\Pi_{\rho_{RA}^{\theta}}^{\perp}
(\partial_{\theta}\rho_{RA}^{\theta})=(\partial_{\theta}\rho_{RA}^{\theta}
)\Pi_{\rho_{RA}^{\theta}}^{\perp}=0$ and $\Pi_{\Gamma^{\mathcal{N}^{\theta}}
}^{\perp}(\partial_{\theta}\Gamma_{SB}^{\mathcal{N}^{\theta}})=(\partial
_{\theta}\Gamma_{SB}^{\mathcal{N}^{\theta}})\Pi_{\Gamma^{\mathcal{N}^{\theta}
}}^{\perp}=0$ hold, we can \textquotedblleft add in\textquotedblright\ extra
zero terms to the two middle terms above to conclude that
\begin{multline}
(\partial_{\theta}(\rho_{RA}^{\theta}\otimes\Gamma_{SB}^{\mathcal{N}^{\theta}
}))(\rho_{RA}^{\theta}\otimes\Gamma_{SB}^{\mathcal{N}^{\theta}})^{-1}
(\partial_{\theta}(\rho_{RA}^{\theta}\otimes\Gamma_{SB}^{\mathcal{N}^{\theta}
}))=(\partial_{\theta}\rho_{RA}^{\theta})(\rho_{RA}^{\theta})^{-1}
(\partial_{\theta}\rho_{RA}^{\theta})\otimes\Gamma_{SB}^{\mathcal{N}^{\theta}
}\\
+2(\partial_{\theta}\rho_{RA}^{\theta})\otimes(\partial_{\theta}\Gamma
_{SB}^{\mathcal{N}^{\theta}})+\rho_{RA}^{\theta}\otimes(\partial_{\theta
}\Gamma_{SB}^{\mathcal{N}^{\theta}})(\Gamma_{SB}^{\mathcal{N}^{\theta}}
)^{-1}(\partial_{\theta}\Gamma_{SB}^{\mathcal{N}^{\theta}}).
\end{multline}
Now taking the partial trace over the systems $RB$, we find the following for each term:
\begin{align}
\operatorname{Tr}_{RB}[(\partial_{\theta}\rho_{RA}^{\theta})(\rho_{RA}
^{\theta})^{-1}(\partial_{\theta}\rho_{RA}^{\theta})\otimes\Gamma
_{SB}^{\mathcal{N}^{\theta}}]  &  =\operatorname{Tr}_{R}[(\partial_{\theta
}\rho_{RA}^{\theta})(\rho_{RA}^{\theta})^{-1}(\partial_{\theta}\rho
_{RA}^{\theta})]\otimes I_{S},\\
\operatorname{Tr}_{RB}[2(\partial_{\theta}\rho_{RA}^{\theta})\otimes
(\partial_{\theta}\Gamma_{SB}^{\mathcal{N}^{\theta}})]  &  =2\operatorname{Tr}
_{R}[(\partial_{\theta}\rho_{RA}^{\theta})]\otimes\operatorname{Tr}
_{B}[(\partial_{\theta}\Gamma_{SB}^{\mathcal{N}^{\theta}})]\\
&  =2\operatorname{Tr}_{R}[(\partial_{\theta}\rho_{RA}^{\theta})]\otimes
(\partial_{\theta}\operatorname{Tr}_{B}[\Gamma_{SB}^{\mathcal{N}^{\theta}}])\\
&  =2\operatorname{Tr}_{R}[(\partial_{\theta}\rho_{RA}^{\theta})]\otimes
(\partial_{\theta}(I_{S})])\\
&  =0, \text{and} \\
\operatorname{Tr}_{RB}[\rho_{RA}^{\theta}\otimes(\partial_{\theta}\Gamma
_{SB}^{\mathcal{N}^{\theta}})(\Gamma_{SB}^{\mathcal{N}^{\theta}}
)^{-1}(\partial_{\theta}\Gamma_{SB}^{\mathcal{N}^{\theta}})]  &  =\rho
_{A}^{\theta}\otimes\operatorname{Tr}_{B}[(\partial_{\theta}\Gamma
_{SB}^{\mathcal{N}^{\theta}})(\Gamma_{SB}^{\mathcal{N}^{\theta}}
)^{-1}(\partial_{\theta}\Gamma_{SB}^{\mathcal{N}^{\theta}})].
\end{align}
Now applying the sandwich $\langle\Gamma|_{AS}(\cdot)|\Gamma\rangle_{AS}$, the
first and last term become as follows:
\begin{align}
&  \langle\Gamma|_{AS}\operatorname{Tr}_{R}[(\partial_{\theta}\rho
_{RA}^{\theta})(\rho_{RA}^{\theta})^{-1}(\partial_{\theta}\rho_{RA}^{\theta
})]\otimes I_{S}|\Gamma\rangle_{AS}\nonumber\\
&  =\operatorname{Tr}[(\partial_{\theta}\rho_{RA}^{\theta})(\rho_{RA}^{\theta
})^{-1}(\partial_{\theta}\rho_{RA}^{\theta})]\\
&  =\operatorname{Tr}[(\partial_{\theta}\rho_{RA}^{\theta})^{2}(\rho
_{RA}^{\theta})^{-1}],
\end{align}
and
\begin{multline}
\langle\Gamma|_{AS}\rho_{A}^{\theta}\otimes\operatorname{Tr}_{B}
[(\partial_{\theta}\Gamma_{SB}^{\mathcal{N}^{\theta}})(\Gamma_{SB}
^{\mathcal{N}^{\theta}})^{-1}(\partial_{\theta}\Gamma_{SB}^{\mathcal{N}
^{\theta}})]|\Gamma\rangle_{AS}\\
=\operatorname{Tr}[(\rho_{S}^{\theta})^{T}\operatorname{Tr}_{B}[(\partial
_{\theta}\Gamma_{SB}^{\mathcal{N}^{\theta}})(\Gamma_{SB}^{\mathcal{N}^{\theta
}})^{-1}(\partial_{\theta}\Gamma_{SB}^{\mathcal{N}^{\theta}})]].
\end{multline}
Plugging back into \eqref{eq:proof-chain-rule-final-line}, we find that
\begin{align}
&  \langle\Gamma|_{AS}\operatorname{Tr}_{RB}[(\partial_{\theta}(\rho
_{RA}^{\theta}\otimes\Gamma_{SB}^{\mathcal{N}^{\theta}}))(\rho_{RA}^{\theta
}\otimes\Gamma_{SB}^{\mathcal{N}^{\theta}})^{-1}(\partial_{\theta}(\rho
_{RA}^{\theta}\otimes\Gamma_{SB}^{\mathcal{N}^{\theta}}))]|\Gamma\rangle
_{AS}\nonumber\\
&  =\operatorname{Tr}[(\partial_{\theta}\rho_{RA}^{\theta})^{2}(\rho
_{RA}^{\theta})^{-1}]+\operatorname{Tr}[(\rho_{S}^{\theta})^{T}
\operatorname{Tr}_{B}[(\partial_{\theta}\Gamma_{SB}^{\mathcal{N}^{\theta}
})(\Gamma_{SB}^{\mathcal{N}^{\theta}})^{-1}(\partial_{\theta}\Gamma
_{SB}^{\mathcal{N}^{\theta}})]]\\
&  \leq\operatorname{Tr}[(\partial_{\theta}\rho_{RA}^{\theta})^{2}(\rho
_{RA}^{\theta})^{-1}]+\left\Vert \operatorname{Tr}_{B}[(\partial_{\theta
}\Gamma_{SB}^{\mathcal{N}^{\theta}})(\Gamma_{SB}^{\mathcal{N}^{\theta}}
)^{-1}(\partial_{\theta}\Gamma_{SB}^{\mathcal{N}^{\theta}})]\right\Vert
_{\infty}\\
&  =\widehat{I}_{F}(\theta;\{\rho_{RA}^{\theta}\}_{\theta})+\widehat{I}
_{F}(\theta;\{\mathcal{N}_{A\rightarrow B}^{\theta}\}_{\theta}).
\end{align}
This concludes the proof.
\end{proof}

From the chain rule, we can make two conclusions: that amortization does not increase the RLD Fisher information of channels, and that the RLD Fisher information of channels is subadditive with respect to serial composition, or concatenation.
\begin{corollary}[Amortization collapse]
\label{cor:amort-collapse-RLD-fish}Let $\{\mathcal{N}_{A\rightarrow B}
^{\theta}\}_{\theta}$ be a differentiable family of quantum channels. Then
amortization does not increase the RLD\ Fisher information of quantum
channels, in the following sense:
\begin{equation}
\widehat{I}_{F}^{\mathcal{A}}(\theta;\{\mathcal{N}_{A\rightarrow B}^{\theta
}\}_{\theta})=\widehat{I}_{F}(\theta;\{\mathcal{N}_{A\rightarrow B}^{\theta
}\}_{\theta}).
\end{equation}

\end{corollary}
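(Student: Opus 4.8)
The plan is to mirror the structure of Corollary~\ref{cor:amort-collapse-root-SLD}: deduce the equality from the chain rule of Proposition~\ref{prop:chain-rule-RLD} together with the general ``amortization does not decrease Fisher information'' bound of Proposition~\ref{prop:amort->=-ch-Fish-gen}. First I would dispose of the degenerate case: if the finiteness condition in~\eqref{eq:RLD-Fish-ch} fails for $\{\mathcal{N}_{A\rightarrow B}^{\theta}\}_{\theta}$, then $\widehat{I}_{F}(\theta;\{\mathcal{N}_{A\rightarrow B}^{\theta}\}_{\theta})=+\infty$ and, by choosing a $\theta$-independent input in~\eqref{eq:amortized-fisher-info}, the amortized quantity is also $+\infty$, so the equality holds trivially. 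Hence assume the finiteness condition holds.

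Next I would establish the inequality $\widehat{I}_{F}^{\mathcal{A}}(\theta;\{\mathcal{N}_{A\rightarrow B}^{\theta}\}_{\theta})\geq\widehat{I}_{F}(\theta;\{\mathcal{N}_{A\rightarrow B}^{\theta}\}_{\theta})$. This is immediate from Proposition~\ref{prop:amort->=-ch-Fish-gen}, whose hypothesis (weak faithfulness of the underlying generalized Fisher information) is satisfied here because the RLD Fisher information is faithful by Proposition~\ref{prop:faithfulness-SLD-RLD-Fish}, hence in particular weakly faithful (its constant value on $\theta$-independent families is $0$).

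For the reverse inequality, I would take an arbitrary differentiable family $\{\rho_{RA}^{\theta}\}_{\theta}$ of states on systems $RA$ with $R$ of arbitrary size, and apply the chain rule~\eqref{eq:single-RLD-chain-rule} directly:
\begin{equation}
\widehat{I}_{F}(\theta;\{\mathcal{N}_{A\rightarrow B}^{\theta}(\rho_{RA}^{\theta})\}_{\theta})-\widehat{I}_{F}(\theta;\{\rho_{RA}^{\theta}\}_{\theta})\leq\widehat{I}_{F}(\theta;\{\mathcal{N}_{A\rightarrow B}^{\theta}\}_{\theta}).
\end{equation}
Since this bound holds for every such family, taking the supremum of the left-hand side over all families $\{\rho_{RA}^{\theta}\}_{\theta}$ — which is exactly the definition~\eqref{eq:amortized-fisher-info} of $\widehat{I}_{F}^{\mathcal{A}}$ — yields $\widehat{I}_{F}^{\mathcal{A}}(\theta;\{\mathcal{N}_{A\rightarrow B}^{\theta}\}_{\theta})\leq\widehat{I}_{F}(\theta;\{\mathcal{N}_{A\rightarrow B}^{\theta}\}_{\theta})$. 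Combining the two inequalities gives the claimed equality.

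There is essentially no obstacle here: all the real work has already been done in Proposition~\ref{prop:chain-rule-RLD}. The only points requiring minor care are checking that the weak-faithfulness hypothesis of Proposition~\ref{prop:amort->=-ch-Fish-gen} applies (via faithfulness of the RLD Fisher information) and handling the $+\infty$ case cleanly, both of which are routine.
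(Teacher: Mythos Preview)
Your proposal is correct and follows essentially the same approach as the paper's proof: handle the degenerate (infinite) case first, obtain the $\geq$ direction from Proposition~\ref{prop:amort->=-ch-Fish-gen} via faithfulness of the RLD Fisher information, and obtain the $\leq$ direction by rearranging the chain rule of Proposition~\ref{prop:chain-rule-RLD} and taking the supremum over input families. The paper's proof is virtually identical in structure and content.
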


\begin{proof}
If the finiteness condition in \eqref{eq:RLD-Fish-ch}
does not hold, then the equality trivially holds. So let us suppose that the
finiteness condition in \eqref{eq:RLD-Fish-ch} holds. The
inequality $\geq$ follows from Proposition~\ref{prop:amort->=-ch-Fish-gen} and
the fact that the RLD\ Fisher information is faithful (see
\eqref{eq:SLD-RLD-Fish-faithful}). The opposite inequality $\leq$ is a
consequence of the chain rule from Proposition~\ref{prop:chain-rule-RLD}. Let
$\{\rho_{RA}^{\theta}\}_{\theta}$ be a family of quantum states on systems
$RA$. Then it follows from the chain rule that
\begin{equation}
\widehat{I}_{F}(\theta;\{\mathcal{N}_{A\rightarrow B}^{\theta}(\rho
_{RA}^{\theta})\}_{\theta})-\widehat{I}_{F}(\theta;\{\rho_{RA}^{\theta
}\}_{\theta})\leq\widehat{I}_{F}(\theta;\{\mathcal{N}_{A\rightarrow B}
^{\theta}\}_{\theta}).
\end{equation}
Since the family $\{\rho_{RA}^{\theta}\}_{\theta}$ is arbitrary, we can take a
supremum over the left-hand side over all such families, and conclude that
\begin{equation}
\widehat{I}_{F}^{\mathcal{A}}(\theta;\{\mathcal{N}_{A\rightarrow B}^{\theta
}\}_{\theta})\leq\widehat{I}_{F}(\theta;\{\mathcal{N}_{A\rightarrow B}
^{\theta}\}_{\theta}).
\end{equation}
This concludes the proof.
\end{proof}

\begin{corollary}
\label{cor:subadd-serial-concat-RLD-Fish}Let $\{\mathcal{N}_{A\rightarrow
B}^{\theta}\}_{\theta}$ and $\{\mathcal{M}_{B\rightarrow C}^{\theta}
\}_{\theta}$ be differentiable families of quantum channels. Then the
RLD\ Fisher information of quantum channels is subadditive with respect to
serial composition, in the following sense:
\begin{equation}
\widehat{I}_{F}(\theta;\{\mathcal{M}_{B\rightarrow C}^{\theta}\circ
\mathcal{N}_{A\rightarrow B}^{\theta}\}_{\theta})\leq\widehat{I}_{F}
(\theta;\{\mathcal{N}_{A\rightarrow B}^{\theta}\}_{\theta})+\widehat{I}
_{F}(\theta;\{\mathcal{M}_{B\rightarrow C}^{\theta}\}_{\theta}).
\label{eq:serial-composition-RLD-ch}
\end{equation}

\end{corollary}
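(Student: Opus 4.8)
The plan is to derive this corollary from the RLD chain rule of Proposition~\ref{prop:chain-rule-RLD}, exactly as Corollary~\ref{cor:subadd-serial-concat-root-SLD-Fish} was derived from the root SLD chain rule. First I would dispose of the degenerate case: if the finiteness condition in \eqref{eq:RLD-Fish-ch} fails for either $\{\mathcal{N}_{A\rightarrow B}^{\theta}\}_{\theta}$ or $\{\mathcal{M}_{B\rightarrow C}^{\theta}\}_{\theta}$, then the right-hand side of \eqref{eq:serial-composition-RLD-ch} is $+\infty$ and there is nothing to prove; so I may assume both finiteness conditions hold, in which case $\widehat{I}_{F}(\theta;\{\mathcal{N}_{A\rightarrow B}^{\theta}\}_{\theta})$ and $\widehat{I}_{F}(\theta;\{\mathcal{M}_{B\rightarrow C}^{\theta}\}_{\theta})$ are both finite by definition.

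Next I would fix an arbitrary input state $\omega_{RA}$, with reference system $R$ of arbitrary size and no dependence on $\theta$, and apply Proposition~\ref{prop:chain-rule-RLD} to the channel family $\{\mathcal{M}_{B\rightarrow C}^{\theta}\}_{\theta}$ acting on the (parameter-dependent) state family $\{\mathcal{N}_{A\rightarrow B}^{\theta}(\omega_{RA})\}_{\theta}$ on systems $RB$. This yields
\begin{align}
\widehat{I}_{F}(\theta;\{\mathcal{M}_{B\rightarrow C}^{\theta}(\mathcal{N}_{A\rightarrow B}^{\theta}(\omega_{RA}))\}_{\theta})
& \leq \widehat{I}_{F}(\theta;\{\mathcal{M}_{B\rightarrow C}^{\theta}\}_{\theta}) + \widehat{I}_{F}(\theta;\{\mathcal{N}_{A\rightarrow B}^{\theta}(\omega_{RA})\}_{\theta}) \\
& \leq \widehat{I}_{F}(\theta;\{\mathcal{M}_{B\rightarrow C}^{\theta}\}_{\theta}) + \widehat{I}_{F}(\theta;\{\mathcal{N}_{A\rightarrow B}^{\theta}\}_{\theta}),
\end{align}
where the second step is immediate from the definition \eqref{eq:gen-fisher-channels} of the (RLD) Fisher information of a channel, since $\omega_{RA}$ is one particular input state for $\{\mathcal{N}_{A\rightarrow B}^{\theta}\}_{\theta}$. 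Finally, taking the supremum over all input states $\omega_{RA}$ on the left-hand side — which, by \eqref{eq:gen-fisher-channels} applied to the composite family $\{\mathcal{M}_{B\rightarrow C}^{\theta}\circ\mathcal{N}_{A\rightarrow B}^{\theta}\}_{\theta}$, equals $\widehat{I}_{F}(\theta;\{\mathcal{M}_{B\rightarrow C}^{\theta}\circ\mathcal{N}_{A\rightarrow B}^{\theta}\}_{\theta})$ — gives exactly \eqref{eq:serial-composition-RLD-ch}.

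I do not anticipate a genuine obstacle, since the argument is structurally identical to the one already carried out for the root SLD Fisher information in Corollary~\ref{cor:subadd-serial-concat-root-SLD-Fish}; all the real work is hidden in Proposition~\ref{prop:chain-rule-RLD}. The only point worth a moment's attention is bookkeeping of the reference system: the definition of the channel RLD Fisher information permits an arbitrarily large reference, so feeding the already-enlarged system $RB$ coming out of $\mathcal{N}_{A\rightarrow B}^{\theta}(\omega_{RA})$ into the chain rule for $\mathcal{M}_{B\rightarrow C}^{\theta}$ is legitimate, and the supremum over $\omega_{RA}$ on the left genuinely recovers the channel RLD Fisher information of the composition rather than some smaller quantity.
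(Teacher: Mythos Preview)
Your proposal is correct and follows essentially the same approach as the paper's proof: both dispose of the infinite case, fix an arbitrary input state $\omega_{RA}$, apply the chain rule of Proposition~\ref{prop:chain-rule-RLD} with $\mathcal{M}^{\theta}$ acting on the state family $\{\mathcal{N}^{\theta}_{A\to B}(\omega_{RA})\}_{\theta}$, bound $\widehat{I}_F(\theta;\{\mathcal{N}^{\theta}_{A\to B}(\omega_{RA})\}_\theta)$ by the channel quantity, and then take the supremum over $\omega_{RA}$ on the left. Your remark about the reference-system bookkeeping is a fair sanity check but, as you surmised, poses no obstacle.
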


\begin{proof}
If the finiteness condition in \eqref{eq:RLD-Fish-ch}
does not hold for both channels, then the inequality is trivially satisfied.
So let us suppose that the finiteness condition in
\eqref{eq:RLD-Fish-ch} holds for both channels. Pick an
arbitrary input state $\omega_{RA}$. Now apply
Proposition~\ref{prop:chain-rule-RLD}\ to find that
\begin{align}
&  \widehat{I}_{F}(\theta;\{\mathcal{M}_{B\rightarrow C}^{\theta}
(\mathcal{N}_{A\rightarrow B}^{\theta}(\omega_{RA}))\}_{\theta})\nonumber\\
&  \leq\widehat{I}_{F}(\theta;\{\mathcal{N}_{A\rightarrow B}^{\theta}
(\omega_{RA})\}_{\theta})+\widehat{I}_{F}(\theta;\{\mathcal{M}_{B\rightarrow
C}^{\theta}\}_{\theta})\\
&  \leq\sup_{\omega_{RA}}\widehat{I}_{F}(\theta;\{\mathcal{N}_{A\rightarrow
B}^{\theta}(\omega_{RA})\}_{\theta})+\widehat{I}_{F}(\theta;\{\mathcal{M}
_{B\rightarrow C}^{\theta}\}_{\theta})\\
&  =\widehat{I}_{F}(\theta;\{\mathcal{N}_{A\rightarrow B}^{\theta}\}_{\theta
})+\widehat{I}_{F}(\theta;\{\mathcal{M}_{B\rightarrow C}^{\theta}\}_{\theta}).
\end{align}
Since the inequality holds for all input states, we conclude that
\begin{equation}
\sup_{\omega_{RA}}\widehat{I}_{F}(\theta;\{\mathcal{M}_{B\rightarrow
C}^{\theta}(\mathcal{N}_{A\rightarrow B}^{\theta}(\omega_{RA}))\}_{\theta
})\leq\widehat{I}_{F}(\theta;\{\mathcal{N}_{A\rightarrow B}^{\theta}
\}_{\theta})+\widehat{I}_{F}(\theta;\{\mathcal{M}_{B\rightarrow C}^{\theta
}\}_{\theta}),
\end{equation}
which implies \eqref{eq:serial-composition-RLD-ch}.
\end{proof}

\section{Limits on channel estimation}

The goal of this section is to provide Cramer--Rao bounds for channel estimation in the sequential setting. That is, we wish to establish lower bounds on the mean-squared error of an unbiased estimator for quantum channel estimation in the sequential setting. The main ingredient of these bounds will be the amortization collapses proved above. However, before doing so, we need to connect the amortized Fisher information to sequential estimation, which we do via a meta-converse that generalizes the related meta-converse of~\cite{Berta2018}. 

First, we prove the following property obeyed by any generalized Fisher information of quantum channels:
\begin{proposition}
\label{prop:choi-state-bound-gen-fish}Let $\{\mathcal{N}_{A\rightarrow
B}^{\theta}\}_{\theta}$ be a family of quantum channels, and suppose that the
underlying generalized Fisher information is weakly faithful and obeys the
direct-sum property. Then the following inequalities hold
\begin{equation}
\mathbf{I}_{F}(\theta;\{\mathcal{N}_{A\rightarrow B}^{\theta}(\Phi
_{RA})\}_{\theta})\leq\mathbf{I}_{F}(\theta;\{\mathcal{N}_{A\rightarrow
B}^{\theta}\}_{\theta})\leq d\cdot\mathbf{I}_{F}(\theta;\{\mathcal{N}
_{A\rightarrow B}^{\theta}(\Phi_{RA})\}_{\theta}),
\label{eq:choi-state-bound-gen-fish}
\end{equation}
where $\Phi_{RA}$ is the maximally entangled state and $d$ is the dimension of
the channel input system $A$.
\end{proposition}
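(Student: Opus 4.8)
The plan is to prove the two inequalities separately: the left one is immediate, and the right one needs a short construction based on data processing and the direct-sum property. For the left inequality I would simply observe that the maximally entangled state $\Phi_{RA}$ (which has $R$ isomorphic to $A$) is one admissible input in the optimization \eqref{eq:gen-fisher-channels}, so that $\mathbf{I}_{F}(\theta;\{\mathcal{N}_{A\rightarrow B}^{\theta}(\Phi_{RA})\}_{\theta})\leq\sup_{\rho_{RA}}\mathbf{I}_{F}(\theta;\{\mathcal{N}_{A\rightarrow B}^{\theta}(\rho_{RA})\}_{\theta})=\mathbf{I}_{F}(\theta;\{\mathcal{N}_{A\rightarrow B}^{\theta}\}_{\theta})$.

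For the right inequality, by Remark~\ref{rem:restrict-to-pure-bipartite} it suffices to show $\mathbf{I}_{F}(\theta;\{\mathcal{N}_{A\rightarrow B}^{\theta}(\psi_{RA})\}_{\theta})\leq d\cdot\mathbf{I}_{F}(\theta;\{\mathcal{N}_{A\rightarrow B}^{\theta}(\Phi_{RA})\}_{\theta})$ for every pure state $\psi_{RA}$ with $R$ isomorphic to $A$. Writing $|\psi\rangle_{RA}=(Z_{R}\otimes I_{A})|\Gamma\rangle_{RA}$ with $\operatorname{Tr}[Z_{R}^{\dag}Z_{R}]=1$, the normalization forces $\|Z_{R}^{\dag}Z_{R}\|_{\infty}\leq 1$, hence $C_{R}:=I_{R}-Z_{R}^{\dag}Z_{R}\geq 0$. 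I would then define the parameter-independent channel $\mathcal{P}_{R\rightarrow FR}(\xi_{R}):=|0\rangle\!\langle0|_{F}\otimes Z_{R}\xi_{R}Z_{R}^{\dag}+|1\rangle\!\langle1|_{F}\otimes\sqrt{C_{R}}\,\xi_{R}\sqrt{C_{R}}$, which is completely positive and trace preserving since $Z_{R}^{\dag}Z_{R}+C_{R}=I_{R}$. Using $\Phi_{RA}=\tfrac1d|\Gamma\rangle\!\langle\Gamma|_{RA}$ and $\langle\Gamma|(K_{R}\otimes I_{A})|\Gamma\rangle=\operatorname{Tr}[K_{R}]$, the first branch evaluates to $\tfrac1d\psi_{RA}$ (trace $1/d$), so that
\[
(\mathcal{P}_{R\rightarrow FR}\otimes\operatorname{id}_{A})(\Phi_{RA})=\tfrac1d|0\rangle\!\langle0|_{F}\otimes\psi_{RA}+\big(1-\tfrac1d\big)|1\rangle\!\langle1|_{F}\otimes\nu_{RA},
\]
where $\nu_{RA}$ is a fixed (parameter-independent) normalized state.

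The last step uses that $\mathcal{P}$ acts on $R$ while $\mathcal{N}^{\theta}$ acts on $A\to B$, so they commute: applying $\mathcal{N}_{A\rightarrow B}^{\theta}$ to the displayed state produces the classical--quantum family $\tfrac1d|0\rangle\!\langle0|_{F}\otimes\mathcal{N}^{\theta}(\psi_{RA})+(1-\tfrac1d)|1\rangle\!\langle1|_{F}\otimes\mathcal{N}^{\theta}(\nu_{RA})$, and this family is the image of $\{\mathcal{N}^{\theta}(\Phi_{RA})\}_{\theta}$ under the parameter-independent channel $\mathcal{P}_{R\rightarrow FR}\otimes\operatorname{id}_{B}$. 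Data processing then gives $\mathbf{I}_{F}(\theta;\{\mathcal{N}^{\theta}(\Phi_{RA})\}_{\theta})\geq\mathbf{I}_{F}$ of that classical--quantum family, which by the direct-sum property \eqref{eq:direct-sum-prop-gen-fish} equals $\tfrac1d\mathbf{I}_{F}(\theta;\{\mathcal{N}^{\theta}(\psi_{RA})\}_{\theta})+(1-\tfrac1d)\mathbf{I}_{F}(\theta;\{\mathcal{N}^{\theta}(\nu_{RA})\}_{\theta})$. Since weak faithfulness together with data processing through a trace-and-replace channel gives $\mathbf{I}_{F}\geq 0$, the second term can be dropped, yielding $\mathbf{I}_{F}(\theta;\{\mathcal{N}^{\theta}(\psi_{RA})\}_{\theta})\leq d\cdot\mathbf{I}_{F}(\theta;\{\mathcal{N}^{\theta}(\Phi_{RA})\}_{\theta})$; taking the supremum over pure $\psi_{RA}$ closes the argument. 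The main obstacle is finding the construction itself: one must see that the factor $d$ comes precisely from the weight $1/d$ carried by $\psi_{RA}$ inside $(\mathcal{P}\otimes\operatorname{id})(\Phi_{RA})$, that the preparation $\mathcal{P}$ must be routed through the reference system so that it is data processing on the Choi-state output (keeping $\mathcal{P}$ parameter-independent), and that the direct-sum property plus nonnegativity of $\mathbf{I}_{F}$ is exactly what recovers the single term from the mixture.
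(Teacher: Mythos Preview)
Your proof is correct and follows essentially the same steering/remote-state-preparation argument as the paper: both construct the same parameter-independent instrument $\mathcal{P}_{R\to XR}$ (you call the flag $F$) from $Z_R$, apply it to $\Phi_{RA}$ to extract $\psi_{RA}$ with weight $1/d$, and then invoke data processing, the direct-sum property, and weak faithfulness in the same order. Your justification that $I_R-Z_R^{\dag}Z_R\ge 0$ from $\operatorname{Tr}[Z_R^{\dag}Z_R]=1$ is a nice explicit touch that the paper leaves implicit.
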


\begin{proof}
The first inequality is trivial, following from the definition in
\eqref{eq:gen-fisher-channels}. So we prove the second one and note that it
follows from a quantum steering or remote state preparation argument. Let
$\psi_{RA}$ be an arbitrary pure bipartite input state. To each such state,
there exists an operator $Z_{R}$ satisfying
\begin{align}
\psi_{RA}  &  =d\cdot Z_{R}\Phi_{RA}Z_{R}^{\dag},\\
\operatorname{Tr}[Z_{R}^{\dag}Z_{R}]  &  =1.
\end{align}
Let $\mathcal{P}_{R\rightarrow XR}$ denote the following steering quantum
channel:
\begin{equation}
\mathcal{P}_{R\rightarrow XR}(\omega_{R}):=|0\rangle\!\langle0|_{X}\otimes
Z_{R}\omega_{R}Z_{R}^{\dag}+|1\rangle\!\langle1|_{X}\otimes\sqrt{I_{R}
-Z_{R}^{\dag}Z_{R}}\omega_{R}\sqrt{I_{R}-Z_{R}^{\dag}Z_{R}},
\end{equation}
and consider that
\begin{equation}
\mathcal{P}_{R\rightarrow XR}(\Phi_{RA})=\frac{1}{d}|0\rangle\!\langle
0|_{X}\otimes\psi_{RA}+\left(  1-\frac{1}{d}\right)  |1\rangle\!\langle
1|_{X}\otimes\sigma_{RA},
\end{equation}
where
\begin{equation}
\sigma_{RA}:=\left(  1-\frac{1}{d}\right)  ^{-1}\sqrt{I_{R}-Z_{R}^{\dag}Z_{R}
}\Phi_{RA}\sqrt{I_{R}-Z_{R}^{\dag}Z_{R}}.
\end{equation}
This implies that
\begin{align}
&  \mathcal{P}_{R\rightarrow XR}(\mathcal{N}_{A\rightarrow B}^{\theta}
(\Phi_{RA}))\nonumber\\
&  =\mathcal{N}_{A\rightarrow B}^{\theta}(\mathcal{P}_{R\rightarrow XR}
(\Phi_{RA}))\\
&  =\frac{1}{d}|0\rangle\!\langle0|_{X}\otimes\mathcal{N}_{A\rightarrow
B}^{\theta}(\psi_{RA})+\left(  1-\frac{1}{d}\right)  |1\rangle\!\langle
1|_{X}\otimes\mathcal{N}_{A\rightarrow B}^{\theta}(\sigma_{RA}).
\label{eq:steering-ch-dim-bound}
\end{align}
Then we find that
\begin{align}
&  \mathbf{I}_{F}(\theta;\{\mathcal{N}_{A\rightarrow B}^{\theta}(\Phi
_{RA})\}_{\theta})\nonumber\\
&  \geq\mathbf{I}_{F}(\theta;\{\mathcal{P}_{R\rightarrow XR}(\mathcal{N}
_{A\rightarrow B}^{\theta}(\Phi_{RA}))\}_{\theta})\\
&  =\frac{1}{d}\mathbf{I}_{F}(\theta;\{\mathcal{N}_{A\rightarrow B}^{\theta
}(\psi_{RA})\}_{\theta})+\left(  1-\frac{1}{d}\right)  \mathbf{I}_{F}
(\theta;\{\mathcal{N}_{A\rightarrow B}^{\theta}(\sigma_{RA})\}_{\theta})\\
&  \geq\frac{1}{d}\mathbf{I}_{F}(\theta;\{\mathcal{N}_{A\rightarrow B}
^{\theta}(\psi_{RA})\}_{\theta}).
\end{align}
The first inequality follows from data processing. The equality follows from
\eqref{eq:steering-ch-dim-bound}\ and the direct-sum property in
\eqref{eq:direct-sum-prop-gen-fish}. The last inequality follows from the
assumption that $\mathbf{I}_{F}$ is weakly faithful, so that $\mathbf{I}
_{F}(\theta;\{\mathcal{N}_{A\rightarrow B}^{\theta}(\sigma_{RA})\}_{\theta
})\geq0$. Since the inequality holds for all pure bipartite states $\psi_{RA}$, we conclude the second inequality in \eqref{eq:choi-state-bound-gen-fish}.
\end{proof}
\medskip

We now have all the ingredients required to state and prove the required meta-converse:
\begin{theorem}
\label{thm:meta-converse}Consider a general sequential channel estimation
protocol of the form discussed in Chapter~\ref{ch:prelims} and reproduced in this chapter in Figure~\ref{fig:sequential-protocol-chapter3}. Suppose that
the generalized Fisher information $\mathbf{I}_{F}$ is weakly faithful. Then
the following inequality holds
\begin{equation}
\mathbf{I}_{F}(\theta;\{\omega_{R_{n}B_{n}}^{\theta}\}_{\theta})\leq
n\cdot \mathbf{I}_{F}^{\mathcal{A}}(\theta;\{\mathcal{N}_{A\rightarrow B}^{\theta
}\}_{\theta}),
\end{equation}
where $\omega_{R_{n}B_{n}}^{\theta}$ is the final state of the estimation
protocol in Figure~\ref{fig:sequential-protocol-chapter3}, as given in \eqref{eq:estimation-final-state}.
\end{theorem}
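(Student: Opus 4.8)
The plan is to establish this by a telescoping argument over the $n$ rounds of the protocol, combining the definition of the amortized Fisher information with the data-processing inequality for the parameter-independent interleaving channels $\mathcal{S}^{i}$. Using the notation of the protocol, I would write $\rho^{\theta}_{R_i A_i}$ and $\rho^{\theta}_{R_i B_i}$ for the intermediate states, so that $\rho^{\theta}_{R_i B_i} = \mathcal{N}^{\theta}_{A_i \to B_i}(\rho^{\theta}_{R_i A_i})$ and $\rho^{\theta}_{R_{i+1} A_{i+1}} = \mathcal{S}^{i}_{R_i B_i \to R_{i+1} A_{i+1}}(\rho^{\theta}_{R_i B_i})$, with $\rho_{R_1 A_1}$ independent of $\theta$ and $\omega^{\theta}_{R_n B_n} = \rho^{\theta}_{R_n B_n}$. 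Abbreviate $f_i := \mathbf{I}_F(\theta; \{\rho^{\theta}_{R_i A_i}\}_\theta)$ and $g_i := \mathbf{I}_F(\theta; \{\rho^{\theta}_{R_i B_i}\}_\theta)$.

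First I would dispose of the trivial case: if $\mathbf{I}_F^{\mathcal{A}}(\theta; \{\mathcal{N}^{\theta}_{A\to B}\}_\theta) = +\infty$ there is nothing to prove, so assume it is finite. Then I would assemble three facts. (i) $f_1 = 0$: the family $\{\rho_{R_1 A_1}\}_\theta$ has no $\theta$-dependence, so by~\eqref{eq:constant-value-gen-Fish-states} its generalized Fisher information equals a constant, and weak faithfulness forces that constant to be $0$. (ii) $g_i - f_i \le \mathbf{I}_F^{\mathcal{A}}(\theta; \{\mathcal{N}^{\theta}_{A\to B}\}_\theta)$ for every $i$: this is immediate from the definition~\eqref{eq:amortized-fisher-info}, since $\{\rho^{\theta}_{R_i A_i}\}_\theta$ is a feasible input family (the unbounded reference system in that supremum accommodates $R_i$). (iii) $f_{i+1} \le g_i$: this is the data-processing inequality~\eqref{eq:gen-fisher-states} applied to the parameter-independent channel $\mathcal{S}^{i}_{R_i B_i \to R_{i+1} A_{i+1}}$.

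Chaining (ii) and (iii) with the base case (i), I would run an induction: $f_1 = 0$ and $f_{i+1} \le g_i \le f_i + \mathbf{I}_F^{\mathcal{A}}$ give $f_i \le (i-1)\,\mathbf{I}_F^{\mathcal{A}}$ for all $i$. In particular each $f_i$ is finite, which retroactively legitimizes the subtraction in (ii). Setting $i = n$ then gives $\mathbf{I}_F(\theta; \{\omega^{\theta}_{R_n B_n}\}_\theta) = g_n \le f_n + \mathbf{I}_F^{\mathcal{A}} \le n\, \mathbf{I}_F^{\mathcal{A}}(\theta; \{\mathcal{N}^{\theta}_{A\to B}\}_\theta)$, which is the claim.

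I expect the only real subtlety to be the bookkeeping around $+\infty$: one must make sure never to write $\infty - \infty$ when invoking fact (ii). The induction handles this automatically, since it produces a finite bound on $f_i$ before that bound is used, but it is worth spelling out explicitly. Everything else --- the telescoping, the reduction of the $\mathcal{S}^{i}$ steps to data processing, and the evaluation $f_1 = 0$ --- is routine given the definitions and earlier results. I note that Proposition~\ref{prop:choi-state-bound-gen-fish} and the direct-sum property, though proved just above, are not actually needed for this statement; they enter only when this meta-converse is later converted into an explicit Cramer--Rao bound.
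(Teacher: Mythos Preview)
Your proof is correct and follows essentially the same approach as the paper: the paper writes the argument as a single telescoping sum (adding and subtracting the intermediate $\mathbf{I}_F(\theta;\{\rho^{\theta}_{R_i A_i}\}_\theta)$, applying data processing for each $\mathcal{S}^{i}$, and bounding each surviving difference by the amortized quantity), whereas you unfold the same telescoping into an explicit induction with the shorthand $f_i,g_i$. Your extra care with the $+\infty$ bookkeeping and your observation that Proposition~\ref{prop:choi-state-bound-gen-fish} is not needed here are both accurate.
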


\begin{proof}
Consider that
\begin{align}
&  \mathbf{I}_{F}(\theta;\{\omega_{R_{n}B_{n}}^{\theta}\}_{\theta})\nonumber\\
&  =\mathbf{I}_{F}(\theta;\{\omega_{R_{n}B_{n}}^{\theta}\}_{\theta
})-\mathbf{I}_{F}(\theta;\{\rho_{R_{1}A_{1}}\}_{\theta})\\
&  =\mathbf{I}_{F}(\theta;\{\omega_{R_{n}B_{n}}^{\theta}\}_{\theta
})-\mathbf{I}_{F}(\theta;\{\rho_{R_{1}A_{1}}\}_{\theta})+\sum_{i=2}^{n}\left(
\mathbf{I}_{F}(\theta;\{\rho_{R_{i}A_{i}}^{\theta}\}_{\theta})-\mathbf{I}
_{F}(\theta;\{\rho_{R_{i}A_{i}}^{\theta}\}_{\theta})\right) \\
&  =\mathbf{I}_{F}(\theta;\{\omega_{R_{n}B_{n}}^{\theta}\}_{\theta
})-\mathbf{I}_{F}(\theta;\{\rho_{R_{1}A_{1}}\}_{\theta})\nonumber\\
&  \qquad+\sum_{i=2}^{n}\left(  \mathbf{I}_{F}(\theta;\{\mathcal{S}
_{R_{i-1}B_{i-1}\rightarrow R_{i}A_{i}}^{i-1}(\rho_{R_{i-1}B_{i-1}}^{\theta
})\}_{\theta})-\mathbf{I}_{F}(\theta;\{\rho_{R_{i}A_{i}}^{\theta}\}_{\theta
})\right) \\
&  \leq\mathbf{I}_{F}(\theta;\{\omega_{R_{n}B_{n}}^{\theta}\}_{\theta
})-\mathbf{I}_{F}(\theta;\{\rho_{R_{1}A_{1}}\}_{\theta})\nonumber\\
&  \qquad+\sum_{i=2}^{n}\left(  \mathbf{I}_{F}(\theta;\{\rho_{R_{i-1}B_{i-1}
}^{\theta}\}_{\theta})-\mathbf{I}_{F}(\theta;\{\rho_{R_{i}A_{i}}^{\theta
}\}_{\theta})\right) \\
&  =\sum_{i=1}^{n}\left(  \mathbf{I}_{F}(\theta;\{\rho_{R_{i}B_{i}}^{\theta
}\}_{\theta})-\mathbf{I}_{F}(\theta;\{\rho_{R_{i}A_{i}}^{\theta}\}_{\theta
}\right) \\
&  =\sum_{i=1}^{n}\left(  \mathbf{I}_{F}(\theta;\{\mathcal{N}_{A_{i}
\rightarrow B_{i}}^{\theta}(\rho_{R_{i}A_{i}}^{\theta})\}_{\theta}
)-\mathbf{I}_{F}(\theta;\{\rho_{R_{i}A_{i}}^{\theta}\}_{\theta}\right) \\
&  \leq n\cdot\sup_{\{\rho_{RA}^{\theta}\}_{\theta}}\left[  \mathbf{I}
_{F}(\theta;\{\mathcal{N}_{A\rightarrow B}^{\theta}(\rho_{RA}^{\theta
})\}_{\theta})-\mathbf{I}_{F}(\theta;\{\rho_{RA}^{\theta})\}_{\theta})\right]
\\
&  =n\cdot \mathbf{I}_{F}^{\mathcal{A}}(\theta;\{\mathcal{N}_{A\rightarrow
B}^{\theta}\}_{\theta}).
\end{align}
The first equality follows from the weak faithfulness assumption and because
the initial state of the protocol has no dependence on the parameter $\theta$.
The inequality follows from the data-processing inequality. The other steps
are straightforward manipulations.
\end{proof}

\subsection{SLD Fisher information limit on parameter estimation of classical-quantum channels}

The first Cramer--Rao bound we provide is that for the special case of classical-quantum channels, for which we proved the amortization collapse in Proposition~\ref{thm:amort-collapse-cq}.

\begin{proposition}
As a direct consequence of the QCRB\ in \eqref{eq:QCRB}, the amortization collapse from Proposition~\ref{thm:amort-collapse-cq}, and the meta-converse from Theorem~\ref{thm:meta-converse}, we conclude the following bound on the MSE\ of an unbiased estimator $\hat{\theta}$ for classical--quantum channel families defined in \eqref{eq:cq-channel-fams} and for which the finiteness condition in \eqref{eq:finiteness-condition-cq-channels} holds:
\begin{equation} \label{eq:cq-channels-crb}
\operatorname{Var}(\hat{\theta})\geq\frac{1}{n\sup_{x}I_{F}(\theta ;\{\omega_{B}^{x,\theta}\}_{\theta})}.
\end{equation}
\end{proposition}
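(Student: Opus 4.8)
The plan is to chain together three ingredients already established: the single-copy quantum Cramér--Rao bound of~\eqref{eq:QCRB}, the meta-converse of Theorem~\ref{thm:meta-converse}, and the amortization collapse of Proposition~\ref{thm:amort-collapse-cq}. First I would note that in any sequential estimation protocol of the form in Figure~\ref{fig:sequential-protocol-chapter3}, the estimate $\hat{\theta}$ is produced by performing the POVM $\{\Lambda_{R_{n}B_{n}}^{\hat{\theta}}\}_{\hat{\theta}}$ on the \emph{single} final state $\omega_{R_{n}B_{n}}^{\theta}$, yielding $p_{\theta}(\hat{\theta})=\operatorname{Tr}[\Lambda_{R_{n}B_{n}}^{\hat{\theta}}\omega_{R_{n}B_{n}}^{\theta}]$. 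Applying the classical Cramér--Rao bound to this distribution, together with the fact that the classical Fisher information of any measurement outcome distribution is upper bounded by the SLD Fisher information of the underlying state (this is precisely the $n=1$ instance of~\eqref{eq:QCRB}), gives $\operatorname{Var}(\hat{\theta})\ge 1/I_{F}(\theta;\{\omega_{R_{n}B_{n}}^{\theta}\}_{\theta})$.

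Next I would invoke Theorem~\ref{thm:meta-converse} with the generalized Fisher information taken to be the SLD Fisher information $I_{F}$. The only hypothesis needed is weak faithfulness, which holds since $\partial_{\theta}\rho_{\theta}=0$ for a parameter-independent family forces $I_{F}=0$ (this is contained in Proposition~\ref{prop:faithfulness-SLD-RLD-Fish}). The meta-converse then yields $I_{F}(\theta;\{\omega_{R_{n}B_{n}}^{\theta}\}_{\theta})\le n\cdot I_{F}^{\mathcal{A}}(\theta;\{\mathcal{N}_{A\rightarrow B}^{\theta}\}_{\theta})$. Since the channel family at hand is the classical--quantum family~\eqref{eq:cq-channel-fams} satisfying the finiteness condition~\eqref{eq:finiteness-condition-cq-channels}, I would then apply the amortization collapse of Proposition~\ref{thm:amort-collapse-cq}, which gives $I_{F}^{\mathcal{A}}(\theta;\{\mathcal{N}_{X\rightarrow B}^{\theta}\}_{\theta})=I_{F}(\theta;\{\mathcal{N}_{X\rightarrow B}^{\theta}\}_{\theta})=\sup_{x}I_{F}(\theta;\{\omega_{B}^{x,\theta}\}_{\theta})$.

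Stringing these together yields
\begin{equation}
\operatorname{Var}(\hat{\theta})\ge\frac{1}{I_{F}(\theta;\{\omega_{R_{n}B_{n}}^{\theta}\}_{\theta})}\ge\frac{1}{n\cdot I_{F}^{\mathcal{A}}(\theta;\{\mathcal{N}_{X\rightarrow B}^{\theta}\}_{\theta})}=\frac{1}{n\sup_{x}I_{F}(\theta;\{\omega_{B}^{x,\theta}\}_{\theta})},
\end{equation}
which is the claimed bound. I do not expect a genuine obstacle: the argument is purely a composition of previously proved results. The only points requiring a moment's care are (i) confirming that the finiteness condition~\eqref{eq:finiteness-condition-cq-channels} is exactly what keeps every quantity finite, so the chain of inequalities is non-vacuous, and (ii) emphasizing that the factor $n$ originates solely from the meta-converse accounting for the $n$ channel uses, and not from an $n$-copy version of the QCRB — only the single-copy QCRB is invoked, since the protocol ends with one measurement of one state.
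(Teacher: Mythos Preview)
Your proposal is correct and follows exactly the approach the paper intends: the paper itself does not spell out a proof beyond declaring the bound a direct consequence of the QCRB~\eqref{eq:QCRB}, the meta-converse (Theorem~\ref{thm:meta-converse}), and the amortization collapse (Proposition~\ref{thm:amort-collapse-cq}), and you have simply made that chain explicit. Your additional remarks about weak faithfulness of the SLD Fisher information and about the factor $n$ arising from the meta-converse rather than an $n$-copy QCRB are accurate clarifications.
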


Due to the amortization collapse, there is no advantage that sequential estimation strategies bring over parallel estimation strategies for this class of channels. In fact, an optimal parallel estimation strategy consists of picking the same optimal input letter $x$ to each channel use in order to estimate~$\theta$.

\subsection{SLD Fisher information limit on parameter estimation of general channels}

Our next contribution is to prove a Cramer--Rao bound for sequential channel estimation of general channels using the SLD Fisher information. Here, we see how it is a consequence of the QCRB\ in \eqref{eq:QCRB}, the amortization collapse from Corollary~\ref{cor:amort-collapse-root-SLD}, and the meta-converse from Theorem~\ref{thm:meta-converse}. The bound we provide in \eqref{eq:Heisenberg-bnd-SLD-channels} was reported recently in \cite{Yuan2017}. At the same time, our approach offers a technical improvement over the result of \cite{Yuan2017}, in that the families of quantum channels to which the bound applies need only be differentiable rather than second-order differentiable, the latter being required by the approach of \cite{Yuan2017}.

\begin{proposition}
As a direct consequence of the QCRB\ in \eqref{eq:QCRB}, the amortization collapse from
Corollary~\ref{cor:amort-collapse-root-SLD}, and the meta-converse
from Theorem~\ref{thm:meta-converse}, we conclude the following bound on the MSE\ of an unbiased estimator $\hat{\theta}$ for all differentiable
quantum channel families:
\begin{equation}
\operatorname{Var}(\hat{\theta})\geq\frac{1}{n^{2}I_{F}(\theta;\{\mathcal{N}
_{A\rightarrow B}^{\theta}\}_{\theta})}.
\label{eq:Heisenberg-bnd-SLD-channels}
\end{equation}
\end{proposition}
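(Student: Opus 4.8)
The plan is to chain together the three ingredients named in the statement, using the \emph{root} SLD Fisher information $\sqrt{I_{F}}$ as the relevant generalized Fisher information. First I would dispose of the trivial case: if the finiteness condition in \eqref{eq:finiteness-condition-SLD-fish-ch} fails, then $I_{F}(\theta;\{\mathcal{N}_{A\rightarrow B}^{\theta}\}_{\theta})=+\infty$, the right-hand side of \eqref{eq:Heisenberg-bnd-SLD-channels} is zero, and the bound holds vacuously. So assume finiteness from here on.

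The starting point is to apply the quantum Cramer--Rao bound \eqref{eq:QCRB} in its single-copy form (i.e.\ with $n=1$) to the single final state family $\{\omega_{R_{n}B_{n}}^{\theta}\}_{\theta}$ produced by the sequential protocol. Since the estimate $\hat{\theta}$ is obtained from a measurement $\{\Lambda_{R_{n}B_{n}}^{\hat{\theta}}\}_{\hat{\theta}}$ of $\omega_{R_{n}B_{n}}^{\theta}$, and the SLD Fisher information upper bounds the classical Fisher information of any measurement-outcome distribution, this yields
\begin{equation}
\operatorname{Var}(\hat{\theta})\geq\frac{1}{I_{F}(\theta;\{\omega_{R_{n}B_{n}}^{\theta}\}_{\theta})}.
\end{equation}
It then remains to show that $I_{F}(\theta;\{\omega_{R_{n}B_{n}}^{\theta}\}_{\theta})\leq n^{2}\,I_{F}(\theta;\{\mathcal{N}_{A\rightarrow B}^{\theta}\}_{\theta})$.

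For this I would invoke Theorem~\ref{thm:meta-converse} with the generalized Fisher information taken to be $\mathbf{I}_{F}=\sqrt{I_{F}}$. This is legitimate because $\sqrt{I_{F}}$ inherits the data-processing inequality from $I_{F}$ (monotone composition with $\sqrt{\cdot}$) and is weakly faithful by \eqref{eq:SLD-RLD-Fish-faithful}; note that the direct-sum property is not needed for the meta-converse itself, only for Proposition~\ref{prop:choi-state-bound-gen-fish}. The meta-converse then gives $\sqrt{I_{F}}(\theta;\{\omega_{R_{n}B_{n}}^{\theta}\}_{\theta})\leq n\cdot\sqrt{I_{F}}^{\mathcal{A}}(\theta;\{\mathcal{N}_{A\rightarrow B}^{\theta}\}_{\theta})$, and Corollary~\ref{cor:amort-collapse-root-SLD} (the amortization collapse for the root SLD Fisher information) replaces the amortized quantity on the right by $\sqrt{I_{F}}(\theta;\{\mathcal{N}_{A\rightarrow B}^{\theta}\}_{\theta})$. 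Squaring both sides produces $I_{F}(\theta;\{\omega_{R_{n}B_{n}}^{\theta}\}_{\theta})\leq n^{2} I_{F}(\theta;\{\mathcal{N}_{A\rightarrow B}^{\theta}\}_{\theta})$, and combining with the Cramer--Rao step above completes the argument.

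I expect the only delicate points to be bookkeeping rather than substance: (i) confirming that the meta-converse may be applied to the nonstandard choice $\mathbf{I}_{F}=\sqrt{I_{F}}$, which requires only weak faithfulness and data processing, both of which transfer from $I_{F}$; and (ii) being careful that the $n$ appearing in \eqref{eq:QCRB} counts copies of a state family, so that one applies it to the single final state $\omega_{R_{n}B_{n}}^{\theta}$ and does not double-count the channel uses. The characteristic $n^{2}$ scaling arises entirely from squaring the bound $\sqrt{I_{F}}(\theta;\{\omega_{R_{n}B_{n}}^{\theta}\}_{\theta})\leq n\cdot\sqrt{I_{F}}(\theta;\{\mathcal{N}_{A\rightarrow B}^{\theta}\}_{\theta})$, which is linear in $n$ precisely because amortization collapses for the \emph{root} of the SLD Fisher information rather than for the SLD Fisher information itself.
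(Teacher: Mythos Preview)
Your proposal is correct and is precisely the argument the paper has in mind: the proposition is stated without an explicit proof, listing exactly the three ingredients you chain together, and the appearance of $n^{2}$ rather than $n$ arises just as you say, from applying the meta-converse to the root SLD Fisher information and then squaring. Your care in checking that $\sqrt{I_{F}}$ inherits data processing and weak faithfulness (hence is a legitimate choice of $\mathbf{I}_{F}$ in Theorem~\ref{thm:meta-converse}) fills in the only point the paper leaves implicit.
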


This bound thus poses a ``Heisenberg'' limitation on sequential estimation protocols for all differentiable quantum channel families satisfying the finiteness condition in \eqref{eq:finiteness-condition-SLD-fish-ch}. That is, in estimation tasks where the SLD Fisher information of a channel is achievable, the bound in~\eqref{eq:Heisenberg-bnd-SLD-channels} states that Heisenberg scaling of the estimator error is attainable using sequential estimation.

\subsection{RLD Fisher information limit on parameter estimation of general channels}

Finally, we provide a Cramer--Rao bound for sequential channel estimation using the RLD Fisher information. 
\begin{proposition}
\label{conc:RLD-bnd}As a direct consequence of the QCRB\ in
\eqref{eq:RLD-QCRB}, the amortization collapse from Corollary~\ref{cor:amort-collapse-RLD-fish}, and the meta-converse from Theorem~\ref{thm:meta-converse}
, we conclude the following bound on the MSE\ of an unbiased estimator
$\hat{\theta}$ for all quantum channel families $\{\mathcal{N}_{A\rightarrow
B}^{\theta}\}_{\theta}$:
\begin{equation}
\operatorname{Var}(\hat{\theta})\geq\frac{1}{n\widehat{I}_{F}(\theta
;\{\mathcal{N}_{A\rightarrow B}^{\theta}\}_{\theta})}. \label{eq:RLD-bnd-ch}
\end{equation}
\end{proposition}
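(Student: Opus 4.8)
The plan is to assemble the bound from three already-established ingredients: the single-copy RLD quantum Cramer--Rao bound applied to the final state of the protocol, the meta-converse of Theorem~\ref{thm:meta-converse}, and the amortization collapse of Corollary~\ref{cor:amort-collapse-RLD-fish}. First, I would dispense with the degenerate case: if $\widehat{I}_{F}(\theta;\{\mathcal{N}_{A\rightarrow B}^{\theta}\}_{\theta})=+\infty$, then the right-hand side of \eqref{eq:RLD-bnd-ch} is zero and the inequality is trivial, so I would assume henceforth that this quantity is finite.

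Next I would bound the estimator variance below by a single-state RLD Fisher information. The final measurement $\{\Lambda_{R_{n}B_{n}}^{\hat{\theta}}\}_{\hat{\theta}}$ produces the distribution $p_{\theta}(\hat{\theta})=\operatorname{Tr}[\Lambda_{R_{n}B_{n}}^{\hat{\theta}}\omega_{R_{n}B_{n}}^{\theta}]$, and the classical Cramer--Rao bound gives $\operatorname{Var}(\hat{\theta})\geq 1/I_{F}(\theta;\{p_{\theta}\}_{\theta})$ for the unbiased estimator $\hat{\theta}$. Viewing the measurement as a $\theta$-independent quantum channel taking $\omega_{R_{n}B_{n}}^{\theta}$ to the classical state $\sum_{\hat{\theta}}p_{\theta}(\hat{\theta})|\hat{\theta}\rangle\!\langle\hat{\theta}|$, data processing for the SLD Fisher information \eqref{eq:DP-SLD} yields $I_{F}(\theta;\{p_{\theta}\}_{\theta})\leq I_{F}(\theta;\{\omega_{R_{n}B_{n}}^{\theta}\}_{\theta})$, and Proposition~\ref{prop:SLD<=RLD} upgrades this to $I_{F}(\theta;\{\omega_{R_{n}B_{n}}^{\theta}\}_{\theta})\leq\widehat{I}_{F}(\theta;\{\omega_{R_{n}B_{n}}^{\theta}\}_{\theta})$. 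Hence $\operatorname{Var}(\hat{\theta})\geq 1/\widehat{I}_{F}(\theta;\{\omega_{R_{n}B_{n}}^{\theta}\}_{\theta})$, which is precisely the RLD quantum Cramer--Rao bound \eqref{eq:RLD-QCRB} applied to the one-parameter state family $\{\omega_{R_{n}B_{n}}^{\theta}\}_{\theta}$.

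It then remains to bound $\widehat{I}_{F}(\theta;\{\omega_{R_{n}B_{n}}^{\theta}\}_{\theta})$ from above. The RLD Fisher information is a generalized Fisher information, since it obeys the data-processing inequality \eqref{eq:DP-RLD}, and it is weakly faithful, since by Proposition~\ref{prop:faithfulness-SLD-RLD-Fish} it vanishes on every parameter-independent family. Therefore Theorem~\ref{thm:meta-converse} applies with $\mathbf{I}_{F}=\widehat{I}_{F}$ and gives $\widehat{I}_{F}(\theta;\{\omega_{R_{n}B_{n}}^{\theta}\}_{\theta})\leq n\cdot\widehat{I}_{F}^{\mathcal{A}}(\theta;\{\mathcal{N}_{A\rightarrow B}^{\theta}\}_{\theta})$. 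Finally, the amortization collapse of Corollary~\ref{cor:amort-collapse-RLD-fish} replaces the amortized quantity by the plain channel RLD Fisher information, so $\widehat{I}_{F}(\theta;\{\omega_{R_{n}B_{n}}^{\theta}\}_{\theta})\leq n\cdot\widehat{I}_{F}(\theta;\{\mathcal{N}_{A\rightarrow B}^{\theta}\}_{\theta})$. Chaining this with the lower bound from the previous step yields \eqref{eq:RLD-bnd-ch}.

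There is no real obstacle here: all the heavy lifting already lives in the chain rule of Proposition~\ref{prop:chain-rule-RLD} (which powers Corollary~\ref{cor:amort-collapse-RLD-fish}) and in the meta-converse, and this proposition is their routine assembly. The only points needing mild care are verifying the weak-faithfulness hypothesis of the meta-converse for $\widehat{I}_{F}$ and handling the degenerate case where $\widehat{I}_{F}$ of the channel is infinite, both of which are immediate.
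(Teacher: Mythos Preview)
Your proposal is correct and follows exactly the route the paper indicates: the proposition is presented in the paper without a separate proof, simply as the routine assembly of \eqref{eq:RLD-QCRB}, Theorem~\ref{thm:meta-converse}, and Corollary~\ref{cor:amort-collapse-RLD-fish}, and you have spelled out precisely those steps (including the weak-faithfulness check and the trivial infinite case) in the intended order.
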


Proposition~\ref{conc:RLD-bnd} strengthens one of the results of
\cite{Hayashi2011}. There, it was proved that the RLD\ Fisher information of
quantum channels is a limitation for parallel estimation protocols, but
Proposition~\ref{conc:RLD-bnd} establishes it as a limitation for the more general
sequential estimation protocols.

This bound thus poses a strong limitation on sequential estimation protocols for all differentiable quantum channel families satisfying the finiteness condition in \eqref{eq:RLD-Fish-ch}. That is, for a channel family with finite RLD Fisher information, no estimation protocol can attain Heisenberg scaling with respect to number of channel uses in the sequential setting. 

\section{Example: Estimating the parameters of the generalized amplitude damping channel} \label{subsec:estimating-gadc}

We now use the bounds established, in particular the one involving the RLD Fisher information in~\eqref{eq:RLD-bnd-ch}, to the example of the generalized amplitude damping channel \cite{Nielsen2000}. This channel has been studied previously in the context of quantum estimation theory \cite{Fujiwara2003, Fujiwara2004}, where its SLD\ Fisher information of quantum channels was computed and analyzed. Our goal now is to compute the RLD\ Fisher information of this channel with respect to its parameters.

A generalized amplitude damping channel is defined in terms of its loss $\gamma\in(0,1)$ and noise $N\in\left(  0,1\right)  $ as
\begin{equation}
\mathcal{A}_{\gamma,N}(\rho):=K_{1}\rho K_{1}^{\dag}+K_{2}\rho K_{2}^{\dag
}+K_{3}\rho K_{3}^{\dag}+K_{4}\rho K_{4}^{\dag}, \label{eq:GADC}
\end{equation}
where
\begin{align}
K_{1}  &  :=\sqrt{1-N}\left(  |0\rangle\!\langle0|+\sqrt{1-\gamma}
|1\rangle\!\langle1|\right)  ,\\
K_{2}  &  :=\sqrt{\gamma\left(  1-N\right)  }|0\rangle\!\langle1|,\\
K_{3}  &  :=\sqrt{N}\left(  \sqrt{1-\gamma}|0\rangle\!\langle0|+|1\rangle
\langle1|\right)  ,\\
K_{4}  &  :=\sqrt{\gamma N}|1\rangle\!\langle0|,
\end{align}
where $K_1$, $K_2$, $K_3$ and $K_4$ are its Kraus operators.
The Choi operator of the channel is then given by
\begin{align}
\Gamma_{RB}^{\mathcal{A}_{\gamma,N}}  &  :=(\operatorname{id}_{R}
\otimes\mathcal{A}_{\gamma,N})(\Gamma_{RA})\\
&  =\left(  1-\gamma N\right)  |00\rangle\!\langle00|+\sqrt{1-\gamma}\left(
|00\rangle\!\langle11|+|11\rangle\!\langle00|\right)  +\gamma N|01\rangle
\langle01|\nonumber\\
&  \qquad+\gamma\left(  1-N\right)  |10\rangle\!\langle10|+\left(
1-\gamma\left(  1-N\right)  \right)  |11\rangle\!\langle11|\\
&  =
\begin{bmatrix}
1-\gamma N & 0 & 0 & \sqrt{1-\gamma}\\
0 & \gamma N & 0 & 0\\
0 & 0 & \gamma\left(  1-N\right)  & 0\\
\sqrt{1-\gamma} & 0 & 0 & 1-\gamma\left(  1-N\right)
\end{bmatrix}
.
\end{align}

\subsection{Estimating loss}

The first task we study is to estimate the loss parameter $\gamma\in\left(  0,1\right)  $ of a generalized amplitude damping channel. By direct evaluation, we find that
\begin{equation}
\partial_{\gamma}\Gamma_{RB}^{\mathcal{A}_{\gamma,N}}=
\begin{bmatrix}
-N & 0 & 0 & -\frac{1}{2\sqrt{1-\gamma}}\\
0 & N & 0 & 0\\
0 & 0 & 1-N & 0\\
-\frac{1}{2\sqrt{1-\gamma}} & 0 & 0 & -\left(  1-N\right)
\end{bmatrix}
.
\end{equation}
Then we evaluate the expression for the RLD Fisher information of channels in \eqref{eq:RLD-Fish-ch}, which for our case with respect to parameter $\gamma$ is as follows:
\begin{equation}
\widehat{I}_{F}(\gamma;\{\mathcal{A}_{\gamma,N}\}_{\gamma})=\left\Vert
\operatorname{Tr}_{B}\left[  \left(  \partial_{\gamma}\Gamma_{RB}
^{\mathcal{A}_{\gamma,N}}\right)  \left(  \Gamma_{RB}^{\mathcal{A}_{\gamma,N}
}\right)  ^{-1}\left(  \partial_{\gamma}\Gamma_{RB}^{\mathcal{A}_{\gamma,N}
}\right)  \right]  \right\Vert _{\infty}.
\end{equation}
Using the fact that
\begin{equation}
\left(  \Gamma_{RB}^{\mathcal{A}_{\gamma,N}}\right)  ^{-1}=
\begin{bmatrix}
\frac{1-\gamma\left(  1-N\right)  }{\left(  1-N\right)  N\gamma^{2}} & 0 & 0 &
\frac{-\sqrt{1-\gamma}}{\left(  1-N\right)  N\gamma^{2}}\\
0 & \frac{1}{\gamma N} & 0 & 0\\
0 & 0 & \frac{1}{\gamma\left(  1-N\right)  } & 0\\
\frac{-\sqrt{1-\gamma}}{\left(  1-N\right)  N\gamma^{2}} & 0 & 0 &
\frac{1-\gamma N}{\left(  1-N\right)  N\gamma^{2}}
\end{bmatrix}
, \label{eq:GADC-inverse}
\end{equation}
we find that
\begin{equation}
\operatorname{Tr}_{B}\left[  \left(  \partial_{\gamma}\Gamma_{RB}
^{\mathcal{A}_{\gamma,N}}\right)  \left(  \Gamma_{RB}^{\mathcal{A}_{\gamma,N}
}\right)  ^{-1}\left(  \partial_{\gamma}\Gamma_{RB}^{\mathcal{A}_{\gamma,N}
}\right)  \right]  =
\begin{bmatrix}
f_{1}(\gamma,N) & 0\\
0 & f_{2}(\gamma,N)
\end{bmatrix}
,
\end{equation}
where
\begin{align}
f_{1}(\gamma,N)  &  :=\frac{\left(  4N-3\right)  N+\frac{1-N}{1-\gamma
}+4\left(  1-N\right)  N\left(  1-2N\right)  \gamma}{4N\left(  1-N\right)
\gamma^{2}},\\
f_{2}(\gamma,N)  &  :=\frac{8\gamma N+\frac{1}{N}+\frac{1}{\left(  1-N\right)
\left(  1-\gamma\right)  }-4\left(  1+\gamma\right)  }{4\gamma^{2}}.
\end{align}
Note that if $N\leq1/2$, then $f_{1}(\gamma,N)\geq f_{2}(\gamma,N)$, while if
$N>1/2$, then $f_{1}(\gamma,N)<f_{2}(\gamma,N)$. It then follows that
\begin{equation}
\widehat{I}_{F}(\gamma;\{\mathcal{A}_{\gamma,N}\}_{\gamma})=\left\{
\begin{array}
[c]{cc}
f_{1}(\gamma,N) & N\leq1/2\\
f_{2}(\gamma,N) & N>1/2.
\end{array}
\right.   \label{eq:estimate-gamma-GADC-RLD}
\end{equation}

It follows from the Cramer--Rao bound in \eqref{eq:RLD-bnd-ch} that the formula in \eqref{eq:estimate-gamma-GADC-RLD} provides a fundamental limitation on any
protocol that attempts to estimate the loss parameter $\gamma$. For the noise parameter $N$ equal to $0.2$ and $0.45$, Figure~\ref{fig:estimate-loss} depicts the logarithm of this bound, as well as the logarithm of the achievable bound from the SLD\ Fisher information of channels, corresponding to a parallel strategy that estimates $\gamma$. The RLD\ bound becomes better as $N$ approaches $1/2$, and we find numerically that the RLD\ and SLD\ bounds coincide at $N=1/2$.

\begin{figure}
\begin{subfigure}{.5\textwidth}
\centering
\includegraphics[width=.9\linewidth]{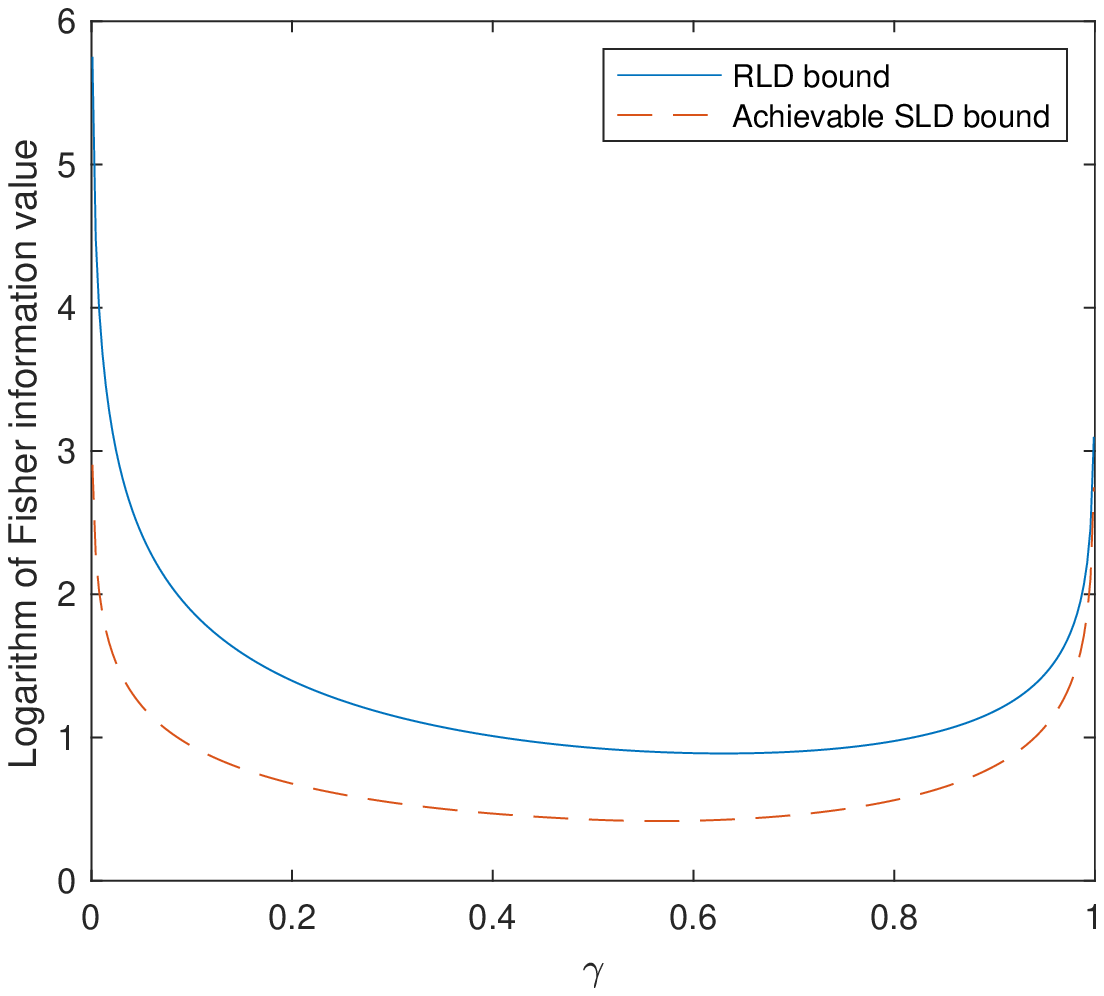}
\caption{}
\label{fig:estimate-loss-N-0.2}
\end{subfigure}
\begin{subfigure}{.5\textwidth}
\centering
\includegraphics[width=.9\linewidth]{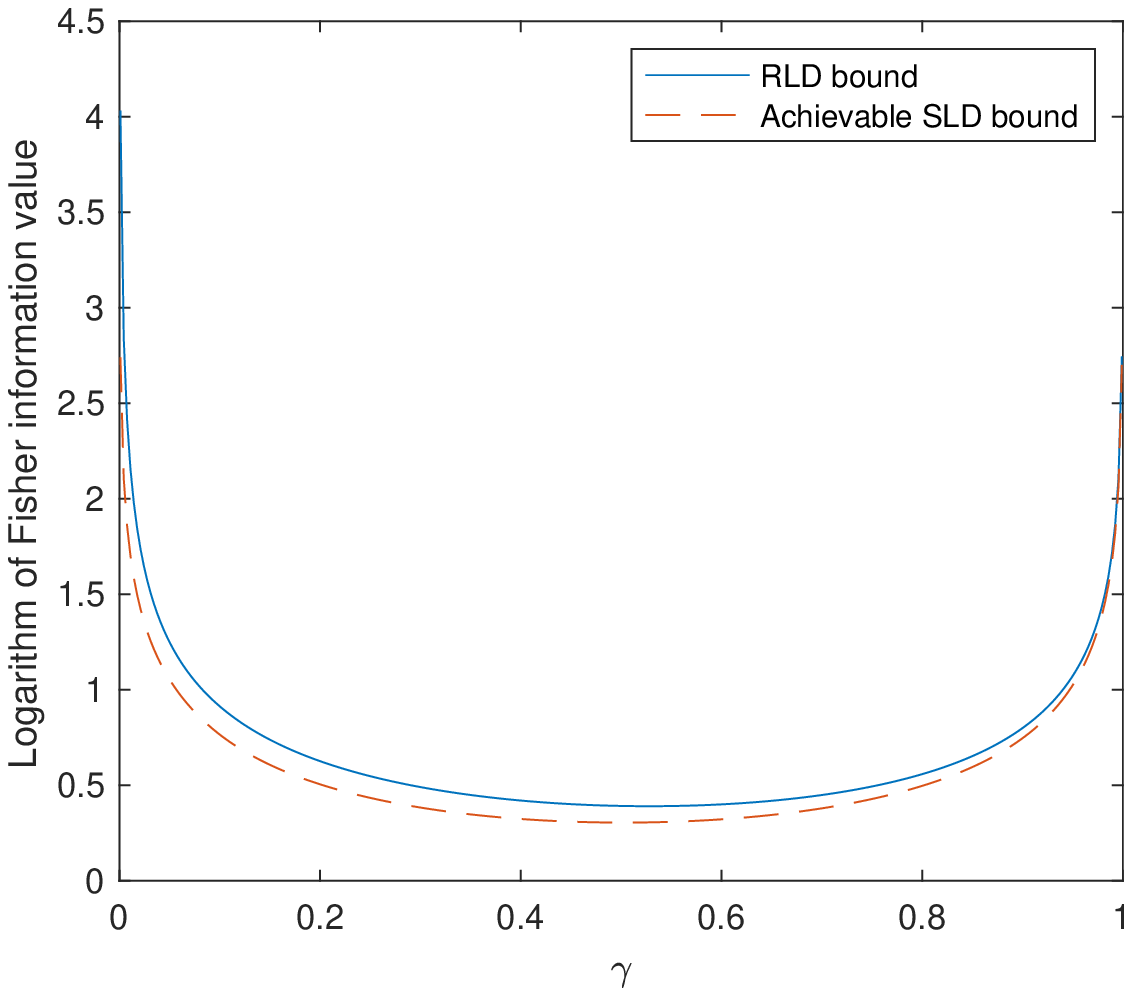}
\caption{}
\label{fig:estimate-loss-N-0.45}
\end{subfigure}
\caption{Comparing RLD and SLD bounds for loss estimation of a generalized amplitude damping channel}{(a)~Logarithm of RLD bound and achievable SLD bound versus loss
$\gamma$ for noise $N=0.2$, when estimating the loss $\gamma$. (b)~Logarithm
of the RLD bound and achievable SLD bound versus loss $\gamma$ for noise
$N=0.45$, when estimating the loss $\gamma$.} 
\label{fig:estimate-loss} 
\end{figure}

\subsection{Estimating noise}

Now suppose that we are interested in estimating the noise parameter $N$\ of a
generalized amplitude damping channel. We have that
\begin{equation}
\partial_{N}\Gamma_{RB}^{\mathcal{A}_{\gamma,N}}=-\gamma\left(  I_{2}
\otimes\sigma_{Z}\right)
\end{equation}
where $\sigma_z$ is the $2 \times 2$ matrix $\text{diag} (1,-1)$.
From \eqref{eq:GADC-inverse}, we have that
\begin{equation}
\operatorname{Tr}_{B}\left[  \left(  \partial_{N}\Gamma_{RB}^{\mathcal{A}
_{\gamma,N}}\right)  \left(  \Gamma_{RB}^{\mathcal{A}_{\gamma,N}}\right)
^{-1}\left(  \partial_{N}\Gamma_{RB}^{\mathcal{A}_{\gamma,N}}\right)  \right]
=
\begin{bmatrix}
\frac{1-\left(  1-2N\right)  \gamma}{\left(  1-N\right)  N} & 0\\
0 & \frac{1+\left(  1-2N\right)  \gamma}{\left(  1-N\right)  N}
\end{bmatrix}
.
\end{equation}
Thus, if $N>1/2$, then the first entry is the maximum, whereas if $N<1/2$,
then the second one is the maximum. We can summarize this as
\begin{equation}
\left\Vert \operatorname{Tr}_{B}\left[  \left(  \partial_{N}\Gamma
_{RB}^{\mathcal{A}_{\gamma,N}}\right)  \left(  \Gamma_{RB}^{\mathcal{A}
_{\gamma,N}}\right)  ^{-1}\left(  \partial_{N}\Gamma_{RB}^{\mathcal{A}
_{\gamma,N}}\right)  \right]  \right\Vert _{\infty}=\frac{1+\left\vert
1-2N\right\vert \gamma}{\left(  1-N\right)  N},
\end{equation}
so that
\begin{equation}
\widehat{I}_{F}(N;\{\mathcal{A}_{\gamma,N}\}_{N})=\frac{1+\left\vert
1-2N\right\vert \gamma}{\left(  1-N\right)  N}.
\end{equation}

For the loss parameter $\gamma$ equal to $0.5$ and $0.8$,
Figure~\ref{fig:estimate-noise} depicts the logarithm of the RLD bound, as
well as the logarithm of the achievable bound from the SLD\ Fisher information
of channels, corresponding to a parallel strategy that estimates $N$. The
RLD\ bound becomes better as $\gamma$ approaches $1$.

\begin{figure}[ptb]
\begin{subfigure}{.5\textwidth}
\centering
\includegraphics[width=.9\linewidth]{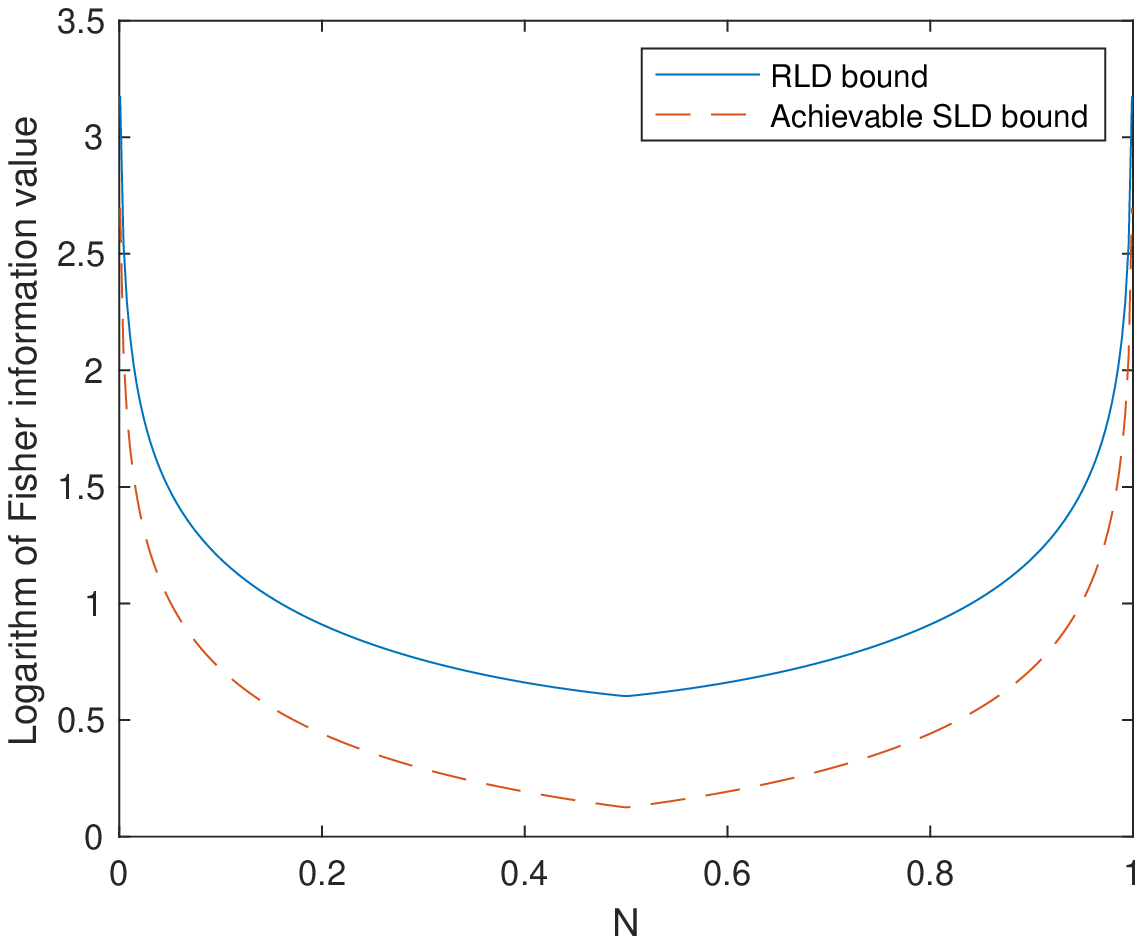}
\caption{}
\label{fig:estimate-noise-g-0.5}
\end{subfigure}
\begin{subfigure}{.5\textwidth}
\centering
\includegraphics[width=.9\linewidth]{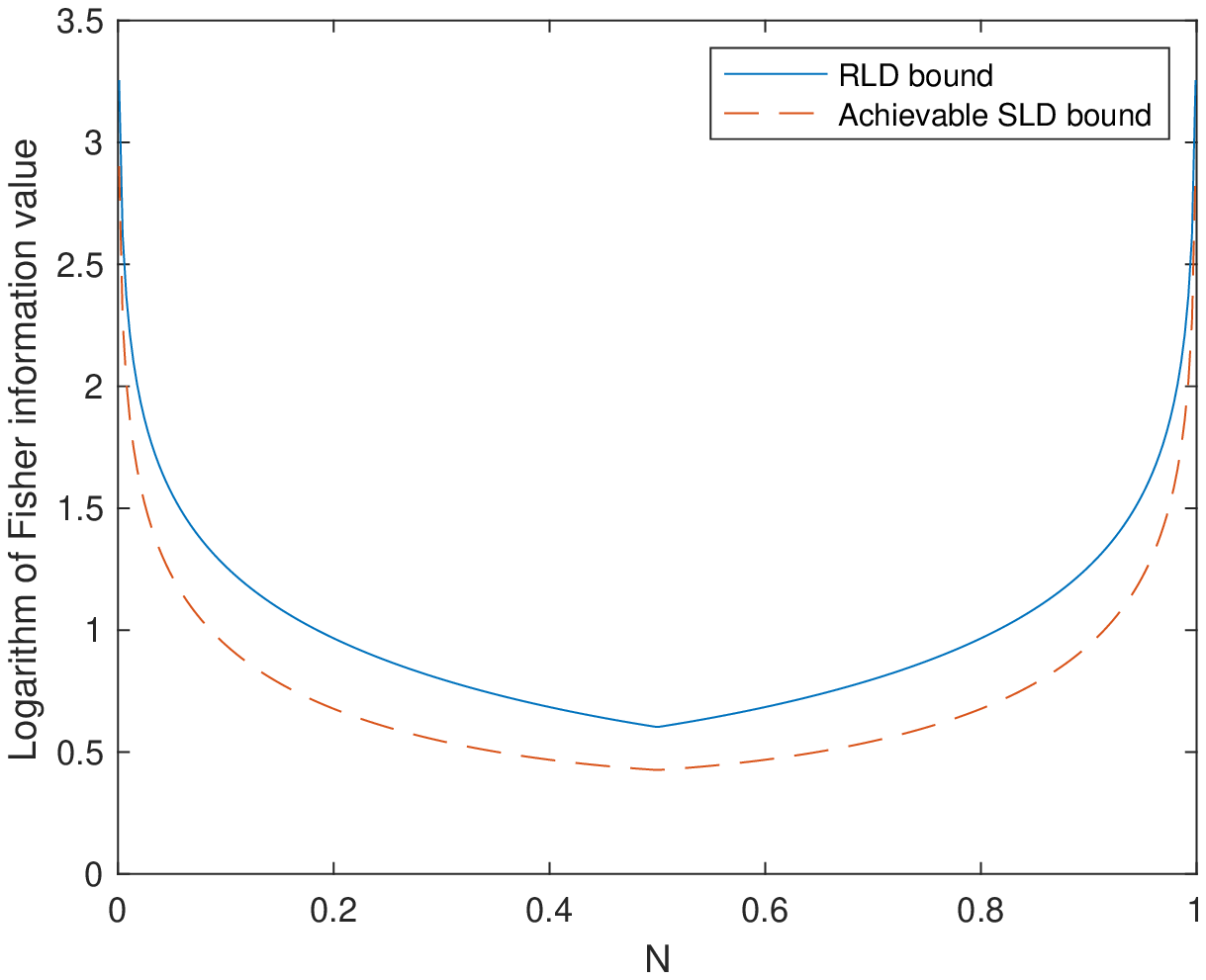}
\caption{}
\label{fig:estimate-loss-g-0.8}
\end{subfigure}
\caption{Comparing RLD and SLD bounds for noise estimation of a generalized amplitude damping channel}{(a)~Logarithm of RLD bound and achievable SLD bound versus noise $N$
for loss $\gamma=0.5$, when estimating the noise $N$. (b)~Logarithm of the RLD
bound and achievable SLD bound versus noise $N$ for loss $\gamma=0.8$, when
estimating the noise $N$.}
\label{fig:estimate-noise}
\end{figure}

\subsection{Estimating a phase in loss and noise}

Now let us suppose that we have a combination of a coherent process and the
generalized amplitude damping channel. In particular, let us suppose that a
phase $\phi$ is encoded in a unitary $e^{-i\phi\sigma_{Z}}$, and this is
followed by the generalized amplitude damping channel. Then this process is
\begin{equation}
\mathcal{A}_{\phi,\gamma,N}(\rho):=\mathcal{A}_{\gamma,N}(e^{-i\phi\sigma_{Z}
}\rho e^{i\phi\sigma_{Z}}).
\end{equation}
The goal is to estimate the phase $\phi$.

The Choi operator of the channel plus the phase shift is given by
\begin{equation}
\Gamma_{RB}^{\mathcal{A}_{\phi,\gamma,N}}:=
\begin{bmatrix}
1-\gamma N & 0 & 0 & e^{-i2\phi}\sqrt{1-\gamma}\\
0 & \gamma N & 0 & 0\\
0 & 0 & \gamma\left(  1-N\right)  & 0\\
e^{i2\phi}\sqrt{1-\gamma} & 0 & 0 & 1-\gamma\left(  1-N\right)
\end{bmatrix}
,
\end{equation}
and we find that
\begin{equation}
\partial_{\phi}\Gamma_{RB}^{\mathcal{A}_{\phi,\gamma,N}}=
\begin{bmatrix}
0 & 0 & 0 & -2ie^{-i2\phi}\sqrt{1-\gamma}\\
0 & 0 & 0 & 0\\
0 & 0 & 0 & 0\\
2ie^{i2\phi}\sqrt{1-\gamma} & 0 & 0 & 0
\end{bmatrix}
\end{equation}
Using the fact that
\begin{equation}
\left(  \Gamma_{RB}^{\mathcal{A}_{\phi}}\right)  ^{-1}=
\begin{bmatrix}
\frac{1-\gamma\left(  1-N\right)  }{\left(  1-N\right)  N\gamma^{2}} & 0 & 0 &
\frac{-e^{-2i\phi}\sqrt{1-\gamma}}{\left(  1-N\right)  N\gamma^{2}}\\
0 & \frac{1}{\gamma N} & 0 & 0\\
0 & 0 & \frac{1}{\gamma\left(  1-N\right)  } & 0\\
\frac{-e^{2i\phi}\sqrt{1-\gamma}}{\left(  1-N\right)  N\gamma^{2}} & 0 & 0 &
\frac{1-\gamma N}{\left(  1-N\right)  N\gamma^{2}}
\end{bmatrix}
,
\end{equation}
we find that
\begin{equation}
\operatorname{Tr}_{B}\left[  \left(  \partial_{\phi}\Gamma_{RB}^{\mathcal{A}
_{\phi,\gamma,N}}\right)  \left(  \Gamma_{RB}^{\mathcal{A}_{\phi,\gamma,N}
}\right)  ^{-1}\left(  \partial_{\phi}\Gamma_{RB}^{\mathcal{A}_{\phi,\gamma
,N}}\right)  \right]  =
\begin{bmatrix}
\frac{4\left(  1-\gamma\right)  \left(  1-\gamma N\right)  }{\left(
1-N\right)  N\gamma^{2}} & 0\\
0 & \frac{4\left(  1-\gamma\right)  \left(  1-\gamma\left(  1-N\right)
\right)  }{\left(  1-N\right)  N\gamma^{2}}
\end{bmatrix}
.
\end{equation}
Then if $N>1/2$, we have that
\begin{equation}
\left\Vert \operatorname{Tr}_{B}\left[  \left(  \partial_{\phi}\Gamma
_{RB}^{\mathcal{A}_{\phi,\gamma,N}}\right)  \left(  \Gamma_{RB}^{\mathcal{A}
_{\phi,\gamma,N}}\right)  ^{-1}\left(  \partial_{\phi}\Gamma_{RB}
^{\mathcal{A}_{\phi,\gamma,N}}\right)  \right]  \right\Vert _{\infty}
=\frac{4\left(  1-\gamma\right)  \left(  1-\gamma\left(  1-N\right)  \right)
}{\left(  1-N\right)  N\gamma^{2}},
\end{equation}
while if $N\leq1/2$, then
\begin{equation}
\left\Vert \operatorname{Tr}_{B}\left[  \left(  \partial_{\phi}\Gamma
_{RB}^{\mathcal{A}_{\phi,\gamma,N}}\right)  \left(  \Gamma_{RB}^{\mathcal{A}
_{\phi,\gamma,N}}\right)  ^{-1}\left(  \partial_{\phi}\Gamma_{RB}
^{\mathcal{A}_{\phi,\gamma,N}}\right)  \right]  \right\Vert _{\infty}
=\frac{4\left(  1-\gamma\right)  \left(  1-\gamma N\right)  }{\left(
1-N\right)  N\gamma^{2}}.
\end{equation}
So we conclude that
\begin{equation}
\widehat{I}_{F}(\phi;\{\mathcal{A}_{\phi,\gamma,N}\}_{\phi})=\frac{4\left(
1-\gamma\right)  \left(  1-\gamma\left(  N+\left(  1-2N\right)
u(2N-1)\right)  \right)  }{\left(  1-N\right)  N\gamma^{2}},
\end{equation}
where
\begin{equation}
u(x)=\left\{
\begin{array}
[c]{cc}
1 & x>0\\
0 & x\leq0
\end{array}
\right.  .
\end{equation}

For the noise parameter $N$ equal to $0.2$ and $0.45$,
Figure~\ref{fig:estimate-phase} depicts the logarithm of the RLD bound, as
well as the logarithm of the achievable bound from the SLD\ Fisher information
of channels, corresponding to a parallel strategy that estimates the phase
$\phi$ at $\phi=0.1$. The RLD\ bound becomes better as $\gamma$ approaches $1$.

\begin{figure}[ptb]
\begin{subfigure}{.5\textwidth}
\centering
\includegraphics[width=.9\linewidth]{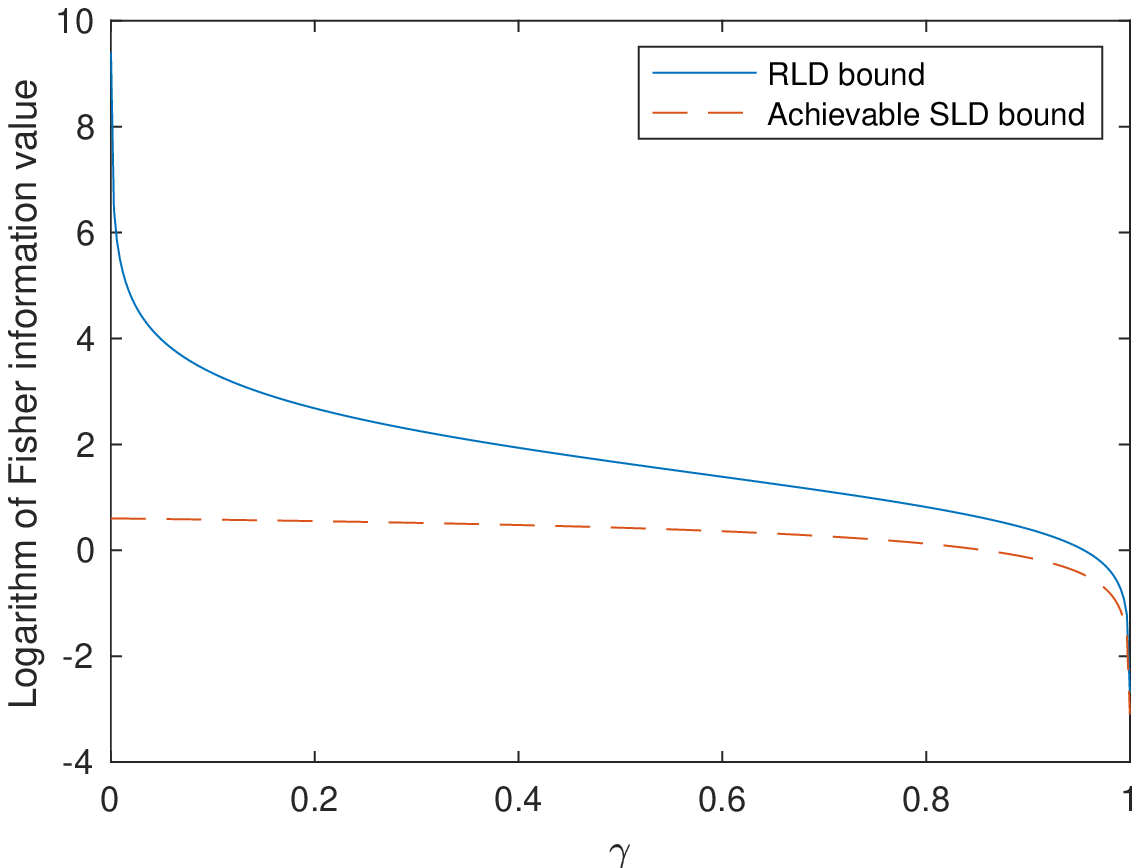}
\caption{}
\label{fig:estimate-phase-N-0.2}
\end{subfigure}
\begin{subfigure}{.5\textwidth}
\centering
\includegraphics[width=.9\linewidth]{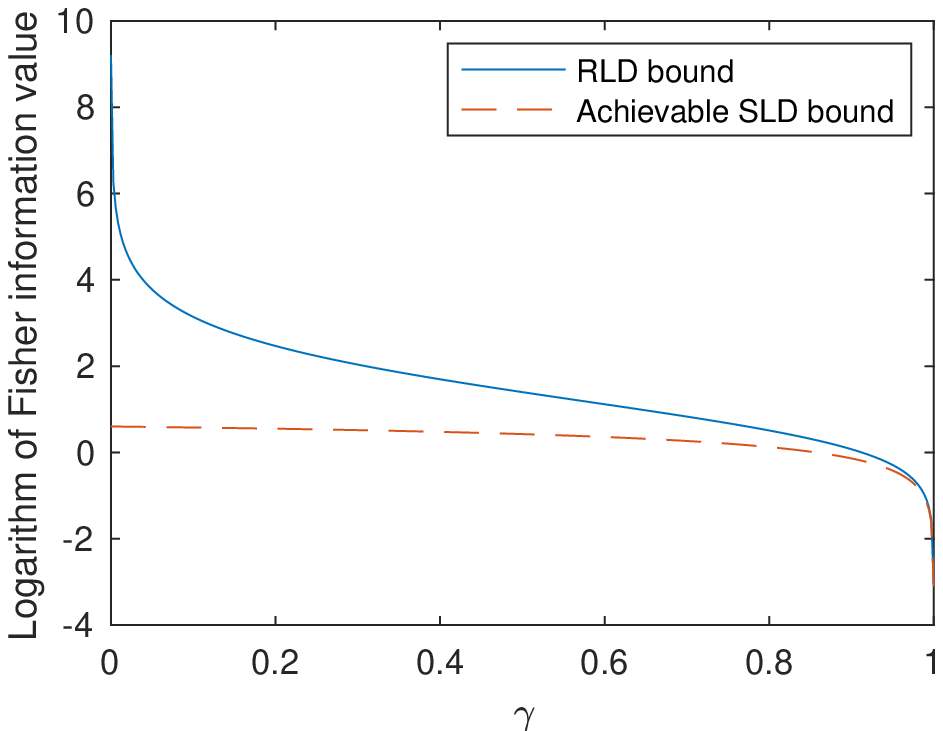}
\caption{}
\label{fig:estimate-phase-N-0.45}
\end{subfigure}
\caption{Comparing RLD and SLD bounds for estimating a phase in the presence of a generalized amplitude damping channel}{(a)~Logarithm of RLD bound and achievable SLD bound versus loss
$\gamma$ for noise $N=0.2$, when estimating the phase $\phi= 0.1$.
(b)~Logarithm of the RLD bound and achievable SLD bound versus loss $\gamma$
for noise $N=0.45$, when estimating the phase $\phi= 0.1$.}
\label{fig:estimate-phase}
\end{figure}

\section{Optimization/SDP forms of Fisher information quantities}

In the final section of this chapter, we provide optimization formulae for various quantities of interest in quantum estimation. In particular, we provide
\begin{itemize}
    \item a semi-definite program for the SLD Fisher information of quantum states,
    \item a quadratically constrained optimization problem for the root SLD Fisher information of quantum states,
    \item a semi-definite program for the RLD Fisher information of quantum states,
    \item a semi-definite program for the RLD Fisher information of quantum channels, and
    \item a bilinear program for the SLD Fisher information of quantum channels.
\end{itemize}  

First, we state the following technical lemma, which we use as a tool to derive the various optimization programs in this section.

\begin{lemma}
\label{lem:freq-used-SDP-primal-dual}Let $K$ and $Z$ be Hermitian operators,
and let $W$ be a linear operator. Then the dual of the following semi-definite
program
\begin{equation}
\inf_{M}\left\{  \operatorname{Tr}[KM]:
\begin{bmatrix}
M & W^{\dag}\\
W & Z
\end{bmatrix}
\geq0\right\}  ,
\end{equation}
with $M$ Hermitian, is given by
\begin{equation}
\sup_{P,Q,R}\left\{  2\operatorname{Re}(\operatorname{Tr}[W^{\dag
}Q])-\operatorname{Tr}[ZR]:P\leq K,
\begin{bmatrix}
P & Q^{\dag}\\
Q & R
\end{bmatrix}
\geq0\right\}  ,
\end{equation}
where $Q$ is a linear operator and $P$ and $R$ are Hermitian.
\end{lemma}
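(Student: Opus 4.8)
The plan is to derive this by the standard Lagrangian recipe for semidefinite programs, treating $M$ as a free Hermitian variable subject only to the block linear matrix inequality. First I would attach to the constraint $\begin{bmatrix} M & W^{\dag}\\ W & Z\end{bmatrix}\ge 0$ a dual variable in the form of a positive semidefinite operator $\Lambda$ of the same block size, partitioned as $\Lambda=\begin{bmatrix} P & Q^{\dag}\\ Q & R\end{bmatrix}$ with $P$ and $R$ Hermitian (the diagonal blocks of a Hermitian operator) and $Q$ an arbitrary linear operator; note that $\Lambda\ge 0$ is verbatim the constraint $\begin{bmatrix} P & Q^{\dag}\\ Q & R\end{bmatrix}\ge 0$ appearing in the claimed dual.

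Next I would form the Lagrangian $\mathcal{L}(M,\Lambda):=\operatorname{Tr}[KM]-\operatorname{Tr}\!\left[\Lambda\begin{bmatrix} M & W^{\dag}\\ W & Z\end{bmatrix}\right]$, observe that $\sup_{\Lambda\ge 0}\mathcal{L}(M,\Lambda)$ equals $\operatorname{Tr}[KM]$ when $M$ is feasible and $+\infty$ otherwise, so that the dual program is $\sup_{\Lambda\ge 0}\inf_{M}\mathcal{L}(M,\Lambda)$. Expanding the block trace gives $\operatorname{Tr}\!\left[\Lambda\begin{bmatrix} M & W^{\dag}\\ W & Z\end{bmatrix}\right]=\operatorname{Tr}[PM]+\operatorname{Tr}[Q^{\dag}W]+\operatorname{Tr}[QW^{\dag}]+\operatorname{Tr}[ZR]$, and the two cross terms combine into $2\operatorname{Re}(\operatorname{Tr}[W^{\dag}Q])$ since $\overline{\operatorname{Tr}[W^{\dag}Q]}=\operatorname{Tr}[WQ^{\dag}]$. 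Hence $\mathcal{L}(M,\Lambda)=\operatorname{Tr}[(K-P)M]-2\operatorname{Re}(\operatorname{Tr}[W^{\dag}Q])-\operatorname{Tr}[ZR]$; since $M$ ranges over all Hermitian operators and $K-P$ is Hermitian, the infimum over $M$ is $-\infty$ unless $K-P=0$, so the Lagrangian dual equals $\sup\{-2\operatorname{Re}(\operatorname{Tr}[W^{\dag}Q])-\operatorname{Tr}[ZR]: P=K,\ \begin{bmatrix} P & Q^{\dag}\\ Q & R\end{bmatrix}\ge 0\}$.

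Finally I would perform two value-preserving rewrites to bring this to the stated form. The dual feasible set is invariant under $Q\mapsto -Q$ (conjugate the block matrix by $\operatorname{diag}(I,-I)$), so replacing $-2\operatorname{Re}(\operatorname{Tr}[W^{\dag}Q])$ by $+2\operatorname{Re}(\operatorname{Tr}[W^{\dag}Q])$ leaves the supremum unchanged; and the constraint $P=K$ may be relaxed to $P\le K$ without changing the optimal value, because if $(P,Q,R)$ is feasible with $P\le K$ then $(K,Q,R)$ is also feasible — since $\begin{bmatrix} K & Q^{\dag}\\ Q & R\end{bmatrix}=\begin{bmatrix} P & Q^{\dag}\\ Q & R\end{bmatrix}+\begin{bmatrix} K-P & 0\\ 0 & 0\end{bmatrix}\ge 0$ — with the same objective value. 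Combining these yields exactly $\sup_{P,Q,R}\{2\operatorname{Re}(\operatorname{Tr}[W^{\dag}Q])-\operatorname{Tr}[ZR]: P\le K,\ \begin{bmatrix} P & Q^{\dag}\\ Q & R\end{bmatrix}\ge 0\}$.

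The derivation is entirely routine; the only points requiring care are the block-trace expansion and the complex-conjugation bookkeeping in the cross terms, together with the observation that passing from $P=K$ to $P\le K$ and flipping the sign of $Q$ are both harmless (so there is no real ``obstacle'' here, only bookkeeping). Alternatively one could simply invoke the standard semidefinite-programming duality formalism, e.g.\ \cite{Watrous2018}, since no strong-duality claim is being asserted in the statement itself.
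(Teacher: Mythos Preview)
Your proposal is correct and follows essentially the same route as the paper: both derive the dual via standard semidefinite-programming duality, with the paper casting the problem into Watrous's canonical primal--dual form and reading off the identifications, while you carry out the equivalent Lagrangian computation by hand. The cosmetic differences---the paper absorbs the sign flip $W\mapsto -W$ up front and obtains $P\le K$ directly from $\Phi(X)\le B$, whereas you flip $Q\mapsto -Q$ and relax $P=K$ to $P\le K$ at the end---are precisely the bookkeeping you already flagged as harmless.
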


\begin{proof}
The standard forms of a\ primal and dual semi-definite program, for $A$ and
$B$ Hermitian and $\Phi$ a Hermiticity-preserving map, are respectively as
follows \cite{Watrous2018}:
\begin{align}
&  \inf_{Y\geq0}\left\{  \operatorname{Tr}[BY]:\Phi^{\dag}(Y)\geq A\right\}
,\\
&  \sup_{X\geq0}\left\{  \operatorname{Tr}[AX]:\Phi(X)\leq B\right\}  ,
\end{align}
where $\Phi^{\dag}$ is the Hilbert--Schmidt adjoint of $\Phi$. Noting that
\begin{equation}
\begin{bmatrix}
M & W^{\dag}\\
W & Z
\end{bmatrix}
\geq0\quad\Longleftrightarrow\quad
\begin{bmatrix}
M & -W^{\dag}\\
-W & Z
\end{bmatrix}
\geq0\quad\Longleftrightarrow\quad
\begin{bmatrix}
M & 0\\
0 & 0
\end{bmatrix}
\geq
\begin{bmatrix}
0 & W^{\dag}\\
W & -Z
\end{bmatrix}
,
\end{equation}
we conclude the statement of the lemma after making the following
identifications:
\begin{align}
B  &  =K,\quad Y=M,\quad\Phi^{\dag}(M)=
\begin{bmatrix}
M & 0\\
0 & 0
\end{bmatrix}
,\\
A  &  =
\begin{bmatrix}
0 & W^{\dag}\\
W & -Z
\end{bmatrix}
,\quad X=
\begin{bmatrix}
P & Q^{\dag}\\
Q & R
\end{bmatrix}
,\quad\Phi(X)=P.
\end{align}
This concludes the proof.
\end{proof}

\subsection{Semi-definite program for SLD\ Fisher information of quantum
states}

\label{sec:SDPs-for-FIs}We begin with the SLD\ Fisher information of quantum states,
establishing that it can be evaluated by means of a semi-definite program.

\begin{proposition}
\label{prop:SLD-Fish-states-SDP}The SLD\ Fisher information of a
differentiable family $\{\rho_{\theta}\}_{\theta}$ of states satisfying the
finiteness condition in \eqref{eq:basis-independent-formula-SLD} can be
evaluated by means of the following semi-definite program:
\begin{equation}
I_{F}(\theta;\{\rho_{\theta}\}_{\theta})=2\cdot\inf\left\{  \mu\in\mathbb{R}:
\begin{bmatrix}
\mu & \langle\Gamma|\left(  \partial_{\theta}\rho_{\theta}\otimes I\right) \\
\left(  \partial_{\theta}\rho_{\theta}\otimes I\right)  |\Gamma\rangle &
\rho_{\theta}\otimes I+I\otimes\rho_{\theta}^{T}
\end{bmatrix}
\geq0\right\}  .
\end{equation}
The dual semi-definite program is as follows:
\begin{equation}
2\cdot\sup_{\lambda,|\varphi\rangle,Z}2\operatorname{Re}[\langle
\varphi|\left(  \partial_{\theta}\rho_{\theta}\otimes I\right)  |\Gamma
\rangle]- \operatorname{Tr}[(\rho_{\theta}\otimes I+I\otimes\rho_{\theta}
^{T})Z],
\end{equation}
subject to $\lambda\in\mathbb{R}$, $|\varphi\rangle$ an arbitrary complex
vector, $Z$ Hermitian, and
\begin{equation}
\lambda\leq1,\qquad
\begin{bmatrix}
\lambda & \langle\varphi|\\
|\varphi\rangle & Z
\end{bmatrix}
\geq0.
\end{equation}

\end{proposition}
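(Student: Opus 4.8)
The plan is to recognize the SLD Fisher information of states in \eqref{eq:basis-independent-formula-SLD} as a pseudoinverse quadratic form and then feed it into the two matrix lemmas already established in this chapter. Abbreviate
\begin{equation}
|v\rangle := \left((\partial_{\theta}\rho_{\theta})\otimes I\right)|\Gamma\rangle, \qquad Y := \rho_{\theta}\otimes I + I\otimes\rho_{\theta}^{T},
\end{equation}
so that \eqref{eq:basis-independent-formula-SLD} reads $I_{F}(\theta;\{\rho_{\theta}\}_{\theta}) = 2\,\langle v|Y^{-1}|v\rangle$, with the inverse taken on $\operatorname{supp}(Y)$. The first step is to verify that the SLD finiteness condition $\Pi_{\rho_{\theta}}^{\perp}(\partial_{\theta}\rho_{\theta})\Pi_{\rho_{\theta}}^{\perp}=0$ is equivalent to $|v\rangle\in\operatorname{supp}(Y)$. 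Since $Y$ is a sum of two positive semi-definite operators, $\ker Y = \ker(\rho_{\theta}\otimes I)\cap\ker(I\otimes\rho_{\theta}^{T}) = \ker\rho_{\theta}\otimes\ker\rho_{\theta}^{T}$; evaluating $\langle a|\otimes\langle b|$ against $|v\rangle$ by means of the elementary identity $(\langle a|\otimes\langle b|)(M\otimes I)|\Gamma\rangle = \langle a|M|\bar b\rangle$ (conjugation taken in the fixed basis) gives $\langle a|(\partial_{\theta}\rho_{\theta})|\bar b\rangle$ for $\rho_{\theta}|a\rangle = \rho_{\theta}|\bar b\rangle = 0$, and this vanishes for all such $a,\bar b$ precisely under the finiteness condition.

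The second step applies the pseudoinverse form of the Schur complement lemma, i.e.\ Lemma~\ref{lem:min-XYinvX} with $X$ the column vector $|v\rangle$ and $M$ a scalar $\mu$, and with $Y$ acting on its support. This gives
\begin{equation}
\langle v|Y^{-1}|v\rangle = \inf\left\{\mu\in\mathbb{R}:
\begin{bmatrix} \mu & \langle v| \\ |v\rangle & Y \end{bmatrix}\geq 0\right\},
\end{equation}
the infimum being finite precisely by the first step. Substituting the definitions of $|v\rangle$ and $Y$ and multiplying by $2$ yields the claimed primal SDP. If one prefers to avoid the rank-deficient Schur complement entirely, one can instead run the argument for the full-rank family $\rho_{\theta}^{\varepsilon} := (1-\varepsilon)\rho_{\theta} + \varepsilon\pi_{d}$, for which Lemma~\ref{lem:min-XYinvX} applies verbatim, and then let $\varepsilon\to 0$ using Proposition~\ref{prop:physical-consistency-SLD-Fish-states} on the left-hand side together with continuity of the SDP value on the right.

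For the dual, the plan is to invoke Lemma~\ref{lem:freq-used-SDP-primal-dual} with the identifications $K = 1$ (scalar, so that $\operatorname{Tr}[KM] = \mu$), $W = ((\partial_{\theta}\rho_{\theta})\otimes I)|\Gamma\rangle$ (whose adjoint is $\langle\Gamma|((\partial_{\theta}\rho_{\theta})\otimes I)$, as $(\partial_{\theta}\rho_{\theta})\otimes I$ is Hermitian), and $Z = \rho_{\theta}\otimes I + I\otimes\rho_{\theta}^{T}$. The dual variables of the lemma then specialize to a scalar $P = \lambda$, a column vector $Q = |\varphi\rangle$, and a Hermitian operator $R$ (the variable denoted $Z$ in the statement of the proposition); reading off $2\operatorname{Re}(\operatorname{Tr}[W^{\dag}Q]) = 2\operatorname{Re}[\langle\varphi|((\partial_{\theta}\rho_{\theta})\otimes I)|\Gamma\rangle]$ and $\operatorname{Tr}[ZR] = \operatorname{Tr}[(\rho_{\theta}\otimes I + I\otimes\rho_{\theta}^{T})R]$, together with the constraints $\lambda\leq 1$ and $\begin{bmatrix}\lambda & \langle\varphi| \\ |\varphi\rangle & R\end{bmatrix}\geq 0$, and restoring the overall factor $2$, reproduces the stated dual program. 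Finally, I would record that the primal and dual optimal values coincide by Slater's condition: strict primal feasibility is transparent after the $\rho_{\theta}^{\varepsilon}$ regularization (or by perturbing $\mu$ upward and replacing $Y$ by $Y+\delta I$), so the dual value equals $I_{F}(\theta;\{\rho_{\theta}\}_{\theta})$ as well.

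The one genuinely delicate point — the main obstacle — is the rank-deficient case: confirming that $|v\rangle$ lies in the range of $Y$, equivalently that the finiteness condition in \eqref{eq:basis-independent-formula-SLD} is exactly what makes the SDP value both finite and equal to the pseudoinverse quadratic form, so that the Schur complement characterization applies with no full-rank hypothesis. Everything else, including the algebraic matching against Lemma~\ref{lem:freq-used-SDP-primal-dual} and the Slater argument, is routine bookkeeping.
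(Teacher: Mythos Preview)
Your proposal is correct and follows the same route as the paper: recognize $I_F = 2\langle v|Y^{-1}|v\rangle$ from \eqref{eq:basis-independent-SLD-formula-extra}, apply Lemma~\ref{lem:min-XYinvX} for the primal, and Lemma~\ref{lem:freq-used-SDP-primal-dual} with $K=1$, $W=|v\rangle$, $Z=Y$ for the dual. You are more careful than the paper on two points it leaves implicit: the support check $|v\rangle\in\operatorname{supp}(Y)$ under the finiteness condition (needed because Lemma~\ref{lem:min-XYinvX} is stated for positive definite $Y$), and the Slater argument for strong duality; both are handled correctly, and the $\rho_\theta^\varepsilon$ regularization via Proposition~\ref{prop:physical-consistency-SLD-Fish-states} is a clean way to sidestep the rank issue entirely.
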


\begin{proof}
First, we recall the following alternate expression for the SLD Fisher information given in \cite{Safranek2018}:
\begin{align}
I_{F}(\theta;\{\rho_{\theta}\}_{\theta})  &  =2\sum_{j,k:\lambda_{j}^{\theta
}+\lambda_{k}^{\theta}>0}\frac{|\langle\psi_{\theta}^{j}|(\partial_{\theta
}\rho_{\theta})|\psi_{\theta}^{k}\rangle|^{2}}{\lambda_{\theta}^{j}
+\lambda_{\theta}^{k}}\label{eq:basis-dependent-SLD-formula}\\
&  =2\langle\Gamma|\left(  (\partial_{\theta}\rho_{\theta})\otimes I\right)
\left(  \rho_{\theta}\otimes I+I\otimes\rho_{\theta}^{T}\right)  ^{-1}\left(
(\partial_{\theta}\rho_{\theta})\otimes I\right)  |\Gamma\rangle
\label{eq:basis-independent-SLD-formula-extra}\\
&  =2\left\Vert \left(  \rho_{\theta}\otimes I+I\otimes\rho_{\theta}
^{T}\right)  ^{-\frac{1}{2}}\left(  (\partial_{\theta}\rho_{\theta})\otimes
I\right)  |\Gamma\rangle\right\Vert _{2}^{2}.
\label{eq:basis-independent-formula-SLD-2}
\end{align}
The primal semi-definite program is a direct consequence of the formula in
\eqref{eq:basis-independent-SLD-formula-extra}\ and Lemma~\ref{lem:min-XYinvX}. The dual program is a consequence of
Lemma~\ref{lem:freq-used-SDP-primal-dual}.
\end{proof}

\subsection{Root SLD Fisher information of quantum states as a
quadratically constrained optimization}

Next, we find that the root SLD\ Fisher information of quantum
states can be computed by means of a quadratically constrained optimization.
Such an optimization problem is difficult to solve in general, but heuristic
methods are available \cite{Park2017}. In any case, the particular optimization
formula we find in Proposition~\ref{prop:root-SLD-opt-formula}\ is helpful for
establishing the chain rule property of the root SLD\ Fisher information,
which we previously discussed in Section~\ref{subsec:root-sld-amort-collapse}.

\begin{proposition}
\label{prop:root-SLD-opt-formula}Let $\{\rho_{\theta}\}_{\theta}$ be a
differentiable family of quantum states. Then the root SLD\ Fisher information
can be written as the following optimization:
\begin{equation}
\sqrt{I_{F}}(\theta;\{\rho_{\theta}\}_{\theta})=\sqrt{2}\sup_{X}\left\{
\left\vert \operatorname{Tr}[X(\partial_{\theta}\rho_{\theta})]\right\vert
:\operatorname{Tr}[(XX^{\dag}+X^{\dag}X)\rho_{\theta}]\leq1\right\}  .
\label{eq:root-SLD-opt-formula}
\end{equation}
If the finiteness condition in \eqref{eq:basis-independent-formula-SLD}
is not satisfied, then the optimization formula evaluates to $+\infty$.
\end{proposition}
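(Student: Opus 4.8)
The plan is to prove the two inequalities in \eqref{eq:root-SLD-opt-formula} separately, after first handling the case where the finiteness condition fails. Write $\rho:=\rho_{\theta}$, $\dot\rho:=\partial_{\theta}\rho_{\theta}$, abbreviate $F:=I_{F}(\theta;\{\rho_{\theta}\}_{\theta})$, and introduce the symmetrized multiplication superoperator $\mathbb{J}_{\rho}(Y):=\tfrac12(\rho Y+Y\rho)$. With respect to the Hilbert--Schmidt inner product $\langle A,B\rangle:=\operatorname{Tr}[A^{\dag}B]$, the superoperator $\mathbb{J}_{\rho}$ is self-adjoint and positive semidefinite, hence possesses a positive semidefinite square root $\mathbb{J}_{\rho}^{1/2}$. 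By definition the SLD operator $L:=L_{\theta}$ solves $\mathbb{J}_{\rho}(L)=\dot\rho$; such a solution exists precisely when $\dot\rho$ lies in the range of $\mathbb{J}_{\rho}$, which one checks is equivalent to the finiteness condition $\Pi_{\rho}^{\perp}\dot\rho\,\Pi_{\rho}^{\perp}=0$ appearing in \eqref{eq:basis-independent-formula-SLD}, and then $F=\operatorname{Tr}[L\dot\rho]=\langle L,\mathbb{J}_{\rho}(L)\rangle$ by \eqref{eq:SLD-FI}. For the infinite case: if $\Pi_{\rho}^{\perp}\dot\rho\,\Pi_{\rho}^{\perp}\neq 0$, then since this is a nonzero Hermitian operator supported on $\ker\rho$ there is a unit vector $|\psi\rangle\in\ker\rho$ with $\langle\psi|\dot\rho|\psi\rangle\neq 0$; taking $X=c|\psi\rangle\!\langle\psi|$ gives $\operatorname{Tr}[(XX^{\dag}+X^{\dag}X)\rho]=2|c|^{2}\langle\psi|\rho|\psi\rangle=0\leq 1$ while $|\operatorname{Tr}[X\dot\rho]|=|c|\,|\langle\psi|\dot\rho|\psi\rangle|\to\infty$, so the right-hand side of \eqref{eq:root-SLD-opt-formula} is $+\infty$, matching the left.

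For the upper bound, assume the finiteness condition holds and let $X$ satisfy $\operatorname{Tr}[(XX^{\dag}+X^{\dag}X)\rho]\leq 1$. Using self-adjointness of $\mathbb{J}_{\rho}$ together with $\mathbb{J}_{\rho}(L)=\dot\rho$, I would write $\operatorname{Tr}[X\dot\rho]=\langle X^{\dag},\mathbb{J}_{\rho}(L)\rangle=\langle\mathbb{J}_{\rho}^{1/2}(X^{\dag}),\mathbb{J}_{\rho}^{1/2}(L)\rangle$ and apply Cauchy--Schwarz to get $|\operatorname{Tr}[X\dot\rho]|^{2}\leq\langle X^{\dag},\mathbb{J}_{\rho}(X^{\dag})\rangle\,\langle L,\mathbb{J}_{\rho}(L)\rangle$. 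A short computation using cyclicity of the trace identifies the first factor as $\langle X^{\dag},\mathbb{J}_{\rho}(X^{\dag})\rangle=\tfrac12\operatorname{Tr}[(X^{\dag}X+XX^{\dag})\rho]\leq\tfrac12$ and the second as $\langle L,\mathbb{J}_{\rho}(L)\rangle=F$. Hence $\sqrt{2}\,|\operatorname{Tr}[X\dot\rho]|\leq\sqrt{F}$, and taking the supremum over feasible $X$ gives one direction.

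For achievability: if $F=0$ then $\dot\rho=0$ (this is exactly the faithfulness content of \eqref{eq:SLD-Fish-info-formula}) and both sides of \eqref{eq:root-SLD-opt-formula} vanish. Otherwise I would take $X=L/\sqrt{2F}$. Then $XX^{\dag}=X^{\dag}X=L^{2}/(2F)$, so $\operatorname{Tr}[(XX^{\dag}+X^{\dag}X)\rho]=\tfrac{1}{F}\operatorname{Tr}[L^{2}\rho]=1$ by \eqref{eq:SLD-FI}; thus $X$ is feasible, and since $X^{\dag}\propto L$ it saturates the Cauchy--Schwarz step above. Moreover $\sqrt{2}\,|\operatorname{Tr}[X\dot\rho]|=\sqrt{2}\,\tfrac{|\operatorname{Tr}[L\dot\rho]|}{\sqrt{2F}}=\sqrt{F}$, which supplies the reverse inequality and completes the proof.

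The point requiring the most care is the treatment of $\ker\rho$: one must verify that $\mathbb{J}_{\rho}$, though only positive semidefinite, is invertible on the orthogonal complement of its kernel; that the finiteness condition is genuinely equivalent to $\dot\rho\in\operatorname{ran}\mathbb{J}_{\rho}$ (so that $L$ exists, is determined on $(\ker\mathbb{J}_{\rho})^{\perp}$, and \eqref{eq:SLD-FI} applies); and that Cauchy--Schwarz is legitimate for the possibly degenerate form $(A,B)\mapsto\langle A,\mathbb{J}_{\rho}(B)\rangle$. An alternative that circumvents some of this is to establish the formula first for full-rank $\rho$ --- where it follows from ordinary Cauchy--Schwarz for the positive definite operator $G:=\rho\otimes I+I\otimes\rho^{T}$ applied to the vectorized identity \eqref{eq:basis-independent-SLD-formula-extra}, noting via the transpose trick \eqref{eq:transpose-trick} that $\operatorname{Tr}[(XX^{\dag}+X^{\dag}X)\rho]=\langle\Gamma|(X\otimes I)\,G\,(X^{\dag}\otimes I)|\Gamma\rangle$ --- and then pass to general $\rho$ through the limiting characterization in Proposition~\ref{prop:physical-consistency-SLD-Fish-states}, at the expense of checking continuity of the right-hand supremum along $\rho_{\theta}^{\varepsilon}\to\rho_{\theta}$.
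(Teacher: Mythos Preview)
Your proof is correct, and it takes a genuinely different route from the paper's argument. The paper works entirely in the vectorized picture: it starts from the identity $\tfrac{1}{\sqrt{2}}\sqrt{I_F}=\Vert(\rho\otimes I+I\otimes\rho^{T})^{-1/2}(\dot\rho\otimes I)|\Gamma\rangle\Vert_{2}$, invokes the variational characterization $\Vert v\Vert_{2}=\sup_{\Vert\psi\Vert=1}|\langle\psi|v\rangle|$, and then unwinds the substitution $|\psi\rangle=(\rho\otimes I+I\otimes\rho^{T})^{1/2}(X\otimes I)|\Gamma\rangle$ via the transpose trick to land on the stated constraint. The bulk of the paper's proof is bookkeeping on supports and on relaxing the equality constraint to an inequality. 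For the infinite case the paper chooses $X=c\,\Pi_{\rho}^{\perp}+dI$ with $2d^{2}=1$.

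Your approach is more direct and arguably more transparent: you work with the SLD operator $L$ and the symmetrized multiplication $\mathbb{J}_{\rho}$ on the Hilbert--Schmidt space, so that the upper bound is a one-line Cauchy--Schwarz for the (degenerate) inner product $(A,B)\mapsto\langle A,\mathbb{J}_{\rho}(B)\rangle$, and achievability is immediate from the explicit optimizer $X=L/\sqrt{2F}$. Your treatment of the infinite case with a rank-one $X=c|\psi\rangle\!\langle\psi|$ is also cleaner than the paper's. The price you pay is exactly the care you flag: one must check that the finiteness condition $\Pi_{\rho}^{\perp}\dot\rho\,\Pi_{\rho}^{\perp}=0$ really is equivalent to $\dot\rho\in\operatorname{ran}\mathbb{J}_{\rho}$ (it is --- the kernel of $\mathbb{J}_{\rho}$ consists precisely of operators supported on $\ker\rho$, and $\mathbb{J}_{\rho}$ is self-adjoint), and that Cauchy--Schwarz applies to the semidefinite form (it does, since $\mathbb{J}_{\rho}\geq 0$). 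The paper's vectorized route avoids mentioning $L$ at all, which is convenient downstream when the same optimization is lifted to channels; your route makes the connection to the SLD definition \eqref{eq:SLD-FI} manifest and produces the optimal $X$ for free.
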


\begin{proof}
Let us begin by supposing that the finiteness condition in
\eqref{eq:basis-independent-formula-SLD} is satisfied (i.e., $\Pi
_{\rho_{\theta}}^{\perp}(\partial_{\theta}\rho_{\theta})\Pi_{\rho_{\theta}
}^{\perp}=0$). Recall from \eqref{eq:basis-independent-SLD-formula-extra} the
following formula for SLD\ Fisher information:
\begin{equation}
I_{F}(\theta;\{\rho_{\theta}\}_{\theta})=2\langle\Gamma|\left(  \partial
_{\theta}\rho_{\theta}\otimes I\right)  \left(  \rho_{\theta}\otimes
I+I\otimes\rho_{\theta}^{T}\right)  ^{-1}\left(  \partial_{\theta}\rho
_{\theta}\otimes I\right)  |\Gamma\rangle,
\end{equation}
so that
\begin{align}
&  \frac{1}{\sqrt{2}}\sqrt{I_{F}}(\theta;\{\rho_{\theta}\}_{\theta
})\nonumber\\
&  =\sqrt{\langle\Gamma|\left(  \partial_{\theta}\rho_{\theta}\otimes
I\right)  \left(  \rho_{\theta}\otimes I+I\otimes\rho_{\theta}^{T}\right)
^{-1}\left(  \partial_{\theta}\rho_{\theta}\otimes I\right)  |\Gamma\rangle}\\
&  =\left\Vert \left(  \rho_{\theta}\otimes I+I\otimes\rho_{\theta}
^{T}\right)  ^{-\frac{1}{2}}\left(  \partial_{\theta}\rho_{\theta}\otimes
I\right)  |\Gamma\rangle\right\Vert _{2}\\
&  =\sup_{|\psi\rangle:\left\Vert |\psi\rangle\right\Vert _{2}=1}\left\vert
\langle\psi|\left(  \rho_{\theta}\otimes I+I\otimes\rho_{\theta}^{T}\right)
^{-\frac{1}{2}}\left(  \partial_{\theta}\rho_{\theta}\otimes I\right)
|\Gamma\rangle\right\vert . \label{eq:sqrt-SLD-Fish-opt-alt}
\end{align}
Observe that the projection onto the support of $\rho_{\theta}\otimes
I+I\otimes\rho_{\theta}^{T}$ is
\begin{equation}
\Pi_{\rho_{\theta}}\otimes\Pi_{\rho_{\theta}^{T}}+\Pi_{\rho_{\theta}}^{\perp
}\otimes\Pi_{\rho_{\theta}^{T}}+\Pi_{\rho_{\theta}}\otimes\Pi_{\rho_{\theta
}^{T}}^{\perp}=I\otimes I-\Pi_{\rho_{\theta}}^{\perp}\otimes\Pi_{\rho_{\theta
}^{T}}^{\perp}.
\end{equation}
Thus, it suffices to optimize over $|\psi\rangle$ satisfying
\begin{equation}
|\psi\rangle=(I\otimes I-\Pi_{\rho_{\theta}}^{\perp}\otimes\Pi_{\rho_{\theta
}^{T}}^{\perp})|\psi\rangle
\end{equation}
because
\begin{multline}
\left(  \rho_{\theta}\otimes I+I\otimes\rho_{\theta}^{T}\right)  ^{-\frac
{1}{2}}\left(  \partial_{\theta}\rho_{\theta}\otimes I\right)  |\Gamma
\rangle\\
=(I\otimes I-\Pi_{\rho_{\theta}}^{\perp}\otimes\Pi_{\rho_{\theta}^{T}}^{\perp
})\left(  \rho_{\theta}\otimes I+I\otimes\rho_{\theta}^{T}\right)  ^{-\frac
{1}{2}}\left(  \partial_{\theta}\rho_{\theta}\otimes I\right)  |\Gamma\rangle.
\end{multline}
Now define
\begin{equation}
|\psi^{\prime}\rangle:=\left(  \rho_{\theta}\otimes I+I\otimes\rho_{\theta
}^{T}\right)  ^{-\frac{1}{2}}|\psi\rangle,
\end{equation}
which implies that
\begin{equation}
|\psi\rangle=(I\otimes I-\Pi_{\rho_{\theta}}^{\perp}\otimes\Pi_{\rho_{\theta
}^{T}}^{\perp})|\psi\rangle=\left(  \rho_{\theta}\otimes I+I\otimes
\rho_{\theta}^{T}\right)  ^{\frac{1}{2}}|\psi^{\prime}\rangle,
\end{equation}
because $I\otimes I-\Pi_{\rho_{\theta}}^{\perp}\otimes\Pi_{\rho_{\theta}^{T}
}^{\perp}$ is the projection onto the support of $\rho_{\theta}\otimes
I+I\otimes\rho_{\theta}^{T}$. Thus, the following equivalence holds
\begin{align}
\left\Vert |\psi\rangle\right\Vert _{2}=1\quad &  \Longleftrightarrow
\quad\left\Vert \left(  \rho_{\theta}\otimes I+I\otimes\rho_{\theta}
^{T}\right)  ^{\frac{1}{2}}|\psi^{\prime}\rangle\right\Vert _{2}=1\\
&  \Longleftrightarrow\quad\langle\psi^{\prime}|\left(  \rho_{\theta}\otimes
I+I\otimes\rho_{\theta}^{T}\right)  |\psi^{\prime}\rangle=1.
\end{align}
Now fix the operator $X$ such that
\begin{equation}
|\psi^{\prime}\rangle=\left(  X\otimes I\right)  |\Gamma\rangle.
\end{equation}
Then the last condition above is the same as the following:
\begin{align}
1  &  =\langle\Gamma|\left(  X^{\dag}\otimes I\right)  \left(  \rho_{\theta
}\otimes I+I\otimes\rho_{\theta}^{T}\right)  \left(  X\otimes I\right)
|\Gamma\rangle\\
&  =\langle\Gamma|\left(  X^{\dag}\rho_{\theta}X\otimes I+X^{\dag}X\otimes
\rho_{\theta}^{T}\right)  |\Gamma\rangle\\
&  =\langle\Gamma|\left(  X^{\dag}\rho_{\theta}X\otimes I+X^{\dag}
X\rho_{\theta}\otimes I\right)  |\Gamma\rangle\\
&  =\operatorname{Tr}[X^{\dag}\rho_{\theta}X]+\operatorname{Tr}[X^{\dag}
X\rho_{\theta}]\\
&  =\operatorname{Tr}[(XX^{\dag}+X^{\dag}X)\rho_{\theta}],
\end{align}
where we used \eqref{eq:transpose-trick}\ and
\eqref{eq:max-ent-partial-trace} from Chapter~\ref{ch:prelims}. So then the optimization problem in \eqref{eq:sqrt-SLD-Fish-opt-alt}\ is equal to the following:
\begin{align}
&  \sup_{X:\operatorname{Tr}[(XX^{\dag}+X^{\dag}X)\rho_{\theta}]=1}\left\vert
\langle\Gamma|\left(  X\otimes I\right)  \left(  \partial_{\theta}\rho
_{\theta}\otimes I\right)  |\Gamma\rangle\right\vert \nonumber\\
&  =\sup_{X:\operatorname{Tr}[(XX^{\dag}+X^{\dag}X)\rho_{\theta}]=1}\left\vert
\langle\Gamma|\left(  X(\partial_{\theta}\rho_{\theta})\otimes I\right)
|\Gamma\rangle\right\vert \\
&  =\sup_{X}\left\{  \left\vert \operatorname{Tr}[X(\partial_{\theta}
\rho_{\theta})]\right\vert :\operatorname{Tr}[(XX^{\dag}+X^{\dag}
X)\rho_{\theta}]=1\right\}  ,
\end{align}
where again we used \eqref{eq:max-ent-partial-trace}. Now suppose that
$\operatorname{Tr}[(XX^{\dag}+X^{\dag}X)\rho_{\theta}]=c$, with $c\in(0,1)$.
Then we can multiply $X$ by $\sqrt{1/c}$, and the new operator satisfies the
equality constraint while the value of the objective function increases. So we
can write
\begin{equation}
\sqrt{I_{F}}(\theta;\{\rho_{\theta}\}_{\theta})=\sqrt{2}\sup_{X}\left\{
\left\vert \operatorname{Tr}[X(\partial_{\theta}\rho_{\theta})]\right\vert
:\operatorname{Tr}[(XX^{\dag}+X^{\dag}X)\rho_{\theta}]\leq1\right\}  .
\end{equation}
Finally, in this form, note that we can trivially include $X=0$ as part of the
optimization because it leads to a generally suboptimal value of zero for the
objective function.

Suppose that $\Pi_{\rho_{\theta}}^{\perp}(\partial_{\theta}\rho_{\theta}
)\Pi_{\rho_{\theta}}^{\perp}\neq0$. Then we can pick $X=c\Pi_{\rho_{\theta}
}^{\perp}+dI$ where $c,d>0$ and $2d^{2}=1$. We find that
\begin{align}
\operatorname{Tr}[(XX^{\dag}+X^{\dag}X)\rho_{\theta}]  &  =2\operatorname{Tr}
[\left(  c\Pi_{\rho_{\theta}}^{\perp}+dI\right)  ^{2}\rho_{\theta}]\\
&  =2\operatorname{Tr}[\left(  \left[  c^{2}+2cd\right]  \Pi_{\rho_{\theta}
}^{\perp}+d^{2}I\right)  \rho_{\theta}]\\
&  =2d^{2}=1.
\end{align}
for this case, so that the constraint in \eqref{eq:root-SLD-opt-formula}\ is
satisfied. The objective function then evaluates to
\begin{align}
\left\vert \operatorname{Tr}[X(\partial_{\theta}\rho_{\theta})]\right\vert  &
=\left\vert \operatorname{Tr}[\left(  c\Pi_{\rho_{\theta}}^{\perp}+dI\right)
(\partial_{\theta}\rho_{\theta})]\right\vert \\
&  =\left\vert c\operatorname{Tr}[\Pi_{\rho_{\theta}}^{\perp}(\partial
_{\theta}\rho_{\theta})]+d\operatorname{Tr}[\partial_{\theta}\rho_{\theta
}]\right\vert \\
&  =c\left\vert \operatorname{Tr}[\Pi_{\rho_{\theta}}^{\perp}(\partial
_{\theta}\rho_{\theta})]\right\vert .
\end{align}
Then we can pick $c>0$ arbitrarily large to get that
\eqref{eq:root-SLD-opt-formula} evaluates to $+\infty$ in the case that
$\Pi_{\rho_{\theta}}^{\perp}(\partial_{\theta}\rho_{\theta})\Pi_{\rho_{\theta
}}^{\perp}\neq0$.
\end{proof}

\subsection{Semi-definite programs for RLD Fisher information of quantum
states and channels}

First we give a semi-definite program for the RLD Fisher information of quantum states:

\begin{proposition}
The RLD\ Fisher information of a differentiable family $\{\rho_{\theta
}\}_{\theta}$ of states satisfying the support condition in
\eqref{eq:RLD-FI} can be evaluated by means of the following
semi-definite program:
\begin{equation}
\widehat{I}_{F}(\theta;\{\rho_{A}^{\theta}\})=\inf\left\{  \operatorname{Tr}
[M]:M\geq0,
\begin{bmatrix}
M & \partial_{\theta}\rho_{\theta}\\
\partial_{\theta}\rho_{\theta} & \rho_{\theta}
\end{bmatrix}
\geq0\right\}  .
\end{equation}
The dual semi-definite program is as follows:
\begin{equation}
\sup_{X,Y,Z}2\operatorname{Re}[\operatorname{Tr}[Y(\partial_{\theta}
\rho_{\theta})]]-\operatorname{Tr}[Z\rho_{\theta}],
\end{equation}
subject to $X$ and $Y$ being Hermitian and
\begin{equation}
X\leq I,\qquad
\begin{bmatrix}
X & Y^{\dag}\\
Y & Z
\end{bmatrix}
\geq0.
\end{equation}

\end{proposition}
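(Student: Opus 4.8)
The plan is to read off the RLD Fisher information $\widehat{I}_{F}(\theta;\{\rho_{\theta}\}_{\theta})=\operatorname{Tr}[(\partial_{\theta}\rho_{\theta})^{2}\rho_{\theta}^{-1}]$ from Definition~\ref{def:RLD-Fish-info-states} as the trace of the operator $X^{\dag}Y^{-1}X$ at the Hermitian choices $X=\partial_{\theta}\rho_{\theta}$ and $Y=\rho_{\theta}$, and then to feed this into the two lemmas already developed in this section. Since $(\partial_{\theta}\rho_{\theta})^{\dag}=\partial_{\theta}\rho_{\theta}$, we have $(\partial_{\theta}\rho_{\theta})\rho_{\theta}^{-1}(\partial_{\theta}\rho_{\theta})=X^{\dag}Y^{-1}X$. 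Lemma~\ref{lem:min-XYinvX} then gives $X^{\dag}Y^{-1}X=\min\{M:\left[\begin{smallmatrix}M & X^{\dag}\\ X & Y\end{smallmatrix}\right]\geq0\}$ in the L\"{o}wner order, and since the trace is monotone with respect to the L\"{o}wner order, taking traces on both sides yields $\operatorname{Tr}[X^{\dag}Y^{-1}X]=\inf\{\operatorname{Tr}[M]:\left[\begin{smallmatrix}M & \partial_{\theta}\rho_{\theta}\\ \partial_{\theta}\rho_{\theta} & \rho_{\theta}\end{smallmatrix}\right]\geq0\}$. The extra constraint $M\geq0$ appearing in the stated program is redundant, since the upper-left block of a positive semi-definite block operator is automatically positive semi-definite; it is written in only to present the program in manifest standard form.

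For the dual, I would invoke Lemma~\ref{lem:freq-used-SDP-primal-dual} with the identifications $K=I$, $W=W^{\dag}=\partial_{\theta}\rho_{\theta}$, and $Z=\rho_{\theta}$. The lemma then outputs the program $\sup_{P,Q,R}\{2\operatorname{Re}(\operatorname{Tr}[(\partial_{\theta}\rho_{\theta})Q])-\operatorname{Tr}[\rho_{\theta}R]:P\leq I,\ \left[\begin{smallmatrix}P & Q^{\dag}\\ Q & R\end{smallmatrix}\right]\geq0\}$, which is precisely the claimed dual after relabelling $(P,Q,R)\mapsto(X,Y,Z)$ and using $\operatorname{Tr}[(\partial_{\theta}\rho_{\theta})Y]=\operatorname{Tr}[Y(\partial_{\theta}\rho_{\theta})]$; one may restrict the optimization to Hermitian $Y$ without loss of generality because the problem data $\partial_{\theta}\rho_{\theta}$ and $\rho_{\theta}$ are Hermitian.

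The one step that needs care is that Lemma~\ref{lem:min-XYinvX} is stated for $Y$ positive \emph{definite}, whereas here $\rho_{\theta}$ is only positive semi-definite, the sole hypothesis being the support condition $\operatorname{supp}(\partial_{\theta}\rho_{\theta})\subseteq\operatorname{supp}(\rho_{\theta})$. I would bridge this with the generalized Schur complement: $\left[\begin{smallmatrix}M & X^{\dag}\\ X & Y\end{smallmatrix}\right]\geq0$ holds iff $Y\geq0$, $\Pi_{\rho_{\theta}}^{\perp}(\partial_{\theta}\rho_{\theta})=0$, and $M\geq X^{\dag}Y^{-1}X$ with the inverse taken on $\operatorname{supp}(Y)$ — the middle condition being exactly the support hypothesis, so the argument above goes through verbatim with the inverse understood on the support. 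Alternatively, one can apply the full-rank version to $\rho_{\theta}^{\varepsilon}:=(1-\varepsilon)\rho_{\theta}+\varepsilon\pi_{d}$ and send $\varepsilon\to0$, invoking Proposition~\ref{prop:physical-consistency-RLD-Fish-states} together with continuity of the SDP optimal value. I expect this support-condition bookkeeping to be the main (and really the only) obstacle, the rest being an immediate transcription of the two lemmas. For completeness I would also observe that when the support condition fails both sides equal $+\infty$: the left-hand side by definition, and the primal SDP because a vector $|v\rangle\in\ker\rho_{\theta}$ with $(\partial_{\theta}\rho_{\theta})|v\rangle\neq0$ obstructs positive semi-definiteness of the block for every choice of $M$.
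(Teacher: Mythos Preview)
Your proposal is correct and follows essentially the same approach as the paper: the paper's proof simply cites the formula in \eqref{eq:RLD-FI} together with Lemma~\ref{lem:min-XYinvX} for the primal and Lemma~\ref{lem:freq-used-SDP-primal-dual} for the dual, without further elaboration. Your additional care regarding the positive semi-definite (rather than positive definite) case via the generalized Schur complement or the $\varepsilon$-regularization of Proposition~\ref{prop:physical-consistency-RLD-Fish-states}, and your remark on the infeasibility of the primal when the support condition fails, go beyond what the paper spells out but are entirely in line with its argument.
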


\begin{proof}
The primal semi-definite program is a direct consequence of the formula of the RLD Fisher information in
\eqref{eq:RLD-FI}\ and Lemma~\ref{lem:min-XYinvX}. The dual program is found
by applying Lemma~\ref{lem:freq-used-SDP-primal-dual}.
\end{proof}

Using the explicit formula of the RLD Fisher information of quantum channels as defined in~\eqref{eq:RLD-Fish-ch}, we find the following semi-definite program for the RLD\ Fisher
information of quantum channels:

\begin{proposition}
Let $\{\mathcal{N}_{A\rightarrow B}^{\theta}\}_{\theta}$ be a differentiable
family of quantum channels such that the support condition in
\eqref{eq:RLD-Fish-ch} holds. Then the RLD Fisher
information of quantum channels can be calculated by means of the following
semi-definite program:
\begin{equation}
\widehat{I}_{F}(\theta;\{\mathcal{N}_{A\rightarrow B}^{\theta}\}_{\theta
})=\inf\lambda\in\mathbb{R}^{+}, \label{eq:SDP-SLD-fish-channels}
\end{equation}
subject to
\begin{equation}
\lambda I_{R}\geq\operatorname{Tr}_{B}[M_{RB}],\qquad
\begin{bmatrix}
M_{RB} & \partial_{\theta}\Gamma_{RB}^{\mathcal{N}^{\theta}}\\
\partial_{\theta}\Gamma_{RB}^{\mathcal{N}^{\theta}} & \Gamma_{RB}
^{\mathcal{N}^{\theta}}
\end{bmatrix}
\geq0. \label{eq:SDP-SLD-fish-channels-2}
\end{equation}
The dual program is given by
\begin{equation}
\sup_{\rho_{R}\geq0,P_{RB},Z_{RB},Q_{RB}}2\operatorname{Re}[\operatorname{Tr}
[Z_{RB}(\partial_{\theta}\Gamma_{RB}^{\mathcal{N}^{\theta}}
)]]-\operatorname{Tr}[Q_{RB}\Gamma_{RB}^{\mathcal{N}^{\theta}}],
\end{equation}
subject to
\begin{equation}
\operatorname{Tr}[\rho_{R}]\leq1,\quad
\begin{bmatrix}
P_{RB} & Z_{RB}^{\dag}\\
Z_{RB} & Q_{RB}
\end{bmatrix}
\geq0,\quad P_{RB}\leq\rho_{R}\otimes I_{B}.
\end{equation}

\end{proposition}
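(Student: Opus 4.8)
The plan is to start from the explicit formula \eqref{eq:RLD-Fish-ch} for the RLD Fisher information of quantum channels, namely
\[
\widehat{I}_{F}(\theta;\{\mathcal{N}_{A\rightarrow B}^{\theta}\}_{\theta})=\left\Vert \operatorname{Tr}_{B}[(\partial_{\theta}\Gamma_{RB}^{\mathcal{N}^{\theta}})(\Gamma_{RB}^{\mathcal{N}^{\theta}})^{-1}(\partial_{\theta}\Gamma_{RB}^{\mathcal{N}^{\theta}})]\right\Vert_{\infty},
\]
which holds under the support condition $\operatorname{supp}(\partial_{\theta}\Gamma_{RB}^{\mathcal{N}^{\theta}})\subseteq\operatorname{supp}(\Gamma_{RB}^{\mathcal{N}^{\theta}})$ assumed in the statement, and to rewrite each of its three ingredients — the (generalized) inverse sandwich, the partial trace, and the infinity norm — in semi-definite form. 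Combining these rewritings yields the primal program in \eqref{eq:SDP-SLD-fish-channels}--\eqref{eq:SDP-SLD-fish-channels-2}, and conic duality then produces the stated dual.

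First I would apply Lemma~\ref{lem:min-XYinvX} with $X=X^{\dag}=\partial_{\theta}\Gamma_{RB}^{\mathcal{N}^{\theta}}$ (Hermitian) and $Y=\Gamma_{RB}^{\mathcal{N}^{\theta}}$, giving
\[
(\partial_{\theta}\Gamma_{RB}^{\mathcal{N}^{\theta}})(\Gamma_{RB}^{\mathcal{N}^{\theta}})^{-1}(\partial_{\theta}\Gamma_{RB}^{\mathcal{N}^{\theta}})=\min\left\{M_{RB}:\begin{bmatrix}M_{RB}&\partial_{\theta}\Gamma_{RB}^{\mathcal{N}^{\theta}}\\\partial_{\theta}\Gamma_{RB}^{\mathcal{N}^{\theta}}&\Gamma_{RB}^{\mathcal{N}^{\theta}}\end{bmatrix}\geq0\right\}
\]
in the L\"owner order; here, since $\Gamma_{RB}^{\mathcal{N}^{\theta}}$ need not be full rank, I would invoke the Schur-complement lemma in its pseudoinverse form, whose hypothesis is precisely the range inclusion assumed, so the block operator is positive semi-definite and the inverse is read on the support. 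Next, because $\operatorname{Tr}_{B}$ is a positive linear map it is monotone for the L\"owner order, so the minimal $M_{RB}$ also minimizes $\operatorname{Tr}_{B}[M_{RB}]$, hence $\operatorname{Tr}_{B}[(\partial_{\theta}\Gamma)(\Gamma)^{-1}(\partial_{\theta}\Gamma)]$ equals the minimum of $\operatorname{Tr}_{B}[M_{RB}]$ over the same LMI. Finally, for positive semi-definite $Y_{R}$ one has $\Vert Y_{R}\Vert_{\infty}=\inf\{\lambda\in\mathbb{R}^{+}:\lambda I_{R}\geq Y_{R}\}$; applying this to $Y_{R}=\operatorname{Tr}_{B}[(\partial_{\theta}\Gamma)(\Gamma)^{-1}(\partial_{\theta}\Gamma)]$ and noting that enlarging $Y_{R}$ to any feasible $\operatorname{Tr}_{B}[M_{RB}]$ cannot decrease the optimal $\lambda$, the two infima merge into the joint optimization over $\lambda\in\mathbb{R}^{+}$ and Hermitian $M_{RB}$ displayed in \eqref{eq:SDP-SLD-fish-channels}--\eqref{eq:SDP-SLD-fish-channels-2}.

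For the dual, I would recast the primal in the standard conic form used in the proof of Lemma~\ref{lem:freq-used-SDP-primal-dual}, bundling the constraint $\lambda I_{R}-\operatorname{Tr}_{B}[M_{RB}]\geq0$ and the $2\times2$ block LMI into one block-diagonal positive-semidefiniteness condition, and then take the Lagrangian dual. The multiplier for $\lambda I_{R}-\operatorname{Tr}_{B}[M_{RB}]\geq0$ is a positive semi-definite $\rho_{R}$, with $\operatorname{Tr}[\rho_{R}]\leq1$ arising from the coefficient of $\lambda$ in the objective; the multiplier for the block LMI is a positive semi-definite block $\begin{bmatrix}P_{RB}&Z_{RB}^{\dag}\\Z_{RB}&Q_{RB}\end{bmatrix}$; stationarity in the free Hermitian variable $M_{RB}$ (using $\operatorname{Tr}[\rho_{R}\operatorname{Tr}_{B}[M_{RB}]]=\operatorname{Tr}[(\rho_{R}\otimes I_{B})M_{RB}]$) forces $P_{RB}\leq\rho_{R}\otimes I_{B}$, and the surviving terms assemble into $2\operatorname{Re}[\operatorname{Tr}[Z_{RB}(\partial_{\theta}\Gamma_{RB}^{\mathcal{N}^{\theta}})]]-\operatorname{Tr}[Q_{RB}\Gamma_{RB}^{\mathcal{N}^{\theta}}]$. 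To upgrade weak duality to equality I would exhibit a strictly feasible point — for instance, on the support, $M_{RB}$ slightly above $(\partial_{\theta}\Gamma)(\Gamma)^{-1}(\partial_{\theta}\Gamma)$ together with a sufficiently large $\lambda$ — so that Slater's condition holds and strong duality applies.

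The main obstacle I anticipate is the bookkeeping in this conic duality: the primal carries a free (merely Hermitian) matrix variable $M_{RB}$ in addition to $\lambda$, so it is not a verbatim instance of Lemma~\ref{lem:freq-used-SDP-primal-dual}, and one must track carefully how $M_{RB}$-stationarity yields the \emph{inequality} $P_{RB}\leq\rho_{R}\otimes I_{B}$ and how the two separate primal constraints combine in the dual objective. A secondary but genuine subtlety is justifying the generalized Schur-complement step when $\Gamma_{RB}^{\mathcal{N}^{\theta}}$ is singular — this is exactly where the support hypothesis of \eqref{eq:RLD-Fish-ch} is used — and verifying Slater's condition so that no duality gap remains.
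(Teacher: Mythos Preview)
Your proposal is correct and follows essentially the same route as the paper: the primal is obtained from the explicit formula \eqref{eq:RLD-Fish-ch} by combining Lemma~\ref{lem:min-XYinvX} with the characterization $\|W\|_\infty=\inf\{\lambda\geq0:W\leq\lambda I\}$, and the dual is then read off via standard conic/SDP duality. The paper presents the dual derivation by explicitly identifying the data $A$, $B$, $\Phi$, $X$, $Y$ in Watrous's standard form and computing $\Phi(X)$ from $\operatorname{Tr}[X\Phi^{\dag}(Y)]$, which is operationally the same as your Lagrangian bookkeeping; your additional remarks on the pseudoinverse Schur complement and Slater's condition are sound but not spelled out in the paper.
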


\begin{proof}
The form of the primal program follows directly from \eqref{eq:RLD-Fish-ch},
Lemma~\ref{lem:min-XYinvX}, and from the following characterization of the
infinity norm of a positive semi-definite operator $W$:
\begin{equation}
\left\Vert W\right\Vert _{\infty}=\inf\left\{  \lambda\geq0:W\leq\lambda
I\right\}  .
\end{equation}

To arrive at the dual program, we use the standard forms of primal and dual
semi-definite programs for Hermitian operators $A$ and $B$ and a
Hermiticity-preserving map $\Phi$ \cite{Watrous2018}:
\begin{equation}
\sup_{X\geq0}\left\{  \operatorname{Tr}[AX]:\Phi(X)\leq B\right\}  ,
\qquad\inf_{Y\geq0}\left\{  \operatorname{Tr}[BY]:\Phi^{\dag}(Y)\geq
A\right\}  . \label{eq:standard-SDP-form-RLD-ch-helper}
\end{equation}
From \eqref{eq:SDP-SLD-fish-channels}--\eqref{eq:SDP-SLD-fish-channels-2}, we
identify
\begin{align}
B  &  =
\begin{bmatrix}
1 & 0\\
0 & 0
\end{bmatrix}
,\quad Y=
\begin{bmatrix}
\lambda & 0\\
0 & M_{RB}
\end{bmatrix}
,\quad\Phi^{\dag}(Y)=
\begin{bmatrix}
\lambda I_{R}-\operatorname{Tr}_{B}[M_{RB}] & 0 & 0\\
0 & M_{RB} & 0\\
0 & 0 & 0
\end{bmatrix}
,\\
A  &  =
\begin{bmatrix}
0 & 0 & 0\\
0 & 0 & -\partial_{\theta}\Gamma_{RB}^{\mathcal{N}^{\theta}}\\
0 & -\partial_{\theta}\Gamma_{RB}^{\mathcal{N}^{\theta}} & -\Gamma
_{RB}^{\mathcal{N}^{\theta}}
\end{bmatrix}
.
\end{align}
Setting
\begin{equation}
X=
\begin{bmatrix}
\rho_{R} & 0 & 0\\
0 & P_{RB} & Z_{RB}^{\dag}\\
0 & Z_{RB} & Q_{RB}
\end{bmatrix}
,
\end{equation}
we find that
\begin{align}
\operatorname{Tr}[X\Phi^{\dag}(Y)]  &  =\operatorname{Tr}\left[
\begin{bmatrix}
\rho_{R} & 0 & 0\\
0 & P_{RB} & Z_{RB}^{\dag}\\
0 & Z_{RB} & Q_{RB}
\end{bmatrix}
\begin{bmatrix}
\lambda I_{R}-\operatorname{Tr}_{B}[M_{RB}] & 0 & 0\\
0 & M_{RB} & 0\\
0 & 0 & 0
\end{bmatrix}
\right] \\
&  =\operatorname{Tr}[\rho_{R}(\lambda I_{R}-\operatorname{Tr}_{B}
[M_{RB}])]+\operatorname{Tr}[P_{RB}M_{RB}]\\
&  =\lambda\operatorname{Tr}[\rho_{R}]+\operatorname{Tr}[(P_{RB}-\rho
_{R}\otimes I_{B})M_{RB}]\\
&  =\operatorname{Tr}\left[
\begin{bmatrix}
\lambda & 0\\
0 & M_{RB}
\end{bmatrix}
\begin{bmatrix}
\operatorname{Tr}[\rho_{R}] & 0\\
0 & P_{RB}-\rho_{R}\otimes I_{B}
\end{bmatrix}
\right]  ,
\end{align}
which implies that
\begin{equation}
\Phi(X)=
\begin{bmatrix}
\operatorname{Tr}[\rho_{R}] & 0\\
0 & P_{RB}-\rho_{R}\otimes I_{B}
\end{bmatrix}
.
\end{equation}
Then plugging into the left-hand side of
\eqref{eq:standard-SDP-form-RLD-ch-helper}, we find that the dual is given by
\begin{equation}
\sup_{\rho_{R},P_{RB},Z_{RB},Q_{RB}}\operatorname{Tr}\left[
\begin{bmatrix}
0 & 0 & 0\\
0 & 0 & -\partial_{\theta}\Gamma_{RB}^{\mathcal{N}^{\theta}}\\
0 & -\partial_{\theta}\Gamma_{RB}^{\mathcal{N}^{\theta}} & -\Gamma
_{RB}^{\mathcal{N}^{\theta}}
\end{bmatrix}
\begin{bmatrix}
W_{R} & 0 & 0\\
0 & P_{RB} & Z_{RB}^{\dag}\\
0 & Z_{RB} & Q_{RB}
\end{bmatrix}
\right]  ,
\end{equation}
subject to
\begin{equation}
\begin{bmatrix}
\rho_{R} & 0 & 0\\
0 & P_{RB} & Z_{RB}^{\dag}\\
0 & Z_{RB} & Q_{RB}
\end{bmatrix}
\geq0,\qquad
\begin{bmatrix}
\operatorname{Tr}[\rho_{R}] & 0\\
0 & P_{RB}-\rho_{R}\otimes I_{B}
\end{bmatrix}
\leq
\begin{bmatrix}
1 & 0\\
0 & 0
\end{bmatrix}
.
\end{equation}
Upon making the swap $Z_{RB}\rightarrow-Z_{RB}$, which does not change the
optimal value, and simplifying, we find the following form:
\begin{equation}
\sup_{\rho_{R}\geq0,P_{RB},Z_{RB},Q_{RB}}2\operatorname{Re}[\operatorname{Tr}
[Z_{RB}(\partial_{\theta}\Gamma_{RB}^{\mathcal{N}^{\theta}}
)]]-\operatorname{Tr}[Q_{RB}\Gamma_{RB}^{\mathcal{N}^{\theta}}],
\end{equation}
subject to
\begin{equation}
\operatorname{Tr}[\rho_{R}]\leq1,\quad
\begin{bmatrix}
P_{RB} & -Z_{RB}^{\dag}\\
-Z_{RB} & Q_{RB}
\end{bmatrix}
\geq0,\quad P_{RB}\leq\rho_{R}\otimes I_{B}.
\end{equation}
Then we note that
\begin{equation}
\begin{bmatrix}
P_{RB} & -Z_{RB}^{\dag}\\
-Z_{RB} & Q_{RB}
\end{bmatrix}
\geq0 \quad\Longleftrightarrow\quad
\begin{bmatrix}
P_{RB} & Z_{RB}^{\dag}\\
Z_{RB} & Q_{RB}
\end{bmatrix}
\geq0
\end{equation}
This concludes the proof.
\end{proof}

\subsection{Bilinear program for SLD\ Fisher information of quantum
channels}

Finally, we provide a bilinear program for the SLD Fisher information of quantum channels. To do so, we use Proposition~\ref{prop:SLD-Fish-states-SDP} and a number of subseqeuent manipulations:

\begin{proposition}
\label{prop:SDP-SLD-Fish}The SLD Fisher information of a differentiable family
$\{\mathcal{N}_{A\rightarrow B}^{\theta}\}_{\theta}$ of channels satisfying
the finiteness condition in \eqref{eq:finiteness-condition-SLD-fish-ch} can be
evaluated by means of the following bilinear program:
\begin{multline}
I_{F}(\theta;\{\mathcal{N}_{A\rightarrow B}^{\theta}
\})=\label{eq:SDP-SLD-ch-Fish}\\
2\sup_{\substack{\lambda,|\varphi\rangle_{RBR^{\prime}B^{\prime}
},\\W_{RBR^{\prime}B^{\prime}},Y_{R},\sigma_{R}}}\left(  2\operatorname{Re}
[\langle\varphi|_{RBR^{\prime}B^{\prime}}(\partial_{\theta}\Gamma
_{RB}^{\mathcal{N}^{\theta}})|\Gamma\rangle_{RR^{\prime}BB^{\prime}
}]-\operatorname{Tr}[Y_{R}\Phi(W_{RBR^{\prime}B^{\prime}})]\right)
\end{multline}
subject to
\begin{equation}
\sigma_{R}\geq0,\quad\operatorname{Tr}[\sigma_{R}]=1,\quad\lambda\leq1,\quad
\begin{bmatrix}
\lambda & \langle\varphi|_{RBR^{\prime}B^{\prime}}\\
|\varphi\rangle_{RBR^{\prime}B^{\prime}} & W_{RBR^{\prime}B^{\prime}}
\end{bmatrix}
\geq0,\quad
\begin{bmatrix}
\sigma_{R} & I_{R}\\
I_{R} & Y_{R}
\end{bmatrix}
\geq0.
\end{equation}
where
\begin{align}
|\Gamma\rangle_{RR^{\prime}BB^{\prime}}  &  :=|\Gamma\rangle_{RR^{\prime}
}\otimes|\Gamma\rangle_{BB^{\prime}},\\
\Phi(W_{RBR^{\prime}B^{\prime}})  &  :=(\operatorname{Tr}_{BR^{\prime
}B^{\prime}}[\Gamma_{R^{\prime}B^{\prime}}^{\mathcal{N}^{\theta}}\left(
F_{RR^{\prime}}\otimes F_{BB^{\prime}}\right)  W_{RBR^{\prime}B^{\prime}
}\left(  F_{RR^{\prime}}\otimes F_{BB^{\prime}}\right)  ^{\dag}])^{T}
\nonumber\\
&  \qquad\qquad+\operatorname{Tr}_{BR^{\prime}B^{\prime}}[(\Gamma_{R^{\prime
}B^{\prime}}^{\mathcal{N}^{\theta}})^{T}W_{RBR^{\prime}B^{\prime}}],
\end{align}
and $F_{RR^{\prime}}$ is the flip or swap operator that swaps systems $R$ and
$R^{\prime}$, with a similar definition for $F_{BB^{\prime}}$ but for $B$ and
$B^{\prime}$.
\end{proposition}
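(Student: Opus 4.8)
The plan is to take the state-level semidefinite program of Proposition~\ref{prop:SLD-Fish-states-SDP}, substitute the optimal channel input in Choi form, and then push every operator coming from the input state through the identities of Chapter~\ref{ch:prelims} until only a density operator on the reference system plus a handful of semidefinite slack variables remain. By Remark~\ref{rem:restrict-to-pure-bipartite} it suffices to evaluate $\sup_{\psi_{RA}}I_F(\theta;\{\mathcal{N}_{A\to B}^\theta(\psi_{RA})\}_\theta)$ over pure bipartite states with $R\cong A$; by a density argument I restrict to $\psi_A>0$, writing $\psi_{RA}=X_R\Gamma_{RA}X_R^\dagger$ with $X_R$ invertible and $\operatorname{Tr}[X_R^\dagger X_R]=1$, so that $\mathcal{N}_{A\to B}^\theta(\psi_{RA})=X_R\Gamma_{RB}^{\mathcal{N}^\theta}X_R^\dagger$ and $\partial_\theta(\mathcal{N}_{A\to B}^\theta(\psi_{RA}))=X_R(\partial_\theta\Gamma_{RB}^{\mathcal{N}^\theta})X_R^\dagger$. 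The channel finiteness condition \eqref{eq:finiteness-condition-SLD-fish-ch} ensures the state finiteness condition of Proposition~\ref{prop:SLD-Fish-states-SDP} holds for every such input, so that SDP applies throughout.

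Next I would substitute this into the dual SDP of Proposition~\ref{prop:SLD-Fish-states-SDP}, producing a joint supremum over $X_R$ and the dual variables $\lambda,|\varphi\rangle,Z$ on the doubled system $RBR'B'$. The heart of the argument is the bookkeeping that follows. In the term $\operatorname{Tr}[(\rho_\theta\otimes I+I\otimes\rho_\theta^T)Z]$, I move the $X_R,X_R^\dagger$ off of $\Gamma_{RB}^{\mathcal{N}^\theta}$: into $Z$ by a congruence (legitimate because $X_R$ is invertible), which turns $Z$ into the new variable $W_{RBR'B'}$ and rescales $|\varphi\rangle$ while preserving the block constraint $\begin{bmatrix}\lambda & \langle\varphi|\\ |\varphi\rangle & W\end{bmatrix}\geq0$; and onto the maximally entangled vectors via the transpose trick \eqref{eq:transpose-trick}, which is where the flip operators $F_{RR'},F_{BB'}$ and the primed Choi operator $\Gamma_{R'B'}^{\mathcal{N}^\theta}$ enter and assemble into the map $\Phi$. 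After this the only surviving dependence on $X_R$ is through $\sigma_R:=X_R^\dagger X_R$, with $\operatorname{Tr}[\sigma_R]=1$, and the term $\operatorname{Tr}[\sigma_R^{-1}\Phi(W_{RBR'B'})]$ that appears is linearized by the slack variable $Y_R$ obeying $\begin{bmatrix}\sigma_R & I_R\\ I_R & Y_R\end{bmatrix}\geq0$, i.e.\ $Y_R\geq\sigma_R^{-1}$; since $\Phi(W_{RBR'B'})\geq0$ whenever $W\geq0$ and the objective contains $-\operatorname{Tr}[Y_R\Phi(W_{RBR'B'})]$, the choice $Y_R=\sigma_R^{-1}$ is optimal and nothing is lost. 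Collecting terms and keeping the overall factor of $2$ from Proposition~\ref{prop:SLD-Fish-states-SDP} yields \eqref{eq:SDP-SLD-ch-Fish}, which is bilinear because the objective pairs $Y_R$ with $W$ through $\Phi$ while the constraints couple $(\sigma_R,Y_R)$ on one side and $(\lambda,|\varphi\rangle,W)$ on the other.

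The main obstacle I anticipate is precisely this index bookkeeping: tracking transposes and complex conjugations as $X_R,X_R^\dagger$ are pushed across $|\Gamma\rangle_{RR'}$ and $|\Gamma\rangle_{BB'}$, and checking that the relabeling of $\Gamma_{RB}^{\mathcal{N}^\theta}$ as an operator on $R'B'$ is implemented exactly by the flip operators in the definition of $\Phi$. A secondary point requiring care is the density reduction to $\psi_A>0$ together with continuity of the objective, which is what licenses the congruence step and the appearance of $\sigma_R^{-1}$; this is analogous to the reduction used in the proof of Proposition~\ref{prop:geo-fish-explicit-formula-1st-order}.
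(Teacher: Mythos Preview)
Your approach is essentially the paper's, with one harmless reordering: you apply the dual form of Proposition~\ref{prop:SLD-Fish-states-SDP} first and then change dual variables via the congruence by $X_R\otimes\bar{X}_{R'}$, whereas the paper factors $Z_R\otimes\bar{Z}_{R'}$ out of the \emph{primal} block constraint and only then invokes Lemma~\ref{lem:freq-used-SDP-primal-dual} to dualize. Both routes land on the same intermediate dual
\[
\sup_{\lambda,|\varphi\rangle,W}\ 2\operatorname{Re}[\langle\varphi|(\partial_\theta\Gamma_{RB}^{\mathcal{N}^\theta})|\Gamma\rangle]-\operatorname{Tr}\bigl[(\Gamma_{RB}^{\mathcal{N}^\theta}\otimes\sigma_{R'}^{-T}\otimes I_{B'}+\sigma_R^{-1}\otimes I_B\otimes(\Gamma_{R'B'}^{\mathcal{N}^\theta})^T)W\bigr],
\]
after which the $Y_R$ linearization via Lemma~\ref{lem:min-XYinvX} and the final supremum over $\sigma_R$ proceed identically.

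One correction to your bookkeeping, since you flagged it as the main obstacle: the flip operators do \emph{not} enter through the transpose trick on the maximally entangled vectors. The transpose trick is what makes the first term collapse cleanly to $\langle\varphi'|(\partial_\theta\Gamma_{RB}^{\mathcal{N}^\theta})|\Gamma\rangle$ after the congruence. The swaps $F_{RR'}\otimes F_{BB'}$ appear only in the second term, and only at the \emph{next} step, when you need to merge the two pieces $\Gamma_{RB}^{\mathcal{N}^\theta}\otimes\sigma_{R'}^{-T}$ and $\sigma_R^{-1}\otimes(\Gamma_{R'B'}^{\mathcal{N}^\theta})^T$ into a common factor $\sigma_R^{-1}$: you swap $RB\leftrightarrow R'B'$ in the first piece so that $\sigma^{-T}$ lands on $R$, then take a transpose to convert $\sigma_R^{-T}$ into $\sigma_R^{-1}$. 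That is exactly what produces the two summands in the definition of $\Phi$. Your claim that $\Phi(W)\geq0$ for $W\geq0$ (needed so the optimum in $Y_R$ sits at $\sigma_R^{-1}$) is correct, since each summand is a partial trace of $W$ conjugated and then multiplied by a positive operator supported on the traced-out systems.
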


The optimization above is a jointly constrained semi-definite bilinear program
\cite{Huber2018}\ because the variables $Y_{R}$ and $W_{RBR^{\prime}B^{\prime}}$
are operators involved in the optimization and they multiply each other in the
last expression in \eqref{eq:SDP-SLD-ch-Fish}. This kind of optimization can
be approached with a heuristic \textquotedblleft seesaw\textquotedblright
\ method, and more advanced methods are available in~\cite{Huber2018}.

\begin{proof}
Recall that the Fisher information of channels is defined as the following
optimization over pure state inputs:
\begin{equation}
I_{F}(\theta;\{\mathcal{N}_{A\rightarrow B}^{\theta}\})=\sup_{\psi_{RA}}
I_{F}(\theta;\{\mathcal{N}_{A\rightarrow B}^{\theta}(\psi_{RA})\}).
\end{equation}
It suffices to optimize over pure state inputs $\psi_{RA}$ such that the
reduced state $\psi_{R}>0$, because this set is dense in the set of all pure
bipartite states. Now consider a fixed input state $\psi_{RA}$, and recall
that it can be written as follows:
\begin{equation}
\psi_{RA}=Z_{R}\Gamma_{RA}Z_{R}^{\dag},
\end{equation}
where $Z_{R}$ is an invertible operator satisfying $\operatorname{Tr}
[Z_{R}^{\dag}Z_{R}]=1$. Then the output state is as follows:
\begin{equation}
\omega_{RB}^{\theta}:=\mathcal{N}_{A\rightarrow B}^{\theta}(\psi_{RA}
)=Z_{R}\Gamma_{RB}^{\mathcal{N}^{\theta}}Z_{R}^{\dag},
\end{equation}
and we find that
\begin{multline}
\frac{1}{2}I_{F}(\theta;\{\mathcal{N}_{A\rightarrow B}^{\theta}(\psi
_{RA})\})\label{eq:SDP-SLD-Fisher-fixed-state}\\
=\inf\left\{  \mu:
\begin{bmatrix}
\mu & \langle\Gamma|_{RR^{\prime}BB^{\prime}}\left(  \partial_{\theta}
\omega_{RB}^{\theta}\otimes I_{R^{\prime}B^{\prime}}\right) \\
\left(  \partial_{\theta}\omega_{RB}^{\theta}\otimes I_{R^{\prime}B^{\prime}
}\right)  |\Gamma\rangle_{RR^{\prime}BB^{\prime}} & \omega_{RB}^{\theta
}\otimes I_{R^{\prime}B^{\prime}}+I_{RB}\otimes(\omega_{R^{\prime}B^{\prime}
}^{\theta})^{T}
\end{bmatrix}
\geq0\right\}  ,
\end{multline}
by applying Proposition~\ref{prop:SLD-Fish-states-SDP}.

Now consider that
\begin{align}
&
\begin{bmatrix}
\mu & \langle\Gamma|_{RR^{\prime}BB^{\prime}}\left(  \partial_{\theta}
\omega_{RB}^{\theta}\otimes I_{R^{\prime}B^{\prime}}\right) \\
\left(  \partial_{\theta}\omega_{RB}^{\theta}\otimes I_{R^{\prime}B^{\prime}
}\right)  |\Gamma\rangle_{RR^{\prime}BB^{\prime}} & \omega_{RB}^{\theta
}\otimes I_{R^{\prime}B^{\prime}}+I_{RB}\otimes(\omega_{R^{\prime}B^{\prime}
}^{\theta})^{T}
\end{bmatrix}
\nonumber\\
&  =
\begin{bmatrix}
\mu & \langle\Gamma|_{RR^{\prime}BB^{\prime}}\left(  Z_{R}(\partial_{\theta
}\Gamma_{RB}^{\mathcal{N}^{\theta}})Z_{R}^{\dag}\otimes I_{R^{\prime}
B^{\prime}}\right) \\
\left(  Z_{R}(\partial_{\theta}\Gamma_{RB}^{\mathcal{N}^{\theta}})Z_{R}^{\dag
}\otimes I_{R^{\prime}B^{\prime}}\right)  |\Gamma\rangle_{RR^{\prime
}BB^{\prime}} & Z_{R}\Gamma_{RB}^{\mathcal{N}^{\theta}}Z_{R}^{\dag}\otimes
I_{R^{\prime}B^{\prime}}+I_{RB}\otimes\overline{Z}_{R^{\prime}}(\Gamma
_{R^{\prime}B^{\prime}}^{\mathcal{N}^{\theta}})^{T}Z_{R^{\prime}}^{T}
\end{bmatrix}
\label{eq:bilinear-SLD-Fish-ch-proof-1}\\
&  =
\begin{bmatrix}
1 & 0\\
0 & Z_{R}\otimes I_{B}\otimes\overline{Z}_{R^{\prime}}\otimes I_{B^{\prime}}
\end{bmatrix}
\times\nonumber\\
&  \qquad
\begin{bmatrix}
\mu & \langle\Gamma|_{RR^{\prime}BB^{\prime}}\left(  (\partial_{\theta}
\Gamma_{RB}^{\mathcal{N}^{\theta}})\otimes I_{R^{\prime}B^{\prime}}\right) \\
\left(  (\partial_{\theta}\Gamma_{RB}^{\mathcal{N}^{\theta}})\otimes
I_{R^{\prime}B^{\prime}}\right)  |\Gamma\rangle_{RR^{\prime}BB^{\prime}} &
\Gamma_{RB}^{\mathcal{N}^{\theta}}\otimes\sigma_{R^{\prime}}^{-T}\otimes
I_{B^{\prime}}+\sigma_{R}^{-1}\otimes I_{B}\otimes(\Gamma_{R^{\prime}
B^{\prime}}^{\mathcal{N}^{\theta}})^{T}
\end{bmatrix}
\times\nonumber\\
&  \qquad
\begin{bmatrix}
1 & 0\\
0 & Z_{R}\otimes I_{B}\otimes\overline{Z}_{R^{\prime}}\otimes I_{B^{\prime}}
\end{bmatrix}
^{\dag}, \label{eq:bilinear-SLD-Fish-ch-proof-2}
\end{align}
where we define
\begin{equation}
\sigma_{R}:=Z_{R}^{\dag}Z_{R},
\end{equation}
and we applied the following observations:
\begin{align}
&  \left(  Z_{R}(\partial_{\theta}\Gamma_{RB}^{\mathcal{N}^{\theta}}
)Z_{R}^{\dag}\otimes I_{R^{\prime}B^{\prime}}\right)  |\Gamma\rangle
_{RR^{\prime}BB^{\prime}}\nonumber\\
&  =\left(  Z_{R}(\partial_{\theta}\Gamma_{RB}^{\mathcal{N}^{\theta}}
)\otimes\overline{Z}_{R^{\prime}}\otimes I_{B^{\prime}}\right)  |\Gamma
\rangle_{RR^{\prime}BB^{\prime}}\\
&  =\left(  Z_{R}\otimes\overline{Z}_{R^{\prime}}\right)  \left(
(\partial_{\theta}\Gamma_{RB}^{\mathcal{N}^{\theta}})\otimes I_{R^{\prime
}B^{\prime}}\right)  |\Gamma\rangle_{RR^{\prime}BB^{\prime}},
\end{align}
\begin{align}
&  Z_{R}\Gamma_{RB}^{\mathcal{N}^{\theta}}Z_{R}^{\dag}\otimes I_{R^{\prime
}B^{\prime}}+I_{RB}\otimes\overline{Z}_{R^{\prime}}(\Gamma_{R^{\prime
}B^{\prime}}^{\mathcal{N}^{\theta}})^{T}Z_{R^{\prime}}^{T}\nonumber\\
&  =Z_{R}\Gamma_{RB}^{\mathcal{N}^{\theta}}Z_{R}^{\dag}\otimes\overline
{Z}_{R^{\prime}}\left(  \overline{Z}_{R^{\prime}}\right)  ^{-1}\left(
Z_{R^{\prime}}^{T}\right)  ^{-1}Z_{R^{\prime}}^{T}\otimes I_{B^{\prime}
}\nonumber\\
&  \quad+Z_{R}\left(  Z_{R}\right)  ^{-1}\left(  Z_{R}^{\dag}\right)
^{-1}Z_{R}^{\dag}\otimes I_{B}\otimes\overline{Z}_{R^{\prime}}(\Gamma
_{R^{\prime}B^{\prime}}^{\mathcal{N}^{\theta}})^{T}Z_{R^{\prime}}^{T}\\
&  =Z_{R}\Gamma_{RB}^{\mathcal{N}^{\theta}}Z_{R}^{\dag}\otimes\overline
{Z}_{R^{\prime}}\sigma_{R}^{-T}Z_{R^{\prime}}^{T}\otimes I_{B^{\prime}}
+Z_{R}\sigma_{R}^{-1}Z_{R}^{\dag}\otimes I_{B}\otimes\overline{Z}_{R^{\prime}
}(\Gamma_{R^{\prime}B^{\prime}}^{\mathcal{N}^{\theta}})^{T}Z_{R^{\prime}}
^{T}\\
&  =\left(  Z_{R}\otimes I_{B}\otimes\overline{Z}_{R^{\prime}}\otimes
I_{B^{\prime}}\right)  \left(  \Gamma_{RB}^{\mathcal{N}^{\theta}}\otimes
\sigma_{R}^{-T}\otimes I_{B^{\prime}}\right)  \left(  Z_{R}\otimes
I_{B}\otimes\overline{Z}_{R^{\prime}}\otimes I_{B^{\prime}}\right)  ^{\dag
}\nonumber\\
&  \quad+\left(  Z_{R}\otimes I_{B}\otimes\overline{Z}_{R^{\prime}}\otimes
I_{B^{\prime}}\right)  \left(  \sigma_{R}^{-1}\otimes I_{B}\otimes
(\Gamma_{R^{\prime}B^{\prime}}^{\mathcal{N}^{\theta}})^{T}\right)  \left(
Z_{R}\otimes I_{B}\otimes\overline{Z}_{R^{\prime}}\otimes I_{B^{\prime}
}\right)  ^{\dag}\\
&  =\left(  Z_{R}\otimes I_{B}\otimes\overline{Z}_{R^{\prime}}\otimes
I_{B^{\prime}}\right)  \left(  \Gamma_{RB}^{\mathcal{N}^{\theta}}\otimes
\sigma_{R}^{-T}\otimes I_{B^{\prime}}+\sigma_{R}^{-1}\otimes I_{B}
\otimes(\Gamma_{R^{\prime}B^{\prime}}^{\mathcal{N}^{\theta}})^{T}\right)
\nonumber\\
&  \quad\times\left(  Z_{R}\otimes I_{B}\otimes\overline{Z}_{R^{\prime}
}\otimes I_{B^{\prime}}\right)  ^{\dag}.
\end{align}
Since the first matrix in
\eqref{eq:bilinear-SLD-Fish-ch-proof-1}--\eqref{eq:bilinear-SLD-Fish-ch-proof-2}
above is positive semi-definite if and only if the last one is, the
semi-definite program in \eqref{eq:SDP-SLD-Fisher-fixed-state} becomes as
follows:
\begin{equation}
\inf\left\{  \mu:
\begin{bmatrix}
\mu & \langle\Gamma|_{RR^{\prime}BB^{\prime}}\left(  (\partial_{\theta}
\Gamma_{RB}^{\mathcal{N}^{\theta}})\otimes I_{R^{\prime}B^{\prime}}\right) \\
\left(  (\partial_{\theta}\Gamma_{RB}^{\mathcal{N}^{\theta}})\otimes
I_{R^{\prime}B^{\prime}}\right)  |\Gamma\rangle_{RR^{\prime}BB^{\prime}} &
\Gamma_{RB}^{\mathcal{N}^{\theta}}\otimes\sigma_{R^{\prime}}^{-T}\otimes
I_{B^{\prime}}+\sigma_{R}^{-1}\otimes I_{B}\otimes(\Gamma_{R^{\prime}
B^{\prime}}^{\mathcal{N}^{\theta}})^{T}
\end{bmatrix}
\geq0\right\}  . \label{eq:SDP-first-primal-SLD-Fish-ch}
\end{equation}
By invoking Lemma~\ref{lem:freq-used-SDP-primal-dual}, the dual of this
program is given by
\begin{multline}
\sup_{\lambda,|\varphi\rangle_{RBR^{\prime}B^{\prime}},W_{RBR^{\prime
}B^{\prime}}}2\operatorname{Re}[\langle\varphi|_{RBR^{\prime}B^{\prime}
}(\partial_{\theta}\Gamma_{RB}^{\mathcal{N}^{\theta}})|\Gamma\rangle
_{RR^{\prime}BB^{\prime}}]\label{eq:SDP-first-dual-SLD-Fish-ch}\\
-\operatorname{Tr}[(\Gamma_{RB}^{\mathcal{N}^{\theta}}\otimes\sigma
_{R^{\prime}}^{-T}\otimes I_{B^{\prime}}+\sigma_{R}^{-1}\otimes I_{B}
\otimes(\Gamma_{R^{\prime}B^{\prime}}^{\mathcal{N}^{\theta}})^{T}
)W_{RBR^{\prime}B^{\prime}}]
\end{multline}
subject to
\begin{equation}
\lambda\leq1,\qquad
\begin{bmatrix}
\lambda & \langle\varphi|_{RBR^{\prime}B^{\prime}}\\
|\varphi\rangle_{RBR^{\prime}B^{\prime}} & W_{RBR^{\prime}B^{\prime}}
\end{bmatrix}
\geq0. \label{eq:SDP-first-dual-SLD-Fish-ch-constraints}
\end{equation}
Strong duality holds, so that \eqref{eq:SDP-first-dual-SLD-Fish-ch} is equal
to \eqref{eq:SDP-first-primal-SLD-Fish-ch}, because we are free to choose
values $\lambda$, $|\varphi\rangle_{RBR^{\prime}B^{\prime}}$, and
$W_{RBR^{\prime}B^{\prime}}$ such that the constraints in
\eqref{eq:SDP-first-dual-SLD-Fish-ch-constraints}\ are strict. Employing the
unitary swap operators $F_{RR^{\prime}}$ and $F_{BB^{\prime}}$, we can rewrite
the second term in the objective function as follows:
\begin{align}
&  \operatorname{Tr}[(\Gamma_{RB}^{\mathcal{N}^{\theta}}\otimes\sigma
_{R^{\prime}}^{-T}\otimes I_{B^{\prime}}+\sigma_{R}^{-1}\otimes I_{B}
\otimes(\Gamma_{R^{\prime}B^{\prime}}^{\mathcal{N}^{\theta}})^{T}
)W_{RBR^{\prime}B^{\prime}}]\nonumber\\
&  =\operatorname{Tr}[(\Gamma_{RB}^{\mathcal{N}^{\theta}}\otimes
\sigma_{R^{\prime}}^{-T}\otimes I_{B^{\prime}})W_{RBR^{\prime}B^{\prime}
}]+\operatorname{Tr}[(\sigma_{R}^{-1}\otimes I_{B}\otimes(\Gamma_{R^{\prime
}B^{\prime}}^{\mathcal{N}^{\theta}})^{T})W_{RBR^{\prime}B^{\prime}}]\\
&  =\operatorname{Tr}[(\left(  F_{RR^{\prime}}\otimes F_{BB^{\prime}}\right)
(\sigma_{R}^{-T}\otimes I_{B}\otimes\Gamma_{R^{\prime}B^{\prime}}
^{\mathcal{N}^{\theta}})\left(  F_{RR^{\prime}}\otimes F_{BB^{\prime}}\right)
W_{RBR^{\prime}B^{\prime}}]\nonumber\\
&  \qquad+\operatorname{Tr}[\sigma_{R}^{-1}\operatorname{Tr}_{BR^{\prime
}B^{\prime}}[(\Gamma_{R^{\prime}B^{\prime}}^{\mathcal{N}^{\theta}}
)^{T}W_{RBR^{\prime}B^{\prime}}]]\\
&  =\operatorname{Tr}[((\sigma_{R}^{-T}\otimes I_{B}\otimes\Gamma_{R^{\prime
}B^{\prime}}^{\mathcal{N}^{\theta}})\left(  F_{RR^{\prime}}\otimes
F_{BB^{\prime}}\right)  W_{RBR^{\prime}B^{\prime}}\left(  F_{RR^{\prime}
}\otimes F_{BB^{\prime}}\right)  ]\nonumber\\
&  \qquad+\operatorname{Tr}[\sigma_{R}^{-1}\operatorname{Tr}_{BR^{\prime
}B^{\prime}}[(\Gamma_{R^{\prime}B^{\prime}}^{\mathcal{N}^{\theta}}
)^{T}W_{RBR^{\prime}B^{\prime}}]]\\
&  =\operatorname{Tr}[\sigma_{R}^{-T}\operatorname{Tr}_{BR^{\prime}B^{\prime}
}[\Gamma_{R^{\prime}B^{\prime}}^{\mathcal{N}^{\theta}}\left(  F_{RR^{\prime}
}\otimes F_{BB^{\prime}}\right)  W_{RBR^{\prime}B^{\prime}}\left(
F_{RR^{\prime}}\otimes F_{BB^{\prime}}\right)  ]]\nonumber\\
&  \qquad+\operatorname{Tr}[\sigma_{R}^{-1}\operatorname{Tr}_{BR^{\prime
}B^{\prime}}[(\Gamma_{R^{\prime}B^{\prime}}^{\mathcal{N}^{\theta}}
)^{T}W_{RBR^{\prime}B^{\prime}}]]\\
&  =\operatorname{Tr}[\sigma_{R}^{-1}(\operatorname{Tr}_{BR^{\prime}B^{\prime
}}[\Gamma_{R^{\prime}B^{\prime}}^{\mathcal{N}^{\theta}}\left(  F_{RR^{\prime}
}\otimes F_{BB^{\prime}}\right)  W_{RBR^{\prime}B^{\prime}}\left(
F_{RR^{\prime}}\otimes F_{BB^{\prime}}\right)  ])^{T}]\nonumber\\
&  \qquad+\operatorname{Tr}[\sigma_{R}^{-1}\operatorname{Tr}_{BR^{\prime
}B^{\prime}}[(\Gamma_{R^{\prime}B^{\prime}}^{\mathcal{N}^{\theta}}
)^{T}W_{RBR^{\prime}B^{\prime}}]]\\
&  =\operatorname{Tr}[\sigma_{R}^{-1}K_{R}],
\end{align}
where
\begin{multline}
K_{R}=(\operatorname{Tr}_{BR^{\prime}B^{\prime}}[\Gamma_{R^{\prime}B^{\prime}
}^{\mathcal{N}^{\theta}}\left(  F_{RR^{\prime}}\otimes F_{BB^{\prime}}\right)
W_{RBR^{\prime}B^{\prime}}\left(  F_{RR^{\prime}}\otimes F_{BB^{\prime}
}\right)  ])^{T}\\
+\operatorname{Tr}_{BR^{\prime}B^{\prime}}[(\Gamma_{R^{\prime}B^{\prime}
}^{\mathcal{N}^{\theta}})^{T}W_{RBR^{\prime}B^{\prime}}].
\end{multline}
So the SDP\ in \eqref{eq:SDP-first-dual-SLD-Fish-ch}\ can be written as
\begin{equation}
\sup_{\lambda,|\varphi\rangle_{RBR^{\prime}B^{\prime}},W_{RBR^{\prime
}B^{\prime}}}2\operatorname{Re}[\langle\varphi|_{RBR^{\prime}B^{\prime}
}(\partial_{\theta}\Gamma_{RB}^{\mathcal{N}^{\theta}})|\Gamma\rangle
_{RR^{\prime}BB^{\prime}}]-\operatorname{Tr}[\sigma_{R}^{-1}K_{R}]
\label{eq:SDP-first-dual-SLD-Fish-ch-2}
\end{equation}
subject to
\begin{equation}
\lambda\leq1,\qquad
\begin{bmatrix}
\lambda & \langle\varphi|_{RBR^{\prime}B^{\prime}}\\
|\varphi\rangle_{RBR^{\prime}B^{\prime}} & W_{RBR^{\prime}B^{\prime}}
\end{bmatrix}
\geq0.
\end{equation}
Now noting from Lemma~\ref{lem:min-XYinvX} that
\begin{equation}
\sigma_{R}^{-1}=\inf\left\{  Y_{R}:
\begin{bmatrix}
\sigma_{R} & I_{R}\\
I_{R} & Y_{R}
\end{bmatrix}
\geq0\right\}  ,
\end{equation}
and that $\sigma_{R}^{-1}$ and $K_{R}$ are positive semi-definite, we can
rewrite the SDP in \eqref{eq:SDP-first-dual-SLD-Fish-ch-2}\ as
\begin{multline}
\sup_{\lambda,|\varphi\rangle_{RBR^{\prime}B^{\prime}},W_{RBR^{\prime
}B^{\prime}}}\left(  2\operatorname{Re}[\langle\varphi|_{RBR^{\prime}
B^{\prime}}(\partial_{\theta}\Gamma_{RB}^{\mathcal{N}^{\theta}})|\Gamma
\rangle_{RR^{\prime}BB^{\prime}}]-\inf_{Y_{R}}\operatorname{Tr}[Y_{R}
K_{R}]\right) \label{eq:almost-there-SLD-Fisher-channels-SDP}\\
=\sup_{\lambda,|\varphi\rangle_{RBR^{\prime}B^{\prime}},W_{RBR^{\prime
}B^{\prime}},Y_{R}}\left(  2\operatorname{Re}[\langle\varphi|_{RBR^{\prime
}B^{\prime}}(\partial_{\theta}\Gamma_{RB}^{\mathcal{N}^{\theta}}
)|\Gamma\rangle_{RR^{\prime}BB^{\prime}}]-\operatorname{Tr}[Y_{R}
K_{R}]\right)
\end{multline}
subject to
\begin{equation}
\lambda\leq1,\quad
\begin{bmatrix}
\lambda & \langle\varphi|_{RBR^{\prime}B^{\prime}}\\
|\varphi\rangle_{RBR^{\prime}B^{\prime}} & W_{RBR^{\prime}B^{\prime}}
\end{bmatrix}
\geq0,\quad
\begin{bmatrix}
\sigma_{R} & I_{R}\\
I_{R} & Y_{R}
\end{bmatrix}
\geq0.
\end{equation}
Then we can finally include the maximization over input states $\sigma_{R}$
(satisfying $\sigma_{R}\geq0$ and $\operatorname{Tr}[\sigma_{R}]=1$) to arrive
at the form given in \eqref{eq:SDP-SLD-ch-Fish}.
\end{proof}

\pagebreak

\chapter{Limits on Multiparameter Estimation of Quantum Channels}\label{ch:multi}
\graphicspath{{figures/}}

\vspace{0.5em}

In this chapter, we present our results for the task of simultaneously estimating multiple parameters encoded in a quantum channel. We generalize some of the results of Chapter~\ref{ch:single}, but we also develop new tools required to establish the more non-trivial Cramer--Rao bounds for the case of multiparameter estimation. For the multiparameter case, we will focus exclusively on the RLD Fisher information value of quantum states and channels and the Cramer--Rao bounds we can establish using them. 

First, we show how to convert the canonical matrix inequality Cramer--Rao bound for state estimation
\begin{equation}
	\text{Cov}(\bm{\theta}) \geq \widehat{I}_F(\bm{\theta}; \{\rho_{\bm{\theta}}\}_{\bm{\theta}})^{-1}
\end{equation}
into a scalar bound for ease of use and experimental application. To do so, we use the RLD Fisher information value of quantum states which we defined in Chapter~\ref{ch:prelims}.

Next, we continue in the vein of Chapter~\ref{ch:single} and define the amortized RLD Fisher information value of quantum channels. We prove a chain rule inequality obeyed by this quantity for all quantum channels, which, as we saw in Chapter~\ref{ch:single}, leads to an amortization collapse for it. 

Again, analogous to Chapter~\ref{ch:single}, we prove a meta-converse theorem which connects the amortized RLD Fisher information value to the RLD Fisher information value of a sequential estimation protocol. The amortization collapse can then directly be used to establish a single-letter Cramer--Rao bound, which we do.

Just as in the case of our RLD-based Cramer--Rao bound for single parameter estimation of quantum channels, our bound for multiparameter estimation is
\begin{itemize}
	\item single-letter, i.e. computing it requires computing the RLD Fisher information value of a single channel use even though the bound is applicable for $n$-round sequential procotols,
	\item universally applicable, in the sense that our bound applies to all quantum channels, and thus encompasses all admissible quantum dynamics, and
	\item efficiently computable via a semi-definite program. 
\end{itemize}

After providing our bound, we comment on how it offers a no-go condition for Heisenberg scaling of the estimation error. That is, if the finiteness condition for the RLD Fisher information value is met, then Heisenberg scaling of the error with respect to $n$, the number of channel uses, is unattainable.

We then evaluate our RLD-based Cramer--Rao bound for simultaneously estimating the two parameters of a generalized amplitude damping channel, generalizing the example from Chapter~\ref{ch:single}. Lastly, we provide semi-definite programs to evaluate the RLD Fisher information value of quantum states and quantum channels.

\section{Limits on multiparameter channel estimation using RLD Fisher information value of quantum states}

Before we can establish our lower bounds for estimating multiple parameters in a quantum channel, we show how we establish scalar Cramer--Rao bounds for multiparameter state estimation. We introduced the RLD Fisher information value of quantum states in~\eqref{eq:states-rld-fisher-value} in Chapter~\ref{ch:prelims}, and we will now use that to prove a scalar Cramer--Rao bound for state estimation in the multiparameter setting.

\begin{theorem} \label{thm:state-rld-scalar-inequality}
    The following scalar Cramer--Rao bound holds for estimating multiple parameters $\bm{\theta}$ encoded in a family of quantum states $\{ \rho_{\bm{\theta}} \}_{\bm{\theta}}$:
    \begin{equation} \label{eq:state-rld-scalar-inequality}
    \Tr[ W \operatorname{Cov}(\bm{\theta}) ] \geq \frac{1}{   \widehat{I}_F (\bm{\theta},W;\{\rho_{\bm{\theta}}\}_{\bm{\theta}}) },
    \end{equation}
    where the weight matrix $W$ satisfies $\Tr[W]=1$.
\end{theorem}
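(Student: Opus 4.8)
The plan is to deduce the scalar bound \eqref{eq:state-rld-scalar-inequality} from the operator Cramer--Rao bound \eqref{eq:matrix-CRB-1}, combined with a matrix Cauchy--Schwarz inequality. Write $C:=\operatorname{Cov}(\bm{\theta})$ and $J:=\widehat{I}_F(\bm{\theta};\{\rho_{\bm{\theta}}\}_{\bm{\theta}})$, both $D\times D$ Hermitian matrices, with $J$ positive semi-definite since $\sum_{j,k}\bar{v}_j v_k [J]_{j,k}=\operatorname{Tr}[X^{\dag}\rho_{\bm{\theta}}^{-1}X]\geq 0$ for $X:=\sum_k v_k\,\partial_{\theta_k}\rho_{\bm{\theta}}$. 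If the finiteness condition \eqref{eq:app:state-RLD-finiteness-cond} fails, then $\widehat{I}_F(\bm{\theta},W;\{\rho_{\bm{\theta}}\}_{\bm{\theta}})=+\infty$, the right-hand side of \eqref{eq:state-rld-scalar-inequality} is $0$, and the bound holds trivially; so I would assume the condition holds.

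First I would invoke \eqref{eq:matrix-CRB-1}, i.e.\ $C\geq J^{-1}$ in the L\"{o}wner order. Since $W\geq 0$ and $C-J^{-1}\geq 0$, we have $\operatorname{Tr}[W(C-J^{-1})]=\operatorname{Tr}[W^{1/2}(C-J^{-1})W^{1/2}]\geq 0$, hence $\operatorname{Tr}[WC]\geq\operatorname{Tr}[WJ^{-1}]$. Next I would prove the key inequality $\operatorname{Tr}[WJ^{-1}]\geq 1/\operatorname{Tr}[WJ]$ by applying Cauchy--Schwarz for the Hilbert--Schmidt inner product $\langle A,B\rangle=\operatorname{Tr}[A^{\dag}B]$ to $A:=J^{1/2}W^{1/2}$ and $B:=J^{-1/2}W^{1/2}$ (with $J$ taken to be positive definite, the general case following by restriction to the support of $J$). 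Here $\operatorname{Tr}[A^{\dag}A]=\operatorname{Tr}[WJ]$, $\operatorname{Tr}[B^{\dag}B]=\operatorname{Tr}[WJ^{-1}]$, and $\operatorname{Tr}[A^{\dag}B]=\operatorname{Tr}[W]=1$, so Cauchy--Schwarz yields $1\leq\operatorname{Tr}[WJ]\operatorname{Tr}[WJ^{-1}]$. Chaining the two steps gives
\[
\operatorname{Tr}[W\operatorname{Cov}(\bm{\theta})]\ \geq\ \operatorname{Tr}[WJ^{-1}]\ \geq\ \frac{1}{\operatorname{Tr}[WJ]}\ =\ \frac{1}{\widehat{I}_F(\bm{\theta},W;\{\rho_{\bm{\theta}}\}_{\bm{\theta}})},
\]
where the last equality is the definition \eqref{eq:states-rld-fisher-value}; this is exactly \eqref{eq:state-rld-scalar-inequality}.

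The individual computations are routine traces of products of positive semi-definite operators. The only point requiring care is the interaction of the kernel of $\rho_{\bm{\theta}}$ with $J$, $J^{-1}$, and $W$ --- that $J^{-1}$ must be read on the support of $J$ and the weight matrix effectively restricted there --- which is precisely what the finiteness condition \eqref{eq:app:state-RLD-finiteness-cond} is designed to control. I also note that $J$ may be complex Hermitian rather than real symmetric, but this does not affect any of the L\"{o}wner-order or Hilbert--Schmidt arguments above, since $\operatorname{Tr}[WC]$ and $\operatorname{Tr}[WJ^{\pm 1}]$ remain real for Hermitian arguments. I therefore expect no serious obstacle; the argument is the standard scalar-from-matrix Cramer--Rao reduction specialized to the RLD Fisher information matrix.
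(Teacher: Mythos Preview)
Your proposal is correct and follows the same overall structure as the paper's proof: start from the matrix Cramer--Rao bound $C\geq J^{-1}$, pass to $\operatorname{Tr}[WC]\geq\operatorname{Tr}[WJ^{-1}]$, and then establish the key scalar inequality $\operatorname{Tr}[WJ^{-1}]\geq 1/\operatorname{Tr}[WJ]$ when $\operatorname{Tr}[W]=1$.

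The difference lies entirely in how that key inequality is proved. The paper invokes the operator Jensen inequality for the operator-convex function $f(x)=x^{-1}$, applied to the vectors $W^{1/2}|k\rangle$, to obtain $\sum_k\langle k|W^{1/2}J^{-1}W^{1/2}|k\rangle\geq\bigl[\sum_k\langle k|W^{1/2}JW^{1/2}|k\rangle\bigr]^{-1}$. You instead use the Hilbert--Schmidt Cauchy--Schwarz inequality with $A=J^{1/2}W^{1/2}$ and $B=J^{-1/2}W^{1/2}$, which gives $1=|\operatorname{Tr}[A^{\dag}B]|^2\leq\operatorname{Tr}[WJ]\operatorname{Tr}[WJ^{-1}]$ directly. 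Your route is arguably more elementary, since it avoids appealing to operator convexity and the Hansen--Pedersen machinery; it is also the standard trick behind the Kantorovich-type inequality $\operatorname{Tr}[WJ]\operatorname{Tr}[WJ^{-1}]\geq(\operatorname{Tr}[W])^2$. The paper's approach has the minor advantage of identifying exactly when equality holds (namely $[J,W]=0$), which drops out of the Jensen equality condition but would require a separate argument in the Cauchy--Schwarz version.
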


\begin{proof}
We recall the following matrix inequality \cite{Helstrom1976, Helstrom1973, Sidhu2019}, which we also stated in Chapter~\ref{ch:prelims} as \eqref{eq:matrix-CRB-1}:
\begin{equation} \label{eq:app-matrix-crb}
    \text{Cov}(\bm{\theta}) \geq \widehat{I}_F(\bm{\theta};\{\rho_{\bm{\theta}}\}_{\bm{\theta}})^{-1}.
\end{equation}
Let us take $W$ as a non-zero, positive semi-definite matrix, and then define the normalized operator~$W'$ as
\begin{equation}
    W' \coloneqq \frac{W}{\Tr[W]}.
\end{equation}
The matrix inequality in \eqref{eq:app-matrix-crb} implies the following:
\begin{align}
    \Tr[W] \Tr[W' \text{Cov}(\bm{\theta}) ] &\geq \Tr[W] \Tr[W' \widehat{I}_F(\bm{\theta};\{\rho_{\bm{\theta}}\}_{\bm{\theta}})^{-1}] \\
                                      &= \Tr[W] \Tr[W'^{1/2} \widehat{I}_F(\bm{\theta};\{\rho_{\bm{\theta}}\}_{\bm{\theta}})^{-1} W'^{1/2}] \\
& = \Tr[W] \sum_k \langle k | W'^{1/2} \widehat{I}_F(\bm{\theta};\{\rho_{\bm{\theta}}\}_{\bm{\theta}})^{-1} W'^{1/2} | k \rangle
\\                                    &\geq 
\Tr[W]  \left[\sum_k \langle k | W'^{1/2} \widehat{I}_F(\bm{\theta};\{\rho_{\bm{\theta}}\}_{\bm{\theta}}) W'^{1/2} | k \rangle\right]^{-1}
\\& = \frac{\Tr[W]}{ \Tr[ W'^{1/2} \widehat{I}_F(\bm{\theta};\{\rho_{\bm{\theta}}\}_{\bm{\theta}}) W'^{1/2} ]  } \\
                                      &= \frac{\Tr[W]}{ \Tr[W' \widehat{I}_F(\bm{\theta};\{\rho_{\bm{\theta}}\}_{\bm{\theta}})] } \\
                                      &= \frac{\Tr[W]^2}{ \Tr[W \widehat{I}_F(\bm{\theta};\{\rho_{\bm{\theta}}\}_{\bm{\theta}})] }.
\end{align}
The first inequality is a consequence of \eqref{eq:app-matrix-crb}. The first equality follows from cyclicity of trace. The second inequality uses the operator Jensen inequality \cite{Hansen2003} for the operator convex function $f(x) = x^{-1}$. This inequality is saturated if and only if $\widehat{I}_F(\bm{\theta};\{\rho_{\bm{\theta}}\}_{\bm{\theta}})$ is diagonal in the eigenbasis of $W$, i.e., $\left[\widehat{I}_F(\bm{\theta};\{\rho_{\bm{\theta}}\}_{\bm{\theta}}), W  \right] = 0$. The next equality comes again from cyclicity of trace, and the last equality comes from the definition of $W'$.

The reasoning above leads us to
\begin{equation}
    \Tr[W \text{Cov} (\bm{\theta})] \geq \frac{(\Tr[W])^2}{\Tr[W \widehat{I}_F(\bm{\theta};\{\rho_{\bm{\theta}}\}_{\bm{\theta}})]}
\end{equation}
for every positive semi-definite matrix $W$, 
which implies that
\begin{equation} 
    \Tr[W' \text{Cov} (\bm{\theta})] \geq \frac{1}{\Tr[W' \widehat{I}_F(\bm{\theta};\{\rho_{\bm{\theta}}\}_{\bm{\theta}})]}
\end{equation}
for every positive semi-definite matrix $W'$ such that $\Tr[W'] = 1$.
\end{proof}

\section{Amortized RLD Fisher information value}

The next step towards establishing Cramer--Rao bounds for multiparameter channel estimation is to introduce and define the amortized RLD Fisher information value like we did in Chapter~\ref{ch:single}. We use the RLD Fisher information value of quantum channels, defined in~\eqref{eq:rld-value-def}, to define the amortized RLD Fisher information value of the channel family $\{ \mathcal{N}_{A \rightarrow B}^{\bm{\theta}} \}_{\bm{\theta}}$.

\begin{definition}
[Amortized RLD Fisher information value of quantum channels]The amortized RLD Fisher information value of a family $\{ \mathcal{N}_{A \rightarrow B}^{\bm{\theta}} \}_{\bm{\theta}}$ of quantum channels is defined as follows:
\begin{equation}
	\widehat{I}_{F}^{\mathcal{A}}(\bm{\bm{\theta}}, W; \{\mathcal{N}_{A\rightarrow B}^{\bm{\bm{\theta}}}\}_{\bm{\bm{\theta}}})\coloneqq  \\ \sup_{\{\rho_{RA}^{\bm{\bm{\theta}}}\}_{\bm{\bm{\theta}}}}  \widehat{I}_{F}(\bm{\bm{\theta}}, W; \{\mathcal{N}_{A\rightarrow B}^{\bm{\bm{\theta}}}(\rho_{RA}^{\bm{\bm{\theta}}})\}_{\bm{\bm{\theta}}})-\widehat{I}_{F}(\bm{\bm{\theta}}, W; \{\rho_{RA}^{\bm{\bm{\theta}}}\}_{\bm{\bm{\theta}}}).
\end{equation} \label{eq:amortized-rld-value-def}
\end{definition}

Just like in the case of single parameter estimation, the concept of amortization allows for a resource state $\rho_{RA}^{\bm{\theta}}$ at the channel input to help with the estimation task while also subtracting off its value to account for the RLD Fisher information value strictly present in the channel family. We also have the fact that amortization, or catalysis with an input state family, can never decrease the RLD Fisher information value of quantum channels.

\begin{proposition}
Let $\{\mathcal{N}_{A\rightarrow B}^{\bm{\theta}}\}_{\bm{\theta}}$ be a family of quantum channels. The RLD Fisher information value does not exceed the amortized one:
\begin{equation}
	\widehat{I}_{F}^{\mathcal{A}}(\bm{\bm{\theta}}, W; \{\mathcal{N}_{A\rightarrow B}^{\bm{\bm{\theta}}}\}_{\bm{\bm{\theta}}}) \geq \widehat{I}_{F}(\bm{\bm{\theta}}, W; \{\mathcal{N}_{A\rightarrow B}^{\bm{\bm{\theta}}}\}_{\bm{\bm{\theta}}}) \label{eq:multiparam-rld-amort-ineq-obvi-dir}
\end{equation}

\end{proposition}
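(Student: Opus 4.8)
The strategy is identical in spirit to the proof of Proposition~\ref{prop:amort->=-ch-Fish-gen} in the single-parameter case: restrict the supremum in the definition~\eqref{eq:amortized-rld-value-def} of the amortized quantity to input families that are constant in $\bm{\theta}$. First I would dispense with the trivial case: if the finiteness condition~\eqref{eq:app:channel-RLD-finiteness-cond} fails, then $\widehat{I}_{F}(\bm{\theta}, W; \{\mathcal{N}_{A\rightarrow B}^{\bm{\theta}}\}_{\bm{\theta}}) = +\infty$, but the same failure persists when probing with a constant input state, so the amortized quantity is also $+\infty$ and the inequality holds. Hence we may assume the finiteness condition holds.

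Next I would observe that for a state family $\{\rho_{RA}\}_{\bm{\theta}}$ with no dependence on $\bm{\theta}$, we have $\partial_{\theta_{j}}\rho_{RA} = 0$ for every $j$, so every matrix element $\operatorname{Tr}[(\partial_{\theta_{j}}\rho_{RA})\rho_{RA}^{-1}(\partial_{\theta_{k}}\rho_{RA})]$ of the RLD Fisher information matrix vanishes, and therefore
\begin{equation}
\widehat{I}_{F}(\bm{\theta}, W; \{\rho_{RA}\}_{\bm{\theta}}) = \operatorname{Tr}[W \cdot 0] = 0 .
\end{equation}
This is the multiparameter analog of the weak-faithfulness/faithfulness fact used in the single-parameter proof.

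Then, since each fixed parameter-independent state $\rho_{RA}$ furnishes an admissible choice of input family in the supremum defining $\widehat{I}_{F}^{\mathcal{A}}$, I would write
\begin{align}
\widehat{I}_{F}^{\mathcal{A}}(\bm{\theta}, W; \{\mathcal{N}_{A\rightarrow B}^{\bm{\theta}}\}_{\bm{\theta}})
&\geq \widehat{I}_{F}(\bm{\theta}, W; \{\mathcal{N}_{A\rightarrow B}^{\bm{\theta}}(\rho_{RA})\}_{\bm{\theta}}) - \widehat{I}_{F}(\bm{\theta}, W; \{\rho_{RA}\}_{\bm{\theta}}) \\
&= \widehat{I}_{F}(\bm{\theta}, W; \{\mathcal{N}_{A\rightarrow B}^{\bm{\theta}}(\rho_{RA})\}_{\bm{\theta}}) .
\end{align}
Since this holds for every $\rho_{RA}$, I would take the supremum over all such $\rho_{RA}$ on the right-hand side and invoke the definition~\eqref{eq:rld-value-def} of the RLD Fisher information value of the channel family to conclude~\eqref{eq:multiparam-rld-amort-ineq-obvi-dir}.

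\textbf{Main obstacle.} There is no substantial obstacle here; the proof is a short direct argument. The only point requiring minor care is the bookkeeping of the $+\infty$ case and the verification that the RLD Fisher information value of a constant state family is exactly zero (rather than merely nonnegative), which is what makes the subtraction in the amortized definition harmless for constant inputs.
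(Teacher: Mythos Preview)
Your proposal is correct and follows essentially the same approach as the paper: restrict the supremum defining the amortized quantity to parameter-independent input families, use that the RLD Fisher information value of a constant family vanishes, and then optimize over constant inputs to recover the channel quantity. Your added bookkeeping for the $+\infty$ case and the explicit check that constant families give zero are not spelled out in the paper's proof but are consistent with it.
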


\begin{proof}
This can be understood by considering the right-hand side to arise from restricting to input states with no parameter dependence. That is,
\begin{align}
\widehat{I}_F^{\mathcal{A}} (\bm{\theta}, W; \{ \mathcal{N}^{\bm{\theta}}_{A \rightarrow B} \}_{\bm{\theta}}) &\coloneqq \sup_{\{\rho_{RA}^{\bm{\theta}}\}_{\bm{\theta}}}\left[  \widehat{I}_{F}(\bm{\theta}, W;\{\mathcal{N}_{A\rightarrow B}^{\bm{\theta}}(\rho_{RA}^{\bm{\theta}})\}_{\bm{\theta}})-\widehat{I}_{F}(\bm{\theta}, W;\{\rho_{RA}^{\bm{\theta}}\}_{\bm{\theta}})\right] \\
&\geq \sup_{\{\rho_{RA}\}_{\bm{\theta}}}\left[  \widehat{I}_{F}(\bm{\theta}, W;\{\mathcal{N}_{A\rightarrow B}^{\bm{\theta}}(\rho_{RA})\}_{\bm{\theta}})-\widehat{I}_{F}(\bm{\theta}, W;\{\rho_{RA}\})\right] \\
&=  \sup_{\{\rho_{RA}\}_{\bm{\theta}}}\left[  \widehat{I}_{F}(\bm{\theta}, W;\{\mathcal{N}_{A\rightarrow B}^{\bm{\theta}}(\rho_{RA})\}_{\bm{\theta}})\right] \\
&= \widehat{I}_{F}(\bm{\theta}, W;\{\mathcal{N}_{A\rightarrow B}^{\bm{\theta}}\}_{\bm{\theta}}).
\end{align}
Since the inequality holds for all input states $\rho_{RA}$, we conclude \eqref{eq:multiparam-rld-amort-ineq-obvi-dir}.
\end{proof}

Qualitatively, it means that the marginal increase in Fisher information value can never be decreased by using a catalyst state family as input.

\subsection{Amortization collapse for RLD Fisher information value}

We now proceed to show how the direct inequality in \eqref{eq:multiparam-rld-amort-ineq-obvi-dir} can also be reversed, which would imply an amortization collapse for the RLD Fisher information value of quantum channels. Just like we did for the root SLD Fisher information and the RLD Fisher information of quantum channels in Chapter~\ref{ch:prelims}, here too we begin by stating a chain rule property obeyed by the RLD Fisher information value. The chain rule that we established for the RLD Fisher information in Chapter~\ref{ch:single} in \eqref{eq:single-RLD-chain-rule} is generalized to the case of multiparameter estimation. To do so, we first state and prove it in the form of an operator inequality:
\begin{proposition} \label{prop:chain-rule-operator-ineq} 
	Let $\{\mathcal{N}_{A\rightarrow B}^{\bm{\theta}}\}_{\bm{\theta}}$ be a
	differentiable family of quantum channels, and let $\{\rho_{RA}^{\bm{\theta}}%
	\}_{\bm{\theta}}$ be a differentiable family of quantum states. Then the
	following chain-rule operator inequality holds%
	\begin{equation} \label{app-eq:operator-chain-rule}
	\widehat{I}_F(\bm{\theta};\{\mathcal{N}_{A\rightarrow B}^{\bm{\theta}}(\rho
	_{RA}^{\bm{\theta}})\}_{\bm{\theta}})\leq\sum_{j,k=1}^{D}|j\rangle\!\langle
	k|\operatorname{Tr}[(\rho_{S}^{\bm{\theta}})^{T}\operatorname{Tr}%
	_{B}[(\partial_{\theta_{j}}\Gamma_{SB}^{\mathcal{N}^{\bm{\theta}}}%
	)(\Gamma_{SB}^{\mathcal{N}^{\bm{\theta}}})^{-1}(\partial_{\theta_{k}}%
	\Gamma_{SB}^{\mathcal{N}^{\bm{\theta}}})]
	+\widehat{I}_F(\bm{\theta};\{\rho_{RA}^{\bm{\theta}}\}_{\bm{\theta}}),
	\end{equation}
	where $\rho_S^{\bm{\theta}}$ is equal to the reduced state of $\rho_{RA}^{\bm{\theta}}$ on system $A$ and system $S$ is isomorphic to system $A$.
\end{proposition}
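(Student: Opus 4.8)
The proof mirrors the single‐parameter chain rule (Proposition~\ref{prop:chain-rule-RLD}) but carries the weight indices $j,k$ through the computation, producing an operator inequality in the $D\times D$ matrix space rather than a scalar bound. The plan is as follows.

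\medskip

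\textbf{Step 1: Reduce to the finite case and apply post-selected teleportation.}
First I would note that if either finiteness condition fails, the right-hand side of \eqref{app-eq:operator-chain-rule} has a $+\infty$ entry on the diagonal and the inequality is trivial; so assume both hold. Then, as in Proposition~\ref{prop:chain-rule-RLD}, invoke the post-selected teleportation identity \eqref{eq:PS-TP-identity}, $\mathcal{N}_{A\rightarrow B}^{\bm{\theta}}(\rho_{RA}^{\bm{\theta}})=\langle\Gamma|_{AS}\,\rho_{RA}^{\bm{\theta}}\otimes\Gamma_{SB}^{\mathcal{N}^{\bm{\theta}}}\,|\Gamma\rangle_{AS}$, and differentiate to get $\partial_{\theta_j}\big(\mathcal{N}_{A\rightarrow B}^{\bm{\theta}}(\rho_{RA}^{\bm{\theta}})\big)=\langle\Gamma|_{AS}\,\partial_{\theta_j}(\rho_{RA}^{\bm{\theta}}\otimes\Gamma_{SB}^{\mathcal{N}^{\bm{\theta}}})\,|\Gamma\rangle_{AS}$.

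\medskip

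\textbf{Step 2: Apply the transformer inequality entrywise.}
The matrix element $[\widehat{I}_F(\bm{\theta};\{\mathcal{N}_{A\rightarrow B}^{\bm{\theta}}(\rho_{RA}^{\bm{\theta}})\})]_{j,k}=\operatorname{Tr}[(\partial_{\theta_j}\sigma^{\bm{\theta}})(\sigma^{\bm{\theta}})^{-1}(\partial_{\theta_k}\sigma^{\bm{\theta}})]$ with $\sigma^{\bm{\theta}}:=\mathcal{N}_{A\rightarrow B}^{\bm{\theta}}(\rho_{RA}^{\bm{\theta}})$. Writing this as $\langle j|\,\widehat{I}_F\,|k\rangle$ assembled into the operator $\sum_{j,k}|j\rangle\!\langle k|\operatorname{Tr}[\cdots]$, I would use the transformer inequality (Lemma~\ref{lem:transformer-ineq-basic}) with $L=\langle\Gamma|_{AS}\otimes I_{RB}$, $Y=\rho_{RA}^{\bm{\theta}}\otimes\Gamma_{SB}^{\mathcal{N}^{\bm{\theta}}}$, and the family of ``derivative'' operators $X_j=\partial_{\theta_j}(\rho_{RA}^{\bm{\theta}}\otimes\Gamma_{SB}^{\mathcal{N}^{\bm{\theta}}})$; the key point is that the transformer inequality upgrades to a statement about the block operator $\sum_{j,k}|j\rangle\!\langle k|\otimes X_j^{\dag}\cdots X_k$, since $\sum_{j,k}|j\rangle\!\langle k|\otimes L X_j^{\dag}L^{\dag}(LYL^{\dag})^{-1}L X_k L^{\dag}\leq \sum_{j,k}|j\rangle\!\langle k|\otimes L X_j^{\dag}Y^{-1}X_k L^{\dag}$ follows from the same Schur-complement argument applied in the enlarged space $\mathbb{C}^D\otimes(\cdots)$. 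This gives the operator bound $\widehat{I}_F(\bm{\theta};\{\mathcal{N}^{\bm{\theta}}(\rho^{\bm{\theta}})\})\leq \langle\Gamma|_{AS}\operatorname{Tr}_{RB}\big[\sum_{j,k}|j\rangle\!\langle k|\,X_j (Y)^{-1} X_k\big]|\Gamma\rangle_{AS}$ (partial trace over $RB$ is harmless since the left side acts only on $\mathbb{C}^D$).

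\medskip

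\textbf{Step 3: Expand the product via the Leibniz rule and cancel cross terms.}
This is the bookkeeping heart of the argument, identical in structure to the middle of the proof of Proposition~\ref{prop:chain-rule-RLD}: write $X_j=(\partial_{\theta_j}\rho_{RA}^{\bm{\theta}})\otimes\Gamma_{SB}^{\mathcal{N}^{\bm{\theta}}}+\rho_{RA}^{\bm{\theta}}\otimes(\partial_{\theta_j}\Gamma_{SB}^{\mathcal{N}^{\bm{\theta}}})$, multiply out $X_j Y^{-1}X_k$, use the finiteness conditions $\Pi_{\rho_{RA}^{\bm{\theta}}}^{\perp}(\partial_{\theta_j}\rho_{RA}^{\bm{\theta}})=0$ and $\Pi_{\Gamma^{\mathcal{N}^{\bm{\theta}}}}^{\perp}(\partial_{\theta_j}\Gamma_{SB}^{\mathcal{N}^{\bm{\theta}}})=0$ to insert/remove projectors, and then take the partial trace $\operatorname{Tr}_{RB}$ together with the sandwich $\langle\Gamma|_{AS}(\cdot)|\Gamma\rangle_{AS}$. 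The cross terms produce a factor $\operatorname{Tr}_B[\partial_{\theta_j}\Gamma_{SB}^{\mathcal{N}^{\bm{\theta}}}]=\partial_{\theta_j}\operatorname{Tr}_B[\Gamma_{SB}^{\mathcal{N}^{\bm{\theta}}}]=\partial_{\theta_j}I_S=0$ by trace preservation, exactly as in the single-parameter case; the surviving ``$\rho$-only'' term gives $\langle j|\widehat{I}_F(\bm{\theta};\{\rho_{RA}^{\bm{\theta}}\})|k\rangle$ and the ``$\Gamma$-only'' term gives $\operatorname{Tr}[(\rho_S^{\bm{\theta}})^{T}\operatorname{Tr}_B[(\partial_{\theta_j}\Gamma_{SB}^{\mathcal{N}^{\bm{\theta}}})(\Gamma_{SB}^{\mathcal{N}^{\bm{\theta}}})^{-1}(\partial_{\theta_k}\Gamma_{SB}^{\mathcal{N}^{\bm{\theta}}})]$, using $\langle\Gamma|_{AS}\,\rho_A^{\bm{\theta}}\otimes M_S\,|\Gamma\rangle_{AS}=\operatorname{Tr}[(\rho_S^{\bm{\theta}})^{T}M_S]$ from the transpose trick \eqref{eq:transpose-trick}, \eqref{eq:max-ent-partial-trace}. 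Reassembling the $|j\rangle\!\langle k|$ coefficients yields \eqref{app-eq:operator-chain-rule}.

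\medskip

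\textbf{Main obstacle.} The genuinely delicate point is Step~2: justifying that the transformer inequality, originally stated for a single triple $(X,Y,L)$, lifts to the matrix-valued quantity $\sum_{j,k}|j\rangle\!\langle k|\otimes X_j^{\dag}Y^{-1}X_k$ as an operator inequality on $\mathbb{C}^D\otimes\mathcal{H}$. One has to check that $\sum_{j,k}|j\rangle\!\langle k|\otimes X_j^{\dag}Y^{-1}X_k$ is itself the Schur complement of an appropriate positive block matrix (namely $\big[\begin{smallmatrix} \sum_{j,k}|j\rangle\!\langle k|\otimes X_j^{\dag}Y^{-1}X_k & (\sum_j |j\rangle\otimes X_j)^{\dag} \\ \sum_j|j\rangle\otimes X_j & Y\end{smallmatrix}\big]\geq 0$), so that conjugating by $I_{\mathbb{C}^D}\otimes L$ and invoking Lemma~\ref{lem:min-XYinvX} in the enlarged space does the job; the rest is the same Leibniz/trace-preservation cancellation already carried out in Chapter~\ref{ch:single}.
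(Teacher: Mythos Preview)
Your proposal is correct and follows essentially the same route as the paper: post-selected teleportation, transformer inequality, Leibniz expansion, cross-term cancellation via trace preservation, and the transpose-trick evaluation of the surviving terms. The one packaging difference is in Step~2. Rather than lifting Lemma~\ref{lem:transformer-ineq-basic} to a block-operator statement on $\mathbb{C}^{D}\otimes\mathcal{H}$ as you propose, the paper absorbs the parameter index into a single \emph{square} operator by setting
\[
X=\sum_{k=1}^{D}|0\rangle\!\langle k|\otimes\partial_{\theta_{k}}(\rho_{RA}^{\bm{\theta}}\otimes\Gamma_{SB}^{\mathcal{N}^{\bm{\theta}}}),\qquad
Y=|0\rangle\!\langle 0|\otimes\rho_{RA}^{\bm{\theta}}\otimes\Gamma_{SB}^{\mathcal{N}^{\bm{\theta}}},\qquad
L=I_{D}\otimes\langle\Gamma|_{AS}\otimes I_{RB},
\]
so that $LX^{\dag}L^{\dag}(LYL^{\dag})^{-1}LXL^{\dag}$ already equals the full matrix $\sum_{j,k}|j\rangle\!\langle k|\otimes(\cdots)$ and Lemma~\ref{lem:transformer-ineq-basic} applies verbatim. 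Your Schur-complement justification for the block version is valid and amounts to the same thing (your $\sum_{j}|j\rangle\otimes X_{j}$ is the paper's $X^{\dag}$ up to the harmless $|0\rangle$ padding); the paper's repackaging just spares you from restating the lemma.
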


\begin{proof}
The proof of the above is similar to the proof of the single-parameter chain rule proved in Proposition~\ref{prop:chain-rule-RLD} in Chapter~\ref{ch:single} for the RLD Fisher information. First, we make use of the following identity:
\begin{equation}
\mathcal{N}_{A\rightarrow B}^{\bm{\theta}}(\rho_{RA}^{\bm{\theta}}) = \langle\Gamma|_{AS}\rho_{RA}^{\bm{\theta}}\otimes\Gamma_{SB}^{\mathcal{N}%
		^{\bm{\theta}}}|\Gamma\rangle_{AS},
\end{equation}
where
\begin{equation}
|\Gamma\rangle_{AS} \coloneqq \sum_{i} \ket{i}_A \ket{i}_S.
\end{equation}

Consider that%
	\begin{equation}
	\widehat{I}_F(\bm{\theta};\{\mathcal{N}_{A\rightarrow B}^{\bm{\theta}}(\rho
	_{RA}^{\bm{\theta}})\}_{\bm{\theta}})=\operatorname{Tr}_{2}\!\left[
	\sum_{j,k=1}^{D}|j\rangle\!\langle k|\otimes(\partial_{\theta_{j}}%
	\mathcal{N}_{A\rightarrow B}^{\bm{\theta}}(\rho_{RA}^{\bm{\theta}}%
	))(\mathcal{N}_{A\rightarrow B}^{\bm{\theta}}(\rho_{RA}^{\bm{\theta}}%
	))^{-1}(\partial_{\theta_{k}}\mathcal{N}_{A\rightarrow B}^{\bm{\theta}}%
	(\rho_{RA}^{\bm{\theta}}))\right]  .
	\end{equation}
	
We can, with a series of manipulations, show that
\begin{align}
	&  \sum_{j,k=1}^{D}|j\rangle\!\langle k|\otimes(\partial_{\theta_{j}}%
	\mathcal{N}_{A\rightarrow B}^{\bm{\theta}}(\rho_{RA}^{\bm{\theta}}%
	))(\mathcal{N}_{A\rightarrow B}^{\bm{\theta}}(\rho_{RA}^{\bm{\theta}}%
	))^{-1}(\partial_{\theta_{k}}\mathcal{N}_{A\rightarrow B}^{\bm{\theta}}%
	(\rho_{RA}^{\bm{\theta}}))\nonumber\\
	&  =\sum_{j,k=1}^{D}|j\rangle\!\langle k|\otimes(\partial_{\theta_{j}}%
	\langle\Gamma|_{AS}\rho_{RA}^{\bm{\theta}}\otimes\Gamma_{SB}^{\mathcal{N}%
		^{\bm{\theta}}}|\Gamma\rangle_{AS})(\langle\Gamma|_{AS}\rho_{RA}%
	^{\bm{\theta}}\otimes\Gamma_{SB}^{\mathcal{N}^{\bm{\theta}}}|\Gamma
	\rangle_{AS})^{-1}(\partial_{\theta_{k}}\langle\Gamma|_{AS}\rho_{RA}%
	^{\bm{\theta}}\otimes\Gamma_{SB}^{\mathcal{N}^{\bm{\theta}}}|\Gamma
	\rangle_{AS})\\
	&  =\langle\Gamma|_{AS}\left(  \sum_{j=1}^{D}|j\rangle\!\langle0|\otimes
	\partial_{\theta_{j}}(\rho_{RA}^{\bm{\theta}}\otimes\Gamma_{SB}^{\mathcal{N}%
		^{\bm{\theta}}})\right)  |\Gamma\rangle_{AS}(\langle\Gamma|_{AS}%
	(|0\rangle\!\langle0|\otimes\rho_{RA}^{\bm{\theta}}\otimes\Gamma_{SB}%
	^{\mathcal{N}^{\bm{\theta}}})|\Gamma\rangle_{AS})^{-1}\nonumber\\
	&  \qquad\times\langle\Gamma|_{AS}\left(  \sum_{k=1}^{D}|0\rangle\!\langle
	k|\otimes\partial_{\theta_{k}}(\rho_{RA}^{\bm{\theta}}\otimes\Gamma
	_{SB}^{\mathcal{N}^{\bm{\theta}}})\right)  |\Gamma\rangle_{AS}.
	\end{align}
	
	Then identifying
	\begin{align}
	X &  =\sum_{k=1}^{D}|0\rangle\!\langle k|\otimes\partial_{\theta_{k}}(\rho
	_{RA}^{\bm{\theta}}\otimes\Gamma_{SB}^{\mathcal{N}^{\bm{\theta}}}),\\
	Y &  =|0\rangle\!\langle0|\otimes\rho_{RA}^{\bm{\theta}}\otimes\Gamma
	_{SB}^{\mathcal{N}^{\bm{\theta}}}\\
	L &  =I_{D}\otimes\langle\Gamma|_{AS}\otimes I_{RB},
	\end{align}
	we can apply the well known transformer inequality $LX^{\dag}L^{\dag}(LYL^{\dag})^{-1}LXL^{\dag}\leq LX^{\dag}Y^{-1}XL^{\dag}$ (Lemma~\ref{lem:transformer-ineq-basic}) to find that the last line above is
	not larger than the following one in the operator-inequality sense:%
	\begin{align}
	&  \langle\Gamma|_{AS}\left(  \sum_{j=1}^{D}|j\rangle\!\langle0|\otimes
	\partial_{\theta_{j}}(\rho_{RA}^{\bm{\theta}}\otimes\Gamma_{SB}^{\mathcal{N}%
		^{\bm{\theta}}})\right)  (|0\rangle\!\langle0|\otimes\rho_{RA}^{\bm{\theta}}%
	\otimes\Gamma_{SB}^{\mathcal{N}^{\bm{\theta}}})^{-1}\left(  \sum_{k=1}%
	^{D}|0\rangle\!\langle k|\otimes\partial_{\theta_{k}}(\rho_{RA}^{\bm{\theta}}%
	\otimes\Gamma_{SB}^{\mathcal{N}^{\bm{\theta}}})\right)  |\Gamma\rangle
	_{AS}\nonumber\\
	&  =\langle\Gamma|_{AS}\sum_{j,k=1}^{D}|j\rangle\!\langle k|\otimes
	\partial_{\theta_{j}}(\rho_{RA}^{\bm{\theta}}\otimes\Gamma_{SB}^{\mathcal{N}%
		^{\bm{\theta}}})(\rho_{RA}^{\bm{\theta}}\otimes\Gamma_{SB}^{\mathcal{N}%
		^{\bm{\theta}}})^{-1}\partial_{\theta_{k}}(\rho_{RA}^{\bm{\theta}}%
	\otimes\Gamma_{SB}^{\mathcal{N}^{\bm{\theta}}})|\Gamma\rangle_{AS}\\
	&  =\sum_{j,k=1}^{D}|j\rangle\!\langle k|\otimes\langle\Gamma|_{AS}%
	\partial_{\theta_{j}}(\rho_{RA}^{\bm{\theta}}\otimes\Gamma_{SB}^{\mathcal{N}%
		^{\bm{\theta}}})(\rho_{RA}^{\bm{\theta}}\otimes\Gamma_{SB}^{\mathcal{N}%
		^{\bm{\theta}}})^{-1}\partial_{\theta_{k}}(\rho_{RA}^{\bm{\theta}}%
	\otimes\Gamma_{SB}^{\mathcal{N}^{\bm{\theta}}})|\Gamma\rangle_{AS}.
	\end{align}
	
	Consider that%
	\begin{equation}
	\partial_{\theta_{j}}(\rho_{RA}^{\bm{\theta}}\otimes\Gamma_{SB}^{\mathcal{N}%
		^{\bm{\theta}}})=(\partial_{\theta_{j}}\rho_{RA}^{\bm{\theta}})\otimes
	\Gamma_{SB}^{\mathcal{N}^{\bm{\theta}}}+\rho_{RA}^{\bm{\theta}}\otimes
	(\partial_{\theta_{j}}\Gamma_{SB}^{\mathcal{N}^{\bm{\theta}}}).
	\end{equation}
	Then we find that%
	\begin{align}
	&  \partial_{\theta_{j}}(\rho_{RA}^{\bm{\theta}}\otimes\Gamma_{SB}%
	^{\mathcal{N}^{\bm{\theta}}})(\rho_{RA}^{\bm{\theta}}\otimes\Gamma
	_{SB}^{\mathcal{N}^{\bm{\theta}}})^{-1}\nonumber\\
	&  =((\partial_{\theta_{j}}\rho_{RA}^{\bm{\theta}})\otimes\Gamma
	_{SB}^{\mathcal{N}^{\bm{\theta}}}+\rho_{RA}^{\bm{\theta}}\otimes
	(\partial_{\theta_{j}}\Gamma_{SB}^{\mathcal{N}^{\bm{\theta}}}))(\rho
	_{RA}^{\bm{\theta}}\otimes\Gamma_{SB}^{\mathcal{N}^{\bm{\theta }}})^{-1}\\
	&  =((\partial_{\theta_{j}}\rho_{RA}^{\bm{\theta}})\otimes\Gamma
	_{SB}^{\mathcal{N}^{\bm{\theta}}}+\rho_{RA}^{\bm{\theta}}\otimes
	(\partial_{\theta_{j}}\Gamma_{SB}^{\mathcal{N}^{\bm{\theta}}}))((\rho
	_{RA}^{\bm{\theta}})^{-1}\otimes(\Gamma_{SB}^{\mathcal{N}^{\bm{\theta}}}%
	)^{-1})\\
	&  =(\partial_{\theta_{j}}\rho_{RA}^{\bm{\theta}})(\rho_{RA}^{\bm{\theta}}%
	)^{-1}\otimes\Gamma_{SB}^{\mathcal{N}^{\bm{\theta}}}(\Gamma_{SB}%
	^{\mathcal{N}^{\bm{\theta}}})^{-1}+\rho_{RA}^{\bm{\theta }}(\rho
	_{RA}^{\bm{\theta}})^{-1}\otimes(\partial_{\theta_{j}}\Gamma_{SB}%
	^{\mathcal{N}^{\bm{\theta}}})(\Gamma_{SB}^{\mathcal{N}^{\bm{\theta}}})^{-1}\\
	&  =(\partial_{\theta_{j}}\rho_{RA}^{\bm{\theta}})(\rho_{RA}^{\bm{\theta}}%
	)^{-1}\otimes\Pi_{\Gamma^{\mathcal{N}^{\bm{\theta}}}}+\Pi_{\rho^{\bm{\theta}}%
	}\otimes(\partial_{\theta_{j}}\Gamma_{SB}^{\mathcal{N}^{\bm{\theta}}}%
	)(\Gamma_{SB}^{\mathcal{N}^{\bm{\theta}}})^{-1}.
	\end{align}
	
	Right multiplying this last line by $\partial_{\theta_{k}}(\rho_{RA}%
	^{\bm{\theta}}\otimes\Gamma_{SB}^{\mathcal{N}^{\bm{\theta}}})$ gives%
	\begin{multline}
	  ((\partial_{\theta_{j}}\rho_{RA}^{\bm{\theta}})(\rho_{RA}^{\bm{\theta}}%
	)^{-1}\otimes\Pi_{\Gamma^{\mathcal{N}^{\bm{\theta}}}}+\Pi_{\rho^{\bm{\theta}}%
	}\otimes(\partial_{\theta_{j}}\Gamma_{SB}^{\mathcal{N}^{\bm{\theta}}}%
	)(\Gamma_{SB}^{\mathcal{N}^{\bm{\theta}}})^{-1})\partial_{\theta_{k}}%
	(\rho_{RA}^{\bm{\theta}}\otimes\Gamma_{SB}^{\mathcal{N}^{\bm{\theta}}%
	})\\
	  =(\partial_{\theta_{j}}\rho_{RA}^{\bm{\theta}})(\rho_{RA}^{\bm{\theta}}%
	)^{-1}(\partial_{\theta_{k}}\rho_{RA}^{\bm{\theta}})\otimes\Gamma
	_{SB}^{\mathcal{N}^{\bm{\theta}}}+(\partial_{\theta_{j}}\rho_{RA}%
	^{\bm{\theta}})\Pi_{\rho^{\bm{\theta}}}\otimes\Pi_{\Gamma^{\mathcal{N}%
			^{\bm{\theta}}}}(\partial_{\theta_{k}}\Gamma_{SB}^{\mathcal{N}^{\bm{\theta}}%
	})\\
	+\Pi_{\rho^{\bm{\theta}}}(\partial_{\theta_{k}}\rho_{RA}%
	^{\bm{\theta }})\otimes(\partial_{\theta_{j}}\Gamma_{SB}^{\mathcal{N}%
		^{\bm{\theta}}})\Pi_{\Gamma^{\mathcal{N}^{\bm{\theta}}}}+\rho_{RA}%
	^{\bm{\theta}}\otimes(\partial_{\theta_{j}}\Gamma_{SB}^{\mathcal{N}%
		^{\bm{\theta}}})(\Gamma_{SB}^{\mathcal{N}^{\bm{\theta}}})^{-1}(\partial
	_{\theta_{k}}\Gamma_{SB}^{\mathcal{N}^{\bm{\theta}}}).	
	\end{multline}
	
	Adding in the terms $(\partial_{\theta_{j}}\rho_{RA}^{\bm{\theta}})\Pi
	_{\rho^{\bm{\theta}}}^{\perp}=0$, $(\partial_{\theta_{j}}\Gamma_{SB}%
	^{\mathcal{N}^{\bm{\theta}}})\Pi_{\Gamma^{\mathcal{N}^{\bm{\theta
	}}}}^{\perp}=0$, $\Pi_{\rho^{\bm{\theta}}}^{\perp}(\partial_{\theta_{k}}%
	\rho_{RA}^{\bm{\theta}})=0$, and $\Pi_{\Gamma^{\mathcal{N}^{\bm{\theta}}}%
	}^{\perp}(\partial_{\theta_{k}}\Gamma_{SB}^{\mathcal{N}^{\bm{\theta}}})=0$,
	which follow from the support conditions for finite RLD Fisher information, the last line above becomes as follows:%
	\begin{multline}
	=(\partial_{\theta_{j}}\rho_{RA}^{\bm{\theta}})(\rho_{RA}^{\bm{\theta }}%
	)^{-1}(\partial_{\theta_{k}}\rho_{RA}^{\bm{\theta}})\otimes\Gamma
	_{SB}^{\mathcal{N}^{\bm{\theta}}}+(\partial_{\theta_{j}}\rho_{RA}%
	^{\bm{\theta}})\otimes(\partial_{\theta_{k}}\Gamma_{SB}^{\mathcal{N}%
		^{\bm{\theta}}})\\
	+(\partial_{\theta_{k}}\rho_{RA}^{\bm{\theta}})\otimes(\partial_{\theta_{j}%
	}\Gamma_{SB}^{\mathcal{N}^{\bm{\theta}}})+\rho_{RA}^{\bm{\theta}}%
	\otimes(\partial_{\theta_{j}}\Gamma_{SB}^{\mathcal{N}^{\bm{\theta}}}%
	)(\Gamma_{SB}^{\mathcal{N}^{\bm{\theta}}})^{-1}(\partial_{\theta_{k}}%
	\Gamma_{SB}^{\mathcal{N}^{\bm{\theta}}}).
	\end{multline}
	
	So then the relevant matrix is simplified as follows:%
	\begin{multline}
	\sum_{j,k=1}^{D}|j\rangle\!\langle k|\otimes\langle\Gamma|_{AS}[(\partial
	_{\theta_{j}}\rho_{RA}^{\bm{\theta}})(\rho_{RA}^{\bm{\theta}})^{-1}%
	(\partial_{\theta_{k}}\rho_{RA}^{\bm{\theta}})\otimes\Gamma_{SB}%
	^{\mathcal{N}^{\bm{\theta}}}+(\partial_{\theta_{j}}\rho_{RA}^{\bm{\theta}}%
	)\otimes(\partial_{\theta_{k}}\Gamma_{SB}^{\mathcal{N}^{\bm{\theta}}})\\
	+(\partial_{\theta_{k}}\rho_{RA}^{\bm{\theta}})\otimes(\partial_{\theta_{j}%
	}\Gamma_{SB}^{\mathcal{N}^{\bm{\theta}}})+\rho_{RA}^{\bm{\theta}}%
	\otimes(\partial_{\theta_{j}}\Gamma_{SB}^{\mathcal{N}^{\bm{\theta}}}%
	)(\Gamma_{SB}^{\mathcal{N}^{\bm{\theta}}})^{-1}(\partial_{\theta_{k}}%
	\Gamma_{SB}^{\mathcal{N}^{\bm{\theta}}})]|\Gamma\rangle_{AS}.
	\end{multline}
	Thus, we have established the following operator inequality:
	\begin{multline}
	\sum_{j,k=1}^{D}|j\rangle\!\langle k|\otimes(\partial_{\theta_{j}}%
	\mathcal{N}_{A\rightarrow B}^{\bm{\theta}}(\rho_{RA}^{\bm{\theta}}%
	))(\mathcal{N}_{A\rightarrow B}^{\bm{\theta}}(\rho_{RA}^{\bm{\theta}}%
	))^{-1}(\partial_{\theta_{k}}\mathcal{N}_{A\rightarrow B}^{\bm{\theta}}%
	(\rho_{RA}^{\bm{\theta}}))\\
	\leq\sum_{j,k=1}^{D}|j\rangle\!\langle k|\otimes\langle\Gamma|_{AS}%
	[(\partial_{\theta_{j}}\rho_{RA}^{\bm{\theta}})(\rho_{RA}^{\bm{\theta}}%
	)^{-1}(\partial_{\theta_{k}}\rho_{RA}^{\bm{\theta}})\otimes\Gamma
	_{SB}^{\mathcal{N}^{\bm{\theta}}}+(\partial_{\theta_{j}}\rho_{RA}%
	^{\bm{\theta}})\otimes(\partial_{\theta_{k}}\Gamma_{SB}^{\mathcal{N}%
		^{\bm{\theta}}})\\
	+(\partial_{\theta_{k}}\rho_{RA}^{\bm{\theta}})\otimes(\partial_{\theta_{j}%
	}\Gamma_{SB}^{\mathcal{N}^{\bm{\theta}}})+\rho_{RA}^{\bm{\theta}}%
	\otimes(\partial_{\theta_{j}}\Gamma_{SB}^{\mathcal{N}^{\bm{\theta}}}%
	)(\Gamma_{SB}^{\mathcal{N}^{\bm{\theta}}})^{-1}(\partial_{\theta_{k}}%
	\Gamma_{SB}^{\mathcal{N}^{\bm{\theta}}})]|\Gamma\rangle_{AS}.
	\end{multline}
	We can then take a partial trace over the $RB\ $systems and arrive at the
	following operator inequality:%
	\begin{multline}
	\widehat{I}_F(\bm{\theta};\{\mathcal{N}_{A\rightarrow B}^{\bm{\theta}}(\rho
	_{RA}^{\bm{\theta}})\}_{\bm{\theta}})\\
	\leq\sum_{j,k=1}^{D}|j\rangle\!\langle k|~\operatorname{Tr}_{RB}[\langle
	\Gamma|_{AS}[(\partial_{\theta_{j}}\rho_{RA}^{\bm{\theta}})(\rho
	_{RA}^{\bm{\theta}})^{-1}(\partial_{\theta_{k}}\rho_{RA}^{\bm{\theta}}%
	)\otimes\Gamma_{SB}^{\mathcal{N}^{\bm{\theta}}}+(\partial_{\theta_{j}}%
	\rho_{RA}^{\bm{\theta}})\otimes(\partial_{\theta_{k}}\Gamma_{SB}%
	^{\mathcal{N}^{\bm{\theta}}})\\
	+(\partial_{\theta_{k}}\rho_{RA}^{\bm{\theta}})\otimes(\partial_{\theta_{j}%
	}\Gamma_{SB}^{\mathcal{N}^{\bm{\theta}}})+\rho_{RA}^{\bm{\theta}}%
	\otimes(\partial_{\theta_{j}}\Gamma_{SB}^{\mathcal{N}^{\bm{\theta}}}%
	)(\Gamma_{SB}^{\mathcal{N}^{\bm{\theta}}})^{-1}(\partial_{\theta_{k}}%
	\Gamma_{SB}^{\mathcal{N}^{\bm{\theta}}})]|\Gamma\rangle_{AS}].
	\end{multline}
	Now we evaluate each term on the right:%
	\begin{align}
&  \!\!\!\!\!\langle\Gamma|_{AS}\operatorname{Tr}_{RB}[(\partial_{\theta_{j}%
}\rho_{RA}^{\bm{\theta}})(\rho_{RA}^{\bm{\theta}})^{-1}(\partial_{\theta_{k}%
}\rho_{RA}^{\bm{\theta}})\otimes\Gamma_{SB}^{\mathcal{N}^{\bm{\theta}}%
}]|\Gamma\rangle_{AS}\nonumber\\
&  =\langle\Gamma|_{AS}\operatorname{Tr}_{R}[(\partial_{\theta_{j}}\rho
_{RA}^{\bm{\theta}})(\rho_{RA}^{\bm{\theta}})^{-1}(\partial_{\theta_{k}}%
\rho_{RA}^{\bm{\theta}})]\otimes\operatorname{Tr}_{B}[\Gamma_{SB}%
^{\mathcal{N}^{\bm{\theta}}}]|\Gamma\rangle_{AS}\\
&  =\langle\Gamma|_{AS}\operatorname{Tr}_{R}[(\partial_{\theta_{j}}\rho
_{RA}^{\bm{\theta}})(\rho_{RA}^{\bm{\theta}})^{-1}(\partial_{\theta_{k}}%
\rho_{RA}^{\bm{\theta}})]\otimes I_{S}|\Gamma\rangle_{AS}\\
&  =\operatorname{Tr}[(\partial_{\theta_{j}}\rho_{RA}^{\bm{\theta}})(\rho
_{RA}^{\bm{\theta}})^{-1}(\partial_{\theta_{k}}\rho_{RA}^{\bm{\theta}})],\\
&  \!\!\!\!\!\langle\Gamma|_{AS}\operatorname{Tr}_{RB}[(\partial_{\theta_{k}%
}\rho_{RA}^{\bm{\theta}})\otimes(\partial_{\theta_{j}}\Gamma_{SB}%
^{\mathcal{N}^{\bm{\theta}}})]|\Gamma\rangle_{AS}\nonumber\\
&  =\langle\Gamma|_{AS}\operatorname{Tr}_{R}[\partial_{\theta_{k}}\rho
_{RA}^{\bm{\theta}}]\otimes\operatorname{Tr}_{B}[\partial_{\theta_{j}}%
\Gamma_{SB}^{\mathcal{N}^{\bm{\theta}}}]|\Gamma\rangle_{AS}\\
&  =\langle\Gamma|_{AS}\operatorname{Tr}_{R}[\partial_{\theta_{k}}\rho
_{RA}^{\bm{\theta}}]\otimes\partial_{\theta_{j}}\operatorname{Tr}_{B}%
[\Gamma_{SB}^{\mathcal{N}^{\bm{\theta}}}]|\Gamma\rangle_{AS}\\
&  =\langle\Gamma|_{AS}\operatorname{Tr}_{R}[\partial_{\theta_{k}}\rho
_{RA}^{\bm{\theta}}]\otimes(\partial_{\theta_{j}}I_{S})|\Gamma\rangle_{AS}\\
&  =0, \\
&  \!\!\!\!\!\langle\Gamma|_{AS}\operatorname{Tr}_{RB}[(\partial_{\theta_{j}%
}\rho_{RA}^{\bm{\theta}})\otimes(\partial_{\theta_{k}}\Gamma_{SB}%
^{\mathcal{N}^{\bm{\theta}}})]|\Gamma\rangle_{AS}=0, \text{and} \\
&  \!\!\!\!\!\langle\Gamma|_{AS}\operatorname{Tr}_{RB}[\rho_{RA}%
^{\bm{\theta}}\otimes(\partial_{\theta_{j}}\Gamma_{SB}^{\mathcal{N}%
^{\bm{\theta}}})(\Gamma_{SB}^{\mathcal{N}^{\bm{\theta}}})^{-1}(\partial
_{\theta_{k}}\Gamma_{SB}^{\mathcal{N}^{\bm{\theta}}})]|\Gamma\rangle
_{AS}\nonumber\\
&  =\langle\Gamma|_{AS}\rho_{A}^{\bm{\theta}}\otimes\operatorname{Tr}%
_{B}[(\partial_{\theta_{j}}\Gamma_{SB}^{\mathcal{N}^{\bm{\theta}}}%
)(\Gamma_{SB}^{\mathcal{N}^{\bm{\theta}}})^{-1}(\partial_{\theta_{k}}%
\Gamma_{SB}^{\mathcal{N}^{\bm{\theta}}})]|\Gamma\rangle_{AS}\\
&  =\operatorname{Tr}[(\rho_{S}^{\bm{\theta}})^{T}\operatorname{Tr}%
_{B}[(\partial_{\theta_{j}}\Gamma_{SB}^{\mathcal{N}^{\bm{\theta}}}%
)(\Gamma_{SB}^{\mathcal{N}^{\bm{\theta}}})^{-1}(\partial_{\theta_{k}}%
\Gamma_{SB}^{\mathcal{N}^{\bm{\theta}}})].
\end{align}
	
	Substituting back above, we find that
	\begin{align}
	& \widehat{I}_F(\bm{\theta};\{\mathcal{N}_{A\rightarrow B}^{\bm{\theta}}%
	(\rho_{RA}^{\bm{\theta}})\}_{\bm{\theta}})\nonumber\\
	& \leq\sum_{j,k=1}^{D}|j\rangle\!\langle k|\ \operatorname{Tr}[(\partial
	_{\theta_{j}}\rho_{RA}^{\bm{\theta}})(\rho_{RA}^{\bm{\theta}})^{-1}%
	(\partial_{\theta_{k}}\rho_{RA}^{\bm{\theta}})]\nonumber\\
	& \qquad+\sum_{j,k=1}^{D}|j\rangle\!\langle k|\ \operatorname{Tr}[(\rho
	_{S}^{\bm{\theta}})^{T}\operatorname{Tr}_{B}[(\partial_{\theta_{j}}\Gamma
	_{SB}^{\mathcal{N}^{\bm{\theta}}})(\Gamma_{SB}^{\mathcal{N}^{\bm{\theta}}%
	})^{-1}(\partial_{\theta_{k}}\Gamma_{SB}^{\mathcal{N}^{\bm{\theta}}})]\\
	& =\widehat{I}_F(\bm{\theta};\{\rho_{RA}^{\bm{\theta}}\}_{\bm{\theta}}%
	)+\sum_{j,k=1}^{D}|j\rangle\!\langle k|\ \operatorname{Tr}[(\rho_{S}%
	^{\bm{\theta}})^{T}\operatorname{Tr}_{B}[(\partial_{\theta_{j}}\Gamma
	_{SB}^{\mathcal{N}^{\bm{\theta}}})(\Gamma_{SB}^{\mathcal{N}^{\bm{\theta}}%
	})^{-1}(\partial_{\theta_{k}}\Gamma_{SB}^{\mathcal{N}^{\bm{\theta}}})].
	\end{align}
	This concludes the proof.
\end{proof}

\medskip

Now we show how the operator-inequality chain rule can be used to derive a scalar chain rule for the RLD Fisher information value. This is what we will use to prove its amortization collapse and then to establish Cramer--Rao bounds for multiparameter channel estimation in the sequential setting. 

\begin{proposition} \label{prop:rld-chain-rule}
With $W$ an appropriately chosen weight matrix (i.e. $W \geq0$ and $\Tr[W]=1$), let $\{\mathcal{N}_{A\rightarrow B}^{\bm{\theta}}\}_{\bm{\theta}}$ be a differentiable family of quantum channels, and let $\{\rho_{RA}^{\bm{\theta}}\}_{\bm{\theta}}$ be a differentiable family of quantum states. Then the following scalar chain-rule inequality holds:
\begin{equation} \label{eq:multi-RLD-chain-rule}
	\widehat{I}_F(\bm{\theta},W;\{\mathcal{N}_{A\rightarrow B}^{\bm{\theta}}(\rho_{RA}^{\bm{\theta}})\}_{\bm{\theta}})\leq  \widehat{I}_F(\bm{\theta},W;\{\mathcal{N}_{A\rightarrow B}^{\bm{\theta}}\}_{\bm{\theta}}) +\widehat{I}_F(\bm{\theta},W;\{\rho_{RA}^{\bm{\theta}}\}_{\bm{\theta}}).
\end{equation}
\end{proposition}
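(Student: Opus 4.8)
The strategy is to obtain the scalar chain rule \eqref{eq:multi-RLD-chain-rule} directly from the operator-inequality chain rule of Proposition~\ref{prop:chain-rule-operator-ineq} by pairing it with the weight matrix $W$. First I would dispose of the degenerate cases: if any of the support/finiteness conditions underlying the RLD Fisher information values on the right-hand side of \eqref{eq:multi-RLD-chain-rule} fails, that value equals $+\infty$ and the inequality is vacuous, so I may assume that $\Gamma_{SB}^{\mathcal{N}^{\bm{\theta}}}$ and $\rho_{RA}^{\bm{\theta}}$ satisfy the support conditions needed to invoke Proposition~\ref{prop:chain-rule-operator-ineq}.

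Granting this, Proposition~\ref{prop:chain-rule-operator-ineq} supplies the $D\times D$ operator inequality
\begin{equation}
\widehat{I}_F(\bm{\theta};\{\mathcal{N}_{A\rightarrow B}^{\bm{\theta}}(\rho_{RA}^{\bm{\theta}})\}_{\bm{\theta}})\leq T_{\bm{\theta}}+\widehat{I}_F(\bm{\theta};\{\rho_{RA}^{\bm{\theta}}\}_{\bm{\theta}}),\qquad T_{\bm{\theta}}:=\sum_{j,k=1}^{D}|j\rangle\!\langle k|\operatorname{Tr}\big[(\rho_{S}^{\bm{\theta}})^{T}\operatorname{Tr}_{B}[(\partial_{\theta_{j}}\Gamma_{SB}^{\mathcal{N}^{\bm{\theta}}})(\Gamma_{SB}^{\mathcal{N}^{\bm{\theta}}})^{-1}(\partial_{\theta_{k}}\Gamma_{SB}^{\mathcal{N}^{\bm{\theta}}})]\big].
\end{equation}
Since $W\geq 0$, the linear functional $X\mapsto\operatorname{Tr}[WX]$ is positive, hence monotone under the L\"{o}wner order, so I would apply it to both sides. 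By the definition \eqref{eq:states-rld-fisher-value} of the RLD Fisher information value of states, the left-hand side becomes $\widehat{I}_F(\bm{\theta},W;\{\mathcal{N}_{A\rightarrow B}^{\bm{\theta}}(\rho_{RA}^{\bm{\theta}})\}_{\bm{\theta}})$ and the second term on the right becomes $\widehat{I}_F(\bm{\theta},W;\{\rho_{RA}^{\bm{\theta}}\}_{\bm{\theta}})$, leaving me to show $\operatorname{Tr}[W T_{\bm{\theta}}]\leq\widehat{I}_F(\bm{\theta},W;\{\mathcal{N}_{A\rightarrow B}^{\bm{\theta}}\}_{\bm{\theta}})$.

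For this remaining estimate I would use $\operatorname{Tr}[W|j\rangle\!\langle k|]=\langle k|W|j\rangle$ together with linearity of the trace and partial trace to rewrite $\operatorname{Tr}[W T_{\bm{\theta}}]=\operatorname{Tr}[(\rho_{S}^{\bm{\theta}})^{T}Y]$, where $Y:=\sum_{j,k=1}^{D}\langle k|W|j\rangle\operatorname{Tr}_{B}[(\partial_{\theta_{j}}\Gamma_{SB}^{\mathcal{N}^{\bm{\theta}}})(\Gamma_{SB}^{\mathcal{N}^{\bm{\theta}}})^{-1}(\partial_{\theta_{k}}\Gamma_{SB}^{\mathcal{N}^{\bm{\theta}}})]$. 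The operator $Y$ is exactly the one whose infinity norm computes the channel RLD Fisher information value in Proposition~\ref{prop:geo-fish-explicit-formula-1st-order} (after the harmless relabeling $R\to S$, valid since $S\cong A$), so $\|Y\|_{\infty}=\widehat{I}_F(\bm{\theta},W;\{\mathcal{N}_{A\rightarrow B}^{\bm{\theta}}\}_{\bm{\theta}})$; since $(\rho_{S}^{\bm{\theta}})^{T}$ is a density operator, $\operatorname{Tr}[(\rho_{S}^{\bm{\theta}})^{T}Y]\leq\|Y\|_{\infty}$, which would conclude the argument.

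The only genuinely non-routine step — the part I expect to be the main obstacle — is recognizing that after pairing with $W$ the middle term collapses to $\operatorname{Tr}[(\rho_{S}^{\bm{\theta}})^{T}Y]$ and is therefore controlled by $\|Y\|_{\infty}$; everything else is bookkeeping. To make the bound $\operatorname{Tr}[(\rho_{S}^{\bm{\theta}})^{T}Y]\leq\|Y\|_{\infty}$ fully rigorous I would observe that $Y\geq 0$: diagonalizing $W=\sum_{\ell}w_{\ell}|w_{\ell}\rangle\!\langle w_{\ell}|$ with $w_{\ell}\geq 0$ and setting $A_{\ell}:=\sum_{k}\langle k|w_{\ell}\rangle\,\partial_{\theta_{k}}\Gamma_{SB}^{\mathcal{N}^{\bm{\theta}}}$ gives $Y=\sum_{\ell}w_{\ell}\operatorname{Tr}_{B}[A_{\ell}^{\dag}(\Gamma_{SB}^{\mathcal{N}^{\bm{\theta}}})^{-1}A_{\ell}]\geq 0$, so the variational characterization $\|Y\|_{\infty}=\sup_{\sigma\geq 0,\ \operatorname{Tr}\sigma=1}\operatorname{Tr}[Y\sigma]$ applies with $\sigma=(\rho_{S}^{\bm{\theta}})^{T}$.
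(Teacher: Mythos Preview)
Your proposal is correct and follows essentially the same approach as the paper: apply the positive functional $X\mapsto\operatorname{Tr}[WX]$ to the operator inequality of Proposition~\ref{prop:chain-rule-operator-ineq}, then bound the resulting middle term $\operatorname{Tr}[(\rho_S^{\bm{\theta}})^T Y]$ by $\|Y\|_\infty$ and identify this with the channel RLD Fisher information value via Proposition~\ref{prop:geo-fish-explicit-formula-1st-order}. Your treatment is slightly more careful than the paper's in that you explicitly verify $Y\geq 0$ (needed for the variational characterization of $\|Y\|_\infty$) and dispose of the degenerate finiteness cases up front.
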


\begin{proof}
We start by restating the operator-inequality chain rule in \eqref{app-eq:operator-chain-rule}:
\begin{equation}
\widehat{I}_F(\bm{\theta};\{\mathcal{N}_{A\rightarrow B}^{\bm{\theta}}(\rho
_{RA}^{\bm{\theta}})\}_{\bm{\theta}})\leq\sum_{j,k=1}^{D}|j\rangle\!\langle
k|\operatorname{Tr}[(\rho_{S}^{\bm{\theta}})^{T}\operatorname{Tr}%
_{B}[(\partial_{\theta_{j}}\Gamma_{SB}^{\mathcal{N}^{\bm{\theta}}}%
)(\Gamma_{SB}^{\mathcal{N}^{\bm{\theta}}})^{-1}(\partial_{\theta_{k}}%
\Gamma_{SB}^{\mathcal{N}^{\bm{\theta}}})]
+\widehat{I}_F(\bm{\theta};\{\rho_{RA}^{\bm{\theta}}\}_{\bm{\theta}}).
\end{equation}
	
We sandwich the above with $W^{1/2}$ on both sides and take the trace (a positive map overall) to preserve the inequality and obtain
\begin{multline}
\Tr[ W \widehat{I}_F(\bm{\theta};\{\mathcal{N}_{A\rightarrow B}^{\bm{\theta}}(\rho
_{RA}^{\bm{\theta}})\}_{\bm{\theta}}) ] \leq
\Tr \!\left[ W\sum_{j,k=1}^{D}|j\rangle\!\langle
k|\operatorname{Tr}[(\rho_{S}^{\bm{\theta}})^{T}\operatorname{Tr}%
_{B}[(\partial_{\theta_{j}}\Gamma_{SB}^{\mathcal{N}^{\bm{\theta}}}%
)(\Gamma_{SB}^{\mathcal{N}^{\bm{\theta}}})^{-1}(\partial_{\theta_{k}}%
\Gamma_{SB}^{\mathcal{N}^{\bm{\theta}}})]]\right] \\
+ \Tr[ W\widehat{I}_F(\bm{\theta};\{\rho_{RA}^{\bm{\theta}}\}_{\bm{\theta}})].
\end{multline}	
Using the definition of the RLD Fisher information value in~\eqref{eq:rld-value-def}, the above simplifies to
\begin{multline} \label{app-eq:chain-rule-intermediate}
\widehat{I}_F(\bm{\theta},W;\{\mathcal{N}_{A\rightarrow B}^{\bm{\theta}}%
(\rho_{RA}^{\bm{\theta}})\}_{\bm{\theta}}) \leq 
\Tr\! \left[ W\sum_{j,k=1}^{D}|j\rangle\!\langle
k|\operatorname{Tr}[(\rho_{S}^{\bm{\theta}})^{T}\operatorname{Tr}%
_{B}[(\partial_{\theta_{j}}\Gamma_{SB}^{\mathcal{N}^{\bm{\theta}}}%
)(\Gamma_{SB}^{\mathcal{N}^{\bm{\theta}}})^{-1}(\partial_{\theta_{k}}%
\Gamma_{SB}^{\mathcal{N}^{\bm{\theta}}})] ]\right] \\
+ \widehat{I}_F(\bm{\theta},W;\{\rho_{RA}^{\bm{\theta}}\}_{\bm{\theta}}).
\end{multline}
Now we consider that
\begin{align}
& \Tr \!\left[ W\sum_{j,k=1}^{D}|j\rangle\!\langle
k|\operatorname{Tr}[(\rho_{S}^{\bm{\theta}})^{T}\operatorname{Tr}
_{B}[(\partial_{\theta_{j}}\Gamma_{SB}^{\mathcal{N}^{\bm{\theta}}}
)(\Gamma_{SB}^{\mathcal{N}^{\bm{\theta}}})^{-1}(\partial_{\theta_{k}}
\Gamma_{SB}^{\mathcal{N}^{\bm{\theta}}})]\right] \notag \\
& \qquad = \sum_{j,k=1}^D \langle k | W | j \rangle \Tr\!\left[ (\rho_{S}^{\bm{\theta}})^{T} \operatorname{Tr}%
_{B}[(\partial_{\theta_{j}}\Gamma_{SB}^{\mathcal{N}^{\bm{\theta}}}%
)(\Gamma_{SB}^{\mathcal{N}^{\bm{\theta}}})^{-1}(\partial_{\theta_{k}}%
\Gamma_{SB}^{\mathcal{N}^{\bm{\theta}}})]\right] \\
& \qquad = \Tr\!\left[ (\rho_{S}^{\bm{\theta}})^{T} \sum_{j,k=1}^D \langle k | W | j \rangle \operatorname{Tr}%
_{B}[(\partial_{\theta_{j}}\Gamma_{SB}^{\mathcal{N}^{\bm{\theta}}}%
)(\Gamma_{SB}^{\mathcal{N}^{\bm{\theta}}})^{-1}(\partial_{\theta_{k}}%
\Gamma_{SB}^{\mathcal{N}^{\bm{\theta}}})]  \right] \\
& \qquad \leq \sup_{\rho^{\bm{\theta}}_S }  \Tr \!\left[ (\rho_{S}^{\bm{\theta}})^{T} \sum_{j,k=1}^D \langle k | W | j \rangle \operatorname{Tr}%
_{B}[(\partial_{\theta_{j}}\Gamma_{SB}^{\mathcal{N}^{\bm{\theta}}}%
)(\Gamma_{SB}^{\mathcal{N}^{\bm{\theta}}})^{-1}(\partial_{\theta_{k}}%
\Gamma_{SB}^{\mathcal{N}^{\bm{\theta}}})]  \right] \\
& \qquad =  \left\Vert \sum_{j,k=1}^D \langle k | W | j \rangle \operatorname{Tr}%
_{B}[(\partial_{\theta_{j}}\Gamma_{SB}^{\mathcal{N}^{\bm{\theta}}}%
)(\Gamma_{SB}^{\mathcal{N}^{\bm{\theta}}})^{-1}(\partial_{\theta_{k}}%
\Gamma_{SB}^{\mathcal{N}^{\bm{\theta}}})]   \right\Vert_\infty \\
& \qquad = \widehat{I}_F
(\bm{\theta},W;\{\mathcal{N}_{A\rightarrow B}^{\bm{\theta}}\}_{\bm{\theta}}. \
\end{align}

Using the above and \eqref{app-eq:chain-rule-intermediate}, we conclude that
\begin{equation}
\widehat{I}_F(\bm{\theta},W;\{\mathcal{N}_{A\rightarrow B}^{\bm{\theta}}%
(\rho_{RA}^{\bm{\theta}})\}_{\bm{\theta}}) \leq \widehat{I}_F
(\bm{\theta},W;\{\mathcal{N}_{A\rightarrow B}^{\bm{\theta}}\}_{\bm{\theta}}
+ \widehat{I}_F(\bm{\theta},W;\{\rho_{RA}^{\bm{\theta}}\}_{\bm{\theta}}).
\end{equation}
This concludes the proof.
\end{proof}

We now show how the chain rule proved above results in an amortization collapse for the RLD Fisher information value of quantum channels.

\begin{corollary}
	Let $\{\mathcal{N}_{A\rightarrow B}^{\bm{\theta}}\}_{\bm{\theta}}$ be a differentiable family of quantum channels. Then the following amortization collapse occurs for the RLD Fisher information value of quantum channels:
	\begin{equation}
		\widehat{I}_F^{\mathcal{A}}(\bm{\theta},W;\{\mathcal{N}_{A\rightarrow B}^{\bm{\theta}}\}_{\bm{\theta}}) = \widehat{I}_F(\bm{\theta},W;\{\mathcal{N}_{A\rightarrow B}^{\bm{\theta}}\}_{\bm{\theta}}). \label{eq:rld-fisher-value-amort-collapse}
	\end{equation} 
\end{corollary}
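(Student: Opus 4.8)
The plan is to mirror exactly the structure used for the single-parameter RLD amortization collapse (Corollary~\ref{cor:amort-collapse-RLD-fish}), since the scalar chain rule (Proposition~\ref{prop:rld-chain-rule}) is the multiparameter analogue of the chain rule (Proposition~\ref{prop:chain-rule-RLD}). The inequality $\geq$ in \eqref{eq:rld-fisher-value-amort-collapse} is already established as \eqref{eq:multiparam-rld-amort-ineq-obvi-dir} (the ``obvious direction''), so the only work is the reverse inequality $\leq$, which should follow immediately from the chain rule.

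Concretely, first I would dispense with the degenerate case: if the finiteness condition \eqref{eq:app:channel-RLD-finiteness-cond} fails, then $\widehat{I}_F(\bm{\theta},W;\{\mathcal{N}_{A\rightarrow B}^{\bm{\theta}}\}_{\bm{\theta}})=+\infty$ and the equality holds trivially (both sides are $+\infty$, using \eqref{eq:multiparam-rld-amort-ineq-obvi-dir} for the $\geq$ bound on the amortized quantity). So assume the finiteness condition holds. Then for an arbitrary differentiable family $\{\rho_{RA}^{\bm{\theta}}\}_{\bm{\theta}}$ of quantum states, Proposition~\ref{prop:rld-chain-rule} gives
\begin{equation}
\widehat{I}_F(\bm{\theta},W;\{\mathcal{N}_{A\rightarrow B}^{\bm{\theta}}(\rho_{RA}^{\bm{\theta}})\}_{\bm{\theta}})-\widehat{I}_F(\bm{\theta},W;\{\rho_{RA}^{\bm{\theta}}\}_{\bm{\theta}})\leq\widehat{I}_F(\bm{\theta},W;\{\mathcal{N}_{A\rightarrow B}^{\bm{\theta}}\}_{\bm{\theta}}).
\end{equation}
Since the family $\{\rho_{RA}^{\bm{\theta}}\}_{\bm{\theta}}$ is arbitrary, I would take the supremum of the left-hand side over all such families and invoke Definition~\ref{eq:amortized-rld-value-def} to conclude
\begin{equation}
\widehat{I}_F^{\mathcal{A}}(\bm{\theta},W;\{\mathcal{N}_{A\rightarrow B}^{\bm{\theta}}\}_{\bm{\theta}})\leq\widehat{I}_F(\bm{\theta},W;\{\mathcal{N}_{A\rightarrow B}^{\bm{\theta}}\}_{\bm{\theta}}).
\end{equation}
Combining this with \eqref{eq:multiparam-rld-amort-ineq-obvi-dir} yields \eqref{eq:rld-fisher-value-amort-collapse}.

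There is essentially no obstacle here at the level of the corollary: the real content has already been carried by the operator-inequality chain rule (Proposition~\ref{prop:chain-rule-operator-ineq}), whose proof is the hard part — getting the transformer inequality applied with the right $L$, $X$, $Y$, expanding $\partial_{\bm{\theta}}(\rho_{RA}^{\bm{\theta}}\otimes\Gamma_{SB}^{\mathcal{N}^{\bm{\theta}}})$, and using the support conditions to ``add in'' the vanishing cross terms — and by the passage from the operator chain rule to the scalar one (Proposition~\ref{prop:rld-chain-rule}), which requires the variational characterization $\|Y\|_\infty=\sup_{\rho\geq0,\operatorname{Tr}[\rho]=1}\operatorname{Tr}[Y\rho]$ to turn the $\operatorname{Tr}[(\rho_S^{\bm{\theta}})^T(\cdot)]$ term into the infinity norm appearing in \eqref{eq:RLD-value-channels-inf-norm}. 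Given those, the corollary is a two-line argument, and the only minor care needed is to handle the $+\infty$ case explicitly so that no subtraction of infinities occurs.
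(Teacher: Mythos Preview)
Your proposal is correct and follows essentially the same approach as the paper: rewrite the chain rule \eqref{eq:multi-RLD-chain-rule} as a bound on the difference, take the supremum over input families to get $\widehat{I}_F^{\mathcal{A}}\leq\widehat{I}_F$, and combine with \eqref{eq:multiparam-rld-amort-ineq-obvi-dir}. The paper's proof is even terser (it does not explicitly treat the $+\infty$ case here, though it does so in the single-parameter analogue), so your additional care about the finiteness condition is harmless extra detail rather than a deviation.
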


\begin{proof}
	The chain rule in \eqref{eq:multi-RLD-chain-rule} can be rewritten as
\begin{equation}
	\widehat{I}_F(\bm{\theta},W;\{\mathcal{N}_{A\rightarrow B}^{\bm{\theta}}(\rho_{RA}^{\bm{\theta}})\}_{\bm{\theta}}) - \widehat{I}_F(\bm{\theta},W;\{\rho_{RA}^{\bm{\theta}}\}_{\bm{\theta}}) \leq  \widehat{I}_F(\bm{\theta},W;\{\mathcal{N}_{A\rightarrow B}^{\bm{\theta}}\}_{\bm{\theta}}).
\end{equation}
Taking the supremum over input-state families on the left-hand side gives
\begin{equation}
	\widehat{I}_F^{\mathcal{A}}(\bm{\theta},W;\{\mathcal{N}_{A\rightarrow B}^{\bm{\theta}}\}_{\bm{\theta}}) \leq \widehat{I}_F(\bm{\theta},W;\{\mathcal{N}_{A\rightarrow B}^{\bm{\theta}}\}_{\bm{\theta}}).
\end{equation}
When the above is combined with the direct inequality obeyed by the amortized RLD Fisher information value \eqref{eq:multiparam-rld-amort-ineq-obvi-dir}, we obtain the desired amortization collapse in \eqref{eq:rld-fisher-value-amort-collapse}.
\end{proof}

\section{Limits on multiparameter channel estimation using RLD Fisher information value of quantum channels}

Now that we have established the amortization collapse for the RLD Fisher information value, we are in a position to use it to establish a Cramer--Rao bound for sequential channel estimation. However, just as we did in Chapter~\ref{ch:single} for the single parameter case, we need to first prove a meta-converse theorem that connects the amortized RLD Fisher information value to the RLD Fisher information value achieved by a sequential channel estimation protocol.

\begin{figure}
\centering
\includegraphics[width=\linewidth]{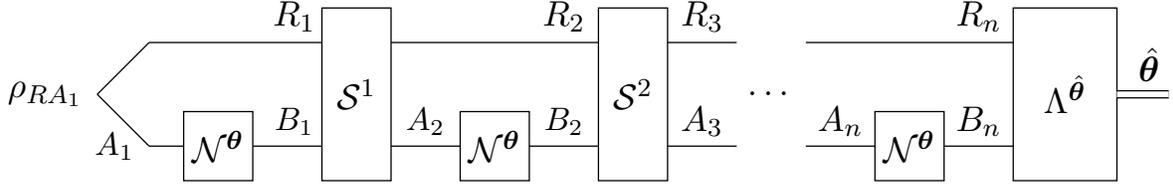}
\caption{Processing a channel sequentially for multiparameter estimation}{Processing $n$ uses of channel $\mathcal{N}^{\bm{\theta}}$ in a sequential or adaptive manner is the most general approach to channel parameter estimation. The $n$ uses of the channel are interleaved with $n-1$ quantum channels $\mathcal{S}^{1}$ through $\mathcal{S}^{n-1}$, which can also share memory systems with each other. The final measurement's outcome is then used to obtain an estimate of the unknown parameter vector~$\bm{\theta}$.  }
\label{fig:multi-sequential-strategies}
\end{figure}

\begin{theorem} \label{thm:multi-meta-converse}
Consider a general sequential estimation protocol as described in detail for the single parameter case in Chapter~\ref{ch:prelims} and reproduced for the multiparameter case here in Figure~\ref{fig:multi-sequential-strategies}. The following inequality holds:
\begin{equation}
\widehat{I}_{F}(\bm{\theta}, W;\{\omega_{R_{n}B_{n}}^{\bm{\theta}}\}_{\bm{\theta}})\leq
n\cdot \widehat{I}_{F}^{\mathcal{A}}(\bm{\theta}, W;\{\mathcal{N}_{A\rightarrow B}^{\bm{\theta}
}\}_{\bm{\theta}}), \label{eq:multiparam-meta-converse}
\end{equation}
where $\omega_{R_{n}B_{n}}^{\bm{\theta}}$ is the final state of an $n$-round sequential estimation protocol, and $\mathcal{S}^1$ through $\mathcal{S}^{n-1}$ are interleaving quantum channels in the protocol, both as in Figure \ref{fig:multi-sequential-strategies}.
\end{theorem}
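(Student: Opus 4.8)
The plan is to follow the same telescoping argument used for the single-parameter meta-converse, Theorem~\ref{thm:meta-converse}, but now with the RLD Fisher information value $\widehat{I}_F(\bm{\theta},W;\cdot)$ in place of the generic weakly-faithful $\mathbf{I}_F$, and using the multiparameter chain rule from Proposition~\ref{prop:rld-chain-rule} in place of the generic data-processing/amortization step. First I would unwind the notation of the sequential protocol from Chapter~\ref{ch:prelims}: the initial state $\rho_{R_1 A_1}$ has no dependence on $\bm{\theta}$, the intermediate states are $\rho_{R_i B_i}^{\bm{\theta}} := \mathcal{N}_{A_i\to B_i}^{\bm{\theta}}(\rho_{R_i A_i}^{\bm{\theta}})$ and $\rho_{R_i A_i}^{\bm{\theta}} := \mathcal{S}^{i-1}_{R_{i-1}B_{i-1}\to R_i A_i}(\rho_{R_{i-1}B_{i-1}}^{\bm{\theta}})$, and $\omega_{R_n B_n}^{\bm{\theta}}$ is the final state in \eqref{eq:estimation-final-state}.

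The core computation is a telescoping sum. Starting from $\widehat{I}_F(\bm{\theta},W;\{\omega_{R_n B_n}^{\bm{\theta}}\}_{\bm{\theta}})$, I would subtract $\widehat{I}_F(\bm{\theta},W;\{\rho_{R_1 A_1}\}_{\bm{\theta}})$, which is legitimate because for a parameter-independent state family this quantity is zero (one checks the finiteness condition holds trivially with $\partial_{\theta_j}\rho_{R_1 A_1}=0$, so all matrix elements of $\widehat{I}_F$ vanish; this is the analogue of weak faithfulness used in Theorem~\ref{thm:meta-converse}). Then I insert the telescoping terms $\sum_{i=2}^n\big(\widehat{I}_F(\bm{\theta},W;\{\rho_{R_i A_i}^{\bm{\theta}}\}_{\bm{\theta}}) - \widehat{I}_F(\bm{\theta},W;\{\rho_{R_i A_i}^{\bm{\theta}}\}_{\bm{\theta}})\big)$, rewrite $\rho_{R_i A_i}^{\bm{\theta}} = \mathcal{S}^{i-1}(\rho_{R_{i-1}B_{i-1}}^{\bm{\theta}})$, and apply data processing for the RLD Fisher information value under the parameter-independent channel $\mathcal{S}^{i-1}$ to get $\widehat{I}_F(\bm{\theta},W;\{\rho_{R_i A_i}^{\bm{\theta}}\}_{\bm{\theta}}) \le \widehat{I}_F(\bm{\theta},W;\{\rho_{R_{i-1}B_{i-1}}^{\bm{\theta}}\}_{\bm{\theta}})$. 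After regrouping, the expression collapses to $\sum_{i=1}^n\big(\widehat{I}_F(\bm{\theta},W;\{\rho_{R_i B_i}^{\bm{\theta}}\}_{\bm{\theta}}) - \widehat{I}_F(\bm{\theta},W;\{\rho_{R_i A_i}^{\bm{\theta}}\}_{\bm{\theta}})\big)$. Since $\rho_{R_i B_i}^{\bm{\theta}} = \mathcal{N}_{A_i\to B_i}^{\bm{\theta}}(\rho_{R_i A_i}^{\bm{\theta}})$, each summand is bounded by $\sup_{\{\rho_{RA}^{\bm{\theta}}\}_{\bm{\theta}}}\big[\widehat{I}_F(\bm{\theta},W;\{\mathcal{N}_{A\to B}^{\bm{\theta}}(\rho_{RA}^{\bm{\theta}})\}_{\bm{\theta}}) - \widehat{I}_F(\bm{\theta},W;\{\rho_{RA}^{\bm{\theta}}\}_{\bm{\theta}})\big] = \widehat{I}_F^{\mathcal{A}}(\bm{\theta},W;\{\mathcal{N}_{A\to B}^{\bm{\theta}}\}_{\bm{\theta}})$, giving the factor of $n$ and the claimed inequality.

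The two ingredients I need to verify carefully are: (i) data processing for the scalar RLD Fisher information value under a parameter-independent channel, and (ii) that the $\widehat{I}_F(\bm{\theta},W;\cdot)$-difference in the telescoped sum is exactly of the form defining the amortized quantity. For (i), the RLD Fisher information matrix obeys data processing (as the matrix inequality $\widehat{I}_F(\bm{\theta};\{\mathcal{M}(\rho_{\bm{\theta}})\}) \le \widehat{I}_F(\bm{\theta};\{\rho_{\bm{\theta}}\})$), and sandwiching by $W^{1/2}$ and taking the trace preserves this, so $\widehat{I}_F(\bm{\theta},W;\{\mathcal{M}(\rho_{\bm{\theta}})\}) \le \widehat{I}_F(\bm{\theta},W;\{\rho_{\bm{\theta}}\})$ for parameter-independent $\mathcal{M}$ — the same monotone-under-trace-with-$W^{1/2}$ argument used in the proof of Proposition~\ref{prop:rld-chain-rule}. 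I expect (i) to be the only mild obstacle, since it requires invoking matrix data processing for the RLD Fisher information matrix (attributable to Petz / the RLD monotonicity results cited earlier) rather than the chain rule — the chain rule of Proposition~\ref{prop:rld-chain-rule} handles the channel-application step implicitly through the amortization bound, but the interleaving-channel step genuinely needs data processing of the value itself. Everything else is a direct transcription of the single-parameter argument, with $\theta \mapsto \bm{\theta}$, $\mathbf{I}_F \mapsto \widehat{I}_F(\cdot,W;\cdot)$, and "weak faithfulness" replaced by the explicit observation that $\widehat{I}_F(\bm{\theta},W;\{\rho\}_{\bm{\theta}}) = 0$ for a constant family.
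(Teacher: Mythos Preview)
Your proposal is correct and follows essentially the same telescoping argument as the paper's proof: subtract the (zero) RLD Fisher information value of the parameter-independent initial state, insert and cancel the intermediate $\widehat{I}_F(\bm{\theta},W;\{\rho_{R_iA_i}^{\bm{\theta}}\}_{\bm{\theta}})$ terms, apply data processing under the parameter-independent interleaving channels $\mathcal{S}^{i-1}$, and bound each of the $n$ resulting summands by the amortized quantity. Your explicit justification of data processing for the scalar value (matrix monotonicity of $\widehat{I}_F$ followed by sandwiching with $W^{1/2}$ and tracing) is slightly more detailed than the paper, which simply invokes ``the data-processing inequality for the RLD Fisher information,'' but the structure is identical; note also that Proposition~\ref{prop:rld-chain-rule} is not actually needed here, as you correctly recognize by the end.
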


\begin{proof}
Consider that
\begin{align}
&  \widehat{I}_{F}(\bm{\theta}, W;\{\omega_{R_{n}B_{n}}^{\bm{\theta}}\}_{\bm{\theta}})\nonumber\\
&  =\widehat{I}_{F}(\bm{\theta}, W;\{\omega_{R_{n}B_{n}}^{\bm{\theta}}\}_{\bm{\theta}
})-\widehat{I}_{F}(\bm{\theta}, W;\{\rho_{R_{1}A_{1}}\}_{\bm{\theta}})\\
&  =\widehat{I}_{F}(\bm{\theta}, W;\{\omega_{R_{n}B_{n}}^{\bm{\theta}}\}_{\bm{\theta}
})-\widehat{I}_{F}(\bm{\theta}, W;\{\rho_{R_{1}A_{1}}\}_{\bm{\theta}}) \\
&\qquad + \sum_{i=2}^{n}\left(
\widehat{I}_{F}(\bm{\theta}, W;\{\rho_{R_{i}A_{i}}^{\bm{\theta}}\}_{\bm{\theta}})-\widehat{I}
_{F}(\bm{\theta}, W;\{\rho_{R_{i}A_{i}}^{\bm{\theta}}\}_{\bm{\theta}})\right) \\
&  =\widehat{I}_{F}(\bm{\theta}, W;\{\omega_{R_{n}B_{n}}^{\bm{\theta}}\}_{\bm{\theta}
})-\widehat{I}_{F}(\bm{\theta}, W;\{\rho_{R_{1}A_{1}}\}_{\bm{\theta}})\nonumber\\
&  \qquad+\sum_{i=2}^{n}\left(  \widehat{I}_{F}(\bm{\theta}, W;\{\mathcal{S}
_{R_{i-1}B_{i-1}\rightarrow R_{i}A_{i}}^{i-1}(\rho_{R_{i-1}B_{i-1}}^{\bm{\theta}
})\}_{\bm{\theta}})-\widehat{I}_{F}(\bm{\theta}, W;\{\rho_{R_{i}A_{i}}^{\bm{\theta}}\}_{\bm{\theta}
})\right) \\
&  \leq\widehat{I}_{F}(\bm{\theta}, W;\{\omega_{R_{n}B_{n}}^{\bm{\theta}}\}_{\bm{\theta}
})-\widehat{I}_{F}(\bm{\theta}, W;\{\rho_{R_{1}A_{1}}\}_{\bm{\theta}})\nonumber\\
&  \qquad+\sum_{i=2}^{n}\left(  \widehat{I}_{F}(\bm{\theta}, W;\{\rho_{R_{i-1}B_{i-1}
}^{\bm{\theta}}\}_{\bm{\theta}})-\widehat{I}_{F}(\bm{\theta}, W;\{\rho_{R_{i}A_{i}}^{\bm{\theta}
}\}_{\bm{\theta}})\right) \\
&  =\sum_{i=1}^{n}\left(  \widehat{I}_{F}(\bm{\theta}, W;\{\rho_{R_{i}B_{i}}^{\bm{\theta}
}\}_{\bm{\theta}})-\widehat{I}_{F}(\bm{\theta}, W;\{\rho_{R_{i}A_{i}}^{\bm{\theta}}\}_{\bm{\theta}
}\right) \\
&  =\sum_{i=1}^{n}\left(  \widehat{I}_{F}(\bm{\theta}, W;\{\mathcal{N}_{A_{i}
	\rightarrow B_{i}}^{\bm{\theta}}(\rho_{R_{i}A_{i}}^{\bm{\theta}})\}_{\bm{\theta}}
)-\widehat{I}_{F}(\bm{\theta}, W;\{\rho_{R_{i}A_{i}}^{\bm{\theta}}\}_{\bm{\theta}}\right) \\
&  \leq n\cdot\sup_{\{\rho_{RA}^{\bm{\theta}}\}_{\bm{\theta}}}\left[  \widehat{I}
_{F}(\bm{\theta}, W;\{\mathcal{N}_{A\rightarrow B}^{\bm{\theta}}(\rho_{RA}^{\bm{\theta}
})\}_{\bm{\theta}})-\widehat{I}_{F}(\bm{\theta}, W;\{\rho_{RA}^{\bm{\theta}})\}_{\bm{\theta}})\right]
\\
&  =n\cdot \widehat{I}_{F}^{\mathcal{A}}(\bm{\theta}, W;\{\mathcal{N}_{A\rightarrow
	B}^{\bm{\theta}}\}_{\theta}).
\end{align}
The first equality follows because the initial state $\rho_{R_1 A_1}$ has no dependence on any of the parameters in $\bm{\theta}$. The first inequality arises due to the data-processing inequality for the RLD Fisher information. The other steps are straightforward manipulations.
\end{proof}

We have now assembled all the necessary components needed to establish the main result of this Chapter, a multiparameter Cramer--Rao bound for sequential channel estimation. 

\begin{theorem} \label{thm:single-letter-multi-param-crb}
	For a differentiable channel family $\{\mathcal{N}_{A\rightarrow B}^{\bm{\theta}}\}_{\bm{\theta}}$ and a positive semidefinite weight matrix $W$ with $\Tr[W] = 1$, the following multiparameter Cramer--Rao bound holds:
	\begin{equation}
		\Tr[ W \operatorname{Cov}(\bm{\theta}) ] \geq \frac{1}{n  \widehat{I}_F (\bm{\theta},W;\{ \mathcal{N}_{A\rightarrow B}^{\bm{\theta}} \}_{\bm{\theta}}) }. \label{eq:single-letter-multi-param-crb}
	\end{equation}
\end{theorem}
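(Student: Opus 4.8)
The plan is to combine three ingredients that are all available by this point in the text: the scalar Cramer--Rao bound for state estimation from Theorem~\ref{thm:state-rld-scalar-inequality}, the meta-converse from Theorem~\ref{thm:multi-meta-converse}, and the amortization collapse from \eqref{eq:rld-fisher-value-amort-collapse}. The key observation is that any $n$-round sequential estimation protocol, together with its final measurement, produces a final state $\omega_{R_n B_n}^{\bm{\theta}}$ and then a classical post-processing of a measurement of it; since the classical estimator $\widehat{\bm{\theta}}$ is an unbiased estimator of $\bm{\theta}$ obtained by measuring this state, Theorem~\ref{thm:state-rld-scalar-inequality} applies directly to the family $\{\omega_{R_n B_n}^{\bm{\theta}}\}_{\bm{\theta}}$.

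First I would invoke Theorem~\ref{thm:state-rld-scalar-inequality} with the state family $\{\omega_{R_n B_n}^{\bm{\theta}}\}_{\bm{\theta}}$ to get
\begin{equation}
\Tr[W\operatorname{Cov}(\bm{\theta})] \geq \frac{1}{\widehat{I}_F(\bm{\theta},W;\{\omega_{R_n B_n}^{\bm{\theta}}\}_{\bm{\theta}})}.
\end{equation}
Here one should note that strictly speaking Theorem~\ref{thm:state-rld-scalar-inequality} is stated for a family of states, and the measurement plus classical post-processing is a data-processing step; since the RLD Fisher information value obeys data processing (it is a supremum over inputs of a quantity built from the RLD Fisher information matrix, which obeys data processing), the covariance of the final estimator is lower bounded by $1/\widehat{I}_F(\bm{\theta},W;\{\omega_{R_n B_n}^{\bm{\theta}}\}_{\bm{\theta}})$. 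Then I would apply the meta-converse of Theorem~\ref{thm:multi-meta-converse} to upper bound the denominator:
\begin{equation}
\widehat{I}_F(\bm{\theta},W;\{\omega_{R_n B_n}^{\bm{\theta}}\}_{\bm{\theta}}) \leq n\cdot\widehat{I}_F^{\mathcal{A}}(\bm{\theta},W;\{\mathcal{N}_{A\rightarrow B}^{\bm{\theta}}\}_{\bm{\theta}}).
\end{equation}
Finally, by the amortization collapse in \eqref{eq:rld-fisher-value-amort-collapse}, the amortized quantity equals the ordinary RLD Fisher information value, so $\widehat{I}_F(\bm{\theta},W;\{\omega_{R_n B_n}^{\bm{\theta}}\}_{\bm{\theta}}) \leq n\cdot\widehat{I}_F(\bm{\theta},W;\{\mathcal{N}_{A\rightarrow B}^{\bm{\theta}}\}_{\bm{\theta}})$. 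Substituting this into the first displayed inequality (and using monotonicity of $x\mapsto 1/x$ on the positive reals) yields \eqref{eq:single-letter-multi-param-crb}.

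The main obstacle, and the step I would be most careful about, is the first one: justifying that Theorem~\ref{thm:state-rld-scalar-inequality} really transfers to a sequential protocol. This requires being precise that the estimate $\widehat{\bm{\theta}}$ arises from measuring $\omega_{R_n B_n}^{\bm{\theta}}$ with a parameter-independent POVM followed by parameter-independent classical post-processing, so that $\operatorname{Cov}(\bm{\theta})$ here is exactly the covariance matrix of an unbiased estimator for the state family $\{\omega_{R_n B_n}^{\bm{\theta}}\}_{\bm{\theta}}$, and that the state-estimation scalar CRB \eqref{eq:state-rld-scalar-inequality} combined with data processing of the RLD Fisher information value covers the measurement-and-postprocess step. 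One subtlety worth a remark is the finiteness condition for the RLD Fisher information value: if it fails for the channel family then the right-hand side of \eqref{eq:single-letter-multi-param-crb} is $0$ and the bound is trivial, so the content of the theorem is for channel families satisfying \eqref{eq:app:channel-RLD-finiteness-cond}, in which case the chain rule and meta-converse guarantee finiteness of $\widehat{I}_F(\bm{\theta},W;\{\omega_{R_n B_n}^{\bm{\theta}}\}_{\bm{\theta}})$ as well.
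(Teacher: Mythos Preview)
Your proposal is correct and follows essentially the same route as the paper: the paper's proof is a one-sentence invocation of the chain rule \eqref{eq:multi-RLD-chain-rule}, the resulting amortization collapse \eqref{eq:rld-fisher-value-amort-collapse}, and the meta-converse \eqref{eq:multiparam-meta-converse}. You are in fact more explicit than the paper, since you spell out the first step---applying the state-level scalar CRB (Theorem~\ref{thm:state-rld-scalar-inequality}) to the final state family $\{\omega_{R_nB_n}^{\bm{\theta}}\}_{\bm{\theta}}$---which the paper leaves implicit, and you correctly flag the finiteness caveat.
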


\begin{proof}
	The multiparameter Cramer--Rao bound follows as a direct consequence of the chain rule \eqref{eq:multi-RLD-chain-rule}, ensuing amortization collapse \eqref{eq:rld-fisher-value-amort-collapse}, as well as the meta-converse in \eqref{eq:multiparam-meta-converse}.
\end{proof}

The bound in~\eqref{eq:single-letter-multi-param-crb} is single-letter. That is, even though the bound holds for an $n$-use sequential strategy, it requires an evaluation of a quantity involving only a single channel use, i.e $\widehat{I}_F (\bm{\theta},W;\{ \mathcal{N}_{A\rightarrow B}^{\bm{\theta}} \}_{\bm{\theta}})$. Further, quantum channels encompass all possible evolutions allowed by quantum mechanics. Our Cramer--Rao bound holds for estimating channels in the sequential setting and therefore applies to all possible estimation tasks. Another desirable aspect of our bound is that the RLD Fisher information value can be computed using a semi-definite program, details of which we provide later in Section~\ref{sec:multiparam-sdp}.

\section{Example: Estimating the parameters of the generalized amplitude damping channel}

In Chapter~\ref{ch:single}, we evaluated the RLD Fisher information of a generalized amplitude damping channel (GADC) for various single parameter estimation tasks. We can generalize that example and use the techniques developed in this chapter to evaluate Cramer--Rao bounds for the task of simultaneously estimating the noise and loss parameters of a GADC. First, we repeat the Choi operator of a GADC $\mathcal{A}_{\gamma,N}$ with loss parameter $\gamma$ and noise parameter $N$ that we gave in Chapter~\ref{ch:single}.
\begin{equation}
	\Gamma_{RB}^{\mathcal{A}_{\gamma,N}}  \coloneqq 
	\begin{bmatrix}
		1-\gamma N & 0 & 0 & \sqrt{1-\gamma}\\
		0 & \gamma N & 0 & 0\\
		0 & 0 & \gamma\left(  1-N\right)  & 0\\
		\sqrt{1-\gamma} & 0 & 0 & 1-\gamma\left(  1-N\right)
	\end{bmatrix}.
\end{equation}

For the purposes of this example, let us choose
\begin{equation}
W = \frac{1}{4}  \begin{pmatrix}1&1\\1&3\end{pmatrix}
,
\label{eq:W-matrix-choice-end}
\end{equation}
which satisfies the requirements for a valid weight matrix (positive semi-definite with unit trace). To compute the RLD bound in \eqref{eq:single-letter-multi-param-crb}, we calculate
\begin{multline} \label{eq:gadc-rld-info-value}
	\widehat{I}_F(\{\gamma, N\}, W; \{\mathcal{A}_{\gamma, N} \}_{\gamma, N})  = \frac{1}{4} \Big\Vert \Tr_B[ (\partial_{\gamma} \Gamma^{\mathcal{A}}) (\Gamma^{\mathcal{A}})^{-1} (\partial_\gamma \Gamma^{\mathcal{A}}) + (\partial_{\gamma} \Gamma^{\mathcal{A}}) (\Gamma^{\mathcal{A}})^{-1} (\partial_N \Gamma^{\mathcal{A}}) \\ + (\partial_{N} \Gamma^{\mathcal{A}}) (\Gamma^{\mathcal{A}})^{-1} (\partial_\gamma \Gamma^{\mathcal{A}}) + 3 (\partial_{N} \Gamma^{\mathcal{A}}) (\Gamma^{\mathcal{A}})^{-1} (\partial_N \Gamma^{\mathcal{A}})]  \Big\Vert_{\infty}.
\end{multline}
where $\mathcal{A}$ is used as shorthand for $\mathcal{A}_{\gamma, N}$ and the system labels for $RB$ are omitted. As a consequence of the Cramer--Rao bound \eqref{eq:single-letter-multi-param-crb}, the inverse of \eqref{eq:gadc-rld-info-value} is a lower bound on $\Tr[ W \text{Cov}(\{ \gamma, N \}) ] $.

We have
\begin{align}
	\partial_{\gamma}\Gamma_{RB}^{\mathcal{A}_{\gamma,N}} &=
\begin{bmatrix}
-N & 0 & 0 & -\frac{1}{2\sqrt{1-\gamma}}\\
0 & N & 0 & 0\\
0 & 0 & 1-N & 0\\
-\frac{1}{2\sqrt{1-\gamma}} & 0 & 0 & -\left(  1-N\right)
\end{bmatrix}, \text{ and} \\
	\partial_{N}\Gamma_{RB}^{\mathcal{A}_{\gamma,N}} &=-\gamma\left(  I_{2}
\otimes\sigma_{Z}\right)  .
\end{align}

We then have
\begin{align}
	\Tr_B[(\partial_{\gamma} \Gamma^{\mathcal{A}_{\gamma, N}}) (\Gamma^{\mathcal{A}_{\gamma, N}})^{-1} (\partial_\gamma \Gamma^{\mathcal{A}_{\gamma, N}})  ] &=
	\begin{bmatrix}
		\frac{\frac{1}{N-\gamma N}+\frac{1}{1-N}-4}{4 \gamma^2} & 0 \\
		0 & \frac{\frac{1}{(\gamma-1) (N-1)}+\frac{1}{N}-4}{4 \gamma^2}
	\end{bmatrix},
	\\
	\Tr_B[(\partial_{\gamma} \Gamma^{\mathcal{A}_{\gamma, N}}) (\Gamma^{\mathcal{A}_{\gamma, N}})^{-1} (\partial_N \Gamma^{\mathcal{A}_{\gamma, N}})  ] &= 
	\begin{bmatrix}
	-\frac{1-2 N}{2 \gamma N(1-N)}  & 0 \\
	0 & -\frac{1-2 N}{2 \gamma N(1-N)}
	\end{bmatrix},
	\\
	\Tr_B[(\partial_{N} \Gamma^{\mathcal{A}_{\gamma, N}}) (\Gamma^{\mathcal{A}_{\gamma, N}})^{-1} (\partial_\gamma \Gamma^{\mathcal{A}_{\gamma, N}})  ] &=
	\begin{bmatrix}
	-\frac{1-2 N}{2 \gamma N(1-N)}  & 0 \\
	0 & -\frac{1-2 N}{2 \gamma N(1-N)}
	\end{bmatrix},
	\\
	\Tr_B[(\partial_{N} \Gamma^{\mathcal{A}_{\gamma, N}}) (\Gamma^{\mathcal{A}_{\gamma, N}})^{-1} (\partial_N \Gamma^{\mathcal{A}_{\gamma, N}})  ] &= 
	\begin{bmatrix}
		\frac{1}{N(1-N)} & 0 \\
		0 & \frac{1}{N(1-N)}
	\end{bmatrix},
\end{align}
with which the expression in \eqref{eq:gadc-rld-info-value} is directly calculated. Furthermore, we verify our direct calculation by explicitly calculating~\eqref{eq:gadc-rld-info-value} using the semi-definite program for the RLD Fisher information value which we later provide explicitly in Proposition \ref{prop:rld-channels-sdp}. We find agreement between the two approaches up to eight digits of precision.

We compare the RLD Fisher information bound to the generalized Helstrom Cramer--Rao bound \cite{Albarelli2020}. In this case, it reduces to the SLD Fisher information bound as we are in the regime of parametric, rather than semiparametric, estimation. This lower bound can be achieved asymptotically up to a constant prefactor \cite{Guta2006, Hayashi2008, Yamagata2013, Yang2019a, Tsang2019a}.

To calculate the SLD Fisher information, we choose input probe state $|\psi(p)\rangle \coloneqq \sqrt{p} \ket{00} + \sqrt{1-p} \ket{11}$, with the same $W$ as in the RLD calculation. It suffices to optimize the single-copy SLD Fisher information over such states due to the $\sigma_Z$ covariance of the channel $\mathcal{A}_{\gamma, N}$. This SLD bound for estimating just the loss parameter $\gamma$ of a GADC was calculated in ~\cite{Fujiwara2004}.

With our particular choice of input state, we have
\begin{equation}
	\rho_{\gamma, N} \coloneqq \mathcal{A}_{\gamma,N}(\psi_p) = 
	\begin{bmatrix}
		p (1 - \gamma N) & 0 & 0 & \sqrt{p(1-p)(1-\gamma)} \\
		0 & p \gamma N & 0 & 0 \\
		0 & 0 & (1-p) \gamma (1-N) & 0 \\
		\sqrt{p(1-p)(1-\gamma)} & 0 & 0 & (1-p)( 1 - (1-N)\gamma )
	\end{bmatrix}.
\end{equation}

The SLD Fisher information takes the form of a $2 \times 2$ matrix:
\begin{equation} \label{eq:app-sld-fisher-matrix}
	I_F({\gamma, N}; \{ \rho_{\gamma, N} \}_{\gamma, N})_{jk} = \Tr\!\left[ \rho_{\gamma, N} L_j L_k \right] ,
\end{equation}
where $j$ and $k$ each take values $1$ or $2$ that correspond to either $\gamma$ or $N$. The SLD operators $L_j$ are defined implicitly via
$\partial_j \rho_{\gamma, N} = \frac{1}{2} \left( \rho_{\gamma, N} L_j + L_j \rho_{\gamma, N} \right)$.

Suppose that $\rho_{\gamma, N}$ has spectral decomposition
\begin{equation}
\rho_{\gamma, N} = \sum_j \lambda_{\gamma, N}^j | \psi_{\gamma, N}^j \rangle \langle \psi_{\gamma, N}^j |.
\end{equation}
We then have
\begin{align}
L_\gamma &= 2\sum_{j,k:\lambda^{j}_{\gamma, N}+\lambda^{k}_{\gamma, N}>0} \frac{\langle\psi^{j}_{\gamma, N}|(\partial_{\gamma}\rho_{\gamma, N})|\psi^{k}_{\gamma, N}\rangle}{\lambda^{j}_{\gamma, N}+\lambda^{k}_{\gamma, N}}|\psi^{j}_{\gamma, N}\rangle\!\langle\psi^{k}_{\gamma, N}| \text{~and} \\
L_N &= 2\sum_{j,k:\lambda^{j}_{\gamma, N}+\lambda^{k}_{\gamma, N}>0} \frac{\langle\psi^{j}_{\gamma, N}|(\partial_{N}\rho_{\gamma, N})|\psi^{k}_{\gamma, N}\rangle}{\lambda^{j}_{\gamma, N}+\lambda^{k}_{\gamma, N}}|\psi^{j}_{\gamma, N}\rangle\!\langle\psi^{k}_{\gamma, N}|,
\end{align}
which enables us to calculate the elements of the SLD Fisher information matrix given in~\eqref{eq:app-sld-fisher-matrix}.

\begin{figure}
 	\begin{subfigure}{.48\textwidth}
 		\centering
 		\includegraphics[width=0.9\linewidth]{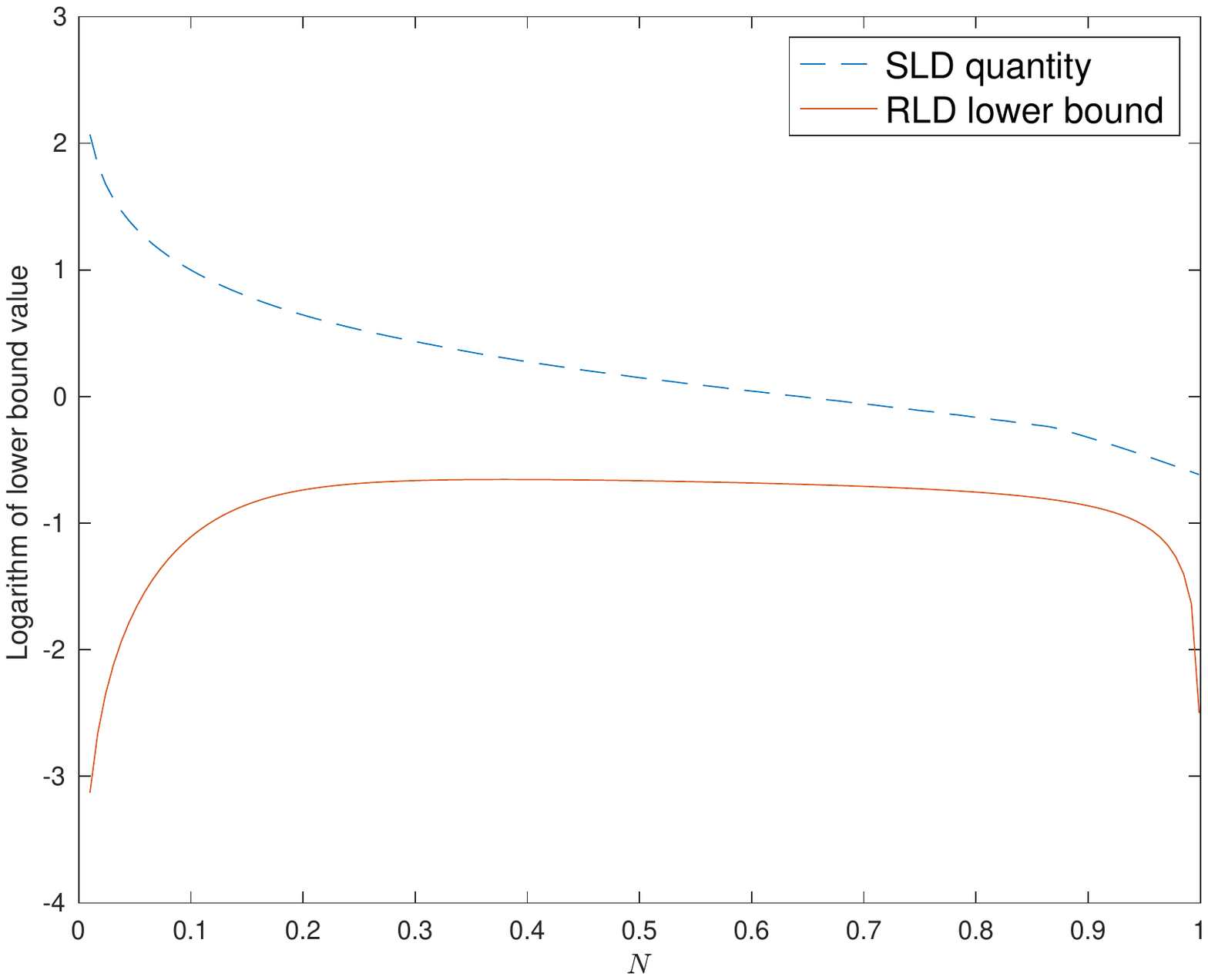}
 		\caption{}
 		\label{fig:fixed-N-0-2}
 	\end{subfigure}
	\begin{subfigure}{.48\textwidth}
		\centering
		\includegraphics[width=0.9\linewidth]{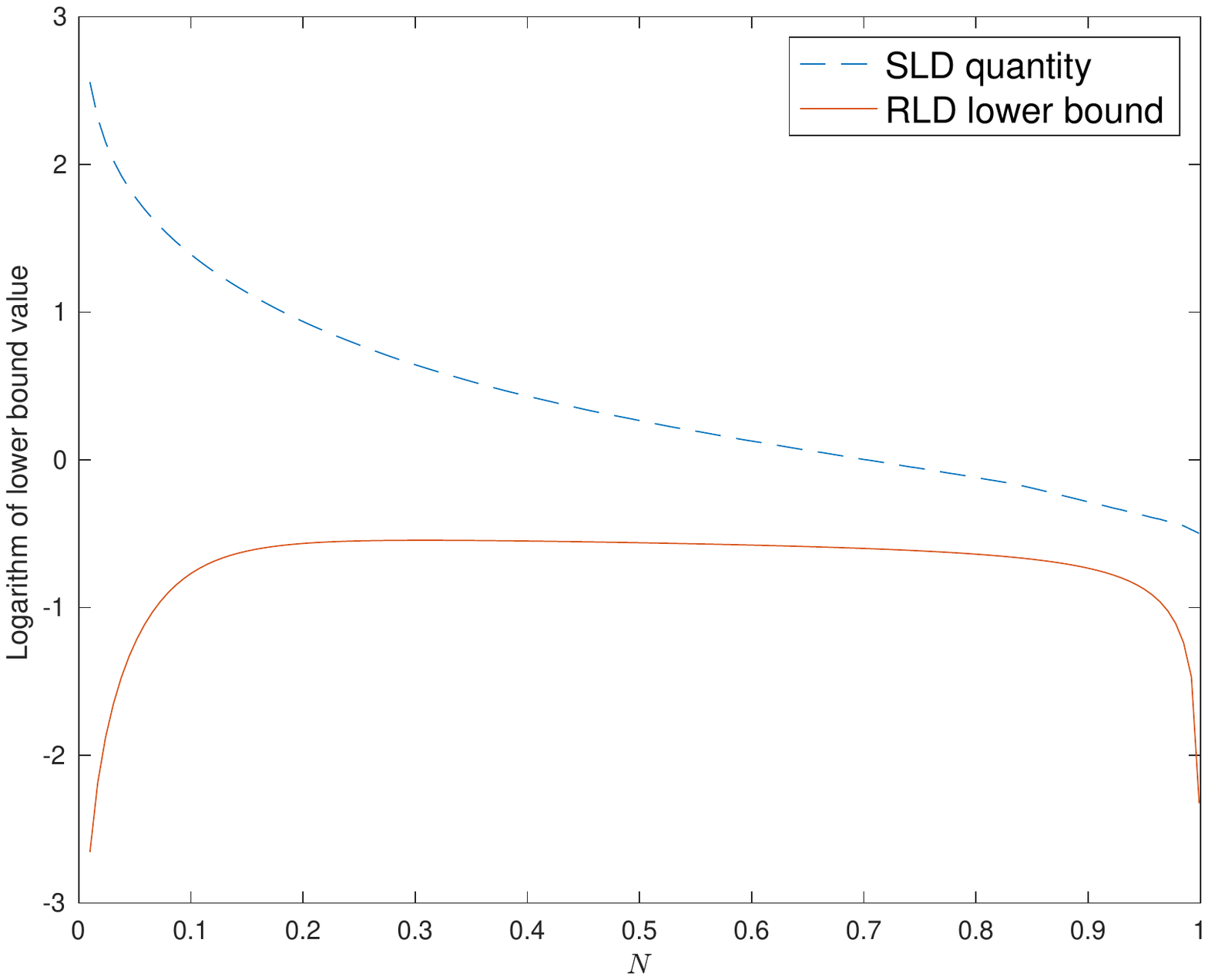}
		\caption{}
		\label{fig:fixed-N-0-3}
	\end{subfigure}
	\caption{Comparing RLD and SLD bounds for multiparameter estimation of a generalized amplitude damping channel, with fixed noise}{Logarithms of the SLD quantity \eqref{eq:sld-quantity-gadc} and the inverse of the RLD Fisher information value \eqref{eq:gadc-rld-info-value} versus loss $\gamma$ with fixed noise $N$. In (a), $N=0.2$, and in (b), $N=0.3$.}
	\label{fig:gadc-estimation-two-parameters-fixed-N}
	\end{figure}
	\begin{figure}
	\begin{subfigure}{.48\textwidth}
		\centering
		\includegraphics[width=0.9\linewidth]{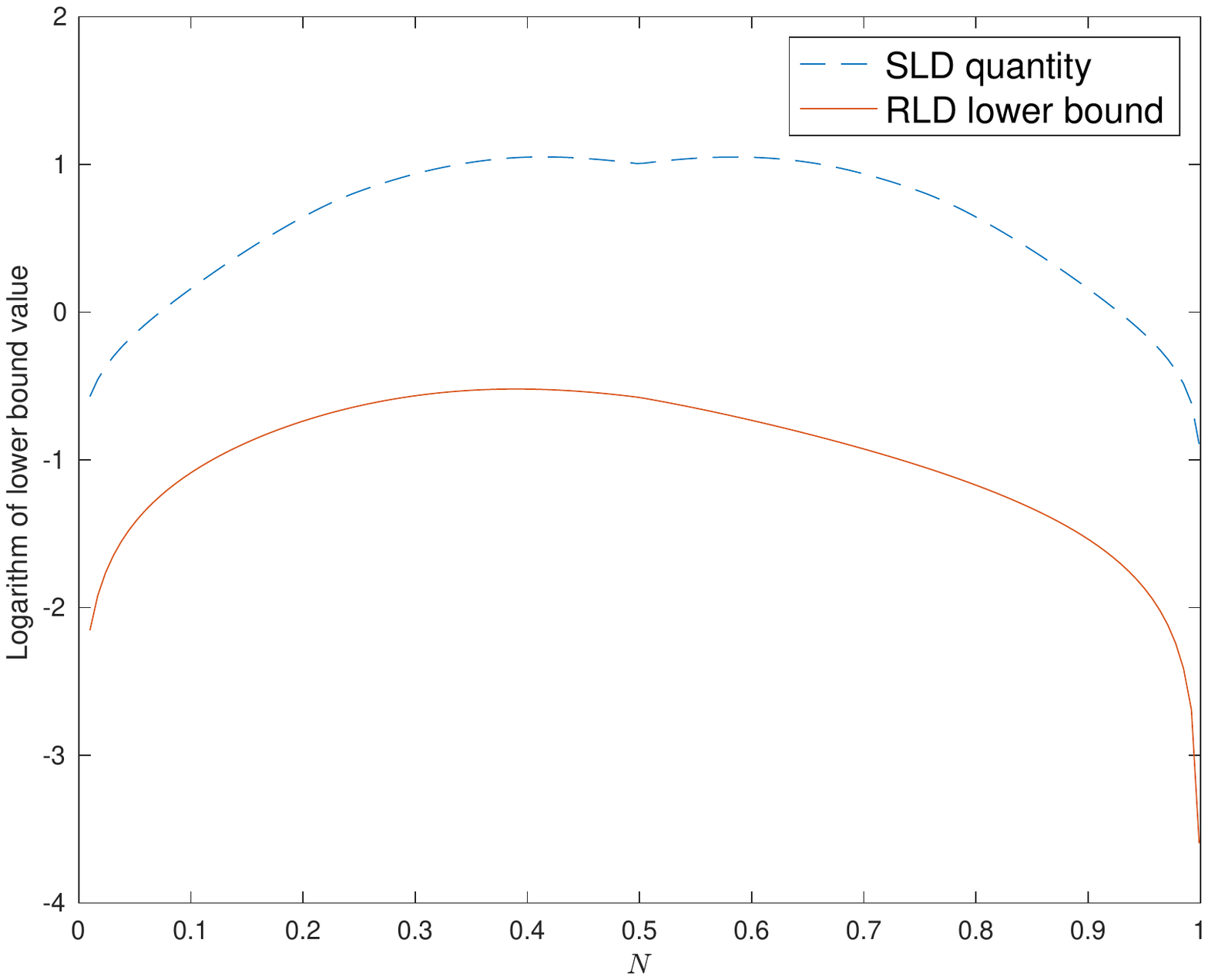}
		\caption{}
		\label{fig:fixed-g-0-2}
	\end{subfigure}
	\begin{subfigure}{.48\textwidth}
		\centering
		\includegraphics[width=0.9\linewidth]{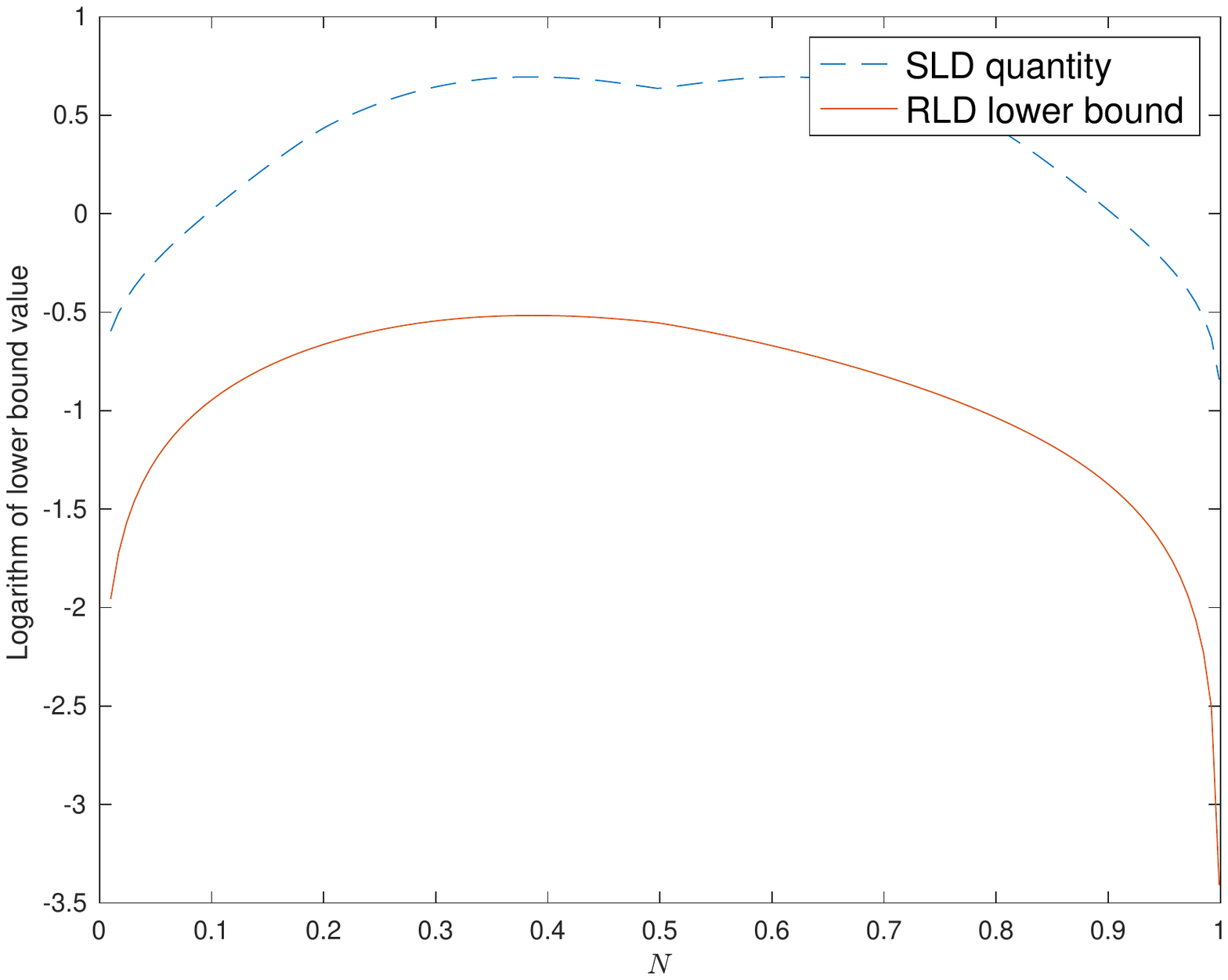}
		\caption{}
		\label{fig:fixed-g-0-3}
	\end{subfigure}
	\caption{Comparing RLD and SLD bounds for multiparameter estimation of a generalized amplitude damping channel, with fixed loss}{Logarithms of the SLD quantity \eqref{eq:sld-quantity-gadc} and the inverse of the RLD Fisher information value \eqref{eq:gadc-rld-info-value} versus noise $N$ with fixed loss $\gamma$. In (a), $\gamma=0.2$, and in (b), $\gamma=0.3$. In each of the four figures above, both lines indicate lower bounds on the quantity $\Tr[ W \text{Cov} (\{ \gamma, N \} )]$ where $W$ is chosen in \eqref{eq:W-matrix-choice-end} (as discussed, the SLD Fisher information value is a lower bound up to a constant prefactor). For the SLD quantity, we optimize over input states of the form $\sqrt{p} \ket{00}+ \sqrt{1-p} \ket{11}$.}
	\label{fig:gadc-estimation-two-parameters-fixed-g}
\end{figure}

We optimize over the parameter $p$ to obtain the SLD Fisher information bound. To be clear, we minimize
\begin{equation} \label{eq:sld-quantity-gadc}
\Tr[W [I_F(\{\gamma,N\}; \{\mathcal{A}_{\gamma, N}(\psi(p)))\}_{\{\gamma,N\}}]^{-1}]
\end{equation}
with respect to $p$. We compare the SLD quantity \eqref{eq:sld-quantity-gadc} to our RLD lower bound \eqref{eq:gadc-rld-info-value} in Figures~\ref{fig:gadc-estimation-two-parameters-fixed-N} and \ref{fig:gadc-estimation-two-parameters-fixed-g}. In Figure~\ref{fig:gadc-estimation-two-parameters-fixed-N}, we keep $N$ fixed and vary $\gamma$ from 0 to 1. In Figure~\ref{fig:gadc-estimation-two-parameters-fixed-g}, instead we keep $\gamma$ fixed and vary $N$ from 0 to 1. We find that the RLD lower bound is within one to two orders of magnitude of the SLD Fisher information.

\section{Optimization formulae} \label{sec:multiparam-sdp}

Finally, we provide semi-definite programs to compute the RLD Fisher information value of both quantum states and quantum channels. 

\subsection{Semi-definite program for RLD\ Fisher information value of quantum states}

\begin{proposition}
	Let $\{\rho_{A}^{\bm{\theta}}\}_{\bm{\theta}}$ be a differentiable family of quantum states, and let $W$ be a $D\times D$ weight matrix.
	Suppose that the finiteness condition for the RLD Fisher information value \eqref{eq:app:state-RLD-finiteness-cond} holds.
	Then the RLD Fisher information value of quantum states can be calculated via the following semi-definite program:
	\begin{multline} \label{eq:primal-rld-value-states}
	\widehat{I}_{F}(\bm{\theta}, W; \{\rho_{A}^{\bm{\theta}}\})=\inf\Big\{  \operatorname{Tr}
	[(W_F \otimes I_A)M_{FA}]:M_{FA}\geq0, \\
	\begin{bmatrix}
	M_{FA} & \sum_{j =1}^D \ket{j}_F \otimes (\partial_{\theta_j} \rho^{{\bm{\theta}}}_{A} )\\
	\sum_{j =1}^D \bra{j}_F \otimes (\partial_{\theta_j} \rho^{{\bm{\theta}}}_{A} ) & \rho_A^{\bm{\theta}}
	\end{bmatrix}
	\geq0\Big\}  .
	\end{multline}
	
	The dual program is given by
	\begin{equation} \label{eq:dual-rld-value-states}
	\sup_{P_{FA}, Q_{FA\to A}, R_A} 2 \left(\sum_{j=1}^D \operatorname{Re}[\operatorname{Tr}
	[ Q_{FA\to A} ( | j \rangle_F \otimes  (\partial_{\theta_j} \rho_A^{\bm{\theta}})) ]] \right)-\operatorname{Tr}[R_A \rho_A^{\bm{\theta}}],
	\end{equation}
	subject to
	\begin{equation}
	P_{FA} \leq (W_F \otimes I_A),\quad
	\begin{bmatrix}
	P_{FA} & (Q_{FA\to A})^{\dag}\\
	Q_{FA\to A} & R_A
	\end{bmatrix}
	\geq0,
	\end{equation}
	where $P_{FA}$ and $R_A$ are Hermitian, and $Q_{FA\to A}$ is a linear operator.
\end{proposition}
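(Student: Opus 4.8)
The strategy is to recognize the RLD Fisher information value of quantum states as a trace-optimization over matrices of the form $X^\dag Y^{-1} X$ and then apply Lemma~\ref{lem:min-XYinvX} together with the primal-dual machinery of Lemma~\ref{lem:freq-used-SDP-primal-dual}. The starting point is the explicit formula for the RLD Fisher information value from its definition~\eqref{eq:states-rld-fisher-value}:
\begin{equation}
\widehat{I}_F(\bm{\theta},W;\{\rho_{\bm{\theta}}\}_{\bm{\theta}}) = \sum_{j,k=1}^D \langle k|W|j\rangle \operatorname{Tr}[(\partial_{\theta_j}\rho_{\bm{\theta}})\rho_{\bm{\theta}}^{-1}(\partial_{\theta_k}\rho_{\bm{\theta}})].
\end{equation}
The key observation is that this equals $\operatorname{Tr}[(W_F\otimes I_A) X_{FA}^\dag (I_F\otimes\rho_{\bm{\theta}})^{-1} X_{FA}]$ where $X_{FA} := \sum_{j=1}^D |j\rangle_F\otimes(\partial_{\theta_j}\rho_{\bm{\theta}})$; one checks this by expanding the product and using that $(I_F\otimes\rho_{\bm{\theta}})^{-1} = I_F\otimes\rho_{\bm{\theta}}^{-1}$ on the relevant support. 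So the first step is to verify this algebraic identity carefully, keeping track of the support condition~\eqref{eq:app:state-RLD-finiteness-cond} so that the inverse is well-defined where it needs to be.

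Next, I would invoke Lemma~\ref{lem:min-XYinvX} with $X = X_{FA}$, $Y = \rho_A^{\bm{\theta}}$ (understood on its support, or first appealing to the physical-consistency limit from Proposition~\ref{prop:physical-consistency-RLD-Fish-states} to reduce to the full-rank case), which gives
\begin{equation}
X_{FA}^\dag Y^{-1} X_{FA} = \min\left\{ M_{FA} : \begin{bmatrix} M_{FA} & X_{FA}^\dag \\ X_{FA} & \rho_A^{\bm{\theta}} \end{bmatrix} \geq 0 \right\}.
\end{equation}
Since $W_F\otimes I_A \geq 0$, the map $M_{FA}\mapsto\operatorname{Tr}[(W_F\otimes I_A)M_{FA}]$ is monotone under the Löwner order, so minimizing the trace over the semidefinite constraint yields exactly~\eqref{eq:primal-rld-value-states}. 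Here one has to be slightly careful: the $\min$ in Lemma~\ref{lem:min-XYinvX} is attained at a specific $M$, and a monotone linear functional evaluated at the pointwise minimum equals the infimum of the functional over the feasible set — this is immediate but worth a sentence. Note $X_{FA} = \sum_j |j\rangle_F\otimes(\partial_{\theta_j}\rho_A^{\bm{\theta}})$ matches the off-diagonal block in~\eqref{eq:primal-rld-value-states}.

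For the dual, I would apply Lemma~\ref{lem:freq-used-SDP-primal-dual} with the identifications $K = W_F\otimes I_A$, $W$ (the operator in that lemma) $= X_{FA}$, and $Z = \rho_A^{\bm{\theta}}$; the lemma directly produces the supremum form $\sup_{P,Q,R}\{2\operatorname{Re}(\operatorname{Tr}[X_{FA}^\dag Q]) - \operatorname{Tr}[\rho_A^{\bm{\theta}} R] : P\leq W_F\otimes I_A,\ \begin{bmatrix}P & Q^\dag\\ Q & R\end{bmatrix}\geq 0\}$. Then one rewrites $\operatorname{Tr}[X_{FA}^\dag Q]$: writing $Q = Q_{FA\to A}$ as a map and using $X_{FA}^\dag = \sum_j \langle j|_F\otimes(\partial_{\theta_j}\rho_A^{\bm{\theta}})$, this becomes $\sum_j \operatorname{Tr}[Q_{FA\to A}(|j\rangle_F\otimes(\partial_{\theta_j}\rho_A^{\bm{\theta}}))]$, matching~\eqref{eq:dual-rld-value-states}. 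Finally one should remark that strong duality holds (Slater's condition: the primal is strictly feasible by taking $\rho_A^{\bm{\theta}}$ full-rank via the noise-smoothing limit and $M_{FA}$ large), so the two programs have equal optimal value. \textbf{The main obstacle} I anticipate is bookkeeping with the support/finiteness condition — making the block-matrix inverse identity rigorous when $\rho_A^{\bm{\theta}}$ is rank-deficient — which is cleanest to handle by first proving everything for full-rank states and then passing to the limit using Proposition~\ref{prop:physical-consistency-RLD-Fish-states}, checking that both sides of the SDP converge appropriately.
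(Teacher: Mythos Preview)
Your proposal is correct and follows essentially the same route as the paper: rewrite the RLD Fisher information value as $\operatorname{Tr}[(W_F\otimes I_A)\,X^\dag Y^{-1}X]$ with $Y=\rho_A^{\bm\theta}$, apply Lemma~\ref{lem:min-XYinvX} to obtain the primal, and then invoke Lemma~\ref{lem:freq-used-SDP-primal-dual} for the dual. One small bookkeeping slip to fix: with your definition $X_{FA}=\sum_j|j\rangle_F\otimes(\partial_{\theta_j}\rho)$ the operator $X_{FA}$ maps $A\to FA$, so the correct identity is $\widehat I_F=\operatorname{Tr}[(W_F\otimes I_A)\,X_{FA}\,\rho^{-1}X_{FA}^\dag]$ (no $I_F$ in the inverse), and in Lemma~\ref{lem:min-XYinvX} the role of ``$X$'' is played by $X_{FA}^\dag$ --- exactly as the paper writes it with $X=\sum_j\langle j|_F\otimes(\partial_{\theta_j}\rho_A^{\bm\theta})$.
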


\begin{proof}
	We begin with the formula
	\begin{align}
	\widehat{I}(\bm{\theta},W;\{\rho^{\bm{\theta}}_A\}_{\bm{\theta}}) &
	:=\operatorname{Tr}\!\left[  (W_F \otimes I_{A})\left(  \sum_{j,k=1}^{D}%
	|j\rangle\!\langle k|_F \otimes(\partial_{\theta_{j}}\rho_A^{\bm{\theta}}%
	)(\rho_A^{\bm{\theta}})^{-1}(\partial_{\theta_{k}}\rho_A^{\bm{\theta}})\right)
	\right]  \\
	&  =\sum_{j,k=1}^{D}\langle k|W|j\rangle\operatorname{Tr}\!\left[
	(\partial_{\theta_{j}}\rho_A^{\bm{\theta}})(\rho_A^{\bm{\theta}})^{-1}%
	(\partial_{\theta_{k}}\rho_A^{\bm{\theta}})\right]  .
	\end{align}
	The above can be written as
	\begin{equation}
	\Tr \left[ (W_F \otimes I_{A}) (X^{\dag} Y^{-1} X) \right]
	\end{equation}
	where $X = \sum_{j =1}^D \bra{j}_F \otimes (\partial_{\theta_j} \rho^{{\bm{\theta}}}_{A} )$ and $Y = \rho_A^{\bm{\theta}}$.
	
	We combine the above with the Schur complement lemma (Lemma \ref{lem:min-XYinvX} stated in Chapter~\ref{ch:single}), to obtain the desired primal form in \eqref{eq:primal-rld-value-states}.
	
	To obtain the dual program, we apply Lemma~\ref{lem:freq-used-SDP-primal-dual} from Chapter~\ref{ch:single}. 
\end{proof}

\subsection{Semi-definite program for RLD\ Fisher information value of quantum channels}

\begin{proposition} \label{prop:rld-channels-sdp}
	Let $\{\mathcal{N}_{A\rightarrow B}^{\bm{\theta}}\}_{\bm{\theta}}$ be a differentiable family of quantum channels, and let $W$ be a $D\times D$ weight matrix. Suppose that the finiteness condition for the RLD Fisher information value \eqref{eq:app:channel-RLD-finiteness-cond} holds. Then the RLD\ Fisher information value of quantum channels can be calculated via the following semi-definite program:
	\begin{equation} \label{eq:primal-rld-value-channels}
	\widehat{I}_{F}(\bm{\theta},W;\{\mathcal{N}_{A\rightarrow B}^{\bm{\theta}}%
	\}_{\bm{\theta}})=\inf\lambda\in\mathbb{R}^{+}, 
	\end{equation}
	subject to
	\begin{equation} \label{eq:primal-rld-value-channels-2}
	\lambda I_{R}\geq\operatorname{Tr}_{FB}[(W_F \otimes I_{RB} ) M_{FRB}], \qquad
	\begin{bmatrix}
	M_{FRB} & \sum_j | j \rangle_F \otimes  (\partial_{\theta_j} \Gamma^{\mathcal{N}^{\bm{\theta}}}_{RB} )    \\
	\sum_j  \langle j |_F \otimes (\partial_{\theta_j} \Gamma^{\mathcal{N}^{\bm{\theta}}}_{RB} )  & \Gamma^{\mathcal{N}^{\bm{\theta}}}_{RB}
	\end{bmatrix}
	\geq0. 
	\end{equation}
	
	The dual program is given by
	\begin{equation}
	\sup_{\rho_{R}\geq0,P_{FRB},Z_{FRB\to RB},Q_{RB}} 2 \left( \sum_{j=1}^D \operatorname{Re}[\operatorname{Tr}
	[Z_{FRB\to RB} ( | j \rangle_F \otimes  (\partial_{\theta_j} \Gamma^{\mathcal{N}^{\bm{\theta}}}_{RB} )) ]] \right) -\operatorname{Tr}[Q_{RB}\Gamma_{RB}^{\mathcal{N}^{\bm{\theta}}}],
	\end{equation}
	subject to
	\begin{equation}
	\operatorname{Tr}[\rho_{R}]\leq1,\quad
	\begin{bmatrix}
	P_{FRB} & (Z_{FRB\to RB})^{\dag}\\
	Z_{FRB\to RB} & Q_{RB}
	\end{bmatrix}
	\geq0,\quad P_{FRB}\leq\rho_{R} \otimes W_F \otimes I_{B}.
	\end{equation}
		
\end{proposition}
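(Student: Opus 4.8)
The plan is to mirror the derivation of the single-parameter RLD Fisher information SDP for channels, adapting it to carry the $D$-dimensional weight $W$. The starting point is the explicit formula \eqref{eq:RLD-value-channels-inf-norm} from Proposition~\ref{prop:geo-fish-explicit-formula-1st-order}. Introducing an auxiliary $D$-dimensional ``Fisher'' system $F$, set $X := \sum_{j=1}^{D} \langle j|_F \otimes (\partial_{\theta_j}\Gamma_{RB}^{\mathcal{N}^{\bm{\theta}}})$ and $Y := \Gamma_{RB}^{\mathcal{N}^{\bm{\theta}}}$, so that $X^{\dag} Y^{-1} X = \sum_{j,k=1}^{D}|j\rangle\!\langle k|_F \otimes (\partial_{\theta_j}\Gamma_{RB}^{\mathcal{N}^{\bm{\theta}}})(\Gamma_{RB}^{\mathcal{N}^{\bm{\theta}}})^{-1}(\partial_{\theta_k}\Gamma_{RB}^{\mathcal{N}^{\bm{\theta}}})$, and hence the operator inside the infinity norm in \eqref{eq:RLD-value-channels-inf-norm} equals $\operatorname{Tr}_{FB}[(W_F\otimes I_{RB})(X^{\dag} Y^{-1}X)]$. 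The inverse and these identities are understood on the support of $\Gamma_{RB}^{\mathcal{N}^{\bm{\theta}}}$, which is where the finiteness condition \eqref{eq:app:channel-RLD-finiteness-cond} is used.

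First I would rewrite the infinity norm using the variational characterization $\left\Vert Z\right\Vert_{\infty} = \inf\{\lambda\geq 0 : Z \leq \lambda I\}$ for positive semi-definite $Z$, obtaining $\widehat{I}_{F}(\bm{\theta},W;\{\mathcal{N}_{A\rightarrow B}^{\bm{\theta}}\}_{\bm{\theta}}) = \inf\{\lambda : \lambda I_R \geq \operatorname{Tr}_{FB}[(W_F\otimes I_{RB})(X^{\dag} Y^{-1}X)]\}$. Next, by the Schur complement lemma (Lemma~\ref{lem:min-XYinvX}), $X^{\dag} Y^{-1}X$ is the L\"{o}wner-minimum over Hermitian $M_{FRB}$ satisfying the block positivity constraint $\left[\begin{smallmatrix} M_{FRB} & X^{\dag} \\ X & Y\end{smallmatrix}\right]\geq 0$; since the map $M \mapsto \operatorname{Tr}_{FB}[(W_F\otimes I_{RB})M]$ is monotone under the L\"{o}wner order (because $W_F\otimes I_{RB}\geq 0$ as $W\geq 0$), replacing $X^{\dag} Y^{-1}X$ by such an $M_{FRB}$ does not change the infimum. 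This yields exactly the primal SDP \eqref{eq:primal-rld-value-channels}--\eqref{eq:primal-rld-value-channels-2}. For the dual, I would follow the standard primal--dual SDP template $\sup_{X\geq 0}\{\operatorname{Tr}[AX]:\Phi(X)\leq B\}$ versus $\inf_{Y\geq 0}\{\operatorname{Tr}[BY]:\Phi^{\dag}(Y)\geq A\}$, making the same block identifications as in the single-parameter channel SDP proof but with the extra tensor factor $F$ carried through: $B$ and $Y$ as the $\lambda$-and-$M_{FRB}$ blocks, $\Phi^{\dag}(Y)$ having the entry $\lambda I_R - \operatorname{Tr}_{FB}[(W_F\otimes I_{RB})M_{FRB}]$, $A$ built from $\Gamma_{RB}^{\mathcal{N}^{\bm{\theta}}}$ and $\sum_{j}|j\rangle_F\otimes (\partial_{\theta_j}\Gamma_{RB}^{\mathcal{N}^{\bm{\theta}}})$, and $X$ carrying the variables $\rho_R$, $P_{FRB}$, $Z_{FRB\to RB}$, $Q_{RB}$ in the obvious block pattern. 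Computing $\Phi(X)$ from $\operatorname{Tr}[X\Phi^{\dag}(Y)]$ gives the constraints $\operatorname{Tr}[\rho_R]\leq 1$ and $P_{FRB}\leq \rho_R\otimes W_F\otimes I_B$; performing the substitution $Z_{FRB\to RB}\to -Z_{FRB\to RB}$ to absorb a sign then produces the stated dual, and strong duality follows since the dual constraints admit a strictly feasible point, exactly as in the single-parameter case.

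The main obstacle is organizational rather than conceptual: one must track the auxiliary system $F$ and the weight $W_F$ consistently through the Schur complement step and, more delicately, through the computation of $\Phi$ and $\Phi^{\dag}$, ensuring that the partial trace $\operatorname{Tr}_{FB}$ acts on the correct systems and that the weight reappears in the dual as $\rho_R\otimes W_F\otimes I_B$ rather than $\rho_R\otimes I_{FB}$. A secondary point needing care is the support issue: the identification of the argument of the norm with $\operatorname{Tr}_{FB}[(W_F\otimes I_{RB})(X^{\dag} Y^{-1}X)]$, and the validity of the Schur complement with $Y$ only positive semi-definite, should be justified by restricting to $\operatorname{supp}(\Gamma_{RB}^{\mathcal{N}^{\bm{\theta}}})$ under the assumed finiteness condition \eqref{eq:app:channel-RLD-finiteness-cond}, after which the argument goes through verbatim from the single-parameter derivation.
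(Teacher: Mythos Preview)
Your proposal is correct and follows essentially the same approach as the paper: identify the operator inside the infinity norm as $\operatorname{Tr}_{FB}[(W_F\otimes I_{RB})(X^{\dag}Y^{-1}X)]$ with $X=\sum_j\langle j|_F\otimes(\partial_{\theta_j}\Gamma_{RB}^{\mathcal{N}^{\bm{\theta}}})$ and $Y=\Gamma_{RB}^{\mathcal{N}^{\bm{\theta}}}$, combine the variational characterization of the infinity norm with the Schur complement lemma (Lemma~\ref{lem:min-XYinvX}) to obtain the primal, and then derive the dual via the standard template with the same block identifications and the sign swap $Z\to -Z$. The paper's proof is slightly more terse on the monotonicity point you highlight and does not explicitly invoke strong duality, but otherwise the arguments coincide.
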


\begin{proof}
	The form of the primal program relies on the combination of a few facts. First, we use the following characterization of the infinity norm of a positive semi-definite operator $A$:
	\begin{equation}
	\left\Vert A\right\Vert _{\infty}=\inf\left\{  \lambda\geq0:A\leq\lambda I\right\}  .
	\end{equation}	
	
	Next we observe that 
	\begin{equation}
	\sum_{j,k=1}^{D}\langle k|W|j\rangle \operatorname{Tr}_{B}[(\partial_{\theta_{j}}\Gamma_{RB}^{\mathcal{N}^{\bm{\theta}}})(\Gamma_{RB}^{\mathcal{N}^{\bm{\theta}}})^{-1}(\partial_{\theta_{k}}\Gamma_{RB}^{\mathcal{N}^{\bm{\theta}}})]
	\end{equation}
	can be written as
	\begin{equation}
	\Tr_{FB} \!\left[ (W_F \otimes I_{RB}) (X^{\dag} Y^{-1} X) \right]
	\end{equation}
	where $X = \sum_{j =1}^D \bra{j}_F \otimes (\partial_{\theta_j} \Gamma^{\mathcal{N}^{\bm{\theta}}}_{RB} )$ and $Y = \Gamma^{\mathcal{N}^{\bm{\theta}}}_{RB}$.
	
	We next use the Schur complement lemma (Lemma \ref{lem:min-XYinvX} stated in Chapter~\ref{ch:single}) and combine the above with the explicit form of the RLD Fisher information value of quantum channels \eqref{eq:RLD-value-channels-inf-norm} to obtain the desired primal form in \eqref{eq:primal-rld-value-channels}.
	
	To arrive at the dual program, we use the standard forms of primal and dual
	semi-definite programs for Hermitian operators $A$ and $B$ and a
	Hermiticity-preserving map $\Phi$ \cite{Watrous2018}:
	\begin{equation}
	\sup_{X\geq0}\left\{  \operatorname{Tr}[AX]:\Phi(X)\leq B\right\}  ,
	\qquad\inf_{Y\geq0}\left\{  \operatorname{Tr}[BY]:\Phi^{\dag}(Y)\geq
	A\right\}  . \label{eq:standard-SDP-form-RLD-ch-helper-channels}
	\end{equation}
	From \eqref{eq:primal-rld-value-channels}--\eqref{eq:primal-rld-value-channels-2}, we
	identify
	\begin{align}
		B  &  =
		\begin{bmatrix}
			1 & 0\\
			0 & 0
		\end{bmatrix},
		\quad Y=
		\begin{bmatrix}
			\lambda & 0\\
			0 & M_{FRB}
		\end{bmatrix}
		,\quad\Phi^{\dag}(Y)=
		\begin{bmatrix}
			\lambda I_{R}-\operatorname{Tr}_{FB}[(W_F \otimes I_{RB})M_{FRB}] & 0 & 0\\
			0 & M_{FRB} & 0\\
			0 & 0 & 0
		\end{bmatrix}
		,\\
		A  &  =
		\begin{bmatrix}
			0 & 0 & 0\\
			0 & 0 & -\sum_{j=1}^D | j \rangle_F \otimes  (\partial_{\theta_j} \Gamma^{\mathcal{N}^{\bm{\theta}}}_{RB} )\\
			0 & -\sum_{j=1}^{D} \bra{j}_F \otimes  (\partial_{\theta_j} \Gamma^{\mathcal{N}^{\bm{\theta}}}_{RB} ) & -\Gamma^{\mathcal{N}^{\bm{\theta}}}_{RB}
		\end{bmatrix}
		.	
	\end{align}

	Upon setting
	\begin{equation}
	X=
	\begin{bmatrix}
	\rho_{R} & 0 & 0\\
	0 & P_{FRB} & (Z_{FRB\to RB})^{\dag}\\
	0 & Z_{FRB\to RB} & Q_{RB}
	\end{bmatrix}
	,
	\end{equation}
	we find that
	\begin{align}
		& \operatorname{Tr}[X\Phi^{\dag}(Y)]  \notag \\
		&  =\operatorname{Tr}\!\left[
		\begin{bmatrix}
			\rho_{R} & 0 & 0\\
			0 & P_{FRB} & (Z_{FRB\to RB})^{\dag}\\
			0 & Z_{FRB\to RB} & Q_{RB}
		\end{bmatrix}
		\begin{bmatrix}
			\lambda I_{R}-\operatorname{Tr}_{FB}[(W_F \otimes I_{RB})M_{FRB}] & 0 & 0\\
			0 & M_{FRB} & 0\\
			0 & 0 & 0
		\end{bmatrix}
		\right] \\
		&  =\operatorname{Tr}[\rho_{R}(\lambda I_{R}-\operatorname{Tr}_{FB}
		[(W_F \otimes I_{RB})M_{FRB}])]+\operatorname{Tr}[P_{FRB}M_{FRB}]\\
		&  =\lambda\operatorname{Tr}[\rho_{R}]+\operatorname{Tr}[(P_{FRB}-\rho
		_{R}\otimes W_F \otimes I_{B})M_{FRB}]\\
		&  =\operatorname{Tr}\!\left[
		\begin{bmatrix}
			\lambda & 0\\
			0 & M_{FRB}
		\end{bmatrix}
		\begin{bmatrix}
			\operatorname{Tr}[\rho_{R}] & 0\\
			0 & P_{FRB}-\rho_{R} \otimes W_F \otimes I_{B}
		\end{bmatrix}
		\right]  \\
		&= \Tr[Y \Phi(X)]
		,
	\end{align}
	to find that the dual is given by
	\begin{equation}
	\sup_{\substack{\rho_{R},P_{FRB},\\Z_{FRB\to RB},Q_{RB}}}\operatorname{Tr}\left[AX
	\right]  ,
	\end{equation}
	subject to
	\begin{equation}
	\begin{bmatrix}
	\rho_{R} & 0 & 0\\
	0 & P_{FRB} & (Z_{FRB\to RB})^{\dag}\\
	0 & Z_{FRB\to RB} & Q_{RB}
	\end{bmatrix}
	\geq0,\qquad
	\begin{bmatrix}
	\operatorname{Tr}[\rho_{R}] & 0\\
	0 & P_{FRB}-\rho_{R} \otimes W_F \otimes I_{B}
	\end{bmatrix}
	\leq
	\begin{bmatrix}
	1 & 0\\
	0 & 0
	\end{bmatrix}
	.
	\end{equation}
	We can swap $Z_{FRB\to RB} \rightarrow - Z_{FRB\to RB}$ with no change to the optimal value. This leads to the following simplified form of the dual program:
	\begin{equation}
	\sup_{\rho_{R}\geq0,P_{FRB},Z_{FRB},Q_{RB}}2 \sum_{j=1}^D \operatorname{Re}[\operatorname{Tr}
	[Z_{FRB\to RB} ( | j \rangle_F \otimes  \partial_{\theta_j} \Gamma^{\mathcal{N}^{\bm{\theta}}}_{RB}  ) ]]-\operatorname{Tr}[Q_{RB}\Gamma_{RB}^{\mathcal{N}^{\bm{\theta}}}],
	\end{equation}
	subject to
	\begin{equation}
	\operatorname{Tr}[\rho_{R}]\leq1,\quad
	\begin{bmatrix}
	P_{FRB} & -(Z_{FRB\to RB})^{\dag}\\
	-Z_{FRB\to RB} & Q_{RB}
	\end{bmatrix}
	\geq0,\quad P_{FRB}\leq\rho_{R} \otimes W_F \otimes I_{B}.
	\end{equation}
	Then we note that
	\begin{equation}
	\begin{bmatrix}
	P_{FRB} & -(Z_{FRB\to RB})^{\dag}\\
	-Z_{FRB\to RB} & Q_{RB}
	\end{bmatrix}
	\geq0 \quad\Longleftrightarrow\quad
	\begin{bmatrix}
	P_{FRB} & (Z_{FRB\to RB})^{\dag}\\
	Z_{FRB\to RB} & Q_{RB}
	\end{bmatrix}
	\geq0
	\end{equation}
	This concludes the proof.
\end{proof}

\pagebreak

\chapter{Conclusion and Open Questions} \label{ch:conclusion}
\allowdisplaybreaks

\vspace{0.5em}

In this dissertation, we studied the problem of estimating one or more parameters encoded in an unknown quantum state or channel. In particular, we proved a number of novel Cramer--Rao bounds (lower bounds on the estimation error of the unknown parameter(s)) for quantum channel estimation. Our bounds are universal, in that they apply to all quantum channels and thus encompass all possible quantum dynamics. Further, our bounds hold in the most general case of sequential strategies, where we are given $n$ copies of the unknown channel and process them so that they occur one after the other.

In particular, for both single and multiparameter estimation of quantum channels, we established the concept of amortized Fisher information and showed how to connect the amortized Fisher information of a family of channels to the corresponding Fisher information achieved by a sequential estimation protocol using meta-converse theorems in Chapters~\ref{ch:single} and \ref{ch:multi}.

We then showed that amortization collapses occur for the following quantities in Chapter~\ref{ch:single} for the single parameter case:
\begin{itemize}
	\item the SLD Fisher information of classical-quantum channels,
	\item the root SLD Fisher information of general quantum channels, and
	\item the RLD Fisher information of general quanum channels,
\end{itemize}
and in Chapter~\ref{ch:multi} for the multiparameter case, we showed an amortization collapse for
\begin{itemize}
	\item the RLD Fisher information value of general quantum channels.
\end{itemize}

Combining the meta-converse theorems and specific amortization collapses leads to single-letter Cramer--Rao bounds for channel estimation in the sequential setting. That is, in Chapter~\ref{ch:single}, we establish the following Cramer--Rao bounds:
\begin{itemize}
	\item a Cramer–Rao bound for estimation of classical-quantum channels using the SLD Fisher information,
	\item a Cramer–Rao bound for estimation of general quantum channels using the SLD Fisher information, and
	\item a Cramer–Rao bound for estimation of general quantum channels using the RLD Fisher information,
\end{itemize}
and in Chapter~\ref{ch:multi},
\begin{itemize}
	\item a scalar Cramer--Rao bound for estimation of quantum states using the RLD Fisher information value, and
	\item a scalar Cramer--Rao bound for estimation of quantum channels using the RLD Fisher information value.
\end{itemize}

Our bounds for channel estimation all have the desirable characteristic that they are single-letter; i.e., they are applicable for $n$-round sequential strategies yet only require the Fisher information in question to be evaluated for a single channel use. Further, we provide various optimization problem formulations to compute our bounds in Chapters~\ref{ch:single} and \ref{ch:multi}. Thus we believe that these bounds will be useful as a theoretical tool offering a new perspective on the problem, and as a framwork to analyze sequential estimation strategies. Further, the universality and numerical accessibility of our bounds make them amenable to direct experimental application.

Our RLD-based bounds for both single and multiparameter estimation of channels have important implications for the attainability of Heisenberg scaling with respect to the number of channel uses in a sequential protocol. That is, our RLD bound in Chapter~\ref{ch:single} implies that if the RLD Fisher information of a channel family is finite, then Heisenberg scaling is unattainable. Further, our multiparameter RLD bound in Chapter~\ref{ch:multi} implies that if the RLD Fisher information value of a channel family is finite, then Heisenberg scaling is unattainable. We note that another recent sufficient condition for the non-attainability of Heisenberg scaling is the HNLS condition of Ref.~\cite{Zhou2018}. Further, our results complement other recent work in this area~\cite{Fujiwara2008, Matsumoto2010, Hayashi2011,DemkowiczDobrzanski2017}.

Finally, we evaluate our RLD-based Cramer--Rao bounds for the physically motivated example of estimating the parameters of a generalized amplitude damping channel. In Chapter~\ref{ch:single}, we evaluate the RLD Cramer--Rao bound for individually estimating the parameters of the channel, and in Chapter~\ref{ch:multi}, we evalute the RLD Fisher information value of the unknown channel, considering the task of estimating its two parameters simultaneously.

A topic for future research is to incorporate energy constraints into the Fisher information of quantum states and channels. We defined the generalized Fisher information in Chapter~\ref{ch:prelims} without any energy constraints. However, it is of practical interest to study the power of estimation strategies when the input probe states are subject to an energy constraint. That is, we can define the energy-constrained generalized Fisher information of a quantum channel family as follows:
\begin{equation}
    \mathbf{I}_{F, E} ( \theta; \{ \mathcal{N}^{\theta}_{A \rightarrow B} \}_{\theta} ) = \sup_{\rho_{RA}: \mathrm{Tr}[ H_A \rho_A ] \leq E} \mathbf{I}_{F} ( \theta; \{ \mathcal{N}^{\theta}_{A \rightarrow B} (\rho_{RA}) \}_{\theta} )
\end{equation} 
where $H_A$ is a Hamiltonian acting on the input system of the channel $\mathcal{N}_{A \rightarrow B}^{\theta}$. This definition generalizes the energy-constrained channel divergence introduced in \cite{Sharma2018}. Further, the energy-constrained amortized Fisher information can also be defined using the quantity defined above. The study of sequential estimation strategies in this thesis can then be generalized to incorporate the aforementioned energy constraint.

We showed that with respect to the RLD Fisher information of quantum channels, sequential estimation strategies offer no advantage over parallel ones. However, the RLD-based Cramer--Rao bound is a loose one in general, and therefore it is still an interesting open question to determine whether sequential estimation offers a benefit over parallel estimation. For the particular case of classical-quantum channels, we showed in Chapter~\ref{ch:single} that the optimum estimation strategy is a parallel one. However, we are unable as yet to make a similar statement about all quantum channels in general and leave it for future work to do so. This topic has been studied recently in Ref.~\cite{Zhou2020}.

Another open question is to determine the operational interpretation of the RLD Fisher information of channels. One possibility is to generalize the work on ``reverse estimation'' of Ref.~\cite{Matsumoto2005} from quantum states to channels. From Ref.~\cite{Matsumoto2005}, we know that the RLD Fisher information of states is the optimal classical Fisher information needed to simulate the state family in a local way. It is as yet open to show the analogous result for quantum channels. This is related to the task of coherence distillation of quantum channels~\cite{Marvian2020}.

\pagebreak

\addtocontents{toc}{\vspace{12pt}}
\addcontentsline{toc}{chapter}{\hspace{-1.6em} REFERENCES}

\singlespacing
\bibliographystyle{alpha}
\bibliography{combined_refs_vishal_thesis}

\pagebreak

\chapter*{\Large Vita}
\setlength{\parindent}{1.75em}
\vspace{0.2em}
\addtocontents{toc}{\vspace{12pt}}
\addcontentsline{toc}{chapter}{\hspace{-1.5em} VITA}

Vishal Katariya was born in Chennai, India in 1995. He has lived in, and did his schooling in, Hyderabad, India, Zurich, Switzerland, and Chennai, India. In 2013, he joined the Indian Institute of Technology Madras to pursue a B.Tech. degree in Engineering Physics. Upon completion, he immediately joined a PhD program in 2017 at Louisiana State University, Baton Rouge. He worked from the outset in the field of quantum information with Mark M. Wilde, as part of the Quantum Science and Technologies Group. He plans to graduate in May 2022.

\end{document}